\newcolumntype{D}{>{\centering\arraybackslash}X}
\newcommand{\ket}[1]{|#1\rangle}
\newcommand{\bra}[1]{\langle#1|}
\newtheorem{lemma}{Lemma}
\newtheorem{corollary}{Corollary}
\newtheorem{problem}{Problem}
\newtheorem{claim}{Claim}
\declaretheorem[name=Theorem]{theorem}
\declaretheorem[name=Definition]{definition}
\begin{document}

\title{Hamiltonian Simulation by Uniform Spectral Amplification}

\author{
\normalsize Guang Hao Low\thanks{Department of Physics, Massachusetts Institute of Technology  \texttt{\{glow@mit.edu\}}},\quad 
\normalsize Isaac L. Chuang\thanks{Department of Electrical Engineering and Computer Science, Department of Physics, Research Laboratory of Electronics, Massachusetts Institute of Technology  \texttt{\{ichuang@mit.edu\}}}
\date{\today}
}
\maketitle

\begin{abstract}
The exponential speedups promised by Hamiltonian simulation on a quantum computer depends crucially on structure in both the Hamiltonian $\hat{H}$, and the quantum circuit $\hat{U}$ that encodes its description. In the quest to better approximate time-evolution $e^{-i\hat{H}t}$ with error $\epsilon$, we motivate a systematic approach to understanding and exploiting structure, in a setting where Hamiltonians are encoded as measurement operators of unitary circuits $\hat{U}$ for generalized measurement. This allows us to define a \emph{uniform spectral amplification} problem on this framework for expanding the spectrum of encoded Hamiltonian with exponentially small distortion. We present general solutions to uniform spectral amplification in a hierarchy where factoring $\hat{U}$ into $n=1,2,3$ unitary oracles represents increasing structural knowledge of the encoding. Combined with structural knowledge of the Hamiltonian, specializing these results allow us simulate time-evolution by $d$-sparse Hamiltonians using $\mathcal{O}\left(t(d \|\hat H\|_{\text{max}}\|\hat H\|_{1})^{1/2}\log{(t\|\hat{H}\|/\epsilon)}\right)$ queries, where $\|\hat H\|\le \|\hat H\|_1\le d\|\hat H\|_{\text{max}}$. Up to logarithmic factors, this is a polynomial improvement upon prior art using $\mathcal{O}\left(td\|\hat H\|_{\text{max}}+\frac{\log{(1/\epsilon)}}{\log\log{(1/\epsilon)}}\right)$ or $\mathcal{O}(t^{3/2}(d \|\hat H\|_{\text{max}}\|\hat H\|_{1}\|\hat H\|/\epsilon)^{1/2})$ queries.  In the process, we also prove a matching lower bound of $\Omega(t(d\|\hat H\|_{\text{max}}\|\hat H\|_{1})^{1/2})$ queries, present a distortion-free generalization of spectral gap amplification, and an amplitude amplification algorithm that performs multiplication on unknown state amplitudes.
\end{abstract} 

\tableofcontents
\section{Introduction}
\label{Sec:Introduction}
Quantum algorithms for matrix operations on quantum computers are one of its most exciting applications. In the best cases, they promise exponential speedups over classical approaches for problems such as matrix inversion~\cite{Harrow2009} and Hamiltonian simulation, which is matrix exponentiation. Intuitively, any arbitrary unitary matrix applied to an $q$-qubit quantum state is `exponentially fast' due to a state space of dimension $n=2^q$. However, if these matrix elements are presented as a classical list of $\mathcal{O}(n^2)$ numbers, simply encoding the data into a quantum circuit already takes exponential time. Thus the extent of this speedup is sensitive to both the properties of the Hamiltonian and the input model defining how that information is made accessible to a quantum computer. 

Broad classes of Hamiltonians $\hat{H}$, structured so as to enable this exponential speedup, are well-known. The most-studied examples include local Hamiltonians~\cite{Lloyd1996universal} built from a sum of terms each acting on a constant number of qubits, and its generalization as $d$-sparse matrices~\cite{Aharonov2003Adiabatic} with at most $d$ non-zero entries in every row, whose values and positions must all be efficiently computable. 
More recent innovations consider matrices that are a linear combination of unitaries~\cite{Childs2012,Berry2015Truncated,Novo2016improved} or density matrices~\cite{LloydMohseniRebentrost2014,Kimmel2017}. Though different classes define different input models, that is unitary quantum oracles that encode $\hat{H}$, it is still helpful to quantify the cost of various quantum matrix algorithms through the query complexity, which in turn depends on various structural descriptors of $\hat{H}$, such as, but not limited to, its spectral norm $\|\hat{H}\|$, induced $1$-norm $\|\hat{H}\|_{1}$, max-norm $\|\hat{H}\|_{\text{max}}$, rank, or sparsity. 

A challenging open problem is how knowledge of any structure may be maximally exploited to accelerate quantum algorithms. As the time-evolution operator $e^{-i\hat{H}t}$ underlies numerous such quantum algorithms, one common benchmark is the Hamiltonian simulation problem of converting this description of $\hat{H}$ into a quantum circuit that approximates $e^{-i\hat{H}t}$ for time $t$ with some error $\epsilon$. To illustrate, we recently provided an algorithm with optimal query complexity $\mathcal{O}\big(td\|\hat{H}\|_{\text{max}}+\frac{\log{(1/\epsilon)}}{\log\log{(1/\epsilon)}}\big)$~\cite{Low2016HamSim} in all parameters  for sparse matrices~\cite{Childs2010,Berry2014,Berry2015Hamiltonian}, based on \emph{quantum signal processing} techniques~\cite{Low2016methodology}. Though this settles the worst-case situation where only $d$ and the max-norm $ \|\hat{H}\|_{\text{max}}$ are known in advance, there exist algorithms that exploit additional knowledge of the spectral norm $\|\hat{H}\|$ and induced-one norm $\|\hat{H}\|_{1}$ to achieve simulation with $\mathcal{O}(t^{3/2}(d\|\hat{H}\|_{\text{max}}\|\hat{H}\|_{1}\|\hat{H}\|\frac{1}{\epsilon})^{1/2})$~\cite{Berry2012} queries. Though this square-root scaling in sparsity alone is optimal, it is currently unknown whether the significant penalty is paid in the time and error scaling is unavoidable. Motivated by the inequalities $\|\hat{H}\|_{\text{max}}\le \|\hat{H}\|\le \|\hat{H}\|_{1}\le d \|\hat{H}\|_{\text{max}}$~\cite{Childs2010Limitation}, one could hope for a best-case algorithm in Claim~\ref{Claim:Sparse_Ham_Sim} that interpolates between these possibilities.
\begin{claim}[Sparse Hamiltonian simulation]
\label{Claim:Sparse_Ham_Sim}
Given the standard quantum oracles that return values of $d$-sparse matrix elements of the Hamiltonian $\hat{H}$, there exists a quantum circuit that approximates time-evolution $e^{-i\hat{H}t}$ with error $\epsilon$ using $Q=\mathcal{O}\big(t(d\|\hat{H}\|_{\text{max}}\|\hat{H}\|_1)^{1/2}+\frac{\log{(1/\epsilon)}}{\log\log{(1/\epsilon)}}\big)$ queries and $\mathcal{O}(Q\log{(n)})$ single and two-qubit quantum gates.
\end{claim}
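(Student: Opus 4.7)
The plan is to combine uniform spectral amplification with quantum signal processing, reducing the Hamiltonian simulation problem to constructing a block-encoding of $\hat H$ with normalization $\alpha = \sqrt{d\|\hat H\|_{\text{max}}\|\hat H\|_1}$. Recall that quantum signal processing applied to a unitary $\hat U$ whose top-left block equals $\hat H/\alpha$ simulates $e^{-i\hat H t}$ in $\mathcal{O}(\alpha t + \log(1/\epsilon)/\log\log(1/\epsilon))$ queries, so any reduction of $\alpha$ translates directly into the target complexity. The starting point is the textbook construction from the sparse oracles (the $n=1$ level of the paper's factorization hierarchy): using state preparation on a uniform superposition of $d$ neighbors followed by a controlled read of the matrix entry and a SWAP-test-like discriminator, one obtains an encoding with $\alpha_0 = d\|\hat H\|_{\text{max}}$.

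The second step is to expose $\|\hat H\|_1$ structure by factoring $\hat U$ more finely. Instead of preparing a flat superposition of neighbors, one prepares a state whose amplitudes scale as $\sqrt{|H_{jk}|/\|\hat H\|_{\text{max}}}$ along the row and column directions; combined with a phase oracle for $\mathrm{sgn}(H_{jk})$, this gives a two-sided decomposition $\hat H = \hat A^\dagger \hat B$ with $\|\hat A\|,\|\hat B\| \le \sqrt{d\|\hat H\|_{\text{max}}}$ in an appropriate column-stochastic sense. The corresponding encoding, built from these $n=2$ or $n=3$ oracle factors, has effective subnormalization controlled by $\sqrt{\|\hat H\|_1 \cdot d\|\hat H\|_{\text{max}}}$ at best, but \emph{only if} the ancilla postselection is amplified to amplitude close to one.

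The third and central step is to invoke the paper's uniform spectral amplification on this refined encoding. Naively, amplifying by a factor $\gamma = \sqrt{d\|\hat H\|_{\text{max}}/\|\hat H\|_1}$ via amplitude amplification would introduce nonlinear distortion of the form $\sin(k \arcsin(\hat H/\alpha_0))$, whose Chebyshev expansion is polluted by higher powers of $\hat H$ — this is the reason prior art paid the $t^{3/2}/\sqrt{\epsilon}$ penalty. Uniform spectral amplification replaces this fixed-point amplification polynomial by a quantum-signal-processing polynomial that approximates the identity map $x \mapsto \gamma x$ uniformly on $[-1/\gamma, 1/\gamma]$ with error $\epsilon'$ using $\mathcal{O}(\gamma \log(1/\epsilon'))$ queries. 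Feeding in $\epsilon' = \epsilon/\mathrm{poly}(t\|\hat H\|)$ ensures that the amplified encoding approximates an exact block-encoding of $\hat H/\alpha_1$ with $\alpha_1 = \alpha_0/\gamma = \sqrt{d\|\hat H\|_{\text{max}}\|\hat H\|_1}$, up to error that compounds benignly through the subsequent simulation.

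The final step is to apply quantum signal processing for $e^{-i\hat H t}$ to the amplified encoding, giving $\mathcal{O}(\alpha_1 t + \log(1/\epsilon)/\log\log(1/\epsilon))$ outer queries, each of which costs $\mathcal{O}(\log(\alpha_1 t/\epsilon))$ inner queries from the amplification layer; the logarithmic factors absorb into the additive $\log(1/\epsilon)/\log\log(1/\epsilon)$ term by choosing the amplification and simulation polynomials carefully, yielding the stated $Q$. The main obstacle I anticipate is the distortion analysis in step three: one must verify that the uniform amplification polynomial (i) behaves like the identity across the full spectrum of $\hat H/\alpha_1$, including eigenvalues accumulated near zero, and (ii) that the residual block-off-diagonal error after amplification enters the signal-processing error budget linearly rather than being magnified by the simulation time $t$. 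Controlling both requires that the approximation be uniform on the \emph{entire} interval, not just in an average sense, which is precisely what the \emph{uniform} qualifier on spectral amplification is designed to deliver.
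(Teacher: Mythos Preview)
Your overall strategy---build the state-overlap encoding with amplitudes $\sqrt{|H_{jk}|/\|\hat H\|_{\text{max}}}$, linearize the amplification, then run qubitization---is the paper's route to Thm.~\ref{Cor:Ham_Sim_Sparse_Amplified}. But there is a real gap in your final accounting, and it matters because the paper itself does \emph{not} prove Claim~\ref{Claim:Sparse_Ham_Sim} from the standard sparse oracles alone: Claim~\ref{Claim:Sparse_Ham_Sim} is posed as what one ``could hope for,'' and the actual theorem carries an extra multiplicative $\log(t\|\hat H\|/\epsilon)$ on the leading term. The paper remarks (end of Sec.~\ref{SubSec:Reduction_Sparse_Matrices}) that the exact scaling of Claim~\ref{Claim:Sparse_Ham_Sim} is reached only under the additional assumption of a row-sum oracle $\hat O_C\ket{j}\ket{z}=\ket{j}\ket{z\oplus\sigma_j}$, which lets one predistort the prepared amplitudes to cancel the sinusoidal nonlinearity of amplitude amplification \emph{exactly}, eliminating the approximation layer entirely.

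The step that fails is your claim that ``the logarithmic factors absorb into the additive $\log(1/\epsilon)/\log\log(1/\epsilon)$ term.'' The amplified encoding is only an $\epsilon'$-approximate block-encoding of $\hat H/\alpha_1$; this error propagates through time evolution as $\mathcal O(t\|\hat H\|\epsilon')$, forcing $\epsilon'=\Theta(\epsilon/(t\|\hat H\|))$. The inner amplification then costs $\Theta(\gamma\log(t\|\hat H\|/\epsilon))$ per outer call, and multiplying by the $\Theta(\alpha_1 t)$ outer calls leaves the log factor \emph{multiplicative} on the dominant term, exactly as in Lem.~\ref{Thm:HamSim_Amplified_Overlaps}. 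A second issue: your step three is ambiguous about \emph{where} the linear polynomial $x\mapsto\gamma x$ is applied. If it is applied to the Hamiltonian block-encoding itself (the single-oracle route of Thm.~\ref{Cor:Operator_Amplification}), the paper explicitly notes there is no net gain---the $\gamma$-fold drop in normalization is exactly cancelled by the $\gamma$-fold query cost of the polynomial. The speedup arises only by applying amplitude multiplication (Thm.~\ref{Thm:Linear_Amplitude_Amplification}) to $\hat U_{\text{row}}$ and $\hat U_{\text{col}}$ separately, exploiting the factored structure of the signal unitary; your write-up blurs these two distinct mechanisms.
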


The challenge is exacerbated by how unitary time-evolution, though a natural consequence of Schr{\"o}dinger's equation in \emph{continuous}-time, is not natural to the gate model of \emph{discrete}-time quantum computation. In some cases, such as quantum matrix inversion~\cite{Kothari2014efficient}, algorithms that are more efficient as well as considerably simpler in both execution and concept can be obtained by creatively bypassing Hamiltonian simulation as an intermediate step. The need to disentangle the problem of exploiting structure from that of finding best simulation algorithms is highlighted by celebrated Hamiltonian simulation techniques such Lie-Product formulas~\cite{Lloyd1996universal}, quantum walks~\cite{Childs2010}, and truncated-Taylor series~\cite{Berry2015Truncated}, each radically different and specialized to some class of structured matrices.

A unifying approach to exploiting the structure of Hamiltonians, independent of any specific quantum algorithm, is hinted at by recent results on Hamiltonian simulation by \emph{qubitization}~\cite{Low2016hamiltonian}. There, we focus on a \emph{standard-form} encoding of matrices (Def.~\ref{Def:Standard_Form}), which, in addition to generalizing a number of prior input models, also appears more natural. On measurement outcome $\ket{0}_a$ with best-case success probability $(\|\hat{H}\|/\alpha)^2\le 1$, a Hermitian measurement operator $\hat{H}/\alpha$ is applied on the system -- thus the standard-form is no more or less than the fundamental steps of generalized measurement~\cite{Nielsen2004}.
Treating this quantum circuit as a unitary oracle, this amounts possessing no structural information whatsoever about $\hat{H}$. In this situation, we provided an optimal simulation algorithm (Thm.~\ref{Thm:Ham_Sim_Qubitization}), notably with only $\mathcal{O}(1)$ ancilla overhead.
\begin{restatable}[Standard-form matrix encoding]{definition}{StandardForm}
\label{Def:Standard_Form}
A matrix $\hat{H} \in \mathbb {C}^{n\times n}$ acting on the system register $s$ is encoded in standard-form-$(\hat{H},\alpha,\hat{U},d)$ with normalization $\alpha \ge \|\hat{H}\|$ by the computational basis state $\ket{0}_a \in \mathbb {C}^d$ on the ancilla register $a$ and signal unitary $\hat{U} \in \mathbb {C}^{d n \times dn}$ if $(\bra{0}_a\otimes \hat{I}_s)\hat{U}(\ket{0}_a\otimes \hat{I}_s)=\hat{H}/\alpha$.\footnote{The unitary $\hat{G}$ defined in~\cite{Low2016hamiltonian} such that $((\bra{0}\hat{G}^\dag)\otimes \hat{I})\hat{U}((\hat{G}\ket{0})\otimes \hat{I})=\hat{H}/\alpha$, which encodes $\hat{H}$ with normalization $\alpha$, may be absorbed into a redefinition of $\hat{U}$. Moreover, for any $\beta > 0$, this is identical to encoding $\hat{H}\beta$ with normalization $\alpha\beta$.
} 
If $\hat{H}$ is also Hermitian, this is called a Herimitian standard-form encoding.
\end{restatable}
\begin{theorem}[Hamiltonian simulation by qubitization Thm.~1~\cite{Low2016hamiltonian}]
\label{Thm:Ham_Sim_Qubitization}
Given Hermitian standard-form-$(\hat{H},\alpha,\hat{U},d)$, there exists a standard-form-$(\hat{X},1,\hat{V},4d)$ such that $\|\hat{X}-e^{-i\hat{H}t}\|\le\epsilon$, where $\hat{V}$ requires $Q=\mathcal{O}\big(t\alpha +\frac{\log{(1/\epsilon)}}{\log\log{(1/\epsilon)}}\big)$ queries to controlled-$\hat{U}$ and $\mathcal{O}(Q\log{(d)})$  primitive gates\footnote{As error $\epsilon$ occurs only in logarithms, it may refer to the trace distance, failure probability, or any other polynomially-related distance without affecting the complexity scaling.}.
\end{theorem}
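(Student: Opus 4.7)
The plan is to reduce $e^{-i\hat{H}t}$ to a polynomial transformation of a walk operator $\hat{W}$ built from $\hat{U}$ whose spectrum encodes that of $\hat{H}/\alpha$, then to realize that polynomial via quantum signal processing (QSP). First, I construct $\hat{W}$ (the \emph{qubitization walk}): with the ancilla projector $\hat{\Pi}_a=\ket{0}\bra{0}_a\otimes\hat{I}_s$, and, after at most a one-qubit extension of $\hat{U}$ that renders it self-inverse, set $\hat{W}=(2\hat{\Pi}_a-\hat{I})\hat{U}$. For each eigenvector $\ket{\lambda}$ of $\hat{H}/\alpha$ with $\lambda\in[-1,1]$, the encoding identity and unitarity of $\hat{U}$ force $\hat{U}\ket{0}_a\ket{\lambda}=\lambda\ket{0}_a\ket{\lambda}+\sqrt{1-\lambda^2}\ket{\perp_\lambda}$ with $\ket{\perp_\lambda}\perp \mathrm{range}(\hat{\Pi}_a)$; I then verify that $\mathrm{span}\{\ket{0}_a\ket{\lambda},\ket{\perp_\lambda}\}$ is $\hat{W}$-invariant and that $\hat{W}$ restricts there to a $2\times 2$ rotation by angle $2\arccos\lambda$. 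Consequently $\hat{W}$ is block-diagonal with rotation blocks indexed by the eigenvalues of $\hat{H}/\alpha$, i.e., a direct sum of \emph{qubits}.

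Second, I apply QSP~\cite{Low2016methodology} to $\hat{W}$. With one additional ancilla qubit, a length-$m$ alternating sequence of controlled-$\hat{W}$ and single-qubit phase rotations $\{\phi_j\}_{j=0}^{m}$ realizes, in the $\ket{0}$-sector of the combined ancilla, a standard-form encoding of $P(\hat{H}/\alpha)+iQ(\hat{H}/\alpha)\sqrt{\hat{I}-(\hat{H}/\alpha)^2}$ for any real polynomials $P,Q$ of degree $\le m$ with matching parity and $|P|^2+(1-x^2)|Q|^2\le 1$ on $[-1,1]$; the phase schedule $\{\phi_j\}$ exists by a standard completion theorem. I choose $P,Q$ as truncated Jacobi--Anger expansions of $\cos(\tau x)$ and $-\sin(\tau x)/\sqrt{1-x^2}$ with $\tau=\alpha t$. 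The Bessel decay $|J_k(\tau)|\le(e\tau/2k)^k$ for $k>\tau$ bounds the pointwise truncation error, hence operator-norm error, by $\epsilon$ once $m=\mathcal{O}(\tau+\log(1/\epsilon)/\log\log(1/\epsilon))$ via a Stirling estimate.

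Third, I assemble and tally. The encoded operator equals $\cos(\tau\hat{H}/\alpha)-i\sin(\tau\hat{H}/\alpha)=e^{-i\hat{H}t}$ up to $\epsilon$ in operator norm; since this combination is unitary on the spectrum, no sub-normalization is incurred and the output has standard-form normalization $1$. Each $\hat{W}$ costs one controlled-$\hat{U}$ plus an $\mathcal{O}(\log d)$-gate reflection, so $Q=\mathcal{O}(m)=\mathcal{O}(t\alpha+\log(1/\epsilon)/\log\log(1/\epsilon))$ controlled-$\hat{U}$ queries and $\mathcal{O}(Q\log d)$ primitive gates suffice. The ancilla dimension grows from $d$ by a factor of $4$: one factor of $2$ for the self-inverse embedding of $\hat{U}$ and one for the QSP qubit that carries the $(P,Q)$ combination, yielding standard-form-$(\hat{X},1,\hat{V},4d)$.

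The main obstacle is the qubitization step: verifying $\hat{W}$-invariance of every 2D span, pinning the restricted rotation angle exactly at $2\arccos\lambda$, and cleanly handling the degenerate cases $\lambda=\pm1$ where $\ket{\perp_\lambda}$ is undefined (these collapse to $1$-dimensional invariant subspaces on which $P,Q$ evaluate exactly, contributing no error), together with the phase-existence theorem for QSP that converts the chosen $(P,Q)$ into an explicit phase schedule. Once the block decomposition of $\hat{W}$ and the QSP factorization are in hand, the final error and cost bounds reduce to the standard Jacobi--Anger truncation estimate, giving the claim.
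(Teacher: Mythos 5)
Your overall architecture---qubitizing the standard-form encoding into a walk $\hat{W}$ with two-dimensional invariant blocks rotating by $\theta_\lambda=\arccos\lambda$, then applying quantum signal processing with a Jacobi--Anger truncation at degree $m=\mathcal{O}\big(\alpha t+\log(1/\epsilon)/\log\log(1/\epsilon)\big)$---is exactly the route of the cited proof in~\cite{Low2016hamiltonian} (and of Sec.~\ref{Sec:Standard-form_QSP} here). The genuine gap is in your choice of QSP target: you use a \emph{single} sequence encoding $P(x)+iQ(x)\sqrt{1-x^2}$ with $P\approx\cos(\tau x)$ and $Q\approx-\sin(\tau x)/\sqrt{1-x^2}$. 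The imaginary part of such an encoding necessarily carries the factor $\sqrt{1-x^2}$, which vanishes at $x=\pm1$ while $\sin(\tau x)$ generically does not: at $\lambda=\pm1$ the encoded value is the real number $P(\pm1)$, so the error there is at least $|\sin(\alpha t)|$, and since realizability forces $(1-x^2)Q^2(x)\le 1-P^2(x)$ with $Q$ a bounded-degree polynomial, the error remains $\Omega(|\sin(\alpha t)|)$ on a window of width $\Theta(1/m^2)$ around $\pm1$ (a degree-$m$ polynomial cannot track the divergent $\sin(\tau x)/\sqrt{1-x^2}$ there). Because the theorem permits $\alpha=\|\hat{H}\|$, eigenvalues may sit exactly at $\pm1$, so your assertion that the degenerate $\lambda=\pm1$ blocks ``contribute no error'' is precisely the point at which the argument fails; no admissible $Q$ repairs a single mixed-parity sequence of this form.

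The standard fix---and the actual role of the second ancilla factor of $2$---is to approximate the even part $\cos(\tau x)$ and the odd part $\sin(\tau x)$ \emph{separately} by their truncated Jacobi--Anger expansions, each realized as a real component of its own phased-iterate sequence (e.g.\ via the achievable-$(\mathcal A,\mathcal B)$ or $(\mathcal A,\mathcal C)$ characterizations of~\cite{Low2016methodology} quoted in this paper), and to combine the two with a one-qubit linear-combination-of-unitaries step. This produces an $\epsilon$-approximation of $e^{-i\hat{H}t}$ with subnormalization $1/2$, which is then removed by (robust) oblivious amplitude amplification---legitimate here because the target is $\epsilon$-close to unitary---recovering normalization $1$, ancilla dimension $4d$, and the same query count $\mathcal{O}\big(t\alpha+\log(1/\epsilon)/\log\log(1/\epsilon)\big)$ up to constant factors. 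With that replacement, your qubitization step, truncation estimate, and cost bookkeeping go through and match the cited proof.
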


This motivates the standard-form encoding as the appropriate endpoint when structural information about $\hat{H}$ is provided, though it does not exclude the possibility of superior simulation algorithms not based on the standard-form. As Thm.~\ref{Thm:Ham_Sim_Qubitization} is the optimal simulation algorithm, any exploitation of structure should manifest in minimizing the normalization $\alpha$ of a Hamiltonian encoded in Def.~\ref{Def:Standard_Form}. In order to avoid  accumulating polynomial factors of errors, this must only be with an exponentially small distortion to its spectrum. Moreover, the cost of the procedure should allow for a favorable trade-off in the query complexity of Hamiltonian simulation. Thus manipulation of the standard-form and any additional structural information to this end is what we call the \emph{uniform spectral amplification} problem.
\begin{problem}[Uniform spectral amplification]
\label{Problem:Uniform_Spectral_Amplification}
Given Hermitian standard-form-$(\hat{H},\alpha,\hat{U},d)$, and an upper bound $\Lambda \in[ \|\hat{H}\|,\alpha]$ on the spectral norm, exploit any additional information about $\hat{H}$ or the signal unitary $\hat{U}$ to construct a $Q$-query quantum circuit that encodes $\hat{H}_{\text{amp}}$ in standard-form with normalization $\Lambda$, such that $\|\hat{H}_{\text{amp}}-\hat{H}\|\le \epsilon$, and $Q=o(\alpha/\Lambda)\cdot \mathcal{O}(\text{polylog}(1/\epsilon))$.
\end{problem}
Uniform spectral amplification is non-trivial as it precludes a number of standard techniques. First, amplitude amplification is precluded as the success probability must be boosted for \emph{all} input states to the system. Second, oblivious amplitude amplification~\cite{Berry2014,Berry2015Truncated} is also precluded as $\hat{H}$ is not in general unitary, or even close to unitary. Third, spectral gap amplification~\cite{Somma2013SpectralGap} is precluded as it distorts the spectrum. As such, solving this problem would be of broad interest beyond Hamiltonian simulation. For instance, spectral gap amplification is fundamental to adiabatic state preparation and understanding properties of condensed matter system. Moreover, the prevalence of generalized measurements means that this could also be applicable to quantum observable estimation in metrology and repeat-until-success gate synthesis~\cite{Paetznick2014}.

Some forms of spectral gap amplification have an underlying structure that resembles the amplitude amplification algorithm for quantum state preparation. This suggests that at least one possible solution to uniform spectral amplification could be obtained by solving a related non-trivial \emph{amplitude multiplication} problem, and vice-versa.
\begin{problem}[Amplitude multiplication]
\label{Problem:Uniform_Amplitude_Amplification}
Given a quantum state preparation oracle $\hat{G}\ket{0}_a\ket{0}_b=\lambda \ket{t}_a\ket{0}_b+\sqrt{1-\lambda^2} \ket{t^\perp}_{ab}$, and an upper bound $\Gamma \in [\lambda ,1]$ on the target state overlap, construct a $Q$-query quantum circuit $\hat{V}$ that prepares $\hat{V}\ket{0}_a\ket{0}_b=\lambda_{\text{amp}} \ket{t}_a\ket{0}_b+\cdots \ket{t^\perp}_{ab}$ such that
$|\lambda_{amp}-\lambda/\Gamma|\le \epsilon$, and $Q=\mathcal{O}(\Gamma^{-1}\log{(1/\epsilon)})$.
\end{problem}
Amplitude multiplication is particularly interesting as amplitude amplification and its many other variations~\cite{Yoder2014} amplify target states with the same optimal scaling $\mathcal{O}(\Lambda^{-1})$, but with a highly non-linear dependence on the initial overlap. In contrast, Problem~\ref{Problem:Uniform_Amplitude_Amplification} performs arithmetic multiplication on the amplitudes with exponentially small error, notably \emph{independent of, and without any prior knowledge of their values}.

\subsection{Our Results}
We present quantum algorithms for Hamiltonian simulation based on the general principle of finding solutions to the uniform spectral amplification Problem~\ref{Problem:Uniform_Spectral_Amplification}, which may be  broadly categorized as follows. In `uniform spectral amplification by quantum signal processing', we make no assumptions on the form of the signal unitary in the standard-form encoding of $\hat{H}$, and thus treat as a single unitary oracle. In `uniform spectral amplification by amplitude multiplication', we assume that signal unitary has the structure of factoring into two or three unitary oracles, and by solving amplitude multiplication in Problem~\ref{Problem:Uniform_Amplitude_Amplification}, also approach the sparse simulation results of Claim.~\ref{Claim:Sparse_Ham_Sim}. We then provide a unifying perspective in `universality of the standard-form' which further motivates the standard-form encoding of Hamiltonians as a fundamental ingredient in quantum computation. In greater detail, these results are as follows.

\subsubsection{Uniform Spectral Amplification by Quantum Signal Processing}
\label{Sec:Branch_QSP}
If we make no assumptions on the form of the signal unitary $\hat{U}$ that realizes the standard-form encoding, we treat $\hat{U}$ as a black-box oracle, which we call the standard-form oracle. In this situation, the first result is uniform spectral amplification in Thm.~\ref{Cor:Operator_Amplification} that reduces the normalization $\alpha$ of encoded Hamiltonians to $\mathcal{O}(\Lambda)$ using $\mathcal{O}(\alpha\Lambda^{-1}\log(1/\epsilon))$ queries. This produces a quadratic improvement in success probability when the standard-form is applied to perform quantum measurement, but serves no advantage to Hamiltonian simulation.
\begin{theorem}[Uniform spectral amplification by spectral multiplication]
\label{Cor:Operator_Amplification}
Given Hermitian standard-form-$(\hat{H},\alpha,\hat{U},d)$, let $\Lambda\in[\|\hat{H}\|,\alpha]$. Then for any $\epsilon\le \mathcal{O}(\Lambda/\alpha)$, there exists a standard-form-$(\hat{H}_{\text{amp}},2\Lambda,\hat{V},4d)$ such that $\frac{1}{2\Lambda}\|\hat{H}_{\text{amp}}-\hat{H}\| \le \epsilon$, and $\hat{V}$ requires $\mathcal{O}(\alpha\Lambda^{-1}\log{(1/\epsilon)})$ queries to controlled-$\hat{U}$.
\end{theorem}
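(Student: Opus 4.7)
The plan is to invoke the quantum signal processing (QSP) framework underlying Thm.~\ref{Thm:Ham_Sim_Qubitization} to apply a carefully chosen odd polynomial transformation $P$ to the signal operator $\hat{H}/\alpha$ encoded in the standard form. Because the eigenvalues of $\hat{H}/\alpha$ all lie inside $[-\Lambda/\alpha,\Lambda/\alpha]$, it suffices to design $P$ so that $P(x)\approx \frac{\alpha x}{2\Lambda}$ on this subinterval while remaining bounded by $1$ on $[-1,1]$ (the domain mandated by the unitarity of the block encoding). Qubitization then yields a Hermitian standard-form-$(2\Lambda\,P(\hat{H}/\alpha),\,2\Lambda,\,\hat{V},\,4d)$, where the four-fold blowup of the ancilla dimension is exactly the two extra qubits that QSP requires, and where the number of controlled-$\hat{U}$ (and controlled-$\hat{U}^\dagger$) invocations equals the degree of $P$.

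The core technical step is constructing an odd polynomial $P$ of degree $n=\mathcal{O}(\alpha\Lambda^{-1}\log(1/\epsilon))$ satisfying (i) $|P(x)|\le 1$ for all $x\in[-1,1]$, and (ii) $|P(x)-\frac{\alpha x}{2\Lambda}|\le \epsilon$ for all $x\in[-\Lambda/\alpha,\Lambda/\alpha]$. Writing $\delta=\Lambda/\alpha\in(0,1]$, this is the problem of amplifying the slope of the identity by a factor $1/(2\delta)$ on a subinterval of radius $\delta$ while remaining uniformly bounded on $[-1,1]$. The target has sup-norm at most $1/2$ on the inner interval, leaving the necessary slack. One natural construction is to multiply the identity $x$ by a polynomial proxy for the indicator $\mathbf{1}_{[-\delta,\delta]}$ built from the standard error-function or rectangle-function approximations, whose degree scales as $\mathcal{O}(\delta^{-1}\log(1/\epsilon))$; alternatively, one truncates the Chebyshev expansion of $\frac{\alpha x}{2\Lambda}\,\chi(x)$ for a smooth cutoff $\chi$, using the geometric decay of the coefficients together with a Bernstein-type bound to control the tails on $[-1,1]$. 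The hypothesis $\epsilon\le\mathcal{O}(\Lambda/\alpha)$ enters here to ensure the approximation error does not overwhelm the $1/2$ slack and force $|P|>1$ somewhere on $[-1,1]$.

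Given such a $P$, QSP (or equivalently the qubitized walk) produces $\hat{V}$ with $(\bra{0}_a\otimes \hat{I}_s)\hat{V}(\ket{0}_a\otimes \hat{I}_s)=P(\hat{H}/\alpha)$ using $\mathcal{O}(n)$ queries to controlled-$\hat{U}$. Setting $\hat{H}_{\text{amp}}=2\Lambda\,P(\hat{H}/\alpha)$ gives the claimed encoding with normalization $2\Lambda$, and since $P$ and $f(x)=\frac{\alpha x}{2\Lambda}$ are both odd polynomials, the functional-calculus bound reads
\[
\tfrac{1}{2\Lambda}\|\hat{H}_{\text{amp}}-\hat{H}\|=\left\|P(\hat{H}/\alpha)-\tfrac{1}{2\Lambda}\hat{H}\right\|\le \sup_{x\in\sigma(\hat{H}/\alpha)}\left|P(x)-\tfrac{\alpha x}{2\Lambda}\right|\le \epsilon.
\]
The total query count is $\mathcal{O}(n)=\mathcal{O}(\alpha\Lambda^{-1}\log(1/\epsilon))$ as required.

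The hard part is the polynomial construction: achieving the \emph{linear} dependence on $\alpha/\Lambda$ rather than the quadratic dependence one would obtain from a naive Grover-like amplitude amplification. This linearity is the signature of uniform (eigenstate-independent) amplification and is what distinguishes this result from the original amplitude-amplification algorithm, where each amplification step contributes only a single unit of rotation. Verifying simultaneously that the constructed $P$ is bounded by $1$ on all of $[-1,1]$ \emph{and} achieves exponential accuracy on the subinterval $[-\Lambda/\alpha,\Lambda/\alpha]$, without inflating the degree beyond $\mathcal{O}(\alpha\Lambda^{-1}\log(1/\epsilon))$, is the main analytic obstacle; the rest of the proof is a direct appeal to QSP.
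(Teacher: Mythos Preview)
Your proposal is correct and follows essentially the same approach as the paper: the paper also defines $\Gamma=\Lambda/\alpha$, invokes flexible quantum signal processing (Thm.~\ref{Thm:QSP_B}) with an odd polynomial $p_{\text{lin},\Gamma,n}$ approximating the truncated linear function $x/(2\Gamma)$ on $[-\Gamma,\Gamma]$ while bounded on $[-1,1]$, and constructs this polynomial exactly as you suggest---by multiplying $x/(2\Gamma)$ with an error-function-based approximation to $\text{rect}(x)$ of degree $\mathcal{O}(\Gamma^{-1}\log(1/\epsilon))$ (Thm.~\ref{Thm.Polynomial_LAA} in Appendix~\ref{Sec:Polynomials_Amplitude_Multiplication}). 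Your identification of the role of the hypothesis $\epsilon\le\mathcal{O}(\Lambda/\alpha)$ in preserving the bound $|P|\le 1$ and of the $4d$ ancilla blowup from the two extra QSP qubits also matches the paper exactly.
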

The second result is uniform spectral amplification of only the low-energy subspace in Thm.~\ref{Thm:Ham_Encoding_Uniform_Amplification}, of $\hat{H}$ with eigenvalues $\in[-\alpha,-\alpha(1-\Delta)]$, which is of interest to quantum chemistry and adiabatic computation. There, the effective normalization is reduced to $\mathcal{O}(1)$ using $\mathcal{O}(\Delta^{-1/2}\log^{3/2}{(\frac{1}{\Delta\epsilon})})$ queries. This is a generalization of spectral gap amplification~\cite{Somma2013SpectralGap} with the distinction of preserving the relative energy spacing of all relevant states, and of applying to any Hamiltonian encoded in standard-form. When applied to Hamiltonian simulation, an acceleration to $\mathcal{O}\big(t\alpha\sqrt{ \Delta}\log^{3/2}{(t\alpha/\epsilon)}\big)$ queries is obtained in Cor.~\ref{Cor:Ham_Sim_Spectral_Amplification}.
\begin{theorem}[Uniform spectral amplification of low-energy subspaces]
\label{Thm:Ham_Encoding_Uniform_Amplification}
Given Hermitian standard-form-$(\hat{H},\alpha,\hat{U},d)$ with eigenstates $\hat{H}/\alpha\ket{\lambda}=\lambda\ket{\lambda}$, let $\Delta \in(0,1)$ be a positive constant, and $\hat{\Pi}=\sum_{\lambda \in[-1,-1+\Delta]}\ket{\lambda}\bra{\lambda}$ be a projector onto the low-energy subspace of $\hat{H}$. Then there exists a standard-form-$(\hat{H}_{\text{amp}},\Delta\alpha,\hat{V},4d)$ such that $\|\hat{\Pi}(\frac{\hat{H}_{\text{amp}}}{\Delta\alpha}-\frac{\hat{H}+\alpha\hat{I}(1-\Delta)}{\Delta\alpha})\hat{\Pi}\|\le \epsilon$, and $\hat{V}$ requires $\mathcal{O}(\Delta^{-1/2}\log^{3/2}{(\frac{1}{\Delta\epsilon})})$ queries to controlled-$\hat{U}$.
\end{theorem}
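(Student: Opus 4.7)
My plan is to realize $\hat{H}_{\text{amp}}:=\Delta\alpha\cdot P(\hat{H}/\alpha)$ for a carefully designed real polynomial $P$, implemented by applying quantum signal processing (the singular-value transformation of qubitization) to the signal unitary $\hat{U}$. For any real polynomial $P$ of degree $d$ with $|P(x)|\le 1$ on $[-1,1]$, the construction underlying Thm.~\ref{Thm:Ham_Sim_Qubitization} yields a standard-form-$(P(\hat{H}/\alpha),1,\hat{V},4d)$ using $\mathcal{O}(d)$ controlled-$\hat{U}$ queries, which by the normalization remark in the footnote to Def.~\ref{Def:Standard_Form} equals a standard-form-$(\Delta\alpha\,P(\hat{H}/\alpha),\Delta\alpha,\hat{V},4d)$, matching the output format of the theorem.

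Writing $G(\lambda):=(\lambda+1-\Delta)/\Delta$ so that $(\hat{H}+\alpha(1-\Delta)\hat{I})/(\Delta\alpha)=G(\hat{H}/\alpha)$, the target error bound collapses in the eigenbasis of $\hat{H}$ to the pointwise requirement $|P(\lambda)-G(\lambda)|\le\epsilon$ for every $\lambda\in[-1,-1+\Delta]$. The task therefore reduces to constructing a real polynomial $P$ satisfying (i) $|P|\le 1$ on $[-1,1]$, (ii) $|P-G|\le\epsilon$ on $[-1,-1+\Delta]$, and (iii) $\deg P=\mathcal{O}(\Delta^{-1/2}\log^{3/2}(\tfrac{1}{\Delta\epsilon}))$. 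The matching $\Omega(\Delta^{-1/2})$ lower bound on the degree is immediate from Markov's inequality: condition (ii) forces $|P(-1)-P(-1+\Delta)|\ge 1-2\epsilon$, so some $|P'(\xi)|\gtrsim 1/\Delta$, whereas any degree-$d$ polynomial with $|P|\le 1$ on $[-1,1]$ obeys $\|P'\|_\infty\le d^2$.

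For the upper-bound construction I would pass to the angle variable $\theta=\arccos\lambda$, under which degree-$d$ polynomials in $\lambda$ become bandlimited cosine sums $\sum_{k\le d}c_k\cos(k\theta)$ on $[0,\pi]$. The low-energy window maps to $\theta\in[\pi-\sqrt{2\Delta}(1+o(1)),\pi]$, an interval of width $\Theta(\sqrt{\Delta})$, and in this variable $G(\cos\theta)$ is essentially the quadratic $(\pi-\theta)^2/(2\Delta)-1$. The design problem becomes: approximate a locally analytic target on a window of width $\sqrt{\Delta}$ by a cosine sum of band limit $\mathcal{O}(\Delta^{-1/2})$ that remains $\ell^\infty$-bounded by $1$ on $[0,\pi]$. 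I would accomplish this by convolving an appropriately clipped extension of $G$ with a narrow Gaussian-type kernel—turning the clipping kink into an analytic-in-a-strip transition—and then invoking exponential decay of Chebyshev coefficients for such functions. Trading the smoothing scale against the truncation tail yields the band limit $\Delta^{-1/2}$ together with the $\log^{3/2}(\tfrac{1}{\Delta\epsilon})$ overhead, mirroring the sign- and rectangle-function approximants developed in the quantum signal processing literature.

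The main obstacle is satisfying (i) and (ii) simultaneously at the Markov-optimal degree. A tempting product ansatz $P=G\cdot R$ with $R$ a polynomial approximation to the indicator of $[-1,-1+\Delta]$ necessarily overshoots in the transition region, because $|G|$ grows like $1/\Delta$ beyond the window while any bounded polynomial $R$ transitions only over width $\Omega(\sqrt{\Delta})$. The working construction must therefore be built as a single integrated approximant whose boundedness on $[-1+\Delta,1]$ is enforced by the oscillatory structure of high-degree Chebyshev polynomials rather than by pointwise smallness, and certifying $|P|\le 1$ globally is the technically delicate step.
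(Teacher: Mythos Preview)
Your reduction to a polynomial design problem plus quantum signal processing matches the paper, but you then dismiss as unworkable exactly the construction the paper uses. The paper (Lems.~\ref{Lem.Entire_gapped_lin} and~\ref{Lem.Polynomial_gapped_linear}) takes the product ansatz
\[
f_{\text{gap},\Delta,\epsilon}(x)=G(x)\cdot\tfrac{1}{2}\bigl(1-\operatorname{erf}\bigl(k(x+1-\tfrac{3\Delta}{2})\bigr)\bigr),\qquad k=\Theta\bigl(\Delta^{-1}\sqrt{\log(1/\epsilon)}\bigr),
\]
proves this entire function is bounded by $1$ on $[-1,\infty)$, and only then Chebyshev-truncates it. Your overshoot objection conflates the transition region with the far field. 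The $\operatorname{erf}$ step is centred at $x_0=-1+\tfrac{3\Delta}{2}$, where $G(x_0)=\tfrac{1}{2}$; on the whole band $[-1+\Delta,-1+2\Delta]$ one has $G\in[0,1]$, so there is no $1/\Delta$ growth where the step is transitioning. Beyond that band the $\operatorname{erf}$ factor decays Gaussianly like $e^{-k^2(x-x_0)^2}$, which for $k\sim\Delta^{-1}\sqrt{\log(1/\epsilon)}$ easily dominates the linear growth of $G$; the paper checks $|f_{\text{gap},\Delta,\epsilon}|\le 0.7$ on $[-1+3\Delta/2,1-\Delta]$ directly. A Bernstein-ellipse estimate then yields a degree-$\mathcal{O}(\Delta^{-1/2}\log^{3/2}(1/(\Delta\epsilon)))$ Chebyshev truncation with uniform error $\epsilon$, the $\Delta^{-1/2}$ coming from the analyticity radius $\rho=e^{\Theta(\sqrt{\Delta/\log(1/\epsilon)})}$ of the entire product, not from any polynomial transition-width argument. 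Your angle-variable convolution scheme is therefore an unnecessary detour.

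A secondary gap: the QSP black box is Thm.~\ref{Thm:QSP_B}, not Thm.~\ref{Thm:Ham_Sim_Qubitization}, and it does not accept arbitrary bounded real polynomials---it requires fixed parity and $P(0)=0$. The paper handles this by passing to the odd part $p_n(x)\mapsto\tfrac{1}{2}(p_n(x)-p_n(-x))$, using the estimate $|f_{\text{gap},\Delta,\epsilon}(x)|\le\epsilon/10$ on $[1-\Delta,1]$ from Lem.~\ref{Lem.Entire_gapped_lin} to ensure antisymmetrisation does not spoil the approximation on $[-1,-1+\Delta]$. Your proposal does not address this constraint.
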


These results stem primarily from constructing polynomials with desirable properties, which we implement using the technique of Thm.~\ref{Thm:QSP_B}. This flexible variant of quantum signal processing is subject to fewer constraints than in prior art. Moreover, the advantage of quantum signal processing over the related technique of linear-combination-of-unitaries~\cite{Berry2015Hamiltonian} is its avoidance of Hamiltonian simulation as an intermediate step. This reduces overhead in space, query complexity, and error, and leads to an extremely simple algorithm that directly implements polynomial functions of $\hat{H}$ without any approximation.  
\begin{theorem}[Flexible quantum signal processing]
\label{Thm:QSP_B}
Given Hermitian standard-form-$(\hat{H},1,\hat{U},d)$, let $B$ be any function that satisfies the all the following conditions:
\\
(1) ${B}(x)=\sum^N_{j=0}b_j x^j$ is a real parity-$(N \mod 2)$ polynomial of degree at most $N$;
\\
(2) $B(0)=0$;
\\
(3) $\forall x\in[-1,1]$, $B^2(x)\le 1$.
\\
Then there exists a Hermitian standard-form-$(B[\hat{H}],1,\hat{V},4d)$, where $B[\hat{H}]=\sum^N_{j=0}b_j \hat{H}^j$, and $\hat{V}$ requires $\mathcal{O}(N)$ queries to controlled-$\hat{U}$ and $\mathcal{O}(N\log(d))$ primitive quantum gates pre-computed in classical $\mathcal{O}(\text{poly}(N))$ time.
\end{theorem}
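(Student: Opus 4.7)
The plan is to reduce the construction of $B[\hat H]$ to evaluating a single-variable polynomial on each eigenvalue of $\hat H$ in parallel, by combining qubitization of the standard-form encoding with a short quantum signal processing (QSP) sequence of $Z$-phase rotations on one additional ancilla qubit.

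First, I would apply qubitization to the Hermitian standard-form-$(\hat H,1,\hat U,d)$, as in the construction underlying Thm.~\ref{Thm:Ham_Sim_Qubitization}~\cite{Low2016hamiltonian}, to obtain a \emph{qubiterate} $\hat W$ acting on an ancilla of dimension $2d$. The defining property of $\hat W$ is a direct-sum decomposition of the joint Hilbert space into $\hat W$-invariant two-dimensional subspaces $\mathcal H_\lambda$, one for each eigenvalue $\lambda$ of $\hat H$, on which $\hat W$ is a single-qubit rotation whose cosine-angle is precisely $\lambda$. Consequently, implementing $B[\hat H]$ reduces to simultaneously effecting the scalar polynomial $B$ of $\cos\theta_\lambda=\lambda$ on every block.

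Second, I would form a length-$N$ QSP sequence $\hat V_\phi$ by interleaving $N$ applications of (controlled) $\hat W$ with single-qubit phase rotations $e^{i\phi_j\hat Z}$ on a designated ancilla qubit. Restricted to any block $\mathcal H_\lambda$, $\hat V_\phi$ is an $\mathrm{SU}(2)$ element whose $(0,0)$-entry is a complex polynomial $P(\lambda)$ of degree at most $N$ and parity $N\bmod 2$. A standard QSP classification asserts that $P$ is realisable by classically computable phases $\{\phi_j\}_{j=0}^{N}$ in $\mathrm{poly}(N)$ time if and only if $|P(x)|^2\le 1$ on $[-1,1]$ and there exists a real companion polynomial $C$ with $|P(x)|^2 + (1-x^2)\,C(x)^2 \equiv 1$.

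Third, setting $P(x)=i\,B(x)$, the bound $|B|\le 1$ immediately yields $|P|^2\le 1$, and a Fej\'er--Riesz-type factorisation of the nonnegative polynomial $1-B(x)^2$ produces a real $C$ with $(1-x^2)C(x)^2 = 1-B(x)^2$; the parity of $B$ together with the hypothesis $B(0)=0$ forces $C$ to have the complementary parity and degree at most $N-1$ required to fit the QSP template. The resulting $\hat V_\phi$ satisfies $\bra{0}_a\hat V_\phi\ket{0}_a = i\,B(\hat H)$ on the $2d$-dimensional qubitized ancilla. One extra ancilla qubit then isolates the Hermitian part via an LCU-style combination of $\hat V_\phi$ and $\hat V_\phi^{\dagger}$, inflating the ancilla to dimension $4d$ and yielding the advertised Hermitian standard-form-$(B[\hat H],1,\hat V,4d)$.

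The main obstacle is the polynomial-completion step: verifying that the precise hypotheses on $B$ (real, parity $N\bmod 2$, $B(0)=0$, and $|B|\le 1$) are exactly those sufficient to guarantee the companion $C$ and hence the QSP phases, without the auxiliary constraints (e.g.\ on root locations or leading coefficients of $P$) imposed by older, less flexible QSP formulations. Once this is in hand, the $\mathcal O(N)$ controlled-$\hat U$ queries and $\mathcal O(N\log d)$ primitive gates follow directly from the length of the QSP sequence and the $\log d$-qubit structure of the ancilla reflections, and the $\mathrm{poly}(N)$ classical precomputation of the phase list is by now standard.
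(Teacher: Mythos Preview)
Your overall architecture---qubitize, run a length-$N$ QSP sequence, then use one extra ancilla to isolate $B$---matches the paper's. The gap is in step~3. You assert a Fej\'er--Riesz-type factorisation $1-B(x)^2=(1-x^2)\,C(x)^2$ with real $C$; evaluating at $x=\pm1$ forces $B(\pm1)^2=1$, which is not among the hypotheses (take $B(x)=x/2$). The correct Luk\'acs-type statement only yields $1-B^2=A^2+(1-x^2)C^2$ with a generally \emph{nonzero} real $A$, so the achievable $(0,0)$-entry is $P=A+iB$, not $iB$. This is exactly the obstruction Thm.~\ref{Thm:QSP_B} is meant to remove: in the paper's convention, realising a \emph{specified} pair $(A,B)$ demands the extra constraints (4)--(5) of Lem.~\ref{Lem:AchievableAB}, which for $A\equiv 0$ read $B(x)^2\ge 1$ off $[-1,1]$ and are not implied by (1)--(3). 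Your use of $B(0)=0$ to force the companion's parity is also misplaced; that condition is a structural necessity of achievable $\mathcal B$ in this QSP model (Lem.~\ref{Thm:AchievableB}), not a factorisation device. Step~4 then inherits the confusion: if $P=iB$ were achievable, the ``Hermitian part'' of $iB(\hat H)$ would be zero, not $B(\hat H)$.

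The paper's resolution does not attempt $A=0$ inside the QSP at all. It invokes the prior characterisation of achievable $\mathcal B$ \emph{alone} (Lem.~\ref{Thm:AchievableB}), whose hypotheses are exactly (1)--(3) and which guarantees phases $\vec\phi$ realising the desired $B$ alongside some uncontrolled $A,C,D$. The extra ancilla then cancels $A$ at the circuit level by combining $\hat W_{\vec\phi}$ with $\hat W_{-\vec\phi}$ (all phases negated---not the adjoint $\hat W_{\vec\phi}^{\dagger}$): the identity $\hat\sigma_x e^{-i\hat\sigma_\phi\theta}\hat\sigma_x=e^{-i\hat\sigma_{-\phi}\theta}$ shows that negating $\vec\phi$ flips the signs of $B,D$ while preserving $A,C$, so a $\ket{+}$-controlled combination of the two leaves only $B(\hat H)$ in the $(\ket{+}_c\ket{0}_{ab})$-block. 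That is where the ancilla dimension $4d$ and the $\mathcal O(N)$ query count come from.
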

\subsubsection{Uniform Spectral Amplification by Amplitude Multiplication}
\label{Sec:Branch_AM}
Alternatively, we here assume that the signal unitary $\hat{U}$ that realizes the standard-form encoding factors into two or three unitary quantum oracles $\hat{U}_\text{row}$, $\hat{U}_\text{col}$, and $\hat{U}_{\text{mix}}$, which we also call standard-form oracles. When the signal unitary factors into two components $\hat{U}=\hat{U}_\text{row}^\dag\hat{U}_\text{col}$, this constrains the representation of matrices in the standard-form to have matrix elements of $\hat{H}$ that are exactly the overlap of appropriately defined quantum states, and generalizes the sparse matrix model first introduced by Childs~\cite{Childs2010} for quantum walks. When  the signal unitary factors into three $\hat{U}=\hat{U}_\text{row}^\dag\hat{U}_{\text{mix}}\hat{U}_\text{col}$ components, amplitude amplification can be applied to obtain non-trivial Hamiltonians.

Note that amplitude amplification had been previously considered in the context of sparse Hamiltonian simulation~\cite{Berry2012}. However, its non-linearity introduced a polynomial dependence on error, which compounded into a polynomial overhead in scaling with respect to time and error. In constrast, our solution to the amplitude multiplication problem Problem~\ref{Problem:Uniform_Amplitude_Amplification} achieves uniform spectral amplification by multiplying all state overlaps by the same constant factor.
Specializing the general result Lem.~\ref{Thm:Ham_Encoding_Uniform_Amplification_State_Overlaps} to the case of sparse Hamiltonians, which are described by standard black-box quantum oracles (Def.~\ref{Def:Sparse_Oracle}) to its non-zero matrix elements and positions, furnishes a simulation algorithm matching the complexity of Claim.~\ref{Claim:Sparse_Ham_Sim}, up to logarithmic factors. Modulo these logarithmic factors, this an improvement over prior art, with either best-case square-root improvement in sparsity~\cite{Low2016HamSim}, or a polynomial improvement in time and exponential improvement in precision~\cite{Berry2012}
\begin{definition}[Sparse matrix oracles~\cite{Berry2012}]
\label{Def:Sparse_Oracle}
Sparse matrices with at most $d$ non-zero elements in every row are specified by two oracles.
The oracle $\hat{O}_{H}\ket{j}\ket{k}\ket{z}=\ket{j}\ket{k}\ket{z\oplus \hat{H}_{jk}}$ queried by $j\in[n]$ row and $k\in[n]$ column indices returns the value $\hat{H}_{jk}=\bra{j}\hat{H}\ket{k}$, with maximum absolute value $\|\hat{H}\|_{\text{max}}=\max_{jk}{|\hat{H}_{jk}|}$. The oracle $\hat{O}_{F}\ket{j}\ket{l}=\ket{j}\ket{f(j,l)}$ queried by $j\in[n]$ row and $l\in[d]$ column indices computes in-place the column index $f(j,l)$ of the $l^{\text{th}}$ non-zero entry of the $j^{\text{th}}$ row.
\end{definition}
\begin{theorem}[Sparse Hamiltonian simulation by amplified state overlap]
\label{Cor:Ham_Sim_Sparse_Amplified}
Given the $d$-sparse matrix oracles in Def.~\ref{Def:Sparse_Oracle} for the Hamiltonian $\hat{H}$, let $\|\hat{H}\|_{\text{max}}=\max_{jk}|\hat{H}_{jk}|$ be the max-norm,  $\|\hat{H}\|_1=\max_{j}\sum_{k}|\hat{H}_{jk}|$ be the induced $1$-norm  , and $\|\hat{H}\|$ be spectral norm. Then $\forall t\ge 0,\; \epsilon >0$, the operator
$e^{-i\hat{H}t}$ can be approximated with error $\epsilon$ using $\mathcal{O}\left(t(d\|\hat{H}\|_{\text{max}}\|\hat{H}\|_1)^{1/2}\log{(\frac{t\|\hat{H}\|}{\epsilon})}\left(1+\frac{1}{t\|\hat{H}\|_1}\frac{\log{(1/\epsilon)}}{\log\log{(1/\epsilon)}}\right)\right)$ queries.
\end{theorem}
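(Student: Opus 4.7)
The plan is to combine the sparse matrix oracles (Def.~\ref{Def:Sparse_Oracle}) with the amplitude-multiplication-based uniform spectral amplification of Sec.~\ref{Sec:Branch_AM} and Hamiltonian simulation by qubitization (Thm.~\ref{Thm:Ham_Sim_Qubitization}). First I would build a three-factor standard-form encoding $\hat{U}=\hat{U}_{\text{row}}^\dag\hat{U}_{\text{mix}}\hat{U}_{\text{col}}$ of $\hat{H}$ using $\mathcal{O}(1)$ queries to $\hat{O}_F$ and $\hat{O}_H$ per call: $\hat{U}_{\text{col}}$ creates a uniform superposition over the $d$ nonzero column indices of the queried row via $\hat{O}_F$; $\hat{U}_{\text{row}}$ does the analogous row preparation; and $\hat{U}_{\text{mix}}$ writes $\sqrt{\hat{H}_{jk}/\|\hat{H}\|_{\text{max}}}$ (or its signed variant) into an ancilla amplitude via a controlled rotation driven by $\hat{O}_H$. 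The resulting encoding has normalization $\alpha=d\|\hat{H}\|_{\text{max}}$ and matches the structural assumption of the state-overlap amplification lemma of Sec.~\ref{Sec:Branch_AM}.

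Next I would apply the amplitude-multiplication-based uniform spectral amplification (the state-overlap specialization of Problem~\ref{Problem:Uniform_Amplitude_Amplification}) to reduce the normalization to $\Lambda=\sqrt{d\|\hat{H}\|_{\text{max}}\|\hat{H}\|_1}$. This $\Lambda$ is admissible as an upper bound on $\|\hat{H}\|$ via $\|\hat{H}\|^2\le\|\hat{H}\|\,\|\hat{H}\|_1\le d\|\hat{H}\|_{\text{max}}\|\hat{H}\|_1=\Lambda^2$, and lies below $\alpha$ because $\|\hat{H}\|_1\le d\|\hat{H}\|_{\text{max}}$. Amplitude multiplication multiplies \emph{every} row/column state overlap by the same factor $\alpha/\Lambda$ with only logarithmic error dependence, producing a new standard-form $(\hat{H}_{\text{amp}},\Lambda,\hat{V},\cdot)$ with $\|\hat{H}_{\text{amp}}-\hat{H}\|\le\epsilon_{\text{amp}}$.

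Finally I would invoke Thm.~\ref{Thm:Ham_Sim_Qubitization} on $\hat{V}$ to simulate $e^{-i\hat{H}_{\text{amp}}t}$ in $N=\mathcal{O}\!\bigl(t\Lambda+\log(1/\epsilon)/\log\log(1/\epsilon)\bigr)$ calls to $\hat{V}$ with internal error $\epsilon/2$. Since $\|e^{-i\hat{H}_{\text{amp}}t}-e^{-i\hat{H}t}\|\le t\,\epsilon_{\text{amp}}$, the choice $\epsilon_{\text{amp}}=\Theta\!\bigl(\epsilon/(t\|\hat{H}\|)\bigr)$ keeps the total error $\mathcal{O}(\epsilon)$ and introduces the $\log(t\|\hat{H}\|/\epsilon)$ factor visible in the final bound. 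Multiplying $N$ by the per-call cost of $\hat{V}$ and collecting terms reproduces the stated complexity, the two summands in the bound corresponding to the two summands in $N$.

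The main obstacle is the amplification step: one must establish that uniformly multiplying the row/column overlaps yields an encoding of $\hat{H}$ (not of some distorted operator) with spectral error controlled by $\epsilon_{\text{amp}}$, and that the combined cost of amplification and qubitization arithmetizes to $t\sqrt{d\|\hat{H}\|_{\text{max}}\|\hat{H}\|_1}\log(\cdot)$ rather than $t\cdot d\|\hat{H}\|_{\text{max}}\log(\cdot)$. This is precisely the improvement that amplitude \emph{multiplication} delivers over both amplitude \emph{amplification} (as used in~\cite{Berry2012}, which suffers polynomial penalties in $t$ and $1/\epsilon$) and the QSP-based spectral multiplication of Thm.~\ref{Cor:Operator_Amplification} (which the excerpt notes gives no advantage for simulation). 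The remaining ingredients — sparse-oracle-to-standard-form encoding in Step~1 and triangle-inequality error propagation through qubitization in Step~3 — are routine.
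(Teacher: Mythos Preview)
Your approach is essentially the paper's: construct the sparse standard-form encoding, identify it as an instance of the state-overlap model with slowdown factor $\|\hat{H}\|_1/(d\|\hat{H}\|_{\text{max}})$, apply amplitude multiplication (Lem.~\ref{Thm:Ham_Encoding_Uniform_Amplification_State_Overlaps}), then qubitization (Thm.~\ref{Thm:Ham_Sim_Qubitization}), with the error split exactly as you describe. The paper packages the last two steps into Lem.~\ref{Thm:HamSim_Amplified_Overlaps} and the proof of Thm.~\ref{Cor:Ham_Sim_Sparse_Amplified} is then a one-line parameter substitution.

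Two technical points deserve correction. First, the $\sqrt{\hat{H}_{jk}/\|\hat{H}\|_{\text{max}}}$ amplitudes must sit in $\hat{U}_{\text{col}}$ and $\hat{U}_{\text{row}}$, not in $\hat{U}_{\text{mix}}$: amplitude multiplication acts on those two as state-preparation oracles and needs a flagged ``bad'' component there to amplify against (cf.\ Eq.~\ref{Eq:State_Overlap_Model_imperfect_3_decomposed}), whereas $\hat{U}_{\text{mix}}$ in the paper is merely the swap between registers $s$ and $a_1$ (Eq.~\ref{Eq:Sparse_Hmax_states}). With your factorization $\hat{U}_{\text{col}}$ is a pure uniform superposition with nothing to amplify. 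Second, the normalization after amplification is $4\alpha\sqrt{\Lambda_\beta\Lambda_\gamma}=\Theta(\|\hat{H}\|_1)$, not $\sqrt{d\|\hat{H}\|_{\text{max}}\|\hat{H}\|_1}$; the quoted complexity arises as the product
\[
\mathcal{O}\Bigl(t\|\hat{H}\|_1+\tfrac{\log(1/\epsilon)}{\log\log(1/\epsilon)}\Bigr)\times\mathcal{O}\Bigl(\sqrt{d\|\hat{H}\|_{\text{max}}/\|\hat{H}\|_1}\,\log\bigl(t\|\hat{H}\|/\epsilon\bigr)\Bigr),
\]
which then factors into the form in the theorem statement. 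Your error-propagation choice $\epsilon_{\text{amp}}=\Theta(\epsilon/(t\|\hat{H}\|))$ is exactly what produces the $\log(t\|\hat{H}\|/\epsilon)$ factor.
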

Observe that in the asymptotic limit of large $\|\hat{H}\|_1 t \gg \log{(1/\epsilon)}$, the query complexity simplifies to $\mathcal{O}\Big(t(d\|\hat{H}\|_{\text{max}}\|\hat{H}\|_1)^{1/2}\log{(\frac{t\|\hat{H}\|}{\epsilon})}\Big)$.
The algorithm of Thm.~\ref{Cor:Ham_Sim_Sparse_Amplified} is particularly flexible. If none of the above norms are known, they may be replaced by any upper bound, such as determined by the inequalities $\|\hat{H}\|_{\text{max}}\le \|\hat{H}\|\le \|\hat{H}\|_{1}\le d \|\hat{H}\|_{\text{max}}$~\cite{Childs2010Limitation}. Even in the worst case, the results are similar to previous optimal simulation algorithms.
 Moreover, the scaling in these parameters is optimal  as we prove matching lower bound Thm.~\ref{Thm:Lower_Bound} by finding a Hamiltonian that solves $\text{PARITY}\circ\text{OR}$.
\begin{theorem}
\label{Thm:Lower_Bound}
For any $d\ge 1$, $s\ge 1$, and $t>0$, there exists a Hamiltonian $\hat{H}$ with sparsity $\Theta(d)$, $\|\hat{H}\|_{\text{max}}=\Theta(1)$, and $\|\hat{H}\|_1 = \Theta(s)$, such that approximating time evolution $e^{-i\hat{H}t}$ with constant  error requires $\Omega(t\sqrt{d s})$ queries.
\end{theorem}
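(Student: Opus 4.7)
The plan is to exhibit a family of sparse Hamiltonians $\{\hat{H}_x\}_x$ whose simulation reduces to evaluating a composed Boolean function with a known adversary lower bound. Concretely, I would target $f_{m,k}=\mathrm{PARITY}_m\circ\mathrm{OR}_k$ on $mk$ input bits, whose quantum query complexity is $\Theta(m\sqrt{k})$: the general (positive-weight) adversary bound is multiplicative under composition, and $\mathrm{ADV}(\mathrm{PARITY}_m)=\Theta(m)$, $\mathrm{ADV}(\mathrm{OR}_k)=\Theta(\sqrt{k})$. I would then choose $m$ and $k$ so that $m\sqrt{k}=\Theta(t\sqrt{ds})$; natural partitions are $k=\Theta(d),\, m=\Theta(t\sqrt{s})$ or $k=\Theta(s),\, m=\Theta(t\sqrt{d})$, with the final choice dictated by whatever makes the Hamiltonian construction internally consistent.

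For each $x\in\{0,1\}^{mk}$, I would build $\hat{H}_x$ on a graph $G$ obtained by chaining $m$ ``OR-gadgets'' in series through a weighted bus. Each gadget is a weighted bipartite patch (e.g.\ a star on $k$ leaves) whose edge weights depend on the associated block of $k$ input bits, so that short-time evolution across the gadget rotates amplitude from a designated ``input'' vertex to an ``output'' vertex with a constant-size amplitude gap between $\mathrm{OR}=0$ and $\mathrm{OR}=1$. Bus edges of a separately-chosen weight transport this amplitude down the chain, and sparsity-padding edges of a still smaller weight are inserted so that every row has $\Theta(d)$ nonzero entries, while one distinguished edge of weight $\Theta(1)$ anchors $\|\hat{H}_x\|_{\max}=\Theta(1)$. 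The three weight scales are tuned so that $\|\hat{H}_x\|_1=\Theta(s)$ and end-to-end traversal of the chain takes time exactly $\Theta(t)$.

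To finish, observe that each query to the sparse oracles $\hat{O}_H,\hat{O}_F$ for $\hat{H}_x$ is answered using $\mathcal{O}(1)$ queries to the bit-string $x$, since the graph skeleton is fixed and only the edge labels depend on $x$. Thus any $Q$-query simulator approximating $e^{-i\hat{H}_x t}$ with constant error yields a quantum algorithm for $f_{m,k}$ using $\mathcal{O}(Q)$ queries: prepare the input vertex, run the simulator, measure in the designated output basis, and XOR the $m$ per-gadget outcomes. The adversary lower bound for $f_{m,k}$ then forces $Q=\Omega(m\sqrt{k})=\Omega(t\sqrt{ds})$.

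The main technical obstacle is the joint graph/weight design. Since $s\le d$, any row with $\Theta(d)$ nonzero entries must consist mostly of entries of magnitude $\Theta(s/d)$, so the construction is forced to use at least two distinct weight scales --- roughly $1$ for the ``active'' OR edges and $s/d$ for the sparsity-padding edges --- while the bus weight must be chosen independently so that the walk traverses $m$ gadgets in time exactly $\Theta(t)$ rather than some parameter-dependent function of $d$ and $s$. Verifying that the inserted padding edges do not collapse the constant amplitude gap of each gadget, and that this gap is preserved under $m$ serial concatenations, requires a perturbation argument on the relevant low-energy subspace of $\hat{H}_x$, which is the delicate part of the proof.
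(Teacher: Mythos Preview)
Your high-level plan matches the paper: both reduce to $\mathrm{PARITY}_n\circ\mathrm{OR}_m$ and invoke the multiplicative adversary bound to get $\Omega(n\sqrt{m})$. The substantive difference is in the Hamiltonian construction, and here the paper's route is considerably cleaner than what you sketch.

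The paper avoids your perturbation argument entirely by building \emph{exact} gadgets. For PARITY it uses the perfect-state-transfer spin chain $\hat{H}_{\text{spin}}$ (with matrix elements $\sqrt{j(N-j+1)}/N$) tensored with the all-ones matrix $\hat{H}_{\text{complete}}$ on $s$ vertices; on the uniform superposition $\ket{u}$, $\hat{H}_{\text{complete}}\ket{u}=s\ket{u}$, so the tensor product transfers $\ket{0}\ket{u}\to\ket{n}\ket{u}$ exactly in time $t=\pi n/(2s)$, with sparsity $\Theta(s)$ and $\|\cdot\|_1=\Theta(s)$. For OR (promised at most one input bit is $1$) the paper uses a $2m\times 2m$ block matrix $\hat{H}_{\text{OR}}$ whose blocks are the circulant $\hat{C}_0$ of the bitstring and $\hat{C}_1=\frac{1}{m}J-\frac{1}{2}(\hat{C}_0+\hat{C}_0^\dag)$; on the uniform superposition $\ket{u}_o$ this acts \emph{exactly} as a NOT controlled on $\mathrm{OR}(x)$, and crucially has sparsity $\Theta(m)$ but $\|\cdot\|_1=\Theta(1)$. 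Composing gives $\hat{H}_{\text{PARITY}\circ\text{OR}}$ with sparsity $\Theta(sm)$, max-norm $\Theta(1)$, $\|\cdot\|_1=\Theta(s)$, and exact output after time $t=\pi n/(2s)$, so the lower bound reads $\Omega(n\sqrt{m})=\Omega(ts\sqrt{m})=\Omega(t\sqrt{ds})$.

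The key idea you are missing is that the sparsity/$\|\cdot\|_1$ separation can be obtained structurally from the OR gadget itself (a circulant of a weight-$\le 1$ string has sparsity $m$ but row-sum $\le 1$), rather than by inserting low-weight padding edges and then arguing they do not disturb the dynamics. Your star-on-$k$-leaves gadget does not have this property, which is why you are forced into multiple weight scales and a perturbation analysis across $m$ concatenated gadgets. That analysis may be completable, but it is the hard part of your proposal and is simply unnecessary with the paper's exact construction.
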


Some of these results stem from constructing polynomials with desirable properties, which we implement using the technique of Thm.~\ref{Thm:Controlled_Generalized_Amplitude_Amplification}. The existence of a weaker version of this amplitude amplification algorithm was suggested in our prior work~\cite{Low2016methodology}. Here, we present that and go further. This variant of amplitude amplification allows one to amplify target state overlaps with almost arbitrary polynomials functions.
\begin{theorem}[Flexible amplitude amplification]
\label{Thm:Controlled_Generalized_Amplitude_Amplification}
Given a state preparation unitary $\hat{G}$ acting on the computational basis states $\ket{0}_a\in \mathbb{C}^d$, $\ket{0}_b\in \mathbb{C}^2$ such that $\hat{G}\ket{0}_a\ket{0}_b=\lambda\ket{t}_a\ket{0}_b+\sqrt{1-\lambda^2}\ket{t^\perp}_{ab}$, where $\ket{t^\perp}_{ab}$ has no support on $\ket{0}_b$, let $D$ be any function that satisfies all the following conditions:
\\
(1) $D$ is an odd real polynomial in $\lambda$ of degree at most $2N+1$;
\\
(2) $\forall \lambda\in[-1,1]$, ${D}^2(\lambda)\le 1$. 
\\
Then there exists a quantum circuit $\hat{W}_{\vec\phi}$ such that $\bra{t}_a\bra{0}_b\bra{0}_c\hat{W}_{\vec\phi}\ket{0}_a\ket{0}_b\ket{0}_c=D(\lambda)$, using  $N+1$ queries to $\hat{G}$, $N$ queries to $\hat{G}^\dag$, $\mathcal{O}(N\log{(d)})$ primitive quantum gates pre-computed from $D$ in classical $\mathcal{O}(\text{poly}(N))$ time, and an additional qubit ancilla $c$, such that 
\end{theorem}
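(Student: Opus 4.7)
The plan is to reduce the theorem to $SU(2)$ quantum signal processing on the two-dimensional invariant subspace that $\hat{G}$ carves out, and then to use the ancilla qubit $c$ as the control of a Hadamard test in order to convert an otherwise complex $SU(2)$ top-left matrix element into the purely real amplitude $D(\lambda)$. First I would identify the invariant subspace $\mathcal{S}=\operatorname{span}\{\ket{t}_a\ket{0}_b,\ket{t^\perp}_{ab}\}$, on which $\hat{G}$ acts as a rotation with $\sin\theta=\lambda$. Both the reflection $R_b=\hat{I}-2\hat{I}_a\otimes\ket{0}_b\bra{0}_b$ and its $\hat{G}$-conjugate $R_0=\hat{G}(\hat{I}-2\ket{0,0}\bra{0,0})\hat{G}^\dag$ preserve $\mathcal{S}$. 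Interleaving $N$ Grover-like iterates $R_0 R_b$ with $N+1$ ``signal processing'' phase gates $e^{i\phi_j(\hat{I}-2\ket{0}_b\bra{0}_b)}$ and prepending a single $\hat{G}$ produces a circuit $\hat{W}^{(0)}_{\vec\phi}$ that uses precisely $N+1$ queries to $\hat{G}$ and $N$ to $\hat{G}^\dag$.

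Second, I would invoke a QSP achievability result closely parallel to Thm.~\ref{Thm:QSP_B}: for any odd real polynomial $D$ of degree at most $2N+1$ with $D^2(\lambda)\le 1$ on $[-1,1]$, there exists an even real polynomial $E$ of degree at most $2N$ and a phase sequence $\vec\phi\in\mathbb{R}^{N+1}$, both computable from $D$ in classical $\operatorname{poly}(N)$ time, such that $\bra{t}_a\bra{0}_b\hat{W}^{(0)}_{\vec\phi}\ket{0}_a\ket{0}_b=D(\lambda)+iE(\lambda)\sqrt{1-\lambda^2}$. The polynomial $E$ would arise from a Fej\'er--Riesz factorization of $1-D^2(\lambda)$, possibly after a harmless rescaling of $D$ by $1-\eta$ with arbitrarily small $\eta>0$; the phase-extraction algorithm for $\vec\phi$ is inherited from prior QSP work.

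Third, to isolate the real part $D(\lambda)$ from the complex $SU(2)$ amplitude, I would use $c$ as the control of a Hadamard test. Starting from $\ket{+}_c\otimes\ket{0}_a\ket{0}_b$, apply $\hat{W}^{(0)}_{\vec\phi}$ conditioned on $\ket{0}_c$ and the sign-flipped $\hat{W}^{(0)}_{-\vec\phi}$ conditioned on $\ket{1}_c$; the latter produces the complex-conjugate amplitude $D-iE\sqrt{1-\lambda^2}$, since flipping the QSP phases conjugates the associated polynomial. A final $H_c$ and projection onto $\bra{0}_c$ sums the two amplitudes, cancels the imaginary $E$-part, and yields exactly $D(\lambda)$. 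Crucially, the $c$-controlled variant shares every call to $\hat{G}$ and $\hat{G}^\dag$ between the two branches; only the classically-determined single-qubit phase gates are conditioned on $c$, so the query counts $N+1$ to $\hat{G}$ and $N$ to $\hat{G}^\dag$ are preserved, as are the $\mathcal{O}(N\log(d))$ primitive gate and $\operatorname{poly}(N)$ classical preprocessing bounds.

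The hardest step is the second: establishing that the \emph{flexible} hypothesis $D^2\le 1$ alone, without any pre-specified companion polynomial, suffices for QSP realizability at the matching degree. The resolution is the Fej\'er--Riesz-type completion that constructs $E$ as a polynomial square-root of $(1-D^2)/(1-\lambda^2)$, after confirming that $1-D^2$ vanishes to sufficient order at $\lambda=\pm 1$ (a condition that can always be enforced by negligibly rescaling $D$), together with the iterative extraction algorithm from the prior QSP literature that reads off $\vec\phi$ from the resulting $SU(2)$-valued polynomial matrix.
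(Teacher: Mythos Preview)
Your proposal is essentially the same as the paper's proof: reduce to $SU(2)$ in the Grover invariant subspace, then use the ancilla $c$ to stage a cancellation of the unwanted component of the target amplitude. The paper's cancellation controls between $\vec\phi$ and $\pi-\vec\phi$ (a $\hat\sigma_y$-conjugation that flips the $\mathcal B,\mathcal C$ components), whereas you control between $\vec\phi$ and $-\vec\phi$ (entrywise complex conjugation in your parameterization); both isolate $D$. One inaccuracy to flag: in this amplitude-amplification setting the target amplitude actually has the form $iC(\lambda)+D(\lambda)$ with $C,D$ both odd real polynomials in $\lambda$ (cf.\ Lem.~\ref{Thm:AchievableCD}), not $D+iE\sqrt{1-\lambda^2}$, and the paper accordingly invokes the ``Achievable $(\mathcal D)$'' result Lem.~\ref{Lem:AchievableD} from prior work directly rather than the Fej\'er--Riesz completion you sketch---your stated completion condition $E^2=(1-D^2)/(1-\lambda^2)$ is therefore mis-targeted, though this does not break the overall argument once the correct form is substituted.
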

Amplitude multiplication in Thm.~\ref{Thm:Linear_Amplitude_Amplification} is then a special case that solves Problem~\ref{Problem:Uniform_Amplitude_Amplification} up to a factor of $\frac{1}{2}$ in the range of the input and output amplitudes.
\begin{theorem}[Amplitude multiplication algorithm]
\label{Thm:Linear_Amplitude_Amplification}
$\forall\;\lambda \in [-1/2,1/2]$, $\Gamma \in (|\lambda|, 1/2]$, $\epsilon \le \mathcal{O}(\Gamma)$, let $\hat{G}$ be a state preparation unitary acting on the computational basis states $\ket{0}_a\in \mathbb{C}^d$, $\ket{0}_b\in \mathbb{C}^2$ such that $\hat{G}\ket{0}_a\ket{0}_b=\lambda\ket{t}_a\ket{0}_b+\sqrt{1-\lambda^2}\ket{t^\perp}_{ab}$, where $\ket{t^\perp}_{ab}$ has no support on $\ket{0}_b$. Then there exists a quantum circuit $\hat{G}'$ such that $\left|\bra{t}_a\bra{0}_b\bra{0}_c\hat{G}'\ket{0}_a\ket{0}_b\ket{0}_c- \frac{\lambda}{2\Gamma}\right|\le \frac{|\lambda|}{2\Gamma}\epsilon$, using $Q=\mathcal{O}(\Gamma^{-1}\log{(1/\epsilon)})$ queries to $\hat{G},\hat{G}^\dag$, $\mathcal{O}(Q\log{(d)})$ primitive quantum gates, and an additional ancilla qubit $c$.
\end{theorem}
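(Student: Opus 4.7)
My plan is to reduce the theorem to the construction of a single well-chosen odd polynomial and then invoke the flexible amplitude amplification result, Theorem~\ref{Thm:Controlled_Generalized_Amplitude_Amplification}. Concretely, I seek an odd real polynomial $D$ of degree at most $2N+1$ with $N=\mathcal{O}(\Gamma^{-1}\log(1/\epsilon))$ satisfying (a) $D^2(\lambda)\le 1$ on $[-1,1]$, and (b) $|D(\lambda)-\lambda/(2\Gamma)|\le \epsilon|\lambda|/(2\Gamma)$ for all $|\lambda|\le \Gamma$. Given such a $D$, Theorem~\ref{Thm:Controlled_Generalized_Amplitude_Amplification} supplies a circuit $\hat{W}_{\vec\phi}$ with overlap exactly $D(\lambda)$ on the flagged subspace, using $2N+1=\mathcal{O}(\Gamma^{-1}\log(1/\epsilon))$ queries to $\hat{G}$ and $\hat{G}^\dag$ and $\mathcal{O}(N\log d)$ primitive gates on the requisite extra qubit $c$. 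I identify this $\hat{W}_{\vec\phi}$ with the claimed $\hat{G}'$ and the query and gate counts follow immediately.

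To construct $D$, I would write $D(\lambda)=\lambda\, E(\lambda^2)$, which automatically enforces odd parity and converts the relative-error requirement (b) into the equivalent uniform additive condition $|E(\lambda^2)-1/(2\Gamma)|\le \epsilon/(2\Gamma)$ on $|\lambda|\le \Gamma$. Condition (a) then forces $E(\lambda^2)$ to decay outside the amplified band $[-\Gamma,\Gamma]$, which motivates approximating the clipped linear function $\tilde g(\lambda)=\mathrm{sign}(\lambda)\,\min(|\lambda|/(2\Gamma),\,1/2)$; this $\tilde g$ is odd, $1/(2\Gamma)$-Lipschitz, bounded by $1/2$, and agrees with $\lambda/(2\Gamma)$ on $[-\Gamma,\Gamma]$. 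A polynomial approximation to $\tilde g$ of degree $\mathcal{O}(\Gamma^{-1}\log(1/\epsilon))$ with uniform error $\mathcal{O}(\epsilon)$ on $[-1,1]$ follows from standard quantum-signal-processing polynomial tools — e.g., a Chebyshev truncation of an error-function or Gaussian mollification of $\tilde g$, of exactly the type used to build the polynomials behind Theorems~\ref{Cor:Operator_Amplification} and~\ref{Thm:Ham_Encoding_Uniform_Amplification}. A subsequent rescaling by $(1-\mathcal{O}(\epsilon))$ tightens (a) to a strict inequality as required.

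The main obstacle is enforcing the \emph{relative} error at $\lambda=0$ simultaneously with the global bound $|D|\le 1$: a generic uniform-error approximation to a linear function would leave an additive error of order $\epsilon/\Gamma$ near the origin, which fails condition (b) outright. The factorization $D(\lambda)=\lambda\,E(\lambda^2)$ is what resolves this, because the $|\lambda|$ appearing on both sides of (b) then cancels and only an additive approximation of the \emph{constant} $1/(2\Gamma)$, or equivalently of the clip profile, is needed on $[-\Gamma,\Gamma]$. The $\Gamma^{-1}\log(1/\epsilon)$ degree scaling is then dictated by the width of the transition region in $\tilde g$ and by the standard exponential Chebyshev convergence rate for such smooth-of-kink targets, exactly matching the query budget in the theorem. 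Substituting the resulting $D$ into Theorem~\ref{Thm:Controlled_Generalized_Amplitude_Amplification} closes the proof.
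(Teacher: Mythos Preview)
Your proposal is correct and follows essentially the same route as the paper: reduce to Theorem~\ref{Thm:Controlled_Generalized_Amplitude_Amplification}, then take $D(\lambda)=\lambda\cdot(\text{even polynomial approximating a rectangular/clip profile})$ built from error-function mollifications, so that the factor of $\lambda$ converts the relative-error requirement into an additive one on the even factor. The paper's explicit choice is $p_{\text{lin},\Gamma,n}(x)=\tfrac{x}{2\Gamma}\,p_{\text{rect},2\Gamma,2\Gamma,n-1}(x)$ with $p_{\text{rect}}$ an erf-based approximation to the indicator of $[-2\Gamma,2\Gamma]$ (Appendix~\ref{Sec:Polynomials_Amplitude_Multiplication}), which is exactly your $\lambda\,E(\lambda^2)$ with $E(\lambda^2)=\tfrac{1}{2\Gamma}p_{\text{rect}}(\lambda)$; the only cosmetic difference is that the paper lets the profile decay to $0$ outside the band rather than clip at $1/2$.
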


\subsubsection{Universality of the Standard-Form}
\label{Sec:Branch_UNI}

Uniform spectral amplification is motivated by the idea that structure in the signal unitary and its encoded Hamiltonian can be fully exploited by focusing only on the manipulating the standard-form, independent of any later application such as Hamiltonian simulation. This is supported by the simulation algorithm Thm.~\ref{Thm:Ham_Sim_Qubitization} which is optimal with respect to all parameters when the standard-form is provided as a black-box oracle. This perspective would be further justified if one could rule out, to a reasonable extent, the existence of superior simulation algorithms not based on the standard-form.

We show certain universality of the standard-form by proving an equivalence between quantum circuits for simulation and those for quantum measurement, up to a logarithmic overhead in time and a constant overhead in space. Where Thm.~\ref{Thm:Ham_Sim_Qubitization} transforms a measurement of $\hat{H}$ to time-evolution by $e^{-i \hat{H}t}$, we prove the converse in Thm.~\ref{Thm:Standard_Form_From_Ham_Sim} which transforms time-evolution $e^{-i \hat{H}t}$ back into measurement $\hat{H}$. In particular, this is with an exponential improvement in precision over standard techniques based on quantum phase estimation. Thus any non-standard-form simulation algorithm for $e^{-i\hat{H}t}$ that exploits structure can be always mapped in this manner onto the standard-form with a small overhead. 
 \begin{theorem}[Standard-form encoding by Hamiltonian simulation]
\label{Thm:Standard_Form_From_Ham_Sim}
Given oracle access to the controlled time-evolution $e^{-i\hat{H}}$ such that $\|\hat{H}\|\le 1/2$, there exists a standard-form-$(\hat{H}_{\text{lin}},1,\hat{U},4)$ such that $\|\hat{H}_{\text{lin}}-\hat{H}\| \le \epsilon$, where $\hat{U}$ requires $Q=\mathcal{O}\left(\log{(1/\epsilon)}\right)$ queries and $\mathcal{O}(Q)$ primitive quantum gates.
\end{theorem}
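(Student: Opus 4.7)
The plan is to reduce the task to a polynomial approximation of the arcsine on a compact subinterval, via two successive block-encodings: first build an LCU block-encoding of $\sin(\hat H)$ from the controlled time-evolution oracle, then invert the sine by applying flexible quantum signal processing (Thm.~\ref{Thm:QSP_B}).

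For the LCU step I would use the identity $\sin(\hat H)=\frac{1}{2i}(e^{i\hat H}-e^{-i\hat H})$: prepare a single ancilla qubit in $\frac{1}{\sqrt 2}(\ket 0+i\ket 1)$, apply $e^{-i\hat H}$ controlled on $\ket 0$ and $e^{i\hat H}=(e^{-i\hat H})^\dagger$ controlled on $\ket 1$, then invert the preparation; the LCU coefficients $\pm 1/(2i)$ have $\ell_1$-norm exactly one, so this yields a Hermitian standard-form-$(\sin(\hat H),1,\hat U_{\sin},2)$ at the cost of two queries. Feeding $\hat U_{\sin}$ into Thm.~\ref{Thm:QSP_B} with a real odd polynomial $B$ of degree $N$ bounded by $1$ on $[-1,1]$ then produces a standard-form block-encoding of $B[\sin(\hat H)]$ with normalization $1$ and constant ancilla dimension, at a cost of $\mathcal O(N)$ additional queries and $\mathcal O(N)$ primitive gates. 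Because $\|\hat H\|\le 1/2$ the spectrum of $\sin(\hat H)$ lies in $[-\sin(1/2),\sin(1/2)]$, and on this subinterval $\arcsin(\sin(h))=h$ holds exactly; so choosing $B$ with $|B(y)-\arcsin(y)|\le\epsilon$ there yields $\|B[\sin(\hat H)]-\hat H\|\le\epsilon$ by the spectral mapping theorem, and we set $\hat H_{\text{lin}}=B[\sin(\hat H)]$.

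The main technical obstacle, which also controls the query count, is the construction of a polynomial $B$ that simultaneously satisfies $|B|\le 1$ on $[-1,1]$ and the exponentially small approximation $|B-\arcsin|\le\epsilon$ on $[-\sin(1/2),\sin(1/2)]$ with degree $N=\mathcal O(\log(1/\epsilon))$. The approximation on the subinterval is immediate from Bernstein's theorem: $\arcsin$ is analytic on the open disk $|y|<1$, so after rescaling the subinterval to $[-1,1]$ the truncated Chebyshev expansion of $\arcsin$ converges at rate $\rho=1/\sin(1/2)+\sqrt{1/\sin^2(1/2)-1}>1$, giving $N=\mathcal O(\log(1/\epsilon))$. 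The delicate part is the uniform bound on all of $[-1,1]$, since natural truncations of $\arcsin$ tend toward $\pi/2>1$ near the endpoints. This is handled by a standard bounded polynomial approximation construction from the QSP toolkit: multiply the Chebyshev approximant by an even "indicator" polynomial of degree $\mathcal O(\log(1/\epsilon))$ that is $1-\mathcal O(\epsilon)$ on $[-\sin(1/2),\sin(1/2)]$ and uniformly $\mathcal O(\epsilon)$ outside a slightly enlarged window, exploiting the slack $|\arcsin(y)|\le 1/2$ on the subinterval to keep the product bounded by $1$ on all of $[-1,1]$. With $B$ in hand, the query and gate counts from Thm.~\ref{Thm:QSP_B} deliver the claimed $\mathcal O(\log(1/\epsilon))$ complexities.
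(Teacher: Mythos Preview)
Your plan is essentially the paper's own proof: block-encode $\sin(\hat H)$ from the controlled evolution and then apply Thm.~\ref{Thm:QSP_B} with an odd polynomial that approximates $\arcsin$ on $[-1/2,1/2]$ while staying bounded by $1$ on $[-1,1]$. The paper writes the two-query $\sin(\hat H)$ encoding as $\hat U_1=\hat U_0^\dagger(\hat\sigma_x\otimes\hat I)\hat U_0$ with measurement state $e^{i\hat\sigma_x\pi/4}\ket 0$, which is exactly your LCU up to bookkeeping.

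The only genuine difference is how the $\arcsin$ polynomial is obtained. The paper (Lem.~\ref{Lem.Polynomial_arcsin}) invokes a theorem of Saff--Totik on piecewise-analytic approximation applied to the function that equals $\arcsin(x)$ on $[-3/4,3/4]$ and the constant $\pm\arcsin(3/4)<1$ outside; this directly gives a degree-$\mathcal O(\log(1/\epsilon))$ polynomial uniformly $\epsilon$-close on $[-1/2,1/2]$ and automatically bounded by $1$ on $[-1,1]$. Your Chebyshev-plus-indicator route can be made to work but needs one more check than you wrote: a Chebyshev approximant built after rescaling $[-\sin(1/2),\sin(1/2)]\to[-1,1]$ can grow like $(1/\epsilon)^c$ near $y=\pm 1$, so you must verify both that the $\mathcal O(\epsilon)$ damping of the indicator beats this growth and, separately, that the product stays below $1$ in the indicator's transition zone where no damping is available. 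A cleaner execution of your idea is to use the Taylor truncation of $\arcsin$ (all coefficients positive, hence monotone in $|y|$ and uniformly $\le\pi/2$ on $[-1,1]$) before windowing, or to run the Chebyshev approximation on the larger interval $[-3/4,3/4]$ so that the approximant is already below $\arcsin(3/4)<1$ throughout the transition zone.
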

This is proven through the flexible quantum signal processing Thm.~\ref{Thm:QSP_B} using a particular choice of polynomial. It is important to note however the caveat that our equivalence limits $\|\hat{H}t\| = \mathcal{O}(1)$, and also fails when time-evolution can be approximated with $o(t)$ queries. Fortunately, the latter scenario can be disregarded with limited loss as `no-fast-forwarding' theorems~\cite{Childs2010Limitation} prove the necessity of $\Omega(\|\hat{H}\|t)$ queries for generic computational problems and physical systems.

One useful application of this reverse direction is an alternate technique Cor.~\ref{Cor:HamExponentials} for simulating time evolution by a sum of $d$ Hermitian components $\sum_{d=1}\hat{H}_j$, given their controlled-exponentials $e^{-i\hat{H}_jt_j}$. This approach is considerably simpler than that of compressed fractional queries~\cite{Berry2014}, and essentially works by using Thm.~\ref{Thm:Standard_Form_From_Ham_Sim} to map each $e^{-i\hat{H}_jt_j}$, where $\|\hat{H}_jt_j\|=\mathcal{O}(1)$ to a standard-form encoding of $\hat{H}_jt_j$.
\begin{corollary}[Hamiltonian simulation with exponentials]
\label{Cor:HamExponentials}
Given standard-form-$(\sum^d_{j=1}\alpha_je^{-i\hat{H}_j},\alpha,\hat{G}^\dag_a\hat{U}\hat{G}_a,d)$, where $\hat{G}$ that prepares $\ket{G}_a=\sum^d_{j=1}\sqrt{\alpha_j/\alpha}\ket{j}_a$ with $\alpha_j\ge 0$, normalization $\alpha=\sum^d_{j=1}\alpha_j$ and signal oracle
$\hat{U}=\sum_{j=i}^d\ket{j}\bra{j}_a\otimes e^{-i \hat{H}_j}$, with $\|\hat{H}_j\|\le 1$, there exists a standard-form-$(\hat{X},1,\hat{V},4d)$ such that $\|\hat{X}-e^{-i\hat{H}t}\|\le\epsilon$, where $\hat{V}$ requires $\mathcal{O}\left(\alpha t \log{(\alpha t/\epsilon)}+\frac{\log{(1/\epsilon)}\log{(\alpha t/\epsilon)}}{\log\log{(\alpha t/\epsilon)}}\right)$ controlled-queries, and $\mathcal{O}(Q\log{(d)})$ primitive quantum gates.
\end{corollary}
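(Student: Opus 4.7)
The plan is to compose the standard-form extraction of Thm.~\ref{Thm:Standard_Form_From_Ham_Sim} with the qubitization simulator of Thm.~\ref{Thm:Ham_Sim_Qubitization}, splicing the two together via the given LCU structure. First, I apply Thm.~\ref{Thm:Standard_Form_From_Ham_Sim} in a coherently controlled manner: because the signal oracle $\hat{U}=\sum_j\ket{j}\bra{j}_a\otimes e^{-i\hat{H}_j}$ already bundles all the necessary controlled exponentials, replacing each call to $e^{-i\hat{H}}$ in the Thm.~\ref{Thm:Standard_Form_From_Ham_Sim} construction by a single call to $\hat{U}$ produces, after $\mathcal{O}(\log(1/\epsilon'))$ queries, a unitary $\hat{U}_{\text{blk}}$ on $a\otimes s$ plus $\mathcal{O}(1)$ fresh ancillas whose $\ket{0}$-block on the new ancilla register is $\sum_j \ket{j}\bra{j}_a\otimes \hat{H}_j^{(\text{lin})}$, with $\|\hat{H}_j^{(\text{lin})}-\hat{H}_j\|\le\epsilon'$ for each $j$. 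The mild mismatch between the hypothesis $\|\hat{H}_j\|\le 1$ here and the hypothesis $\|\hat{H}\|\le 1/2$ of Thm.~\ref{Thm:Standard_Form_From_Ham_Sim} is absorbed by a constant-factor rescaling pushed into $\alpha$.

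Second, I sandwich $\hat{U}_{\text{blk}}$ between $\hat{G}_a$ and $\hat{G}^\dag_a$. Using $\hat{G}_a\ket{0}_a=\sum_j\sqrt{\alpha_j/\alpha}\ket{j}_a$, a direct LCU calculation shows that the resulting circuit is a Hermitian standard-form encoding with normalization $\alpha$ of $\hat{H}_{\text{eff}}=\sum_j\alpha_j\hat{H}_j^{(\text{lin})}$, and the triangle inequality gives $\|\hat{H}_{\text{eff}}-\hat{H}\|\le\alpha\epsilon'$. I then apply Thm.~\ref{Thm:Ham_Sim_Qubitization} to this encoding to produce $\hat{X}$ approximating $e^{-i\hat{H}_{\text{eff}}t}$ to error $\epsilon/2$ with $Q=\mathcal{O}(\alpha t+\log(1/\epsilon)/\log\log(1/\epsilon))$ calls to the constructed signal, and the elementary spectral-norm bound $\|e^{-i\hat{H}_{\text{eff}}t}-e^{-i\hat{H}t}\|\le t\|\hat{H}_{\text{eff}}-\hat{H}\|\le t\alpha\epsilon'$ controls the remaining error.

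Finally, I balance the two error sources. Each of the $Q$ signal invocations costs $\mathcal{O}(\log(1/\epsilon'))$ queries to $\hat{U}$, so choosing $\epsilon'=\Theta(\epsilon/(\alpha t))$ keeps the total error below $\epsilon$ and gives a query count of $Q\cdot\mathcal{O}(\log(\alpha t/\epsilon))$, which matches the stated $\mathcal{O}\bigl(\alpha t\log(\alpha t/\epsilon)+\log(1/\epsilon)\log(\alpha t/\epsilon)/\log\log(\alpha t/\epsilon)\bigr)$ after observing that $\log\log(1/\epsilon)=\Theta(\log\log(\alpha t/\epsilon))$ throughout the nontrivial parameter regime. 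The gate count follows from the $\mathcal{O}(Q\log d)$ reflections inside qubitization. The main obstacle is this error bookkeeping: Thm.~\ref{Thm:Ham_Sim_Qubitization} is stated for an \emph{exact} signal oracle, so one must carefully pass from an inexact standard-form encoding (error $\epsilon'$ per query) through $Q$-fold amplification inside qubitization to an inexact time evolution; the rest is routine checking of the coherent control of Thm.~\ref{Thm:Standard_Form_From_Ham_Sim} and of the $\hat{G}^\dag_a\hat{U}_{\text{blk}}\hat{G}_a$ LCU identity.
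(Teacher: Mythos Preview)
Your approach is essentially the paper's: apply Thm.~\ref{Thm:Standard_Form_From_Ham_Sim} coherently on the select oracle $\hat{U}$ to block-encode $\sum_j\ket{j}\bra{j}_a\otimes\hat{H}'_j$ with $\|\hat{H}'_j-\hat{H}_j\|\le\epsilon_1$, LCU-combine via $\hat{G}$ to get an \emph{exact} standard-form of $\hat{H}_{\text{approx}}=\sum_j\alpha_j\hat{H}'_j$ with $\|\hat{H}_{\text{approx}}-\hat{H}\|\le\alpha\epsilon_1$, then qubitize and set $\epsilon_1=\epsilon/(2\alpha t)$, $\epsilon_2=\epsilon/2$. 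Your closing worry is a false alarm: the circuit you construct is an exact standard-form encoding of $\hat{H}_{\text{eff}}$ (not an inexact encoding of $\hat{H}$), so Thm.~\ref{Thm:Ham_Sim_Qubitization} applies verbatim and the only remaining error is the single triangle-inequality step $\|e^{-i\hat{H}_{\text{eff}}t}-e^{-i\hat{H}t}\|\le t\alpha\epsilon'$, exactly as in the paper---no per-query error accumulation needs to be tracked inside qubitization.
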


\subsection{Organization}
The dependencies of our results are summarized in Figure~\ref{Fig:Dependencies}.
\begin{itemize}
\item [Part I] is where where we achieve uniform spectral amplification by quantum signal processing. We describe in Sec.~\ref{Sec:Standard-form_QSP} the technique of quantum signal processing in prior art and prove the more useful variant Thm.~\ref{Thm:QSP_B}. This applied in Sec.~\ref{Sec:Uniform_Hamiltonian_Amplification}, where we treat the signal unitary as a single unitary oracle, and prove the solutions Thm.~\ref{Cor:Operator_Amplification} and Thm.~\ref{Thm:Ham_Encoding_Uniform_Amplification} to the uniform spectral amplification problem.
\item [Part II] is where we achieve uniform spectral amplification by amplitude multiplication. We prove in Sec.~\ref{Sec:AA_by_QSP} a generalization of amplitude amplification in Thm.~\ref{Thm:Controlled_Generalized_Amplitude_Amplification}, which is applied to obtain the amplitude multiplication algorithm of Thm.~\ref{Thm:Linear_Amplitude_Amplification}. Subsequently in Sec.~\ref{Sec:Ham_Sim_Overlaps}, we consider signal unitaries that factors into two or three unitary oracles. This motivates a general model of Hamiltonians encoded by state overlaps, where uniform spectral amplification in Lem.~\ref{Thm:Ham_Encoding_Uniform_Amplification_State_Overlaps} is enabled by amplitude multiplication. Applying these results to the special case of sparse matrices leads to the simulation algorithm Thm.~\ref{Cor:Ham_Sim_Sparse_Amplified}, which matches the lower bound Thm.~\ref{Thm:Lower_Bound}.
\item [Part III] in Sec.~\ref{Sec:Equivalence_Sim_Mea} is where we offer a unifying perspective of simulation algorithms and prove a certain universality of the standard-form. This is through the equivalence between quantum circuits for simulation and those for measurement described by Thm.~\ref{Thm:Standard_Form_From_Ham_Sim}, and leads to the simulation algorithm Cor.~\ref{Cor:HamExponentials}. 
\end{itemize}
We conclude in Sec.~\ref{Sec:Amp_concluson}.

\begin{figure}
\centering
\label{Fig:Dependencies}
\begin{tikzpicture}
[->,>=stealth',shorten >=1pt,auto, thick,
main node/.style={circle,draw}, node distance = 0.3cm and 0.4cm,
block/.style   ={rectangle, draw, text width=10em, text centered, rounded corners, minimum height=2.3em, fill=white, align=center, font={\fontsize{8pt}{9}\selectfont}},
emptyblock/.style   ={block, draw = none , font={\fontsize{8pt}{9}\selectfont}},
]
\node[main node,  block] (1) {Composite quantum gates~\cite{Low2016methodology}};
\node[main node, block,  draw=none] (d1) [above = of 1] {};
  
\node[main node,  block] (QSP) [above = of d1] {Quantum signal processing~\cite{Low2016HamSim,Low2016hamiltonian}};
\node[main node, block,  draw=none] (d2) [above = of QSP] {}; 
 
\node[main node,  block] (Qubit) [above = of d2] {Qubitization~\cite{Low2016hamiltonian}};
\node[main node,  block] (Standard-form) [above = of Qubit] {Standard-form\\(Sec.~\ref{Sec:Introduction}, Def.~\ref{Def:Standard_Form})}; 

\node[emptyblock] (heading-background) [ above = of Standard-form] {Background};
\node[emptyblock] (heading-techniques) [ right = of heading-background] {General techniques};
\node[emptyblock] (heading-USA) [ right = of heading-techniques] {Uniform spectral \\ amplification results};
\node[emptyblock] (heading-HS) [ right = of heading-USA] {Hamiltonian \\ simulation results};

\draw[thick,-] let \p1=(heading-background),\p2=(heading-techniques) in 
    ({(\x1+\x2)/2},\y1+0.3cm) -- ({(\x1+\x2)/2},\y1-0.3cm);
\draw[thick,-] let \p1=(heading-USA),\p2=(heading-techniques) in 
    ({(\x1+\x2)/2},\y1+0.3cm) -- ({(\x1+\x2)/2},\y1-0.3cm);
\draw[thick,-] let \p1=(heading-USA),\p2=(heading-HS) in 
    ({(\x1+\x2)/2},\y1+0.3cm) -- ({(\x1+\x2)/2},\y1-0.3cm);

\node[main node,  block] (HS-Qubit) [right=  of QSP] {Hamiltonian simulation by qubitization~\cite{Low2016hamiltonian}\\(Sec.~\ref{Sec:Introduction}, Thm.~\ref{Thm:Ham_Sim_Qubitization})};

\node[main node,  block] (F-AA) [right=  of 1] {Flexible amplitude amplification \\ (Sec.~\ref{Sec:Flexible_AA}, Thm.~\ref{Thm:Controlled_Generalized_Amplitude_Amplification})};
 
\node[main node,  block] (F-QSP) [right = of Qubit] {Flexible quantum signal processing \\ (Sec.~\ref{Sec:QSP_Constraint_Free}, Thm.~\ref{Thm:QSP_B})}; 
 
\node[main node,  block] (USA-Spec-Low-E) [right = of F-QSP] {Spectral amplification of low-energy subspaces \\(Sec.~\ref{Sec:Uniform_Hamiltonian_Amplification}, Thm~\ref{Thm:Ham_Encoding_Uniform_Amplification})}; 

\node[main node,  block] (USA-Spec-Mult) [above right = of F-QSP] {Spectral multiplication \\ (Sec.~\ref{Sec:Uniform_Hamiltonian_Amplification}, Thm.~\ref{Cor:Operator_Amplification})}; 

\node[main node,  block] (HS-Low-E) [right = of USA-Spec-Low-E] {Simulation of low-energy  subspaces \\ (Sec.~\ref{Sec:Uniform_Hamiltonian_Amplification}, Cor.~\ref{Cor:Ham_Sim_Spectral_Amplification})}; 
 
\node[main node,  block, draw=none, fill=none, minimum height=3.05em] (Middle-HamSim-Low-E) at ($(USA-Spec-Low-E)!1.05!(HS-Low-E)$) {}; 

\node[main node,  block] (SF-Sim) [below right = of F-QSP] {Standard-form given exponentials \\ (Sec.~\ref{Sec:Equivalence_Sim_Mea}, Thm.~\ref{Thm:Standard_Form_From_Ham_Sim})}; 

\node[main node,  block] (HS-Exp) [right = of SF-Sim] {Simulation of exponentials \\(Sec.~\ref{Sec:Equivalence_Sim_Mea}, Cor.~\ref{Cor:HamExponentials})}; 
 
\node[main node,  block] (Amp-Mult) [below = of F-AA] {Amplitude multiplication \\(Sec.~\ref{Sec:Amp_Mult}, Thm.~\ref{Thm:Linear_Amplitude_Amplification})};  

  \node[main node,  block] (USA-Overlap-Mult) [above right = of Amp-Mult] {Spectral amplification by multipled  overlaps \\(Sec.~\ref{SubSec:Amplified_Overlap}, Lem.~\ref{Thm:Ham_Encoding_Uniform_Amplification_State_Overlaps})};  
\node[main node,  block, draw=none, fill=none] (Middle-Amp-Mult) at ($(F-AA)!0.5!(USA-Overlap-Mult)$) {};

  \node[main node,  block] (HamSim-Overlap-Mult) [above right = of USA-Overlap-Mult] {Simulation  by multipled overlaps \\(Sec.~\ref{SubSec:Amplified_Overlap}, Lem.~\ref{Thm:HamSim_Amplified_Overlaps})};  
 
\node[main node,  block] (HamSim-Sparse-Mult) [below = of HamSim-Overlap-Mult] 
 {Sparse simulation by multipled overlaps \\
 (Sec.~\ref{SubSec:Reduction_Sparse_Matrices}, Thm.~\ref{Cor:Ham_Sim_Sparse_Amplified})
 }; 
 
 \node[main node,  block] (HamSim-Lower-Bound) [below = of HamSim-Sparse-Mult] { Simulation lower bound (Sec.~\ref{Sec:Lower_Bound}, Thm.~\ref{Thm:Lower_Bound})};
 
 \node[main node, block,  draw=none, fill=none] (d3) [right = of HS-Qubit] {};
 \node[main node, block,  draw=none, fill=none] (d4) [right = of d3] {}; 
 
\node[draw=none] () [above = of F-QSP] {Part I};
\node[draw=none] () at ($(F-QSP)!0.5!(HS-Qubit)$) {Part III};
\node[draw=none] () [below = of HS-Qubit] {Part II};
\begin{pgfonlayer}{background}
	\filldraw[fill=blue!40!white, draw=black, rounded corners, opacity=0.5] let \p1=(F-QSP),\p2=(USA-Spec-Mult), \p3=(HS-Low-E) in ({\x1-6em},\y1-1.9em) rectangle (\x3+6em,\y2+1.5em);
	\filldraw[fill=green!40!white, draw=black, rounded corners, opacity=0.5] let \p1=(F-QSP),\p2=(SF-Sim), \p3=(HS-Low-E) in ({\x1-6em},\y2-1.9em) rectangle (\x3+6em,\y2+1.9em);
	\filldraw[fill=yellow!40!white, draw=black, rounded corners, opacity=0.5] let \p1=(Amp-Mult),\p2=(HamSim-Overlap-Mult) in ({\x1-6em},\y1-1.5em) rectangle (\x2+6em,\y2+1.9em);

\end{pgfonlayer}  


  \path[every node/.style={font=\sffamily\small}]
  (Standard-form) edge node [right] {} (Qubit)
  (QSP) edge node [right] {} (HS-Qubit)
  (Qubit) edge node [right] {} (QSP)
  (QSP) edge node [right] {} (F-QSP)
  (1) edge node [right] {} (QSP)
  (1) edge node [right] {} (F-AA)
 (SF-Sim) edge node [right] {}  (HS-Exp)
  (F-QSP) edge node [right] {} (SF-Sim)
 (F-QSP) edge node [right] {} (USA-Spec-Low-E)
 (F-QSP) edge node [right] {} (USA-Spec-Mult)
 (USA-Spec-Low-E) edge node [right] {} (HS-Low-E)
(F-AA) edge node [right] {} (Amp-Mult)   
(Amp-Mult) edge node [right] {} (USA-Overlap-Mult)
  (USA-Overlap-Mult) edge node [right] {} (HamSim-Overlap-Mult)
   (HamSim-Overlap-Mult) edge node [right] {} (HamSim-Sparse-Mult)
  ;
  \begin{pgfonlayer}{background}
\draw[->,shorten >=1pt,>=stealth',semithick] (HS-Qubit)-|(HamSim-Overlap-Mult);
\draw[->,shorten >=1pt,>=stealth',semithick] (HS-Qubit)-|(HS-Exp);
 \draw[->,shorten >=1pt,>=stealth',semithick] (HS-Qubit)-|(Middle-HamSim-Low-E);
  \end{pgfonlayer}
\end{tikzpicture}
\caption{Dependencies of new results.}
\end{figure}

\section{Quantum Signal Processing Techniques}
\label{Sec:Standard-form_QSP}
Quantum signal processing is a very new technique~\cite{Low2016HamSim,Low2016hamiltonian}, based on optimal quantum control~\cite{Low2016methodology} and qubitization~\cite{Low2016hamiltonian}, for implementing polynomial functions of the Hamiltonian $\hat{H}$ given its standard-form encoding. This is performed with optimal query complexity, $\mathcal{O}(1)$ ancilla overhead, and without approximation. We outline in Sec.~\ref{Sec:Standard-form_QSP_Prior_art} the basic version Lem.~\ref{Thm:QSP_AB} that was introduced in~\cite{Low2016hamiltonian}, which imposes certain unintuitive constraints on valid polynomials. Subsequently, we prove in Sec.~\ref{Sec:QSP_Constraint_Free} its generalization Thm.~\ref{Thm:QSP_B} that drops these constraints, and is applied frequently to obtain our other results. 

\subsection{Quantum Signal Processing in Prior Art}
\label{Sec:Standard-form_QSP_Prior_art}
Given any Hermitian matrix encoded in the standard-form-$(\hat{H},\alpha,\hat{U},d)$ of Def.~\ref{Def:Standard_Form}, let Hermitian $\hat{H}\in \mathbb{C}^{n\times n}:\mathcal{H}_s\rightarrow \mathcal{H}_s$ act on the system Hilbert space $\mathcal{H}_s$ of dimension $n$. Then the signal unitary $\hat{U}\in\mathbb{C}^{nd\times nd}:\mathcal{H}_s\otimes \mathcal{H}_a\rightarrow \mathcal{H}_s\otimes \mathcal{H}_a$ acts jointly on the system $\mathcal{H}_s$ and  dimension $d$ ancilla $\mathcal{H}_a$ register. Using the computational basis state $\ket{0}_a$, $(\bra{0}_a\otimes \hat{I}_s)\hat{U}(\ket{0}_a\otimes \hat{I}_s)=\hat{H}/\alpha$ with normalization $\alpha \ge \|\hat{H}\|$. Note that in~\cite{Low2016hamiltonian}, a different measurement basis $\ket{G}_a=\hat{G}\ket{0}_a\in \mathcal{H}_a$ is used to encode $(\bra{G}_a\otimes \hat{I}_s)\hat{U}(\ket{G}_a\otimes \hat{I}_s)=\hat{H}/\alpha$ as some structured Hamiltonians are more naturally represented that way. Assuming oracle access to the state preparation unitary $\hat{G}$, this is entirely equivalent as we may always absorb $\hat{G}$ into a redefinition $\hat{G}^\dag\hat{U}\hat{G}$ of the signal unitary. In Sec.~\ref{Sec:Standard-form_QSP} and Sec.~\ref{Sec:Equivalence_Sim_Mea} only, we find it useful to have $\hat{G}$ explicit, and also absorb the normalization into a rescaled Hamiltonian $\hat{H}'=\hat{H}/\alpha$ with eigenstate $\hat{H}'\ket{\lambda}=\lambda\ket{\lambda}$ and spectral norm $\|\hat{H}'\|\le 1$. 

Quantum signal processing~\cite{Low2016HamSim} characterizes the query complexity of implementing large classes of functions $f[\hat{H}']\doteq\sum_\lambda f(\lambda)\ket{\lambda}\bra{\lambda}$. Using $\mathcal{O}(N)$ standard-form queries, $\mathcal{O}(N\log{(d)})$ primitive quantum gates, and at most $1$ additional ancilla qubit $b$, one can construct a useful quantum circuit $\hat{W}_{\vec{\phi}}$, the \emph{composite qubiterate} depicted in Fig.~\ref{Fig:Circuit_Qubitization_QSP}, that is parameterized by $
\vec{\phi}\in \mathbb{R}^N$ and an ancilla state $\ket{0}_{ab}$. 
The gate cost of reflections about the $2d$ dimensional state $\ket{0}_a\ket{0}_b$ depends on the $2^{\mathcal{O}(\log_2 d)}$-controlled Toffoli gate. A $\mathcal{O}(\log(d))$ primitive gate decomposition is provided in~\cite{He2017decompositions} using any one other uninitialized ancilla qubit, which we may take from register $s$. For each eigenstate $\ket{\lambda}_s$, $\hat{V}_{\vec{\phi}}$ has the following properties:
\begin{align}
\label{Eq:QSP_Baseline}
\hat{W}_{\vec{\phi}}\ket{0}_{ab}\ket{\lambda}_s&=e^{-i \hat{\sigma}_{\phi_{N}}\theta_\lambda}e^{-i \hat{\sigma}_{\phi_{N-1}}\theta_\lambda}\cdots e^{-i \hat{\sigma}_{\phi_{1}}\theta_\lambda}\ket{0}_{ab}\ket{\lambda}_s, 
\\\nonumber
&= \left(\mathcal A(\theta_\lambda)\hat{I}_\lambda+i\mathcal B(\theta_\lambda))\hat\sigma_{z,\lambda} + i\mathcal C(\theta_\lambda)\hat\sigma_{x,\lambda}+i\mathcal D(\theta_\lambda)\hat\sigma_{y,\lambda}\right)\ket{0}_{ab}\ket{\lambda}_s
\\\nonumber
&= (\mathcal A(\theta_\lambda)+i\mathcal B(\theta_\lambda))\ket{0}_{ab}\ket{\lambda}_s + (i\mathcal C(\theta_\lambda)-\mathcal D(\theta_\lambda))\ket{0\lambda^\perp}_{abs}, 
\quad (\bra{0}_{ab}\bra{\lambda}_s)\ket{0\lambda^\perp}_{abs}=0,
\end{align}
where $\theta_\lambda = \cos^{-1}{(\lambda)}$. The Pauli matrices $\hat{I}_\lambda,\hat{\sigma}_{x,\lambda},\hat{\sigma}_{y,\lambda},\hat{\sigma}_{z,\lambda}$ act on the two-dimensional subspace $\mathcal{H}_\lambda=\text{span}\{\ket{0}_{ab}\ket{\lambda}_s,\ket{0\lambda^\perp}_{abs}\}$ with bases defined through $\hat{\sigma}_{\lambda,z} \ket{0}_{ab}\ket{\lambda}_s=\ket{0}_{ab}\ket{\lambda}_s$, $\hat{\sigma}_{z,\lambda} \ket{0\lambda^\perp}_{abs}=-\ket{0\lambda^\perp}_{abs}$. The only property of the states $\ket{0\lambda^\perp}_{abs}$ that concerns us is they are mutually orthogonal, and also orthogonal to all states $\ket{0}_{ab}\ket{\lambda}_s$. Note that the functions $(\mathcal A,\mathcal B,\mathcal C,\mathcal D)$ of an angle are implicitly parameterized $\vec\phi$. We find it useful to define the functions $(A,B,C,D)$ of $\lambda$ related by a variable substitution e.g. $\mathcal A(\theta_\lambda)=A(\cos{(\theta_\lambda)})$. These functions are not independent as unitarity at the very least requires $\mathcal A^2+\mathcal B^2+\mathcal C^2+\mathcal D^2=1$. By identifying $\hat{W}_{\vec{\phi}}$ as the signal unitary and $\ket{0}_{ab}$ as the measurement basis, $(\bra{0}_{ab}\otimes\hat{I}_s)\hat{W}_{\vec{\phi}}(\ket{0}_{ab}\otimes\hat{I}_s)=A[\hat{H}']+iB[\hat{H}']$ itself encodes the matrix $A[\hat{H}']+iB[\hat{H}']$ in  standard-form-$(A[\hat{H}']+iB[\hat{H}'],1,\hat{W}_{\vec{\phi}},2d)$.
\begin{figure}[t]
\centering
\includegraphics{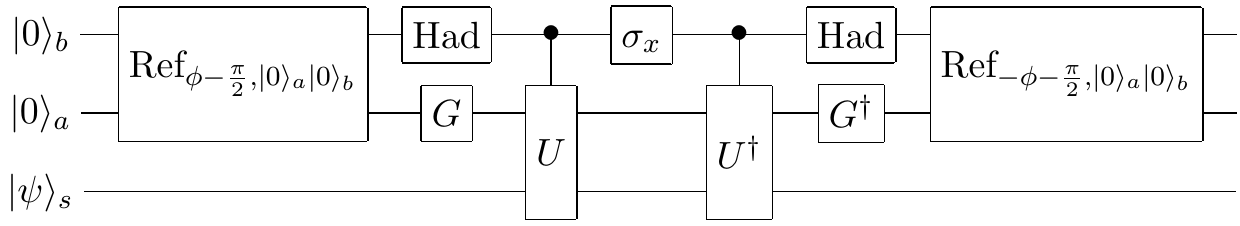}\\
\vspace{0.25cm}
\includegraphics{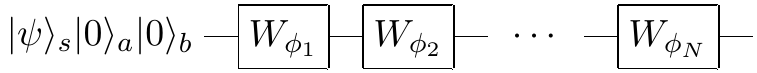}
\caption{
\label{Fig:Circuit_Qubitization_QSP}(top) Circuit diagram for the phased qubiterate $\hat{W}_\phi$ constructed by the qubitization of a standard-form encoding-$(\hat{H},1,(\hat{G}^\dag\otimes \hat{I}_s)\hat{U}(\hat{G}\otimes\hat{I}_s),d)$ from~\cite{Low2016hamiltonian}. The phased qubiterate $\hat{W}_\phi$ encodes $\hat{H}$ in standard-form-$(\hat{H},1,\hat{W}_\phi,2d)$. Note that $\widehat{\text{Had}}$ is the Hadamard gate, $\hat{\sigma}_x$ is a single-qubit NOT gate, and we define the reflection $\widehat{\text{Ref}}_{\alpha,\ket{0}_a\ket{0}_b}=\hat{I}_{ab}-(1-e^{-i\alpha})\ket{0}\bra{0}_a\otimes\ket{0}\bra{0}_b$. The gate complexity of $\hat{W}_\phi$ is $\mathcal{O}(\log{(d)})$. (bottom) Circuit diagram for the composite qubiterate $\hat{W}_{\vec{\phi}}$ that encodes a standard-form-$(A[\hat{H}]+iB[\hat{H}],1,\hat{W}_{\vec{\phi}},2d)$. The query complexity of $\hat{W}_{\vec{\phi}}$ is $N$ to $\hat{G}$, controlled-$\hat{U}$, and their inverses. Its gate complexity is $\mathcal{O}(N\log{(d)})$.
}
\end{figure}

We previously studied~\cite{Low2016methodology} sequences of single-qubit rotations isomorphic to those in Eq.~\ref{Eq:QSP_Baseline}:
\begin{align}
\label{Eq:QSP_Single_Qubit}
e^{-i \hat{\sigma}_{\phi_{N}}\theta}e^{-i \hat{\sigma}_{\phi_{N-1}}\theta}\cdots e^{-i \hat{\sigma}_{\phi_{1}}\theta}
=
\mathcal{A}(\theta)\hat{I}+i\mathcal{B}(\theta)\hat\sigma_{z} + i\mathcal{C}(\theta)\hat\sigma_{x}+i\mathcal{D}(\theta)\hat\sigma_{y},
\end{align}
fully characterized the functions $(\mathcal{A},\mathcal{B},\mathcal{C},\mathcal{D})$ implementable by any choice of $\vec{\phi}$, and also provided an efficient classical algorithm to invert any valid partial specification of $(\mathcal{A},\mathcal{B},\mathcal{C},\mathcal{D})$ to obtain its implementation $\vec{\phi}$. For instance, we have the following result regarding Eq.~\ref{Eq:QSP_Single_Qubit}
\begin{lemma}[Achievable $(\mathcal{A},\mathcal{B})$ -- Thm.~2.3 of~\cite{Low2016methodology}]
\label{Lem:AchievableAB}
For any integer $N>0$, a choice of functions $\mathcal{A},\mathcal{B}$ in Eq.~\ref{Eq:QSP_Single_Qubit} is achievable by some $\vec\phi\in\mathbb{R}^{N}$ if and only if all the following are true:\\
(1) $\mathcal{A}(\theta)= A(x),\mathcal{B}(\theta)= B(x)$, where $A,B$ are real parity-$(N\mod{2})$ polynomials in $x=\cos{(\theta)}$ of degree at most $N$;
\\
(2) $A(1)=1$;
\\
(3) $\forall x\in[-1,1]$, $A^2(x)+B^2(x)\le 1$;
\\
(4) $\forall x\ge 1$, $A^2(x)+B^2(x)\ge 1$;
\\
(5) $\forall L\;\text{even}, x\ge 0$, $A^2(ix)+B^2(ix)\ge 1$.
\\
Moreover, $\vec\phi\in\mathbb{R}^{N}$ can be computed in classical $\mathcal{O}(\text{poly}(N))$ time.
\end{lemma}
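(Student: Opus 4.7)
The plan is to split the claim into two directions and handle each by explicit algebra over the Pauli basis. \emph{Necessity}, that every achievable pair $(\mathcal A,\mathcal B)$ satisfies conditions (1)--(5), follows from a direct analysis of the product. \emph{Sufficiency}, that every valid pair admits an implementing phase sequence $\vec\phi$, is proved by an inductive ``peel-off'' construction that, read as an algorithm, also establishes the $\mathcal O(\text{poly}(N))$ classical runtime.

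For necessity I would expand each factor as $e^{-i\hat\sigma_{\phi_k}\theta}=\cos\theta\,\hat I - i\sin\theta\,\hat\sigma_{\phi_k}$ and use the Pauli product rules to classify each monomial in the expansion: a term with $p$ copies of the $\hat I$-part and $N-p$ copies of the $xy$-plane part collapses into $\operatorname{span}\{\hat I,\hat\sigma_z\}$ when $N-p$ is even and into $\operatorname{span}\{\hat\sigma_x,\hat\sigma_y\}$ when $N-p$ is odd. Using $\sin^2\theta = 1-\cos^2\theta$ to remove even powers of $\sin\theta$ gives the parity and degree statement (1) on $A,B$ and shows simultaneously that $\mathcal C(\theta)=\sin\theta\,\tilde C(\cos\theta)$ and $\mathcal D(\theta)=\sin\theta\,\tilde D(\cos\theta)$ for polynomials $\tilde C,\tilde D$ of complementary parity. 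Setting $\theta=0$ gives $M_N=\hat I$ and hence (2). Unitarity of the product, specialized to the Pauli expansion, yields the polynomial identity $A^2(x)+B^2(x)+(1-x^2)\bigl[\tilde C^2(x)+\tilde D^2(x)\bigr]=1$, from which (3) and (4) follow on $|x|\le 1$ and $|x|\ge 1$ respectively. Condition (5) is obtained by substituting $x\mapsto ix$ in the same identity: the factor $(1-x^2)$ becomes $(1+x^2)\ge 1$, while the complementary parity of $\tilde C,\tilde D$ makes $\tilde C(ix),\tilde D(ix)$ purely imaginary, so $\tilde C^2(ix)+\tilde D^2(ix)\le 0$, forcing $A^2(ix)+B^2(ix)\ge 1$.

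For sufficiency I proceed by induction on $N$, with the base case $N=0$ being trivial. Given $(A_N,B_N)$ satisfying (1)--(5), the unitarity identity together with the three inequality conditions recovers a valid pair $(\tilde C_N,\tilde D_N)$ by extracting a suitable square-free factor of $1-A_N^2-B_N^2$ divided by $1-x^2$. Multiplying the product on the left by $e^{+i\hat\sigma_{\phi_N}\theta}$ and re-expanding in the Pauli basis produces explicit $\phi_N$-dependent formulas for $A_{N-1},B_{N-1},\tilde C_{N-1},\tilde D_{N-1}$. Requiring the coefficients of $x^{N+1}$ in $A_{N-1}$ and $B_{N-1}$ to vanish (so that the reduced polynomial has degree $\le N-1$ and matches parity $(N-1)\bmod 2$) yields a two-equation linear system whose matrix is a plane rotation in $\phi_N$; the leading-coefficient identity $a_N^2+b_N^2=\tilde c_N^2+\tilde d_N^2$ forced by unitarity guarantees the system is solvable and determines $\phi_N$. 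The reduced pair inherits conditions (1)--(5) at degree $N-1$: parity and degree by construction, $A_{N-1}(1)=1$ from evaluation at $\theta=0$, and the inequality conditions because the remaining operator $M_{N-1}$ is itself a product of exponentials of traceless Hermitians so satisfies the same polynomial identity. The induction terminates after $N$ reductions, and since each reduction requires only polynomial arithmetic of size $\mathcal O(N)$, the phases $\vec\phi\in\mathbb{R}^N$ are computed in classical $\mathcal O(\text{poly}(N))$ time.

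The hard part, as I see it, is two-fold. First, condition (5) is the least transparent of the five and the one most at risk of being lost across an iteration; I would need to verify carefully that the analytic-continuation argument survives peel-off, by tracking how the parity of $\tilde C_{N-1},\tilde D_{N-1}$ relates to that of $\tilde C_N,\tilde D_N$ and checking that the reduced polynomial identity still places $\tilde C^2(ix)+\tilde D^2(ix)\le 0$. Second, degeneracies when $a_N=b_N=0$ make the rotation system singular; one must show that the leading-coefficient identity then forces $\tilde c_N=\tilde d_N=0$ as well, so the effective degree is strictly less than $N$ and any choice of $\phi_N$ permits a restart of the induction at a smaller index. Once these two delicate points are settled, the remainder of the argument is bookkeeping on polynomial coefficients.
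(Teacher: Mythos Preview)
This lemma is not proved in the present paper; it is imported verbatim as Theorem~2.3 of the cited reference~\cite{Low2016methodology}, so there is no in-paper proof to compare against. Your sketch is correct and is essentially the argument given in that reference: necessity via the Pauli expansion and the polynomial identity $A^2+B^2+(1-x^2)(\tilde C^2+\tilde D^2)=1$, and sufficiency via a Fej\'er--Riesz-type factorization (condition~(5) is precisely what, for $N$ even, forces $[1-A^2-B^2]/(1-x^2)$ to factor as a sum of squares of \emph{odd} polynomials) followed by the inductive peel-off you describe. One minor simplification to your plan: once parities are tracked through a single peel-off step, condition~(5) at level $N-1$ is automatic from the identity and the parity of $\tilde C_{N-1},\tilde D_{N-1}$, so it does not require a separate verification at each iteration.
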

This automatically implies the following quantum signal processing result regarding Eq.~\ref{Eq:QSP_Baseline}.
\begin{lemma}[Quantum signal processing; adapted from~\cite{Low2016hamiltonian}]
\label{Thm:QSP_AB}
Given Hermitian standard-form-$(\hat{H},1,\hat{U},d)$, let any $A,B$ be degree $N$ polynomials that satisfy the conditions of Lem.~\ref{Lem:AchievableAB}.
Then there exists a standard-form-$(A[\hat{H}]+iB[\hat{H}],1,\hat{W}_{\vec\phi},2d)$, where $\hat{W}_{\vec\phi}$ requires $\mathcal{O}(N)$ queries to controlled-$\hat{U}$ and $\mathcal{O}(N\log(d))$ primitive quantum gates.
\end{lemma}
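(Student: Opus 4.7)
The plan is to combine the qubitization construction recalled in Fig.~\ref{Fig:Circuit_Qubitization_QSP} with the single-qubit rotation characterization of Lem.~\ref{Lem:AchievableAB} via an explicit isomorphism between the action of the composite qubiterate on each $\hat{W}_{\vec\phi}$-invariant two-dimensional subspace and the product-of-rotations of Eq.~\ref{Eq:QSP_Single_Qubit}. The construction of $\hat{W}_{\vec\phi}$ is forced: take $N$ copies of the phased qubiterate $\hat{W}_{\phi_j}$ from the top of Fig.~\ref{Fig:Circuit_Qubitization_QSP}, each built from one controlled-$\hat{U}$ and one reflection $\widehat{\mathrm{Ref}}_{\alpha,\ket{0}_a\ket{0}_b}$, and compose them in the order $\hat{W}_{\vec\phi}=\hat{W}_{\phi_N}\cdots\hat{W}_{\phi_1}$.

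First I would recall the key structural property of $\hat{W}_\phi$ provided by qubitization \cite{Low2016hamiltonian}: the total Hilbert space decomposes as a direct sum of two-dimensional invariant subspaces $\mathcal{H}_\lambda=\mathrm{span}\{\ket{0}_{ab}\ket{\lambda}_s,\ket{0\lambda^\perp}_{abs}\}$, one per eigenvalue $\lambda$ of $\hat{H}$, on each of which $\hat{W}_\phi$ acts exactly as $e^{-i\hat{\sigma}_\phi\theta_\lambda}$ with $\theta_\lambda=\cos^{-1}(\lambda)$ in the Pauli basis fixed below Eq.~\ref{Eq:QSP_Baseline}. Because each $\hat{W}_{\phi_j}$ preserves every $\mathcal{H}_\lambda$, so does their product, and the restriction of $\hat{W}_{\vec\phi}$ to $\mathcal{H}_\lambda$ is
\[
\hat{W}_{\vec\phi}\big|_{\mathcal{H}_\lambda}=e^{-i\hat{\sigma}_{\phi_N}\theta_\lambda}e^{-i\hat{\sigma}_{\phi_{N-1}}\theta_\lambda}\cdots e^{-i\hat{\sigma}_{\phi_1}\theta_\lambda},
\]
which is precisely the single-qubit sequence of Eq.~\ref{Eq:QSP_Single_Qubit} under the evident isomorphism $\ket{0}_{ab}\ket{\lambda}_s\mapsto\ket{0}$, $\ket{0\lambda^\perp}_{abs}\mapsto\ket{1}$.

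Given any $(A,B)$ satisfying conditions (1)--(5) of Lem.~\ref{Lem:AchievableAB}, that lemma supplies a $\vec\phi\in\mathbb{R}^N$, computable classically in $\mathrm{poly}(N)$ time, such that the resulting $(\mathcal{A},\mathcal{B})$ in Eq.~\ref{Eq:QSP_Single_Qubit} satisfy $\mathcal{A}(\theta_\lambda)=A(\lambda)$ and $\mathcal{B}(\theta_\lambda)=B(\lambda)$. Reading off the $\bra{0}_{ab}\hat{W}_{\vec\phi}\ket{0}_{ab}$ block via the second line of Eq.~\ref{Eq:QSP_Baseline} then yields
\[
(\bra{0}_{ab}\otimes\hat{I}_s)\hat{W}_{\vec\phi}(\ket{0}_{ab}\otimes\hat{I}_s)=\sum_\lambda\bigl(A(\lambda)+iB(\lambda)\bigr)\ket{\lambda}\bra{\lambda}=A[\hat{H}]+iB[\hat{H}],
\]
which is exactly the claimed standard-form encoding on an ancilla of dimension $2d$ (the factor of $2$ coming from the extra qubit $b$ required by qubitization).

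The resource count is immediate from the circuit in Fig.~\ref{Fig:Circuit_Qubitization_QSP}: each $\hat{W}_{\phi_j}$ uses $\mathcal{O}(1)$ controlled-$\hat{U}$ calls and one reflection about $\ket{0}_a\ket{0}_b$, which admits an $\mathcal{O}(\log d)$-gate decomposition via \cite{He2017decompositions} using a single uninitialized ancilla borrowed from register $s$. Composing $N$ of them gives $\mathcal{O}(N)$ queries and $\mathcal{O}(N\log d)$ primitive gates. The only substantive step is the inherited invariant-subspace structure of qubitization; once that is in hand, the argument is a transcription of Lem.~\ref{Lem:AchievableAB} through the isomorphism, so the main ``obstacle'' is really just verifying that the reflections and controlled-$\hat{U}$ in the phased qubiterate implement a genuine $e^{-i\hat{\sigma}_\phi\theta_\lambda}$ block on each $\mathcal{H}_\lambda$ (a fact we import as a black box from \cite{Low2016hamiltonian}).
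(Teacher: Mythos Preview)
Your proposal is correct and matches the paper's approach exactly: the paper treats Lem.~\ref{Thm:QSP_AB} as an immediate consequence of Eq.~\ref{Eq:QSP_Baseline} (the invariant-subspace decomposition under qubitization) together with Lem.~\ref{Lem:AchievableAB}, stating that the former ``automatically implies'' the latter via the identification $(\bra{0}_{ab}\otimes\hat{I}_s)\hat{W}_{\vec\phi}(\ket{0}_{ab}\otimes\hat{I}_s)=A[\hat{H}]+iB[\hat{H}]$. You have simply made this implication explicit, including the isomorphism and the resource count from Fig.~\ref{Fig:Circuit_Qubitization_QSP}.
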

The many other partial specifications of $(\mathcal A,\mathcal B,\mathcal C,\mathcal D)$ described in~\cite{Low2016methodology} imply analogous constructions. Relevant to us are characterizations of achievable $(\mathcal B),(\mathcal C,\mathcal D),(\mathcal D)$ stated in Lems.\ref{Thm:AchievableB},\ref{Thm:AchievableCD},\ref{Lem:AchievableD}, respectively. These powerful tools reduce the problem of designing quantum circuits for arbitrary target functions $f[\hat{H}']$ to finding good polynomial approximations to $f(x)$ over the interval $x\in[-1,1]$, of which the optimal Hamiltonian simulation result $f[\hat{H}']=e^{-i\hat{H}'t}$ in Thm.~\ref{Thm:Ham_Sim_Qubitization} is an example. In the following, we focus the query complexity as any ancilla overhead will always be $\mathcal{O}(1)\le 3$ qubits, and the additional number of primitive gates required will typically be only a multiplicative factor $\mathcal{O}(\log{(d)})$ of the query complexity.

\subsection{Flexible Quantum Signal Processing}
\label{Sec:QSP_Constraint_Free}
Lem.~\ref{Thm:QSP_AB} would be more useful if we could drop the unintuitive constraints (4,5) that impose restriction on what the target functions must be \emph{outside} the domain of interest. In Thm.~\ref{Thm:QSP_B}, we present a generalization that computes functions with only one component $B[\hat{H}'] = (\bra{0}_{abc}\otimes\hat{I}_s)\hat{V}_{\vec{\phi}}(\ket{0}_{abc}\otimes\hat{I}_s)$ without those constraints, using an additional single-qubit ancilla register $c$. Note that this does not follow immediately from the discussion of Sec.~\ref{Sec:Standard-form_QSP_Prior_art} as the constraint $A(1)=1$ means there will always be some $A$ component, even if the characterizations of other partial specifications of $(A,B,C,D)$ are used. The trick is to exploit the structure of single-qubit rotations Eq.~\ref{Eq:QSP_Single_Qubit} to stage a perfect cancellation of the $A[\hat{H}']$ term by taking a linear combination of two standard-form encodings for $(\bra{0}_{ab}\otimes\hat{I}_s)\hat{V}_{\pm\vec{\phi}}(\ket{0}_{ab}\otimes\hat{I}_s)=A[\hat{H}']\pm iB[\hat{H}']$. 
\begin{proof}[Proof of Thm.~\ref{Thm:QSP_B}]
Consider the composite qubiterate in Eq.~\ref{Eq:QSP_Baseline} controlled by a single-qubit ancilla $c$. Let
\begin{align}
\hat{V}'_{\vec\phi}=-i\ket{1}\bra{0}_c\otimes\hat{W}_{\vec\phi}+i\ket{0}\bra{1}_c\otimes\hat{W}_{-\vec\phi}=(\hat{\sigma}_{y}\otimes\hat{I}_{abs})(\ket{0}\bra{0}_c\otimes\hat{W}_{\vec\phi}+\ket{1}\bra{1}_c\otimes\hat{W}_{-\vec\phi}).
\end{align}
Note that details in the construction of $\hat{W}_{\vec{\phi}}$ actually allow for the implementation of $\hat{V}'_{\vec\phi}$ with the same query complexity, as seen in Figure~\ref{Fig:Circuit_Qubitization_Flexible_QSP}.
By applying the similarity transformation $\hat{\sigma}_xe^{-i\hat{\sigma}_\phi\theta}\hat{\sigma}_x=e^{-i\hat{\sigma}_{-\phi}\theta}$, $\hat{\sigma}_x\hat{\sigma}_z\hat{\sigma}_x=-\hat{\sigma}_z$, and $\hat{\sigma}_x\hat{\sigma}_y\hat{\sigma}_x=-\hat{\sigma}_y$,
\begin{align}
\hat{W}_{-\vec{\phi}}\ket{0}_{ab}\ket{\lambda}_s&=e^{-i \hat{\sigma}_{-\phi_{N}}\theta_\lambda}e^{-i \hat{\sigma}_{-\phi_{N-1}}\theta_\lambda}\cdots e^{-i \hat{\sigma}_{-\phi_{1}}\theta_\lambda}\ket{0}_{ab}\ket{\lambda}_s, 
\\\nonumber
&= \left(A(\lambda)\hat{I}_\lambda-iB(\lambda)\hat\sigma_{z,\lambda} + iC(\lambda)\hat\sigma_{x,\lambda}-iD(\lambda)\hat\sigma_{y,\lambda}\right)\ket{0}_{ab}\ket{\lambda}_s.
\end{align}
Thus using the ancilla state $\ket{+}_c\ket{0}_{ab}$, where $\ket{\pm}=\frac{1}{\sqrt{2}}(\ket{0}+\ket{1})$, as the input to $\hat{V}'_{\vec\phi}$ results in:
\begin{align}
\label{Eq:Controlled_Generalized_Reflections}
\hat{V}'_{\vec\phi}\ket{+}_c\ket{0}_{ab}\ket{\lambda}_s&=
\left(-i A(\lambda)\ket{-}_c+B(\lambda)\ket{+}_c\right)\ket{0}_{ab}\ket{\lambda}_s+\left(C(\lambda)\ket{-}_c+D(\lambda)\ket{+}_c\right)\ket{0\lambda^\perp}_{abs}.
\end{align}
Thus $(\bra{+}_c\bra{0}_{ab}\otimes\hat{I}_s)\hat{V}'_{\vec{\phi}}(\ket{+}_c\ket{0}_{ab}\otimes\hat{I}_s)=B[\hat{H}']$ encodes $B[\hat{H}']$ in standard-form. Note that this is independent of all the other functions $A,C,D$ which are in general non-zero. Thus we may apply Lem.~\ref{Thm:AchievableB} on achievable $(B)$ even those all other components are in general non-zero. Finally, let $\hat{V}_{\vec\phi}=(\widehat{\text{Had}}\otimes\hat{I}_{abs})\hat{V}'_{\vec{\phi}}(\widehat{\text{Had}}\otimes\hat{I}_{abs})$.
\end{proof}
\begin{figure}[t]
\centering
\includegraphics{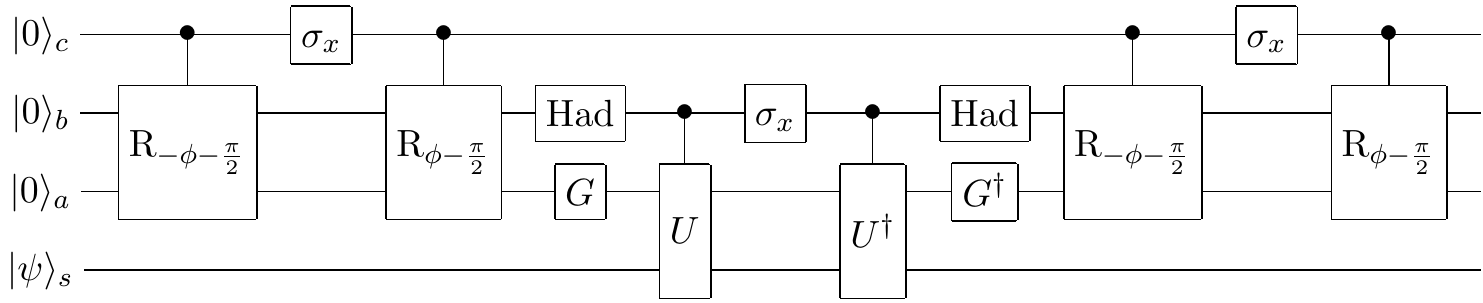}\\
\vspace{0.25cm}
\includegraphics{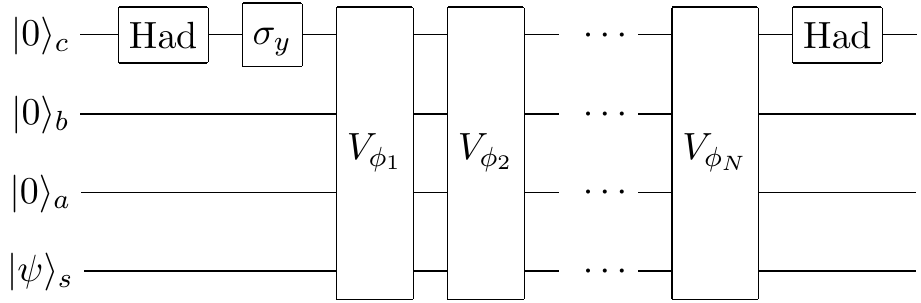}
\caption{
\label{Fig:Circuit_Qubitization_Flexible_QSP}(top) Circuit diagram for the flexible qubiterate $\hat{V}'_\phi=\ket{0}\bra{0}_c\otimes\hat{W}_{\phi}+\ket{1}\bra{1}_c\otimes\hat{W}_{-\phi}$, where $\hat{\text{R}}=\hat{I}_{ab}-(1-e^{-i\phi})\ket{0}\bra{0}_a\otimes\ket{0}\bra{0}_b$. (bottom) Circuit diagram for the flexible composite qubiterate $\hat{V}_{\vec{\phi}}$ used to encode a standard-form-$(B[\hat{H}],1,\hat{V}_{\vec\phi},4d)$. The query complexity of $\hat{V}_{\vec{\phi}}$ is $N$ to $\hat{G}$, controlled-$\hat{U}$, and their inverses. Its gate complexity is $\mathcal{O}(N\log{(d)})$.
}

\end{figure}
\begin{lemma}[Achievable $(\mathcal{B})$ -- Thm.~3.2 of~\cite{Low2016methodology}]
\label{Thm:AchievableB}
For any integer $N>0$, a choice of function $\mathcal{B}$ in Eq.~\ref{Eq:QSP_Single_Qubit} is achievable by some $\vec\phi\in\mathbb{R}^{N}$ if and only if all the following are true:\\
(1) $\mathcal B(\theta)= {B}(x)$, where ${B}$ is a real parity-$(N\mod{2})$ polynomial in $x=\cos{(\theta)}$ of degree at most $N$;
\\
(2) $B(0)=0$;
\\
(3) $\forall x\in[-1,1]$, $B^2(x)\le 1$.
\\
Moreover, $\vec\phi\in\mathbb{R}^{N}$ can be computed in classical $\mathcal{O}(\text{poly}(N))$ time.
\end{lemma}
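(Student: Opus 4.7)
My plan is to prove both directions of the equivalence. For \emph{necessity}, I expand each factor as $e^{-i\hat\sigma_{\phi_j}\theta} = \cos\theta\,\hat{I} - i\sin\theta\,\hat\sigma_{\phi_j}$ and multiply out the product of Eq.~\ref{Eq:QSP_Single_Qubit}. Since each $\hat\sigma_{\phi_j}$ lies in the $xy$-plane, the $\hat\sigma_z$ coefficient $i\mathcal{B}(\theta)$ collects exactly those contributions obtained by selecting an even, nonzero number of Pauli factors, via the identity $\hat\sigma_{\phi}\hat\sigma_{\phi'} = \cos(\phi'-\phi)\hat{I} - i\sin(\phi-\phi')\hat\sigma_z$ and its iterated extensions. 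These monomials carry an even (resp.\ odd) power of $\sin\theta$ when $N$ is even (resp.\ odd); after substituting $\sin^2\theta = 1 - \cos^2\theta$, $\mathcal{B}(\theta)$ becomes a real polynomial $B(x)$ in $x = \cos\theta$ of degree at most $N$ with parity $N\bmod 2$, which is condition (1). Condition (3) is forced by unitarity, $\mathcal{A}^2 + \mathcal{B}^2 + \mathcal{C}^2 + \mathcal{D}^2 = 1$ pointwise. Condition (2) is read off from an evaluation at the distinguished angle where the product collapses onto a structure with vanishing $\hat\sigma_z$ component.

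For \emph{sufficiency}, I would reduce to Lem.~\ref{Lem:AchievableAB} by completing any $B$ satisfying (1)--(3) to a pair $(A,B)$ meeting the stronger five conditions of that lemma, and then invoking it. The key algebraic step is to exhibit a real polynomial $A$ of degree at most $N$ and matching parity such that $A^2 + B^2 \le 1$ on $[-1,1]$, $A(1) = 1$, and the appropriate growth bounds off $[-1,1]$ and along the imaginary axis hold. This is obtained from a Fej\'er--Riesz-style factorization of the non-negative trigonometric polynomial $1 - B^2(\cos\theta)$: selecting the factor whose zeros lie inside the closed unit disk produces a Laurent polynomial in $e^{i\theta}$ whose real and imaginary parts (expressed back in $x$) yield a valid candidate $A$, with any residual global phase absorbed into a redefinition of $\vec\phi$ to enforce $A(1) = 1$. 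Once the completion is in hand, Lem.~\ref{Lem:AchievableAB} directly supplies $\vec\phi$.

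The main obstacle is precisely verifying conditions (4)--(5) of Lem.~\ref{Lem:AchievableAB} after the factorization, since these constrain $A^2 + B^2$ \emph{outside} the real interval $[-1,1]$ and along the imaginary axis, neither of which is implied by $B^2 \le 1$ alone. The resolution is that the Fej\'er--Riesz factor with all roots inside the closed unit disk is analytic and unimodular-bounded in the disk, which translates via the $z = e^{i\theta}$ substitution into exactly the required growth bounds on the real and imaginary axes. As an alternative to the factorization argument, one can proceed inductively: match the leading Chebyshev coefficient of $B(x)$ by an explicit closed-form choice of $\phi_N$, left-multiply the product by $e^{+i\hat\sigma_{\phi_N}\theta}$ to peel off the last rotation, verify that the resulting shorter product corresponds to a valid $B'$ of degree $\le N-1$ still meeting (1)--(3), and recurse down to $N = 0$. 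Either route produces $\vec\phi$ using $\mathcal{O}(\mathrm{poly}(N))$ polynomial arithmetic and root-finding, establishing the claimed classical complexity.
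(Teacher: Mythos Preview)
The paper does not prove this lemma at all: it is imported verbatim as ``Thm.~3.2 of~\cite{Low2016methodology}'' and simply invoked in the proof of Thm.~\ref{Thm:QSP_B}. There is therefore no in-paper argument to compare against; the authors rely entirely on the cited reference for both directions of the equivalence and for the $\mathcal{O}(\mathrm{poly}(N))$ synthesis of $\vec\phi$.

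As an independent assessment of your sketch: the sufficiency route you propose---complete $B$ to a pair $(A,B)$ via a Fej\'er--Riesz factorization of $1-B^2(\cos\theta)$ and then invoke Lem.~\ref{Lem:AchievableAB}---is exactly the strategy used in the cited source, and your remark that the inner-root selection is what delivers conditions (4)--(5) off the interval is the correct intuition. Your necessity argument for (1) and (3) is fine. However, your justification of condition (2) is too vague to stand: the phrase ``a distinguished angle where the product collapses onto a structure with vanishing $\hat\sigma_z$ component'' does not identify such an angle. In particular, at $\theta=\pi/2$ (i.e.\ $x=0$) one has $e^{-i\hat\sigma_{\phi_j}\pi/2}=-i\hat\sigma_{\phi_j}$, and already for $N=2$ the product $(-i)^2\hat\sigma_{\phi_2}\hat\sigma_{\phi_1}=-\cos(\phi_2-\phi_1)\hat I+i\sin(\phi_2-\phi_1)\hat\sigma_z$ has a generically nonzero $\hat\sigma_z$ coefficient, so $B(0)=\sin(\phi_2-\phi_1)$ need not vanish. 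The angle that \emph{does} force a vanishing $\hat\sigma_z$ component is $\theta=0$, giving $B(1)=0$, not $B(0)=0$; you should revisit this point carefully against the original statement in~\cite{Low2016methodology} rather than the transcription here.
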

With Thm.~\ref{Thm:QSP_B}, we are assured that any degree $N$ bounded matrix polynomial that goes to zero at the origin can be implemented exactly on a quantum computer using $\mathcal{O}(N)$ queries, $\mathcal{O}(N)$ additional primitive quantum gates, and $\mathcal{O}(1)$ additional ancilla qubits. 


\section{Uniform Spectral Amplification by Quantum Signal Processing}
\label{Sec:Uniform_Hamiltonian_Amplification}
When provided with no information on any structure in the standard-form encoding $(\bra{0}_a\otimes \hat{I}_s)\hat{U}(\ket{0}_a\otimes \hat{I}_s)=\hat{H}/\alpha$ of the Hermitian matrix $\hat{H}$, all we have is access to the signal oracle $\hat{U}$. Thus our only option is to apply quantum signal processing and study the polynomial functions $f[\cdot]$ of $\hat{H}/\alpha$ that achieve uniform spectral amplification.
In this setting, Thm.~\ref{Cor:Operator_Amplification} performs uniform spectral amplification, though the trade-off between its implementation cost and the achieved reduction of $\alpha$ provides no advantage to Hamiltonian simulation. However, a speedup is possible through Thm.~\ref{Thm:Ham_Encoding_Uniform_Amplification} when interested only in the lower energy subspace of $\hat{H}$.

As the normalization $\alpha$ is always greater or equal than $\|\hat{H}\|$, any input state $\ket{\psi}$ on the system has support only on eigenstates $\hat{H}/\alpha\ket{\lambda}=\lambda\ket{\lambda}$ with eigenvalues $|\lambda| \le \|\hat{H}\|/\alpha\le 1$. Given an upper bound $\Lambda \in [ \|\hat{H}\|,\alpha]$ on the spectral norm, this means that in any polynomial function $p(x)$ that we construct, only its restriction to the domain $x\in[-\Lambda/\alpha,\Lambda/\alpha]$ is of interest, so long as $|p(x)|$ remains bounded by $1$ over $x\in[-1,1]$. Thus one approach to minimizing the normalization is to use quantum signal processing to encode a polynomial with the property $p[\hat{H}/\alpha]\approx \frac{\hat{H}}{\Lambda}$ in standard-form. Thus, we should find a polynomial that approximates a truncated linear function, such as
\begin{align}
\label{Eq:Linear_target_function}
f_{\text{lin},\Gamma}(x)=
\begin{cases}
 \frac{x}{2\Gamma}, & |x| \in [0, \Gamma], \\
\in [-1,1], & |x| \in (\Gamma,1].
\end{cases}
\end{align}
In Thm.~\ref{Thm.Polynomial_LAA} of Appendix.~\ref{Sec:Polynomials_Amplitude_Multiplication}, we approximate $f_{\text{lin},\Gamma}(x)$ with a polynomial with the following properties:
$\forall\;\Gamma \in [0,1/2]$ and $\epsilon \le\mathcal{O}(\Gamma)$, the odd polynomial $p_{\text{lin},\Gamma,n}$ of degree $n=\mathcal{O}(\Gamma^{-1}\log{(1/\epsilon)})$ satisfies
\begin{align}
\forall\; {x\in[- \Gamma,\Gamma]},\; \left|p_{\text{lin},\Gamma,n}(x)- \frac{x}{2\Gamma}\right|\le \frac{\epsilon|x|}{2\Gamma} \quad\text{and}\quad \max_{x\in [-1,1]} |p_{\text{lin},\Gamma,n}(x)|\le 1.
\end{align}
This polynomial satisfies the conditions of flexible quantum signal processing in Thm.~\ref{Thm:QSP_B}, and provides us with the solution Thm.~\ref{Cor:Operator_Amplification} to uniform spectral amplification.
\begin{proof}[Proof of Thm.~\ref{Cor:Operator_Amplification}]
Given Hermitian standard-form-$(\hat{H},\alpha,\hat{U},d)$ and an upper bound $\Lambda\in[\|\hat{H}\|,\alpha]$, Define $\Gamma = \Lambda/\alpha\le 1$. Using Thm.~\ref{Thm:QSP_B} with the polynomial $p_{\text{lin},\Gamma,n}$, encode $p_{\text{lin},\Gamma,n}[\hat{H}/\alpha]\approx \frac{\hat{H}}{2\Gamma\alpha}=\frac{\hat{H}}{2\Lambda}$ in Hermitian standard-form-$(p_{\text{lin},\Gamma,n}[\hat{H}/\alpha],1,\hat{V},4d)$. This requires $\mathcal{O}(n)$ queries, and is identical to the Hermitian standard-form-$(2\Lambda p_{\text{lin},\Gamma,n}[\hat{H}/\alpha],2\Lambda ,\hat{V},4d)$. Define $\hat{H}_{\text{amp}}=2\Lambda p_{\text{lin},\Gamma,n}[\hat{H}/\alpha]$. Then the error of approximation $\left\|\frac{\hat{H}_{\text{amp}}}{2\Lambda}-\frac{\hat{H}}{2\Lambda}\right\|\le 
\max_{x\in[-\Lambda,\Lambda]} \left|p_{\text{lin},\Gamma,n}\left(\frac{x}{\alpha}\right)-\frac{x}{2\Lambda}\right|\le 
\max_{x\in[-\Gamma,\Gamma]} \left|p_{\text{lin},\Gamma,n}(x)-\frac{x}{\Gamma}\right|\le \frac{\epsilon_1}{2}$. Finally, note that $p_{\text{lin},\Gamma,n}$ requires $\epsilon_1\le\mathcal{O}(\Gamma)$, and has degree scaling like $n=\mathcal{O}(\Gamma^{-1}\log{(1/\epsilon_1)})$, so let us define $\epsilon = \frac{\epsilon_1}{2}$. 
\end{proof}

Unfortunately, this provides absolutely no advantage to Hamiltonian simulation as the decrease in normalization by factor $\alpha/\Lambda$ is exactly balanced by an increase in query complexity by factor $\alpha/\Lambda$. Nevertheless,  Thm.~\ref{Cor:Operator_Amplification} may be of use to applications involving measurement such as quantum metrology and repeat-until-success circuits, as the success probability $\|\frac{\hat{H}}{\Lambda}\|^2$ is improved by a quadratic factor $(\alpha/\Lambda)^2$. This is analogous to oblivious amplitude amplification which only applies to matrices that are approximately unitary~\cite{Berry2014}.

One workable possibility is highlighted by the deep connection between quantum signal processing and the properties of polynomials. Thm.~\ref{Cor:Operator_Amplification} uses a degree $\mathcal{O}(\Lambda^{-1})$ polynomial with maximum gradient $\mathcal{O}(\Lambda^{-1})$. Yet a famous inequality by Markov indicates a best-case quadratic advantage in the gradient $p'$ of any degree $n$ polynomial $\max_{x\in[-1,1]}|p'(x)|\le n^2\max_{x\in[-1,1]}|p(x)|$. Thus we have not fully exhausted the capabilities of polynomials. As this inequality becomes an equality for Chebyshev polynomials of the first kind $T_L(x)=\cos{(L \cos^{-1}{(x)})}$ at $x=\pm1$, this suggests that a speedup is possible if we are only concerned with time evolution on eigenstates with eigenvalues $|\lambda| \in [1-\Delta,1]$ where $\Delta\ll 1$. With this assumption, we may prove Thm.~\ref{Thm:Ham_Encoding_Uniform_Amplification}.
\begin{proof}[Proof of Thm.~\ref{Thm:Ham_Encoding_Uniform_Amplification}]
Consider the truncated linear function
\begin{align}
\label{Eq:Linear_target_function}
f_{\text{gap},\Delta}(x)=
\begin{cases}
\frac{x+1-\Delta}{\Delta}, & x \in [-1, -1+\Delta], \\
\in[-1,1], & \text{otherwise}.
\end{cases}
\end{align}
As $\hat{\Pi}(f_{\text{gap},\Delta}[\frac{\hat{H}}{\alpha}]-\frac{\hat{H}+\alpha\hat{I}(1-\Delta)}{\Delta\alpha})\hat{\Pi}=0$, the theorem is proven by finding degree $n$ odd polynomial $p_{\text{gap},\Delta,n}(x)$ that uniformly approximates $f_{\text{gap},\Delta}(x)$ with error $\max_{x\in[-1,-1+\Delta]}|p_{\text{gap},\Delta,n}(x)-f_{\text{gap},\Delta}(x)|\le\epsilon$ and also satisfies all the conditions of quantum signal processing Thm.~\ref{Thm:QSP_B}. We provide such a polynomial of degree $\mathcal{O}(\Delta^{-1/2}\log^{3/2}{(\frac{1}{\Delta\epsilon})})$ in Lem.~\ref{Lem.Polynomial_gapped_linear} of Appendix.\ref{Sec:Polynomials_Low_energy}. And so we define $\frac{\hat{H}_{\text{amp}}}{\Delta\alpha} = p_{\text{gap},\Delta,n}[\frac{\hat{H}}{\alpha}]$, which approximates the desired amplified Hamiltonian with error $\|\hat{\Pi}(\frac{\hat{H}_{\text{amp}}}{\Delta\alpha}-\frac{\hat{H}+\alpha\hat{I}(1-\Delta)}{\Delta\alpha})\hat{\Pi}\|
\le \max_{x\in[-1,-1+\Delta]}|p_{\text{gap},\Delta,n}(x)-\frac{ x+(1-\Delta)}{\Delta}|\le \epsilon$.
\end{proof}
As energy gaps in an interval of width $\Delta$ are stretched by factor $\Delta^{-1}$ using only $\mathcal{O}(\Delta^{1/2})$ queries, a quadratic advantage in normalization is achieved. This is essentially spectral gap amplification~\cite{Somma2013SpectralGap} with two important distinctions: first, it applies to any Hamiltonian through the standard-form, though as highlighted in~\cite{Somma2013SpectralGap}, only those encoded with $\alpha=\|\hat{H}\|$, such as frustration-free Hamiltonians, can fully exploit the effect. Second, it amplifies the spectral gap of all eigenvalues uniformly, rather than non-uniformly. By combining with Thm.~\ref{Thm:Ham_Sim_Qubitization}, one obtains a Hamiltonian simulation algorithm for low-energy subspaces, relevant to quantum chemistry and adiabatic computation. 
\begin{corollary}[Hamiltonian simulation of low-energy subspaces]
\label{Cor:Ham_Sim_Spectral_Amplification}
Given Hermitian standard-form-$(\hat{H},\alpha,\hat{U},d)$ with eigenstates $\hat{H}/\alpha\ket{\lambda}=\lambda\ket{\lambda}$, let $\Delta \in(0,1)$ be a positive constant, and $\hat{\Pi}=\sum_{\lambda \in[-1,-1+\Delta]}\ket{\lambda}\bra{\lambda}$ be a projector onto the low-energy subspace of $\hat{H}$. Then time-evolution $e^{-i\hat{H}t}$ on eigenstates with eigenvalues $\lambda \in [-1,-1+\Delta]$ can be approximated with error $\epsilon$ using $\mathcal{O}(t\alpha\sqrt{\Delta}\log^{3/2}{(\frac{t\alpha}{\epsilon})}+\Delta^{-1/2}\log^{5/2}{(\frac{t\alpha}{\epsilon})})$ queries to controlled-$\hat{U}$.
\end{corollary}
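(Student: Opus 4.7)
The plan is to compose the low-energy spectral amplification of Thm.~\ref{Thm:Ham_Encoding_Uniform_Amplification} with the optimal simulation algorithm of Thm.~\ref{Thm:Ham_Sim_Qubitization}, and then balance the two error budgets. First, apply Thm.~\ref{Thm:Ham_Encoding_Uniform_Amplification} with a target precision $\epsilon_2$ to obtain a Hermitian standard-form-$(\hat{H}_{\text{amp}},\Delta\alpha,\hat{V},4d)$ such that
\begin{equation}
\bigl\|\hat{\Pi}\bigl(\hat{H}_{\text{amp}}-(\hat{H}+\alpha(1-\Delta)\hat{I})\bigr)\hat{\Pi}\bigr\|\le \Delta\alpha\,\epsilon_2,
\end{equation}
where each use of $\hat{V}$ costs $\mathcal{O}(\Delta^{-1/2}\log^{3/2}(1/(\Delta\epsilon_2)))$ queries to controlled-$\hat{U}$. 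The point is that the amplified Hamiltonian agrees with the shifted original, up to controlled error $\Delta\alpha\epsilon_2$, on exactly the subspace of interest.

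Second, feed this standard-form into Thm.~\ref{Thm:Ham_Sim_Qubitization} with evolution time $t$ and precision $\epsilon_1$ to obtain a circuit approximating $e^{-i\hat{H}_{\text{amp}}t}$ to error $\epsilon_1$ in operator norm, using $\mathcal{O}(t\Delta\alpha+\log(1/\epsilon_1)/\log\log(1/\epsilon_1))$ queries to controlled-$\hat{V}$. Since $e^{-i\alpha(1-\Delta)t}\hat{\Pi}$ is just a classical phase on the low-energy subspace (implemented at zero query cost, e.g.\ absorbed into a phase gate conditioned on the algorithm's control register), the desired operator on this subspace is
\begin{equation}
e^{-i\hat{H}t}\hat{\Pi}=e^{i\alpha(1-\Delta)t}\,e^{-i(\hat{H}+\alpha(1-\Delta)\hat{I})t}\,\hat{\Pi}.
\end{equation}

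The third step is the error accounting. Using $\|e^{-i A t}-e^{-i B t}\|\le t\|A-B\|$ applied to the restriction onto $\hat{\Pi}$, the approximation of $\hat{H}+\alpha(1-\Delta)\hat{I}$ by $\hat{H}_{\text{amp}}$ on the low-energy subspace induces simulation error at most $t\,\Delta\alpha\,\epsilon_2$, while the qubitization step contributes $\epsilon_1$. Setting $\epsilon_1=\epsilon/2$ and $\epsilon_2=\epsilon/(2t\Delta\alpha)$ makes the total error at most $\epsilon$ on $\hat{\Pi}$. Multiplying the per-$\hat{V}$ cost by the number of calls to $\hat{V}$ gives total query complexity
\begin{equation}
\mathcal{O}\!\left(\bigl(t\Delta\alpha+\tfrac{\log(1/\epsilon)}{\log\log(1/\epsilon)}\bigr)\,\Delta^{-1/2}\log^{3/2}\!\tfrac{t\alpha}{\epsilon}\right)
= \mathcal{O}\!\left(t\alpha\sqrt{\Delta}\,\log^{3/2}\!\tfrac{t\alpha}{\epsilon}+\Delta^{-1/2}\log^{5/2}\!\tfrac{t\alpha}{\epsilon}\right),
\end{equation}
matching the claimed bound after absorbing the harmless $\log/\log\log$ factor into one extra logarithm.

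The only real subtlety, and the step worth being careful about, is the error propagation on the projected subspace: Thm.~\ref{Thm:Ham_Encoding_Uniform_Amplification} only controls $\hat{\Pi}\hat{H}_{\text{amp}}\hat{\Pi}$ relative to $\hat{\Pi}(\hat{H}+\alpha(1-\Delta)\hat{I})\hat{\Pi}$, not $\hat{H}_{\text{amp}}$ globally, so the qubitization-based simulation of $\hat{H}_{\text{amp}}$ need not approximate $e^{-i\hat{H}t}$ off $\hat{\Pi}$. Since the corollary only claims correctness on the low-energy subspace, it suffices to verify that for any $\ket{\psi}=\hat{\Pi}\ket{\psi}$, the invariance of the spectral approximation bound under $\hat{\Pi}$ together with the standard Duhamel-style estimate yields $\|(e^{-i\hat{H}_{\text{amp}}t}-e^{-i(\hat{H}+\alpha(1-\Delta)\hat{I})t})\ket{\psi}\|\le t\Delta\alpha\epsilon_2$, which is the bound used above.
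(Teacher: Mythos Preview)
Your proposal is correct and follows essentially the same approach as the paper: compose Thm.~\ref{Thm:Ham_Encoding_Uniform_Amplification} with Thm.~\ref{Thm:Ham_Sim_Qubitization}, multiply the query costs, and balance the two error contributions via $\epsilon_1=\epsilon/2$ and $t\alpha\Delta\,\epsilon_2=\epsilon/2$. Your treatment is in fact more careful than the paper's one-line proof; the subtlety you flag about $\hat{\Pi}$-invariance is cleanly resolved by recalling from the proof of Thm.~\ref{Thm:Ham_Encoding_Uniform_Amplification} that $\hat{H}_{\text{amp}}=\Delta\alpha\, p_{\text{gap},\Delta,n}[\hat{H}/\alpha]$ is a polynomial in $\hat{H}$ and hence commutes with $\hat{\Pi}$, so the low-energy subspace is exactly invariant under $e^{-i\hat{H}_{\text{amp}}t}$ and the Duhamel estimate on $\hat{\Pi}$ goes through.
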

\begin{proof}
This follows from multiplying the query complexities of Thm.~\ref{Thm:Ham_Sim_Qubitization} with Thm.~\ref{Thm:Ham_Encoding_Uniform_Amplification}, similar to the proof of Cor.~\ref{Cor:HamExponentials}, to obtain a cost of $\mathcal{O}\left(t\alpha\Delta+\frac{\log{(1/\epsilon_1)}}{\log\log{(1/\epsilon_1)}}\right)\mathcal{O}(\Delta^{-1/2}\log^{3/2}{(\frac{1}{\Delta\epsilon_2})})$ queries for approximating $e^{-i\hat{H}t}$ with error $\epsilon_1 + t\alpha\Delta \epsilon_2$. Thus we choose $\epsilon_1=\epsilon/2$ and $t\alpha\Delta \epsilon_2=\epsilon/2$.
\end{proof}
It is worth mentioning that Thm.~\ref{Thm:Ham_Encoding_Uniform_Amplification} also performs uniform spectral amplification on \emph{high} energy states. This follows from the polynomial $p_{\text{gap},\Delta,n}(x)$ being odd. Thus its ability to stretch eigenvalues $\lambda\in[-1,-1+\Delta]$ applies to those $\lambda\in[1-\Delta,1]$ as well.

\section{Amplitude Amplification Techniques}
\label{Sec:AA_by_QSP}
Amplitude amplification is a staple quantum subroutine for state preparation that used in many quantum algorithms. The basic version, is based on reflections, in described in Sec.~\ref{Sec:Amplitude_Amplification}. The most common generalization of amplitude amplification replaces the reflection with partial reflections. This allows for constructing more interesting variations in the final state amplitude as a function of the initial state amplitudes, though a systematic approach to designing these variations is not known to prior art. We show in Sec.~\ref{Sec:AA_partial_ref} that these functions are polynomials subject to certain constraints and solve the design problem through Lem.~\ref{Thm:Generalized_Amplitude_Amplification}. We then generalize this in Sec.~\ref{Sec:Flexible_AA} to obtain the flexible amplitude amplification Thm.~\ref{Thm:Controlled_Generalized_Amplitude_Amplification} that relaxes some constraints on these polynomials. In in Sec.~\ref{Sec:Amp_Mult}, an application of flexible amplitude amplification with a particular choice of polynomials yields the amplitude multiplication Thm.~\ref{Thm:Linear_Amplitude_Amplification}.
\subsection{Amplitude Amplification}
\label{Sec:Amplitude_Amplification}
Amplitude amplification is a quantum algorithm for state preparation.
Suppose the state creation operator $\hat{G}$ prepares the start state $\ket{s}=\hat{G}\ket{0}\in\mathbb C^d$ from the computational basis. The start state has overlap $\sin{(\theta)}=\langle t\ket{s}$ with the target state $\ket{t}$ is thus
\begin{align}
\ket{s}=\sin{(\theta)}\ket{t}+\cos{(\theta)}\ket{t^\perp}, \quad \langle t\ket{t^{\perp}}=0,
\end{align}
and the goal is to prepare the state $\ket{t}$. 

The standard solution to this problem boosts the amplitude $\sin{(\theta)}$ of $\ket{t}$ to $\mathcal{O}(1)$. This requires access to two oracles that perform reflections about $\ket{s},\ket{t}$ respectively:
\begin{align}
\widehat{\text{Ref}}_{\ket{s}}=\hat{I}-2\ket{s}\bra{s}=\hat{G}(\hat{I}-2\ket{0}\bra{0})\hat{G}^\dag=\hat{G}\widehat{\text{Ref}}_{\ket{0}}\hat{G}^\dag,\quad \widehat{\text{Ref}}_{\ket{t}}=\hat{I}-2\ket{t}\bra{t}.
\end{align}
As $(\hat{I}-2\ket{0}\bra{0})$ is a conditional phase gate, it may be implemented with $\mathcal{O}(\log(d))$ primitive quantum gates. The cost of implementing reflections about an arbitrary target state $\widehat{\text{Ref}}_{\ket{t}}$ it not always as straightforward. However, this cost is typically built into a definition of $\hat{G}$ that marks the target state with a single flag qubit subscripted by $b$. In other words. $\hat{G}\ket{0}_a\ket{0}_b=\sin{(\theta)}\ket{t}_a\ket{0}_b+\cos{(\theta)}\ket{t^\perp}_{ab}$. By defining the new target state as $\ket{t}_a\ket{0}_b$, a reflection about $\ket{t}_a\ket{0}_b$ may be constructed with a single $\hat{I}_a\otimes\hat{\sigma}_z$  gate.

The product $\widehat{\text{Ref}}_{\ket{s}}\widehat{\text{Ref}}_{\ket{t}}$, with query cost $2$, is known as the Grover iterate, and it easily shown that 
\begin{align}
\ket{s}=
\left(
\begin{matrix}
\cos{(\theta)} \\ 
\sin{(\theta)}
\end{matrix}
\right),
\quad
\widehat{\text{Ref}}_{\ket{s}}\widehat{\text{Ref}}_{\ket{t}}
=\left(
\begin{matrix}
\cos{(2\theta)} & -\sin{(2\theta)} \\ 
\sin{(2\theta)} & \cos{(2\theta)}
\end{matrix}
\right),
\end{align}
in the $\{\ket{t}^\perp,\ket{t}\}$ basis. Thus we obtain the well-known result 
\begin{align}
\label{Eq:RegularAmplitudeAmplification}
(\widehat{\text{Ref}}_{\ket{s}}\widehat{\text{Ref}}_{\ket{t}})^N \ket{s} 
=
\sin{\left((2N+1)\theta\right)}\ket{t}+\cos{\left((2N+1)\theta\right)}\ket{t^\perp}.
\end{align}
By choosing $N = \lceil \frac{\pi}{4\theta}-\frac{1}{2}\rceil =\mathcal{O}(1/\theta)$ repetitions, $\bra{t}(\widehat{\text{Ref}}_{\ket{s}}\widehat{\text{Ref}}_{\ket{t}})^N \hat{G}\ket{0}=\mathcal{O}(1)$ as desired with $Q=2N+1=\mathcal{O}(1/\theta)$ queries.

\subsection{Amplitude Amplification by Partial Reflections}
\label{Sec:AA_partial_ref}
The more general phase matching technique~\cite{Long1999PhaseMatching} applies partial reflections parameterized by phases $\alpha,\beta$:
\begin{align}
\label{Eq:Generalized_Reflections}
\widehat{\text{Ref}}_{\alpha,\ket{s}}=\hat{I}-(1-e^{-i \alpha})\ket{s}\bra{s},
\quad 
\widehat{\text{Ref}}_{\beta,\ket{t}}=\hat{I}-(1-e^{-i \beta})\ket{t}\bra{t},
\end{align}
and the generalized Grover iterate is then $\widehat{\text{Ref}}_{\alpha,\ket{s}}\widehat{\text{Ref}}_{\beta,\ket{t}}$ which has query cost $2$. An $N=2n+1$ query sequence of these iterates produces the state
\begin{align}
\label{Eq:Generalized_Sequence}
\prod^{n}_{k=1}\widehat{\text{Ref}}_{\alpha_k,\ket{s}}\widehat{\text{Ref}}_{\beta_k,\ket{t}} \ket{s}
=
(i\mathcal C(\theta)+\mathcal D(\theta))\ket{t}+(\mathcal A(\theta)-i\mathcal B(\theta))\ket{t^\perp},
\end{align}
where $\widehat{\text{Ref}}_{\alpha_1,\ket{s}}\widehat{\text{Ref}}_{\beta_1,\ket{t}}$ acts first on the input, and $\mathcal A,\mathcal B,\mathcal C,\mathcal D$ are real functions parameterized by $\vec{\alpha},\vec{\beta}$. Unfortunately, the dependence of $\vec{\alpha},\vec{\beta}$ on any arbitrary choice of $\mathcal A,\mathcal B,\mathcal C,\mathcal D$ appears quite mysterious. Only in very few cases can the $\mathcal A,\mathcal B,\mathcal C,\mathcal D$ can be specified for arbitrary $N$ and then inverted to obtain a consistent set of $\vec{\alpha},\vec{\beta}$ in closed-form~\cite{Yoder2014}. For instance, standard amplitude amplification corresponds to $\alpha_k=\beta_k=\pi$. 

We resolve this mystery by proving the following result
\begin{lemma}[Amplitude amplification with partial reflections]
\label{Thm:Generalized_Amplitude_Amplification}
Given a state preparation unitary $\hat{G}$ acting on the computational basis states $\ket{0}_a\in \mathbb{C}^d$, $\ket{0}_b\in \mathbb{C}^2$ such that $\hat{G}\ket{0}_a\ket{0}_b=\lambda\ket{t}_a\ket{0}_b+\sqrt{1-\lambda^2}\ket{t^\perp}_{ab}$, where $\ket{t^\perp}_{ab}$ has no support on $\ket{0}_b$, let $C,D$ be any two functions that satisfies all the following conditions:
\\
(1) $C,D$, where are odd real polynomials in $\lambda$ of degree at most $2N+1$;
\\
(2) $\forall \lambda\in[-1,1]$, $ {C}^2(\lambda)+ {D}^2(\lambda)\le 1$;
\\
(3) $\forall \lambda\ge 1$, $ {C}^2(\lambda)+ {D}^2(\lambda)\ge 1$,
\\
Then there exists a quantum circuit $\hat{V}_{\vec\phi}$ such that
$\bra{t}_a\bra{0}_b\hat{V}_{\vec\phi}\ket{0}_a\ket{0}_b=i C(\lambda)+D(\lambda)$, using $N+1$ queries to $\hat{G}$, $N$ queries to $\hat{G}^\dag$, and $\mathcal{O}(n\log{(d)})$ primitive quantum gates pre-computed from $C,D$ in classical $\mathcal{O}(\text{poly}(N))$ time.
\end{lemma}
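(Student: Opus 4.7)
\emph{Plan:} The proof proceeds by reducing the generalized amplitude amplification sequence to a single-qubit quantum signal processing sequence of the form of Eq.~\ref{Eq:QSP_Single_Qubit}, and then invoking the achievability characterization of the $(\mathcal{C}, \mathcal{D})$ partial specification (Lem.~\ref{Thm:AchievableCD} from prior work, cited in the paper).

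First, I would identify the two-dimensional invariant subspace $\mathcal{H}_\lambda = \text{span}\{\ket{t}_a\ket{0}_b,\ket{t^\perp}_{ab}\}$ on which the generalized Grover iterate $\widehat{\text{Ref}}_{\alpha,\ket{s}}\widehat{\text{Ref}}_{\beta,\ket{t}_a\ket{0}_b}$ acts as a single-qubit unitary. Using the identity $\widehat{\text{Ref}}_{\alpha,\ket{s}} = \hat{G}\widehat{\text{Ref}}_{\alpha,\ket{0}_a\ket{0}_b}\hat{G}^\dag$ and the fact that $\ket{s}\in\mathcal{H}_\lambda$, invariance of $\mathcal{H}_\lambda$ is immediate. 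In the ordered basis $\{\ket{t^\perp}_{ab},\ket{t}_a\ket{0}_b\}$, the target reflection restricts to $e^{-i\beta/2}e^{i(\beta/2)\hat\sigma_z}$, while the start-state reflection restricts to $e^{-i\alpha/2}e^{i(\alpha/2)\hat\sigma_{\hat n(\theta)}}$ where $\hat n(\theta) = (\sin 2\theta, 0, \cos 2\theta)$ and $\theta = \arcsin(\lambda)$ so $\lambda = \sin\theta$.

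Second, I would convert this product of two Bloch-sphere rotations into the canonical form $\prod_j e^{-i\hat\sigma_{\phi_j}\theta}$ of Eq.~\ref{Eq:QSP_Single_Qubit}. The key identity is that $e^{i(\alpha/2)\hat\sigma_{\hat n(\theta)}}$ has a decomposition $\hat R_z(\gamma)\,e^{i(\alpha/2)\hat\sigma_{\hat m(\theta)}}\,\hat R_z(-\gamma)$ for fixed axes, so that the $\theta$-dependence is isolated into the rotation angle. After absorbing fixed $\hat\sigma_z$ conjugations into the neighboring partial reflections (which merely shifts their phases $\alpha_k,\beta_k$), the full sequence of $N$ Grover iterates, pre-multiplied by the final $\hat{G}$ acting on $\ket{0}_a\ket{0}_b$, becomes a product of $2N+1$ rotations $\prod_{j=1}^{2N+1}e^{-i\hat\sigma_{\phi_j}\theta}$ on $\mathcal{H}_\lambda$, matching Eq.~\ref{Eq:QSP_Single_Qubit} exactly. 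The odd length $2N+1$ guarantees the resulting functions $\mathcal{C}(\theta),\mathcal{D}(\theta)$ are odd polynomials in $\sin\theta=\lambda$ of degree at most $2N+1$; the overlap $\bra{t}_a\bra{0}_b\hat{V}_{\vec\phi}\ket{0}_a\ket{0}_b$ equals $iC(\lambda)+D(\lambda)$ by direct inspection of Eq.~\ref{Eq:Generalized_Sequence}.

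Third, invoke Lem.~\ref{Thm:AchievableCD} on achievable $(\mathcal{C},\mathcal{D})$: conditions (1)--(3) of the present lemma are exactly the conditions of that achievability theorem re-expressed in the variable $\lambda=\sin\theta$, so for any such $(C,D)$ a valid $\vec\phi\in\mathbb{R}^{2N+1}$ exists and is classically computable in $\mathcal{O}(\text{poly}(N))$ time by the same inversion algorithm; the phases $\vec\alpha,\vec\beta$ of the partial reflections are then read off from $\vec\phi$ by inverting the similarity-transformation bookkeeping of step two. The gate and query counts follow immediately: each generalized Grover iterate costs one query to $\hat{G}$ and one to $\hat{G}^\dag$, for $N$ queries to each, plus the final state preparation $\hat{G}\ket{0}_a\ket{0}_b$ giving $N+1$ queries to $\hat{G}$; the $\mathcal{O}(N\log d)$ primitive-gate count comes from the $2N$ partial reflections about $\ket{0}_a\ket{0}_b$ (multi-controlled phases) and the $2N+1$ trivial phases about $\ket{t}_a\ket{0}_b$.

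\emph{Main obstacle:} The delicate part is step two, specifically the bookkeeping needed to reconcile a product of partial reflections with fixed angles and $\theta$-dependent axes against a QSP sequence of $\theta$-dependent angles about fixed axes. Tracking the global phases of each partial reflection, the similarity transformations used to absorb $z$-rotations, and the resulting parity of $(C,D)$---all while ensuring the variable substitution $\lambda=\sin\theta$ maps the QSP polynomial bounds in $\cos\theta$ correctly onto conditions (2)--(3)---is the most technically involved portion of the proof.
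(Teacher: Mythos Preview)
Your overall strategy---identify the two-dimensional invariant subspace, rewrite the Grover-iterate sequence as a product of $2N+1$ single-qubit rotations of the QSP form $\prod_k e^{-i\hat\sigma_{\phi_k}\theta}$, and then invoke Lem.~\ref{Thm:AchievableCD}---is exactly the paper's approach, and your identification $\lambda=\sin\theta$ and the query/gate count are correct.

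However, your description of step two contains a genuine error. You claim that $e^{i(\alpha/2)\hat\sigma_{\hat n(\theta)}}$ decomposes as $\hat R_z(\gamma)\,e^{i(\alpha/2)\hat\sigma_{\hat m(\theta)}}\,\hat R_z(-\gamma)$ ``so that the $\theta$-dependence is isolated into the rotation angle.'' But conjugation by a $\theta$-independent $z$-rotation cannot convert a rotation by the fixed angle $\alpha/2$ into one by the variable angle $\theta$; your stated decomposition still has angle $\alpha/2$ and a $\theta$-dependent axis, so nothing has been isolated. The paper's actual manipulation is different: it does \emph{not} work with the restricted start-state reflection directly, but instead uses the factorization $\widehat{\text{Ref}}_{\alpha,\ket{s}} = \hat{G}\,\widehat{\text{Ref}}_{\alpha,\ket{0}}\,\hat{G}^\dag$ (which you already noted in step one) and represents $\hat{G}$ itself as $e^{-i\hat\sigma_y\theta}$ on the invariant subspace, while $\widehat{\text{Ref}}_{\alpha,\ket{0}}$ and $\widehat{\text{Ref}}_{\beta,\ket{t}}$ restrict to $\theta$-\emph{independent} $z$-rotations $e^{-i\alpha/2}e^{-i\hat\sigma_z\alpha/2}$ and $e^{-i\beta/2}e^{-i\hat\sigma_z\beta/2}$. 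The full sequence is then an alternation of $e^{\pm i\hat\sigma_y\theta}$ factors and $z$-rotations; each $e^{-i\hat\sigma_y\theta}$ sandwiched between $z$-rotations is precisely $e^{-i\hat\sigma_\phi\theta}$ via Eq.~\ref{Eq:BlochSphereXYRotation}, and the reversed factors are flipped using $e^{i\hat\sigma_y\theta}=e^{-i\hat\sigma_z\pi}e^{-i\hat\sigma_y\theta}e^{i\hat\sigma_z\pi}$. One additional trick you omitted: the paper inserts a cost-free $\widehat{\text{Ref}}_{\alpha_0,\ket{0}}$ on the input $\ket{0}_{ab}$ (an eigenstate, so this contributes only a global phase) to supply the missing right-edge $z$-rotation, making the count exactly $2N+1$ factors.

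A minor point on your ``main obstacle'': you worry about mapping ``QSP polynomial bounds in $\cos\theta$'' onto conditions (2)--(3). Lem.~\ref{Thm:AchievableCD} is already stated in $y=\sin\theta$, so no such mapping is needed; conditions (1)--(3) here are verbatim those of Lem.~\ref{Thm:AchievableCD} under $y\to\lambda$.
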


This result is quite remarkable as the constraints are lax and allow for many interesting functions. For instance, choosing $ {C}(y) = \pm T_{2N+1}(y) = \sin{((2N+1)\theta)}$ to be Chebyshev polynomials of the first kind and ${D}(y)=0$, recovers the baseline amplitude amplification algorithm. 

The application of Lem.~\ref{Thm:Generalized_Amplitude_Amplification} requires finding a good polynomial approximation, say ${D}$ to the target function. However, it is not always clear how constraint (3) on properties of the polynomial outside the interval of interest may always be satisfied. We rectify this in Thm.~\ref{Thm:Controlled_Generalized_Amplitude_Amplification} by adding an additional ancilla qubit to stage a cancellation of the $C$ term, similar to the proof of Thm.~\ref{Thm:QSP_B}.
Subject only to parity and being bounded, we can implement without approximation any arbitrary polynomial of degree exactly equal to the number of queries to the state preparation operator $\hat{G}$. This enables us to compute any real function with a query complexity exactly that of the its best polynomial approximations thus allowing us to transfer powerful results from approximation theory~\cite{Meinardus1967} to quantum computation.
\begin{proof}[Proof of Lem.~\ref{Thm:Generalized_Amplitude_Amplification}.]
Our starting point is $Q=2N+1$ query sequence of Eq.~\ref{Eq:Generalized_Sequence}. We define $\hat{G}$ to mark the target state with an ancilla flag qubit $b$ e.g. $\ket{t}\rightarrow \ket{t}_a\ket{0}_b$, $\ket{t^\perp}\rightarrow \ket{t^\perp}_{ab}$, where $\ket{t^\perp}_{ab}$ has no support on $\ket{0}_b$ This allows us to perform partial reflections about $\ket{t}$ using single-qubit phase gates. Let us re-express the generalized reflection in Eq.~\ref{Eq:Generalized_Reflections} as:
\begin{align}
\widehat{\text{Ref}}_{\alpha,\ket{s}}&=\hat{I}_{ab}-(1-e^{-i \alpha})\hat{G}\ket{0}\bra{0}_{ab}\hat{G}^{\dag}
=
\hat{G}\left(\hat{I}_{ab}-(1-e^{-i \alpha})\ket{0}\bra{0}_{ab}\right)\hat{G}^{\dag}
=
\hat{G}\widehat{\text{Ref}}_{\alpha,\ket{0}}\hat{G}^{\dag}.
\end{align}
If $\ket{0}_{ab}$ is of dimension $2d$, $\widehat{\text{Ref}}_{\alpha,\ket{0}}$ is a conditional phase gate and may be implemented with $\mathcal{O}(\log(d))$ primitive gates.
As $\text{span}\{\ket{t}_a\ket{0}_b,\ket{t^\perp}_{ab}\}$ is an invariant subspace of $\widehat{\text{Ref}}_{\alpha,\ket{s}}\widehat{\text{Ref}}_{\beta,\ket{t}}$, we may represent it equivalently with Pauli matrices $\hat{\sigma}_{x,y,z}$ through the replacements
\begin{align}
\hat{G}&\rightarrow e^{-i \hat{\sigma}_y\theta}=\left(
\begin{matrix}
\cos{(\theta)} & -\sin{(\theta)} \\ 
\sin{(\theta)} & \cos{(\theta)}
\end{matrix}
\right),
\quad 
\hat{G}^{\dag}
\rightarrow 
e^{i \hat{\sigma}_y\theta}
=\left(
\begin{matrix}
\cos{(\theta)} & \sin{(\theta)} \\ 
-\sin{(\theta)} & \cos{(\theta)}
\end{matrix}
\right),
\\ \nonumber
e^{i\alpha/2}\widehat{\text{Ref}}_{\alpha,\ket{0}}&\rightarrow e^{-i \hat{\sigma}_z\alpha/2}
=
\left(
\begin{matrix}
e^{i\alpha/2} & 0 \\ 
0 & e^{-i\alpha/2}
\end{matrix}
\right),
\quad 
e^{i\beta/2}\widehat{\text{Ref}}_{\beta,\ket{t}}\rightarrow e^{-i \hat{\sigma}_z\beta/2}
=
\left(
\begin{matrix}
e^{i\beta/2} & 0 \\ 
0 & e^{-i\beta/2}
\end{matrix}
\right).
\end{align}
Thus $\widehat{\text{Ref}}_{\alpha,\ket{s}}\widehat{\text{Ref}}_{\beta,\ket{t}}=e^{-i(\alpha+\beta)/2}e^{i \hat{\sigma}_y\theta}e^{-i \hat{\sigma}_z\alpha/2}e^{-i \hat{\sigma}_y\theta}e^{-i \hat{\sigma}_z\beta/2}$ in this subspace. Though applying $\hat{G}^{\dag}$ in general takes us out of the subspace, this operator is  always paired with $\hat{G}$ in the Grover iterate and never occurs in isolation -- the representation is faithful. This sequence of alternating $\hat\sigma_{y,z}$ rotations motivate us to define the operator for rotations by angle $\theta$ about an axis in the $\hat\sigma_x$--$\hat\sigma_y$ plane of the Bloch sphere:
\begin{align}
\label{Eq:BlochSphereXYRotation}
e^{-i \hat{\sigma}_\phi\theta}&=e^{-i \hat{\sigma}_z(\pi/2+\phi)/2}e^{-i \hat{\sigma}_y\theta}e^{i \hat{\sigma}_z(\pi/2+\phi)/2}=\left(
\begin{matrix}
\cos{(\theta)} & -i e^{-i\phi}\sin{(\theta)} \\ 
-i e^{i\phi}\sin{(\theta)} & \cos{(\theta)}
\end{matrix}
\right),
\end{align}
where $\hat{\sigma}_\phi=\cos{(\phi)}\hat\sigma_x+\sin{(\phi)}\hat\sigma_y$. We would like to express Eq.~\ref{Eq:Generalized_Sequence} as a product of just these $Q=2N+1$ rotations $e^{-i \hat{\sigma}_{\phi_k}\theta}$. 
Thus we replace the input state $\hat{G}\ket{0}_{ab}=\hat{G}e^{i\alpha_0}\widehat{\text{Ref}}_{\alpha_0,\ket{0}}\ket{0}_{ab}$, and obtain
\begin{align}
\hat{V}_{\vec\alpha,\vec\beta}&=e^{i\alpha_0}\left(\prod^{N}_{k=1}\widehat{\text{Ref}}_{\alpha_k,\ket{s}}\widehat{\text{Ref}}_{\beta_k,\ket{t}} \right)\hat{G}\widehat{\text{Ref}}_{\alpha_0,\ket{0}}.
\end{align}
Promised that $\hat{V}_{\vec\alpha,\vec\beta}$ always acts on input state $\ket{0}_{ab}$, the fact $\hat{G}\ket{0}=e^{-i \hat{\sigma}_y\theta}\ket{t^\perp}$ permits the representation.
\begin{align}
\hat{V}_{\vec\alpha,\vec\beta}
&=
e^{i\alpha_0/2-i\sum^n_{k=1}(\alpha_k+\beta_k)/2}\left(\prod^{N}_{k=1}e^{i \hat{\sigma}_y\theta}e^{-i \hat{\sigma}_z\alpha_k/2}e^{-i \hat{\sigma}_y\theta}e^{-i \hat{\sigma}_z\beta_k/2}\right)e^{-i \hat{\sigma}_y\theta}e^{-i \hat{\sigma}_z\alpha_0/2}.
\end{align}
Since we have the identity $e^{i \hat{\sigma}_y\theta}=e^{-i \hat{\sigma}_z \pi}e^{-i \hat{\sigma}_y\theta}e^{i \hat{\sigma}_z \pi}$, and all $e^{-i \hat{\sigma}_y}$ in Eq.~\ref{Eq:Generalized_Reflections} are sandwiched between $\hat\sigma_z$ rotations, we replace these with the $\hat\sigma_x$--$\hat\sigma_y$ rotations of Eq.~\ref{Eq:BlochSphereXYRotation} and define the composite iterate $\hat{V}_{\vec\phi}$ in Fig.~\ref{Fig:Circuit_AmpAmp_QSP}
\begin{align}
\label{Eq:Composite_Iterate}
\hat{V}_{\vec\phi}=e^{i\Phi}\hat{V}_{\vec\alpha,\vec\beta}=\left(\prod^{2N+1}_{k=1}e^{-i \hat{\sigma}_{\phi_k}\theta}\right)=\mathcal A(\theta)\hat{I}+i\mathcal B(\theta)\hat\sigma_z+i\mathcal C(\theta)\hat\sigma_x+i\mathcal D(\theta)\hat\sigma_y,
\end{align}
where $\Phi$, which depends only on $\vec\alpha,\vec\beta$, is chosen to cancel the global phase of $\hat{V}_{\vec\alpha,\vec\beta}$, $\vec{\phi}$ depends linearly on $\vec{\alpha},\vec{\beta}$, and the decomposition into the Pauli basis is always possible for $\text{SU}(2)$ matrices. 

By replacing the product of two-parameters generalized Grover iterates in Eq.~\ref{Eq:Generalized_Sequence} with a product of more fundamental and simpler one-parameter single-qubit rotations in Eq.~\ref{Eq:Composite_Iterate}, the structure underlying generalized amplitude amplification is made clearer. As these single-qubit rotations isomorphic to those considered in quantum signal processing Eq.~\ref{Eq:QSP_Single_Qubit}, we may apply Lem.~\ref{Thm:AchievableCD} that characterizes any achievable $(\mathcal D)$. Other choices from~\cite{Low2016methodology} such as $(\mathcal A,\mathcal B)$, $(\mathcal A,\mathcal C)$ etc. are also possible.
\end{proof}
\begin{lemma}[Achievable $(\mathcal{C},\mathcal{D})$ -- Thm.~2.4 of~\cite{Low2016methodology}]
\label{Thm:AchievableCD}
For any odd integer $N>0$, a choice of functions $\mathcal C,\mathcal D$ in Eq.~\ref{Eq:QSP_Single_Qubit} is achievable by some $\vec\phi\in\mathbb{R}^{N}$ if and only if all the following are true:\\
(1) $\mathcal{C}(\theta)= C(y),\mathcal{D}(\theta)= D(y)$, where $C,D$ are odd real polynomials in $y=\sin{(\theta)}$ of degree at most $N$;
\\
(2) $\forall y\in[-1,1]$, $\mathcal{C}^2(y)+\mathcal{D}^2(y)\le 1$;
\\
(3) $\forall y\ge 1$, $\mathcal{C}^2(y)+\mathcal{D}^2(y)\ge 1$.
\\
Moreover, $\vec\phi\in\mathbb{R}^{N}$ can be computed in classical $\mathcal{O}(\text{poly}(N))$ time.
\end{lemma}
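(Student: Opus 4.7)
The plan is to prove the biconditional by separately establishing necessity (the three conditions follow from any realizing sequence $\vec\phi$) and sufficiency (a constructive classical algorithm that inverts the map $\vec\phi \mapsto (\mathcal C, \mathcal D)$). The structural parallel with Lem.~\ref{Lem:AchievableAB} and Lem.~\ref{Thm:AchievableB} means the same machinery---explicit tracking of parities through the product expansion, unitarity on the real line, and analytic continuation off it---transfers here with the roles of the $(\mathcal A,\mathcal B)$ and $(\mathcal C,\mathcal D)$ components swapped.

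For \emph{necessity}, I would expand
\begin{equation}
\hat{V}_{\vec\phi} = \prod_{k=1}^{N} \bigl(\cos\theta\,\hat{I} - i\sin\theta\, \hat{\sigma}_{\phi_k}\bigr)
\end{equation}
and sort the resulting monomials by the number of $\sin\theta$ factors. A term with $j$ sines contains a product of $j$ Pauli operators in the $\hat\sigma_x$--$\hat\sigma_y$ plane, which lies in $\operatorname{span}\{\hat I,\hat\sigma_z\}$ when $j$ is even and in $\operatorname{span}\{\hat\sigma_x,\hat\sigma_y\}$ when $j$ is odd. Hence $\mathcal C,\mathcal D$ collect exactly the odd-$j$ contributions, which for $N$ odd carry an even number of $\cos\theta$ factors; these collapse via $\cos^2\theta = 1-\sin^2\theta$ to polynomials in $y=\sin\theta$, giving (1). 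Condition (2) is the real-$\theta$ unitarity identity $\mathcal A^2+\mathcal B^2+\mathcal C^2+\mathcal D^2 = 1$ restricted to the $(\mathcal C,\mathcal D)$ components. Condition (3) I would obtain by analytic continuation: sending $\theta \to \pi/2 + it$ with real $t\ge 0$ drives $y = \sin\theta = \cosh t \ge 1$, and the parity structure of (1) turns the determinant relation of $\hat{V}_{\vec\phi}$ into an identity that swaps the sign on the $(C,D)$ pair, forcing $C^2+D^2\ge 1$ on $y\ge 1$.

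For \emph{sufficiency}, I would proceed by induction on the odd degree $N$. The base case $N=1$ is realized by a single rotation $e^{-i\hat\sigma_{\phi_1}\theta}$, which matches any admissible linear $(C,D)$ with $C^2(1)+D^2(1)=1$. For the inductive step, write the complex polynomial $F(y) = C(y) + iD(y)$ and let $f_N e^{i\psi}$ be the leading coefficient of $F$ (of odd degree $N$). I peel one rotation off the right by passing to $\hat{V}_{\vec\phi}\,e^{+i\hat\sigma_{\phi_1}\theta}$ with $\phi_1$ chosen so that the degree-$N$ term of the reduced $F'(y) = C'(y)+iD'(y)$ vanishes; the Pauli algebra makes this a single linear equation in $e^{i\phi_1}$ whose unique real solution is forced by the parity of $F$. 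The reduced $(C',D')$ is odd of degree $\le N-2$, and conditions (2), (3) are inherited because right-multiplication by $e^{+i\hat\sigma_{\phi_1}\theta}$ preserves the real-$\theta$ unitarity identity and its analytic continuation. Induction then supplies the remaining $N-1$ phases.

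The \emph{main obstacle} I anticipate is verifying that conditions (2) and (3) truly survive the peel: I need to check that the partial $(\mathcal A',\mathcal B')$ components on which no direct condition was imposed do not force the reduced pair to violate its norm bounds, and that the leading-order cancellation is always achievable with a \emph{real} phase $\phi_1$. Both facts reduce to algebraic identities linking $(\mathcal A,\mathcal B,\mathcal C,\mathcal D)$ under multiplication by a single rotation, and the reality of $\phi_1$ follows from the fact that condition (3), evaluated at $y=1$, forces $|F(1)|=1$, which rigidifies the phase of the leading coefficient. Once these identities are in hand, each inductive step is a constant-work operation on the polynomial coefficients, yielding the stated $\mathcal{O}(\operatorname{poly}(N))$ classical running time.
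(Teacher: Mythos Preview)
The paper does not actually prove this lemma: it is quoted verbatim as Thm.~2.4 of~\cite{Low2016methodology} and used as a black box in the proof of Lem.~\ref{Thm:Generalized_Amplitude_Amplification}. So there is no in-paper proof to compare against; what you have written is an attempted reconstruction of the argument from the cited reference.

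Your necessity argument is essentially correct. The parity bookkeeping gives (1), real-$\theta$ unitarity gives (2), and for (3) the clean way to phrase your analytic-continuation idea is this: the determinant identity $\mathcal{A}^2+\mathcal{B}^2+\mathcal{C}^2+\mathcal{D}^2=1$ holds as a \emph{polynomial} identity, and for odd $N$ one has $\mathcal{A}=\cos\theta\cdot\tilde A(y)$, $\mathcal{B}=\cos\theta\cdot\tilde B(y)$ with $\tilde A,\tilde B$ real even polynomials, so $(1-y^2)\bigl(\tilde A^2+\tilde B^2\bigr)+C^2+D^2=1$; for $y\ge 1$ the first term is $\le 0$, forcing $C^2+D^2\ge 1$.

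Your sufficiency argument has a genuine gap. The peel $\hat V_{\vec\phi}\,e^{+i\hat\sigma_{\phi_1}\theta}$ mixes all four components: from the Pauli algebra, $\mathcal{C}'=\mathcal{A}\sin\theta\cos\phi_1+\mathcal{B}\sin\theta\sin\phi_1+\mathcal{C}\cos\theta$ and similarly for $\mathcal{D}'$. You cannot compute the reduced $(C',D')$, let alone choose $\phi_1$ to cancel its top coefficient, without first fixing compatible $(\mathcal{A},\mathcal{B})$. The step you are missing is the \emph{construction} of such a pair: conditions (2) and (3) together imply that $\bigl(1-C^2(y)-D^2(y)\bigr)/(1-y^2)$ is a nonnegative even real polynomial on all of $\mathbb{R}$, and a Fej\'er--Riesz-type factorization then produces real even $\tilde A,\tilde B$ with $\tilde A^2+\tilde B^2$ equal to it. Only after this completion can the inductive peeling (which in~\cite{Low2016methodology} reduces the degree by two per step, consistent with keeping $N$ odd) proceed. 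Your ``main obstacle'' paragraph gestures at the issue but frames it as a verification rather than the constructive step it actually is.
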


\begin{figure}[t]
\begin{tabular}{l}
v\includegraphics{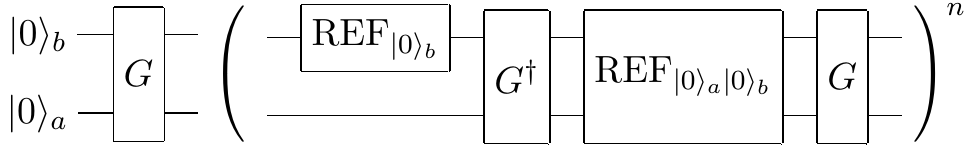}\\
\includegraphics{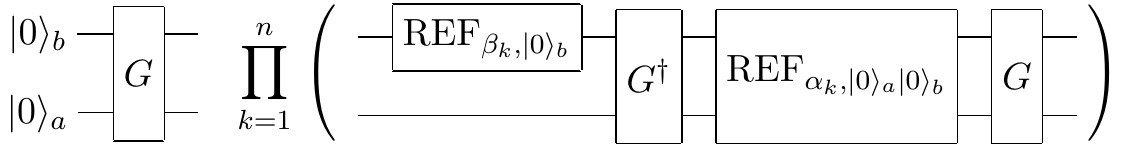}\\
\includegraphics{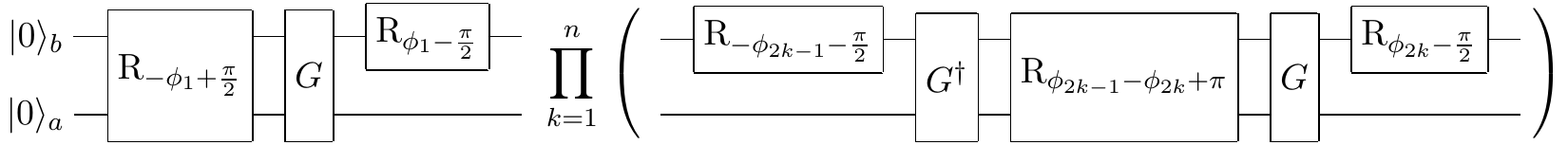}
\end{tabular}
\caption{
\label{Fig:Circuit_AmpAmp_QSP}(top) Circuit diagram for amplitude amplification. (middle) Circuit diagram for ampliutude amplification by phase-matching. (bottom) Circuit diagram for amplitude amplification $\hat{V}_{\vec\phi}$ by quantum signal processing. Note that we abbreviate the reflection operators as $\hat{\text{R}}$ and drop the state subscript here. The query complexity in all cases is $N=2n+1$, and the gate complexity is $\mathcal{O}(N\log{(d)})$.
}
\end{figure}

\subsection{Flexible Amplitude Amplification}
\label{Sec:Flexible_AA}
By taking a superposition of the state prepared by Lem.~\ref{Thm:Generalized_Amplitude_Amplification}, we may stage a cancellation of $\mathcal{C}$ function on the target state in Thm.~\ref{Thm:Controlled_Generalized_Amplitude_Amplification}. This allows us to prepare states with amplitudes dictated only by $\mathcal{D}$.
\begin{proof}[Proof of Thm.~\ref{Thm:Controlled_Generalized_Amplitude_Amplification}.]
Consider the composite iterate in Eq.~\ref{Eq:Composite_Iterate} controlled by a single-qubit ancilla register indexed by subscript $c$.
\begin{align}
\hat{W}_{\vec\phi}=\hat{V}_{\vec\phi}\otimes \ket{+}\bra{+}_c+\hat{V}_{\pi-\vec\phi}\otimes \ket{-}\bra{-}_c,
\end{align}
where $\ket{\pm}=\frac{1}{\sqrt{2}}(\ket{0}\pm\ket{1})$. Note that this can be implemented by controlling $\widehat{\text{Ref}}_{\alpha,\ket{0}},\widehat{\text{Ref}}_{\beta,\ket{t}}$ in  $\hat{V}_{\vec\phi}$. The number of queries to $\hat{G},\hat{G}^\dag$ is unchanged and $\hat{G},\hat{G}^\dag$ need not be controlled unitaries. Thus $\hat{W}_{\vec\phi}$ still has query complexity $N=2n+1$ equal to $\hat{V}_{\vec\phi}$. From the similarity transformation $\hat{\sigma_y}e^{-i\hat{\sigma}_\phi\theta}\hat{\sigma_y}=e^{-i\hat{\sigma}_{\pi-\phi}\theta}$,
\begin{align}
\hat{V}_{\vec\pi-\vec\phi}=\mathcal A(\theta)\hat{I}-i\mathcal B(\theta)\hat\sigma_z-i\mathcal C(\theta)\hat\sigma_x+i\mathcal D(\theta)\hat\sigma_y,
\end{align}
where $\vec\pi$ is the vector where all elements are $\pi$. 
This allows us to stage a cancellation of $\mathcal C$ when $\hat{W}_{\vec\phi}$ is controlled by the ancilla state $\ket{0}_c$:
\begin{align}
\label{Eq:Controlled_Generalized_Reflections}
\hat{W}_{\vec\phi}\ket{0}_a\ket{0}_b\ket{0}_c&=
\mathcal D(\theta)\ket{t}_a\ket{0}_b\ket{0}_c+\mathcal A(\theta)\ket{t^\perp}_{ab}\ket{0}_c+i \mathcal C(\theta)\ket{t}_a\ket{0}_b\ket{1}_c-i \mathcal B(\theta)\ket{t^\perp}_{ab}\ket{1}_c.
\end{align}
where $\ket{t}_a\ket{0}_b\ket{0}_c$ is our new target state that is uniquely marked by $\ket{0}_b\ket{0}_c$. Thus the amplitude of $\mathcal D$ on the target state is completely independent of $\mathcal A,\mathcal B,\mathcal C$ regardless of what they may be. This allows us to directly apply the following result for achievable $\mathcal D$ in Lem.~\ref{Lem:AchievableD}. 
\end{proof}
\begin{lemma}[Achievable $(\mathcal{D})$ -- Thm.~3.4 of~\cite{Low2016methodology}]
\label{Lem:AchievableD}
For any odd integer $N>0$, a choice of function $D$ in Eq.~\ref{Eq:Controlled_Generalized_Reflections} is achievable by some $\vec\phi\in\mathbb{R}^{N}$ if and only if all the following are true:\\
(1) $\mathcal D(\theta)= {D}(y)$, where ${D}$ is an odd real polynomial in $y=\sin{(\theta)}$ of degree at most $N$;
\\
(2) $\forall y\in[-1,1]$, $\mathcal{D}^2(y)\le 1$.
\\
Moreover, $\vec\phi\in\mathbb{R}^{N}$ can be computed in classical $\mathcal{O}(\text{poly}(N))$ time.
\end{lemma}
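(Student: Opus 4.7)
The plan is to establish the two implications of the characterization and then sketch why the construction can be carried out in polynomial time. For necessity, I would argue by direct structural expansion: each elementary factor $e^{-i\hat{\sigma}_{\phi_k}\theta} = \cos(\theta)\hat{I} - i\sin(\theta)\hat{\sigma}_{\phi_k}$ is affine in $\sin(\theta)$ and $\cos(\theta)$, and expanding the length-$N$ product in the Pauli basis yields coefficients that are trigonometric polynomials of degree exactly $N$ in $\theta$. The $\hat{\sigma}_y$ component collects exactly those monomials with an odd number of $\hat{\sigma}_{\phi_k}$ insertions, which forces parity $(N \bmod 2)$ in $\sin(\theta)$; reducing even powers of $\cos(\theta)$ via $\cos^2\theta = 1 - \sin^2\theta$ exhibits $\mathcal{D}(\theta) = D(\sin\theta)$ for a real polynomial $D$ of degree $\le N$ and appropriate parity (odd when $N$ is odd). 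Unitarity of the product forces $\mathcal{A}^2+\mathcal{B}^2+\mathcal{C}^2+\mathcal{D}^2 = 1$, so condition (2), $\mathcal{D}^2 \le 1$ on $[-1,1]$, is automatic.

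For sufficiency, I would reduce to the previously-established Lem.~\ref{Thm:AchievableCD} for achievable pairs $(\mathcal{C},\mathcal{D})$ by manufacturing a suitable companion polynomial $C$. Given the input $D$ satisfying (1) and (2), the polynomial $P(y) = 1 - D^2(y)$ is even, of degree $\le 2N$, and non-negative on $[-1,1]$. I would invoke a Fej\'er--Riesz spectral factorization, applied to $P$ viewed as a non-negative trigonometric polynomial under the substitution $y = \sin\theta$, to obtain a polynomial $C$ of the same degree and parity as $D$ satisfying $C^2(y) + D^2(y) \le 1$ on $[-1,1]$. Using the canonical outer factor (whose zeros all lie in the closed unit disk of the trigonometric variable) ensures that $C^2 + D^2 \ge 1$ for $|y|\ge 1$, verifying conditions (2) and (3) of Lem.~\ref{Thm:AchievableCD}. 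Applying that lemma to the pair $(C, D)$ then yields the required $\vec\phi \in \mathbb{R}^N$.

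The main obstacle I anticipate is the growth control outside $[-1,1]$: many Markov--Luk\'acs style decompositions of $1 - D^2$ exist, and not every choice will satisfy the external bound $C^2 + D^2 \ge 1$ for $|y|\ge 1$. The fix is the canonical Fej\'er--Riesz outer factorization, which has a unique extremal behavior ensuring exactly the required pointwise growth off the interval. Once this is in place, polynomial-time classical computation follows by chaining two polynomial-time steps: numerical root-finding plus spectral factorization yields $C$ in $\mathcal{O}(\mathrm{poly}(N))$ time, and then the inversion procedure for $(\mathcal{C},\mathcal{D})$ stated in Lem.~\ref{Thm:AchievableCD} produces $\vec\phi$ in $\mathcal{O}(\mathrm{poly}(N))$ time as well.
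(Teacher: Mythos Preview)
The paper does not prove this lemma at all: it is quoted verbatim as Thm.~3.4 of~\cite{Low2016methodology} and used as a black box, exactly like the companion results Lem.~\ref{Lem:AchievableAB}, Lem.~\ref{Thm:AchievableB}, and Lem.~\ref{Thm:AchievableCD}. So there is no ``paper's own proof'' to compare against here; you are supplying what the paper deliberately outsources.

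That said, your strategy is the natural one and matches how the cited reference proceeds: necessity by expanding the product and tracking parity, sufficiency by completing the given $D$ to a pair $(C,D)$ meeting the hypotheses of Lem.~\ref{Thm:AchievableCD}, and polynomial time by composing the factorization step with the inversion algorithm already asserted in Lem.~\ref{Thm:AchievableCD}. One point deserves more care than you give it. Fej\'er--Riesz, as usually stated, factors a non-negative trigonometric polynomial as $|Q(e^{i\theta})|^2$; it does not directly hand you an \emph{odd real polynomial in $y=\sin\theta$}. You need to argue that because $1-D^2(y)$ is an even polynomial in $y$, the factorization can be arranged to respect this symmetry and produce a $C$ of the same (odd) parity and real coefficients. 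Concretely, writing $1-D^2(y)=R(y^2)$ with $R\ge 0$ on $[0,1]$ and invoking a Luk\'acs-type representation on that interval is one clean way to extract the parity; the ``outer factor'' heuristic you mention is morally right for the growth condition $C^2+D^2\ge 1$ off $[-1,1]$, but you should state explicitly which roots you assign to $C$ and verify that the leading-coefficient sign makes $C^2+D^2-1$ non-negative for $|y|\ge 1$. With that detail filled in, your argument goes through.
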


\subsection{Amplitude Multiplication}
\label{Sec:Amp_Mult}
The proof of amplitude multiplication follows from flexible amplitude amplification by an appropriate choice of polynomials for $D$.
\begin{proof}[Proof of Thm.~\ref{Thm:Linear_Amplitude_Amplification}]
The amplitude multiplication algorithm is a special case of Thm.~\ref{Thm:Controlled_Generalized_Amplitude_Amplification} where $D$ is a polynomial that approximates the truncated linear function
\begin{align}
\label{Eq:Linear_target_functionB}
f_{\text{lin},\Gamma}(x)=
\begin{cases}
 \frac{x}{2\Gamma}, & |x| \in [0, \Gamma], \\
\in [-1,1], & |x| \in (\Gamma,1].
\end{cases}
\end{align}
In Thm.~\ref{Thm.Polynomial_LAA} of Appendix.~\ref{Sec:Polynomials_Amplitude_Multiplication}, we approximate $f_{\text{lin},\Gamma}(x)$ with a polynomial with the following properties:
$\forall\;\Gamma \in [0,1/2]$ and $\epsilon \le\mathcal{O}(\Gamma)$, the odd polynomial $p_{\text{lin},\Gamma,n}$ of degree $n=\mathcal{O}(\Gamma^{-1}\log{(1/\epsilon)})$ satisfies
\begin{align}
\forall\; {x\in[- \Gamma,\Gamma]},\; \left|p_{\text{lin},\Gamma,n}(x)- \frac{x}{2\Gamma}\right|\le \frac{\epsilon|x|}{2\Gamma} \quad\text{and}\quad \max_{x\in [-1,1]} |p_{\text{lin},\Gamma,n}(x)|\le 1.
\end{align}
As this polynomial satisfies the conditions of Thm.~\ref{Thm:Controlled_Generalized_Amplitude_Amplification}, there exists a state preparation unitary 
$
\hat{W}_{\vec\phi}\ket{0}_a\ket{0}_b\ket{0}_c =  p_{\text{lin},\Gamma,n}(y)\ket{t}_a\ket{0}_b\ket{0}_c+\mathcal A(\theta)\ket{t^\perp}_{ab}\ket{0}_c+i \mathcal C(\theta)\ket{t}_a\ket{0}_b\ket{1}_c-i \mathcal B(\theta)\ket{t^\perp}_{ab}\ket{1}_c,
$
where the functions $\mathcal{A},\mathcal{B},\mathcal{C}$ of lesser interest, that consists of $\mathcal{O}(n)$ queries to $\hat{G},\hat{G}^\dag$ and $\mathcal{O}(n\log{(d)})$ primitive gates. Assuming that $\Gamma\in[|\sin{(\theta)}|,1/2]$ is an upper bound on $|\sin{(\theta)}|$,  the amplitude in the target state is $|\bra{t}_a\bra{0}_b\bra{0}_c\hat{W}_{\vec\phi}\ket{0}_a\ket{0}_b\ket{0}_c- \frac{\sin{(\theta)}}{2\Gamma}|\le\frac{\epsilon|\sin{(\theta)}|}{2\Gamma}$. In other words, all initial target state amplitudes $\sin{(\theta)}$ are divided by a constant factor $2\Gamma$ with an multiplicative error $\epsilon$ that can be made exponentially small.
\end{proof}
Note that if one is interested in multiplication by a factor less than one, trivial solutions exist. For any $\Gamma \ge 1/2$, one could prepare an ancilla state $\ket{\Gamma}_c= \frac{1}{2\Gamma}\ket{0}_c+\sqrt{1-\frac{1}{4\Gamma^2}}\ket{1}_c$ and simply define the target state to be $\ket{t}_a\ket{0}_b\ket{0}_c$ in the prepared state $\hat{G}\ket{0}_a\ket{0}_c\ket{\Gamma_c}=\frac{\sin{(\theta)}}{2\Gamma}\ket{t}_a\ket{0}_b\ket{0}_c+\cdots$.

\section{Uniform Spectral Amplification by Amplitude Multiplication}
\label{Sec:Ham_Sim_Overlaps}
We now consider a certain kind of structure within the signal unitary $\hat{U}$ that encodes some Hamiltonian in standard-form. Whereas Sec.~\ref{Sec:Uniform_Hamiltonian_Amplification} treats $\hat{U}$ as a single oracle, we now assume that it factors into other unitaries, say $\hat{U}=\hat{U}^\dag_\text{row}\hat{U}_\text{col}$, or $\hat{U}=\hat{U}^\dag_\text{row}\hat{U}_\text{mix}\hat{U}_\text{col}$, that we assume access to as oracles. This factorization imposes in Sec.~\ref{SubSec:State_Overlaps} the interpretation that encoded Hamiltonians have matrix elements defined by the overlap between some set of quantum states. We investigate in Sec.~\ref{SubSec:Amplified_Overlap} how this structure may be exploited for uniform spectral amplification. By applying amplitude multiplication, this is possible through Lem.~\ref{Thm:Ham_Encoding_Uniform_Amplification_State_Overlaps} in a fairly general setting. In Sec.~\ref{SubSec:Reduction_Sparse_Matrices}, we specialize this to sparse Hamiltonian simulation, which leads to the improved simulation algorithm Thm.~\ref{Cor:Ham_Sim_Sparse_Amplified}. In Sec.~\ref{Sec:Lower_Bound}, this algorithm is proven to be optimal in all parameters, at least up to logarithmic factors, through a matching lower bound Thm.~\ref{Thm:Lower_Bound}.

\subsection{Matrix Elements as State Overlaps}
\label{SubSec:State_Overlaps}
Decomposing the signal unitary into factors motivates a different interpretation of the standard-form
\begin{align}
\label{Eq:Level_Two_Encoding}
\frac{\hat{H}}{\alpha}=(\bra{0}_a\otimes\hat{I}_s)\hat{U}(\ket{0}_a\otimes \hat{I})=(\bra{0}_a\otimes\hat{I}_s)\hat{U}^\dag_\text{row}\hat{U}_\text{col}(\ket{0}_a\otimes \hat{I}_s)
\end{align}
By definition, any unitary operator implements a basis transformation $\hat{U}=\sum_{k}\ket{B_k}\bra{A_k}_{as}$ between complete orthonormal sets of basis states $\{\ket{B_k}_{as}\}$ and $\{\ket{A_k}_{as}\}$, and similarly for $\hat{U}_\text{row},\hat{U}_\text{col}$. Now consider a set of basis states $\{\ket{j}_a\}$ on the ancilla register,
and a set of basis states $\{\ket{u_j}_s\}$ on the system register. Without loss of generality, we may represent $\hat{U}_\text{row}=\sum_{k}\ket{\chi_{0,k}}_{as}\bra{0}_a\bra{u_k}_s+\sum_{j\neq 0}\sum_{k}\ket{\chi_{j,k}}_{as}\bra{j}_a\bra{u_k}_s$ and $\hat{U}_\text{col}=\sum_{k}\ket{\psi_{0,k}}_{as}\bra{0}_a\bra{u_k}_s+\sum_{j\neq 0}\sum_{k}\ket{\psi_{j,k}}_{as}\bra{j}_a\bra{u_k}_s$ for some set of basis states $\{\ket{\chi_{j,k}}_{as}\}$, $\{\ket{\psi_{j,k}}_{as}\}$. Let us substitute this into Eq.~\ref{Eq:Level_Two_Encoding} and drop the $0$ subscript.
\begin{align}
\label{Eq:State_Overlap_Model}
\frac{\hat{H}_{jk}}{\alpha}=\bra{u_j}\frac{\hat{H}}{\alpha}\ket{u_k}&=\left(\bra{0}_a\bra{u_j}_s\hat{U}^\dag_\text{row}\right)\left(\hat{U}_\text{col}\ket{0}_a\ket{u_k}_s\right)=\langle\chi_{0,j}|\psi_{0,k}\rangle_{as}=\langle\chi_{j}|\psi_{k}\rangle_{as}.
\end{align}
In other words, elements of $\hat{H}$ in the $\ket{u_j}_s$ basis may always be interpreted as the overlap of appropriately defined quantum states $\ket{\psi_{k}}_{as},\ket{\chi_{k}}_{as}$, which we call overlap states. Moreover, $\hat{H}$ need not unitary when the dimension of these states is greater than $\hat{H}$. 

More generally, we may factor the signal unitary into three unitaries $\hat{U}=\hat{U}^\dag_{\text{row}}\hat{U}_{\text{mix}}\hat{U}_{\text{col}}$. If we preserve the interpretation of $\hat{U}_{\text{row}}$ and $\hat{U}_{\text{col}}$ as preparing appropriately defined quantum states, the third unitary $\hat{U}_{\text{mix}}$ is a new component that mixes these states to encode the following Hamiltonian in standard-form
\begin{align}
\label{Eq:State_Overlap_Model_imperfect_3}
\frac{\hat{H}}{\alpha}=(\bra{0}_a\otimes\hat{I}_s)\hat{U}^\dag_{\text{row}}\hat{U}_{\text{mix}}\hat{U}_{\text{col}}(\ket{0}_a\otimes \hat{I}_s),
\quad
\frac{\hat{H}_{jk}}{\alpha} &= \bra{\chi_j }_{as}\hat{U}_{\text{mix}} \ket{\psi_k}_{as}.
\end{align}
Note that this reduces to Eq.~\ref{Eq:State_Overlap_Model} by choosing $\hat{U}_{\text{mix}}$ to be identity, or by absorbing it into the definition of either $\hat{U}_{\text{row}}$ or $\hat{U}_{\text{col}}$. Combined with Thm.~\ref{Thm:Ham_Sim_Qubitization}, time evolution by $e^{-i\hat{H}t}$ may be approximated with error $\epsilon$ using $\mathcal{O}\big(t\alpha+\frac{\log{(1/\epsilon)}}{\log\log{(1/\epsilon)}}\big)$ queries to $\hat{U}_\text{row},\hat{U}_\text{mix}$, and $\hat{U}_\text{col}$.

However, the ability to efficiently prepare arbitrary quantum states represents an extremely powerful model of computation. For instance, arbitrary temperature Gibbs state preparation is QMA-complete~\cite{Gharibian2015quantum}. That not all states may be prepared in $\mathcal{O}(1)$ queries to commonly used quantum oracles can be built into the definition of the overlap states by splitting them into `good' components $\ket{\tilde\psi_j}_{a_1s},\ket{\tilde\chi_j}_{a_1s}$ marked by an ancilla state $\ket{0}_{a_2}$, and `bad' components that are discarded. Difficult states then have a small amplitude in the $\ket{0}_{a_2}$ subspace. Thus
\begin{align}
\label{Eq:State_Overlap_Model_imperfect}
\ket{\psi_j}_{as}&=\sqrt{\lambda_{\beta}\beta_j} \ket{\tilde \psi_j}_{a_1 s}\ket{0}_{a_2}+\sqrt{1-\lambda_{\beta}\beta_j}\ket{\psi_{\text{bad},j}}_{a_1s}\ket{1}_{a_2}, 
\\ \nonumber
\ket{\chi_j}_{as}&=\sqrt{\lambda_{\gamma}\gamma_j} \ket{\tilde \chi_j}_{a_1s}\ket{0}_{a_2}+\sqrt{1-\lambda_{\gamma}\gamma_j}\ket{\chi_{\text{bad},j}}_{a_1s}\ket{2}_{a_2}.
\end{align}
Note that the dimension of the ancilla register $a_1a_2$ is equal to $a$. The coefficients $\lambda_{\gamma},\lambda_{\beta} \in (0,1]$ represent a slowdown factor due to the difficulty of state preparation, and the coefficients $\beta_j,\gamma_j\in[0,1]$ normalized to $\max_{j}\beta_j = 1,\max_{j}\gamma_j = 1$ represent how the amplitude in good states can be index-dependent by design. By restricting $\hat{U}_\text{mix}$ to be identity on the register $a_2$, this encodes
the following Hamiltonian in standard-form
\begin{align}
\label{Eq:State_Overlap_Model_imperfect_3}
\frac{\hat{H}}{\alpha}=(\bra{0}_a\otimes\hat{I}_s)\hat{U}^\dag_{\text{row}}\hat{U}_{\text{mix}}\hat{U}_{\text{col}}(\ket{0}_a\otimes \hat{I}_s),
\;
\frac{\hat{H}_{jk}}{\alpha}=
\langle \chi_j |_{as} \hat{U}_{\text{mix}}|\psi_k \rangle_{as}
= \sqrt{\Lambda_{\gamma}\Lambda_{\beta}\gamma_j\beta_k}\langle \tilde\chi_j |_{a_1s}\hat{U}_\text{mix}| \tilde\psi_k \rangle_{a_1s}.
\end{align}
By explicitly including the slowdown factor $\sqrt{\lambda_{\gamma}\lambda_{\beta}}$, the spectral norm $\|\hat{H}\| \le \alpha \sqrt{\lambda_{\gamma}\lambda_{\beta}}$ is also reduced.

\subsection{Amplitude Multiplication of Overlap States}
\label{SubSec:Amplified_Overlap}
This \emph{state overlap} encoding of Hamiltonians motivates the use of amplitude amplification. As the amplitudes of all states $\ket{\tilde \psi_j}$ are attenuated by a constant factor $\sqrt{\lambda_\beta}$, the intuition is that one requires $\mathcal{O}(1/\sqrt{\lambda_\beta})$ queries to the state preparation operator $\hat{U}_{\text{row}}$ to boost the amplitude in the subspace marked by $\ket{0}_b$ by a factor $\mathcal{O}(1/\sqrt{\lambda_\beta})$, and similarly for $\ket{\tilde\chi_j}$. Thus $\mathcal{O}(1/\sqrt{\lambda_\beta}+1/\sqrt{\lambda_\gamma})$ queries appears sufficient to reduce the normalization $\alpha$ by a factor $\sqrt{\lambda_{\gamma}\lambda_{\beta}}$. This suggests that a query complexity of Hamiltonian simulation could be improved to $\mathcal{O}\big(t\alpha(\sqrt{\lambda_{\gamma}}+\sqrt{\lambda_{\beta}})+\frac{\log{(1/\epsilon)}}{\log\log{(1/\epsilon)}}\big)$, which is most advantageous when $\lambda_\beta$ and $ \lambda_\gamma$ are both small. However, realizing this speedup is non-trivial.

In the context of prior art in sparse Hamiltonian simulation, attempts have been made to exploit amplitude amplification~\cite{Berry2012}. There, it was discovered that the sinusoidal non-linearity of amplitude amplification introduces large errors. As these error accumulate over long simulation times $t$, controlling them led to query complexity scaling like $\mathcal{O}(t^{3/2}/\epsilon)$, which is polynomially worse than what intuition suggests. In the following, we avoid these issues by introducing a linearized version of amplitude amplification, which we call the \emph{amplitude multiplication} algorithm. 

Before proceeding, note that amplitude amplification also imposes additional restrictions on the form of the overlap states in Eq.~\ref{Eq:State_Overlap_Model_imperfect}. Amplitude amplification requires the ability to perform reflections $\widehat{\text{Ref}}_{\ket{0}_{a_1}}$ about the subspace marked by $\ket{0}_{a_1}$, as well as reflections $\widehat{\text{Ref}}_{\psi}$ on any arbitrary superposition of initial states $\ket{\psi_j}$, that is $\forall j,\;\widehat{\text{Ref}}_{\psi}\hat{U}_\text{col}\ket{0}_a\ket{u_j}_s= -\hat{U}_\text{col}\ket{0}_a\ket{u_j}_s$, and $\widehat{\text{Ref}}_{\psi}$ performs identity for any other ancilla state. The case for $\hat{U}_\text{row}$ and $\ket{\chi_j}$ is identical. Whereas the first operation
\begin{align}
\widehat{\text{Ref}}_{\ket{0}_{a_2}}=(\hat{I}_{a_2}-2\ket{0}\bra{0}_{a_2})\otimes \hat{I}_{a_1s},
\end{align}
is easy using $\mathcal{O}(1)$ primitive gates, the second operation requires $\hat{U}_\text{col}$ to represent controlled state preparation. In other words, with the input $\ket{s_j}$ on the system register, the overlap state has the decomposition
\begin{align}
\label{Eq:State_Overlap_Model_imperfect_3_decomposed}
\hat{U}_\text{col}\ket{0}_a\ket{u_j}_s=\ket{\psi_j}=\left(\sqrt{\lambda_{\beta}\beta_j} \ket{\bar \psi_j}_{a_1}\ket{0}_{a_2}+\sqrt{1-\lambda_{\beta}\beta_j}\ket{\bar \psi_{\text{bad},j}}_{a_1} \ket{1}_{a_2}\right)\ket{u_j}_s,
\end{align}
thus encoding the following Hamiltonian in standard-form
\begin{align}
\label{Eq:Level_Three_Encoding}
\frac{\hat{H}}{\alpha}=(\bra{0}_a\otimes\hat{I}_s)\hat{U}^\dag_{\text{row}}\hat{U}_{\text{mix}}\hat{U}_{\text{col}}(\ket{0}_a\otimes \hat{I}),
\quad
\frac{\hat{H}_{jk}}{\alpha}= \sqrt{\lambda_\gamma\lambda_\beta\gamma_j\beta_k} \bra{u_j}_s\bra{\bar \chi_j}_{a_1}\hat{U}_\text{mix}\ket{\bar \psi_k}_{a_1}\ket{u_k}_s,
\end{align}
and allowing us to construct the controlled-reflection operator
\begin{align}
\widehat{\text{Ref}}_\psi=\sum_j(\hat{I}_a-2\ket{0}_{a_2}\ket{\bar\psi_j}_{a_1}\bra{\bar\psi_j}_{a_1}\bra{0}_{a_2})\otimes \ket{u_j}\bra{u_j}_s=\hat{U}_{\text{col}}((\hat{I}_{a}-2\ket{0}\bra{0}_{a})\otimes \hat{I}_{s})\hat{U}^\dag_{\text{col}},
\end{align}
using $2$ queries and $\mathcal{O}(\log d)$ primitive gates. 

The error introduced by a naive application of amplitude amplification is illustrated by an explicit calculation. Using a sequence of $m\ge 0$ controlled-Grover iterates $\widehat{\text{Ref}}_\psi\widehat{\text{Ref}}_{\ket{0}_{a_2}}$ making $\mathcal{O}(m)$ queries, one can prepare the state
\begin{align}
\ket{\psi_{\text{amp},j}}
&=
\left(\widehat{\text{Ref}}_\psi\widehat{\text{Ref}}_{\ket{0}_{a_2}}\right)^m\hat{U}_{\text{col}}\ket{0}_a\ket{u_j}_s
\\
&=
\left(\sin{\left((2m+1)\sin^{-1}{\left(\sqrt{\lambda_\beta\beta_j}\right)}\right)} \ket{\bar \psi_j}_{a_1}\ket{0}_{a_2}+\cdots \ket{1}_{a_2}\right)\ket{u_j}_s
=
\left(\sqrt{\beta'_j} \ket{\bar \psi_j}_{a_1}\ket{0}_{a_2}+\cdots \ket{1}_{a_2}\right)\ket{u_j}_s. \nonumber
\end{align}
With the choice $m=\lfloor \frac{\pi}{4\sin^{-1}{\left(\sqrt{\lambda_\beta}\right)}}-\frac{1}{2}\rfloor=\mathcal{O}(\lambda^{-1/2}_\beta)$, we are guaranteed that all $\sqrt{\beta_j'}\ge \sqrt{\lambda_\beta\beta_j}$. Though this improves the normalization, it also specifies an erroneous Hamiltonian as the matrix elements $\langle \chi_{\text{amp},j}|\hat{U}_\text{mix}|\psi_{\text{amp},k}\rangle$ are larger than those of $\hat{H}_{jk}$ by an index-dependent factor.

In contrast, Amplitude multiplication in Thm.~\ref{Thm:Linear_Amplitude_Amplification} avoids this non-linearity and and allows us to boost the normalization of the encoded Hamiltonian with only an exponentially small distortion to its spectrum. This leads to
\begin{lemma}[Uniform spectral amplification by multiplied state overlaps]
\label{Thm:Ham_Encoding_Uniform_Amplification_State_Overlaps}
Let the Hamiltonian $\hat{H}$ be encoded in the standard-form of Eq.~\ref{Eq:Level_Three_Encoding} with normalization $\alpha$. Given upper bounds $\Lambda_\beta \in [\lambda_\beta,1/2],\; \Lambda_\gamma \in [\lambda_\gamma,1/2]$ on the slowdown factors, and a target error $\epsilon\in(0,\text{min}\{\Lambda_\beta,\Lambda_\gamma\})$, the Hamiltonian $\hat{H}_{\text{lin}}$ can be encoded in standard-form with normalization $4\alpha\sqrt{\Lambda_\beta\Lambda_\gamma}$ such that $\|\hat{H}_{\text{lin}}-\hat{H}\|\le \frac{5}{4}\epsilon\|\hat{H}\|< \frac{5}{4}\alpha \epsilon\sqrt{\Lambda_\beta\Lambda_\gamma}$ using $Q=\mathcal{O}((\Lambda_\beta^{-1/2}+\Lambda_\gamma^{-1/2})\log{(1/\epsilon)})$ queries, $\mathcal{O}(Q \log{(d)})$ primitive gates, and $1$ additional ancilla qubit.
\end{lemma}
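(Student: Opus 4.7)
The strategy is to apply the amplitude multiplication algorithm (Thm.~\ref{Thm:Linear_Amplitude_Amplification}) independently to the two state-preparation oracles, treating $\ket{0}_{a_2}$ as the `target flag' $\ket{0}_b$ and letting the system register $\ket{u_k}_s$ (resp.~$\ket{u_j}_s$) select the per-$k$ target $\ket{\bar\psi_k}_{a_1}$ (resp.~per-$j$ target $\ket{\bar\chi_j}_{a_1}$). With $\Gamma_\beta=\sqrt{\Lambda_\beta}$ and tolerance $\epsilon'$, this produces $\hat{U}'_{\text{col}}$ using $\mathcal{O}(\Lambda_\beta^{-1/2}\log(1/\epsilon'))$ queries to $\hat{U}_{\text{col}}$ and one extra ancilla qubit $c_\beta$, that uniformly rescales the good-subspace amplitude $\sqrt{\lambda_\beta\beta_k}$ to $\mathcal D_{\beta,k}=\frac{\sqrt{\lambda_\beta\beta_k}}{2\sqrt{\Lambda_\beta}}(1+\delta_{\beta,k})$ with $|\delta_{\beta,k}|\le\epsilon'$. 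An analogous $\hat{U}'_{\text{row}}$, with extra qubit $c_\gamma$ and per-$j$ relative error $\delta_{\gamma,j}$, is obtained from $\hat{U}_{\text{row}}$ at cost $\mathcal{O}(\Lambda_\gamma^{-1/2}\log(1/\epsilon'))$ queries. Minor padding (e.g.\ by a Hadamard ancilla) needed to fit $\lambda$ and $\Gamma$ within the $[0,1/2]$ range of Thm.~\ref{Thm:Linear_Amplitude_Amplification} is absorbed into the factor of $4$ in the final normalization.

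Next, form $\hat{U}'=(\hat{U}'_{\text{row}})^\dag \hat{U}_{\text{mix}} \hat{U}'_{\text{col}}$ (with $\hat{U}_{\text{mix}}$ trivially extended on $c_\beta, c_\gamma$) and expand $(\bra{0}_{a'}\otimes\hat{I}_s)\hat{U}'(\ket{0}_{a'}\otimes\hat{I}_s)$ with $\ket{0}_{a'}=\ket{0}_{a_1 a_2 c_\beta c_\gamma}$. By the four-component structure of Eq.~\ref{Eq:Controlled_Generalized_Reflections}, the column-amplified output on $\ket{0}_{a'}\ket{u_k}_s$ decomposes into pieces indexed by $(a_2, c_\beta, c_\gamma)\in\{(0,0,0),(1,0,0),(0,1,0),(1,1,0)\}$, where the bad-$a_2$ pieces are supported on $\ket{1}_{a_2}$ thanks to the two-dimensional invariant-subspace property of amplitude amplification combined with the flag convention of Eq.~\ref{Eq:State_Overlap_Model_imperfect}. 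Similarly, the row-amplified output on $\ket{0}_{a'}\ket{u_j}_s$ lives at $(a_2,c_\beta,c_\gamma)\in\{(0,0,0),(2,0,0),(0,0,1),(2,0,1)\}$, with bad pieces on $\ket{2}_{a_2}$. Because $\hat{U}_{\text{mix}}$ is identity on $a_2,c_\beta,c_\gamma$, the resulting inner product decomposes by matching these three labels, and the only label common to both supports is $(0,0,0)$: every cross-term vanishes by $c_\beta$- or $c_\gamma$-orthogonality, and the crucial bad-bad term vanishes by the $a_2$-orthogonality $\langle 1|2\rangle=0$ between row and column bad-flag subspaces. Only the good-good contribution survives, giving via Eq.~\ref{Eq:Level_Three_Encoding}
\begin{equation*}
(\bra{0}_{a'}\otimes\hat{I}_s)\hat{U}'(\ket{0}_{a'}\otimes\hat{I}_s) = \frac{\hat{D}_\gamma\,\hat{H}\,\hat{D}_\beta}{4\alpha\sqrt{\Lambda_\beta\Lambda_\gamma}},
\end{equation*}
where $\hat{D}_\beta=\sum_k(1+\delta_{\beta,k})\ket{u_k}\bra{u_k}_s$ and $\hat{D}_\gamma=\sum_j(1+\delta_{\gamma,j})\ket{u_j}\bra{u_j}_s$.

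Setting $\hat{H}_{\text{lin}}=\hat{D}_\gamma\hat{H}\hat{D}_\beta$ yields a standard-form encoding of $\hat{H}_{\text{lin}}$ with normalization $4\alpha\sqrt{\Lambda_\beta\Lambda_\gamma}$. Writing $\hat{D}_\beta=\hat{I}+\hat{E}_\beta$, $\hat{D}_\gamma=\hat{I}+\hat{E}_\gamma$ with $\|\hat{E}_\beta\|,\|\hat{E}_\gamma\|\le\epsilon'$, we bound
\begin{equation*}
\|\hat{H}_{\text{lin}}-\hat{H}\| \;=\; \|\hat{E}_\gamma\hat{H}+\hat{H}\hat{E}_\beta+\hat{E}_\gamma\hat{H}\hat{E}_\beta\| \;\le\; (2\epsilon'+\epsilon'^2)\,\|\hat{H}\|.
\end{equation*}
Choosing $\epsilon'=\epsilon/2$ (valid since $\epsilon\le\min\{\Lambda_\beta,\Lambda_\gamma\}\le 1/2$ ensures $\epsilon'\le\mathcal{O}(\sqrt{\Lambda_\beta})$ and $\epsilon'\le\mathcal{O}(\sqrt{\Lambda_\gamma})$, the tolerance requirement of Thm.~\ref{Thm:Linear_Amplitude_Amplification}) yields $\|\hat{H}_{\text{lin}}-\hat{H}\|\le(\epsilon+\epsilon^2/4)\|\hat{H}\|\le\frac{5}{4}\epsilon\|\hat{H}\|$, with total query cost $Q=\mathcal{O}((\Lambda_\beta^{-1/2}+\Lambda_\gamma^{-1/2})\log(1/\epsilon))$ and $\mathcal{O}(Q\log d)$ primitive gates. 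The main technical hurdle is the bookkeeping of the cancellation of all non-good-good contributions via the $(a_2,c_\beta,c_\gamma)$ orthogonalities; in particular, the bad-bad cancellation depends critically on the paper's convention that row and column bad-flag states occupy disjoint subspaces $\ket{2}_{a_2}$ and $\ket{1}_{a_2}$ of the shared register $a_2$, and on amplitude amplification preserving the two-dimensional good/bad invariant subspace of each overlap state.
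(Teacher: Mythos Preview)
Your proof is correct and follows essentially the same route as the paper: apply amplitude multiplication (Thm.~\ref{Thm:Linear_Amplitude_Amplification}) separately to $\hat{U}_{\text{col}}$ and $\hat{U}_{\text{row}}$, then observe that the resulting encoding equals $\hat{D}_\gamma\hat{H}\hat{D}_\beta/(4\alpha\sqrt{\Lambda_\beta\Lambda_\gamma})$ for diagonal near-identity matrices $\hat{D}_\beta,\hat{D}_\gamma$, and bound the error submultiplicatively. Two minor differences are worth noting: you use two extra ancillas $c_\beta,c_\gamma$ rather than the single ancilla claimed in the statement (the paper's proof does not spell out how one qubit suffices, and your explicit orthogonality bookkeeping via separate $c_\beta,c_\gamma$ is cleaner); and your choice $\epsilon'=\epsilon/2$ correctly yields $2\epsilon'+\epsilon'^2\le\frac{5}{4}\epsilon$, whereas the paper's displayed inequality $2\epsilon+\epsilon^2<\frac{5}{4}\epsilon$ is evidently a slip.
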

\begin{proof}
Let us apply Thm.~\ref{Thm:Linear_Amplitude_Amplification}, which requires one additional ancilla qubit, to the state overlap model Eq.~\ref{Eq:State_Overlap_Model_imperfect_3_decomposed}. We identify $\hat{U}_{\text{col}}$ as the state preparation operator that prepares the target state marked by $\ket{0}_{a_2}$ with overlap $\sqrt{\lambda_\beta\beta_j}$. Assume that $\sqrt{\lambda_\beta}\le 1/2$, and let $\Lambda_\beta \in [\lambda_\beta,1/2]$ be an upper bound on the slowdown factor. Then there exists a quantum circuit $\hat{U}'_{\text{col}}$ that makes $Q_\beta=\mathcal{O}(\Lambda_\beta^{-1/2}\log{(1/\epsilon)})$ queries to $\hat{U}_{\text{col}}$ and uses $\mathcal{O}(Q_\beta\log{(d)})$ primitive gates, and similarly for $\hat{U}_{\text{row}}$, to prepare the states
\begin{align}
\label{Eq:State_Overlap_Model_amplified}
\ket{\psi_{\text{lin},j}}
&=
\hat{U}'_{\text{col}}\ket{0}_a\ket{u_j}_s
=
\left(\sqrt{\frac{\lambda_\beta\beta_j}{4\Lambda_\beta}}(1+\epsilon_{\beta,j}) \ket{\bar \psi_j}_{a_1}\ket{0}_{a_2}+\cdots\ket{1}_{a_2}\right)\ket{u_j}_s, 
\\ \nonumber
\ket{\chi_{\text{lin},j}}
&=
\hat{U}'_{\text{row}}\ket{0}_a\ket{u_j}_s
=
\left(\sqrt{\frac{\lambda_\gamma\gamma_j}{4\Lambda_\gamma}}(1+\epsilon_{\gamma,j}) \ket{\bar \chi_j}_{a_1}\ket{0}_{a_2}+\cdots\ket{2}_{a_2}\right)\ket{u_j}_s,
\end{align}
where $|\epsilon_{\beta,j}|,|\epsilon_{\gamma,j}|< \epsilon\in(0,\text{min}\{\Lambda_\beta,\Lambda_\gamma\})\le 1/2$ are state-dependent errors in the amplitude. Let us define the Hamiltonian $\hat{H}_{\text{lin}}$ encoded in in standard-form with normalization $4\alpha\sqrt{\Lambda_\beta\Lambda_\gamma}$ as follows
\begin{align}
\frac{\hat{H}_{\text{lin}}}{4\alpha\sqrt{\Lambda_\beta\Lambda_\gamma}}=(\bra{0}_a\otimes \hat{I}_s)\hat{U}'_{\text{row}}\hat{U}_{\text{mix}}\hat{U}'_{\text{col}}(\ket{0}_a\otimes \hat{I}_s)=\sum_{jk}\frac{\hat{H}_{jk}}{4\alpha}\frac{(1+\epsilon_{\gamma,j})(1+\epsilon_{\beta,k})}{\sqrt{\Lambda_\beta\Lambda_\gamma}}\ket{u_j}\bra{u_k}_s.
\end{align}
We may now evaluate the error of $\hat{H}_{\text{lin}}$ from that of the original Hamiltonian $\hat{H}$, following a similar approach from~\cite{Berry2012}. Let $\hat{\epsilon}_\beta$ be a diagonal matrix with elements $\epsilon_{\beta,j}$ and similarly for $\epsilon_{\gamma,j}$.
Then
\begin{align}
\label{Eq:Overlap_Hamiltonian_Error}
\hat{H}_{\text{lin}}
&=\left(\hat{H}+\hat{\epsilon}_\gamma \hat{H}+\hat{H}\hat{\epsilon}_\beta + \hat{\epsilon}_\gamma\hat{H}\hat{\epsilon}_\beta \right),
\\\nonumber
\|\hat{H}_{\text{lin}}-\hat{H}\|
&\le \|\hat{H}\|\left(\|\hat{\epsilon}_\beta\|+\|\hat{\epsilon}_\gamma\|+\|\hat{\epsilon}_\beta\|\|\hat{\epsilon}_\gamma\|\right)\le \|\hat{H}\|(2\epsilon+\epsilon^2) < \frac{5}{4}\|\hat{H}\|\epsilon < \frac{5}{4}\alpha \sqrt{\Lambda_\beta\Lambda_\gamma}\epsilon.
\end{align}
where the second-last inequality is due to $\epsilon < 1/2$, and the last inequality applies the upper bound $\|\hat{H}\|\le \alpha \sqrt{\Lambda_\beta\Lambda_\gamma}$. Summing up $Q=Q_\beta+Q_\gamma+1$ leads to the claimed query and gate complexities.
\end{proof}

Combining with Thm.~\ref{Thm:Ham_Sim_Qubitization} then furnishes the following result on Hamiltonian simulation.
\begin{lemma}[Hamiltonian simulation by multiplied state overlaps]
\label{Thm:HamSim_Amplified_Overlaps}
Let the Hamiltonian $\hat{H}$ be encoded in the standard-form of Eq.~\ref{Eq:Level_Three_Encoding} with normalization $\alpha$. Given upper bounds $\Lambda_\beta \in [\lambda_\beta,1/2],\; \Lambda_\gamma \in [\lambda_\gamma,1/2]$ on the slowdown factors, $\Lambda\ge \|\hat{H}\|$, and a target error $\epsilon\in(0,\text{min}\{\Lambda_\beta,\Lambda_\gamma\})$, time-evolution $e^{-i\hat{H}t}$ be approximated with error $\epsilon$ using $Q=\mathcal{O}\Big(t\alpha(\sqrt{\Lambda_\beta}+\sqrt{\Lambda_\gamma})\log{(\frac{t\Lambda}{\epsilon})}+(\Lambda_\beta^{-1/2}+\Lambda_\gamma^{-1/2})\frac{\log{(1/\epsilon)}\log{(t\Lambda/\epsilon)}}{\log\log{(1/\epsilon)}}\Big)$ queries, $\mathcal{O}(Q\log{(d)})$ primitive gates, and $\mathcal{O}(1)$ additional ancilla qubits.
\end{lemma}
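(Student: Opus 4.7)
The plan is to invoke the uniform spectral amplification result of Lem.~\ref{Thm:Ham_Encoding_Uniform_Amplification_State_Overlaps} to obtain a standard-form encoding of an amplified Hamiltonian $\hat{H}_{\text{lin}}$ with normalization reduced from $\alpha$ to $4\alpha\sqrt{\Lambda_\beta \Lambda_\gamma}$, then feed that encoding into the optimal simulation algorithm Thm.~\ref{Thm:Ham_Sim_Qubitization}. Because the new normalization is quadratically smaller in the slowdown factors while each amplified query costs only $\mathcal{O}((\Lambda_\beta^{-1/2}+\Lambda_\gamma^{-1/2})\log(1/\epsilon_1))$ original queries, the factors of $\sqrt{\Lambda_\beta\Lambda_\gamma}$ and $\Lambda_\beta^{-1/2}+\Lambda_\gamma^{-1/2}$ collapse to $\sqrt{\Lambda_\beta}+\sqrt{\Lambda_\gamma}$ in the leading term, giving the stated complexity.

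Concretely, first apply Lem.~\ref{Thm:Ham_Encoding_Uniform_Amplification_State_Overlaps} with internal error parameter $\epsilon_1$ to obtain a standard-form-$(\hat{H}_{\text{lin}},4\alpha\sqrt{\Lambda_\beta\Lambda_\gamma},\hat{U}_{\text{lin}},\cdot)$ such that
\[
\|\hat{H}_{\text{lin}}-\hat{H}\|\le \tfrac{5}{4}\epsilon_1\|\hat{H}\|\le \tfrac{5}{4}\epsilon_1\Lambda,
\]
where $\hat{U}_{\text{lin}}$ requires $Q_1=\mathcal{O}((\Lambda_\beta^{-1/2}+\Lambda_\gamma^{-1/2})\log(1/\epsilon_1))$ queries to the original oracles $\hat{U}_\text{row},\hat{U}_\text{mix},\hat{U}_\text{col}$. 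Next, apply Thm.~\ref{Thm:Ham_Sim_Qubitization} to this encoding with internal error parameter $\epsilon_2$. This produces a circuit $\hat{V}$ that approximates $e^{-i\hat{H}_{\text{lin}}t}$ with error $\epsilon_2$ using $Q_2=\mathcal{O}\bigl(4t\alpha\sqrt{\Lambda_\beta\Lambda_\gamma}+\frac{\log(1/\epsilon_2)}{\log\log(1/\epsilon_2)}\bigr)$ queries to controlled-$\hat{U}_{\text{lin}}$. Composing, $\hat{V}$ uses a total of $Q=Q_1 Q_2$ queries to the original oracles.

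For error composition, the standard inequality $\|e^{-i\hat{A}t}-e^{-i\hat{B}t}\|\le t\|\hat{A}-\hat{B}\|$ gives $\|e^{-i\hat{H}_{\text{lin}}t}-e^{-i\hat{H}t}\|\le \tfrac{5}{4}\epsilon_1\Lambda t$, so the triangle inequality yields a total error $\le \epsilon_2+\tfrac{5}{4}\epsilon_1\Lambda t$. Choosing $\epsilon_2=\epsilon/2$ and $\epsilon_1=\Theta(\epsilon/(t\Lambda))$ makes the total error at most $\epsilon$, and both logarithms become $\log(t\Lambda/\epsilon)$ and $\log(1/\epsilon)$ up to constants. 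Multiplying $Q_1$ and $Q_2$ and distributing,
\[
Q=\mathcal{O}\!\left(t\alpha\sqrt{\Lambda_\beta\Lambda_\gamma}(\Lambda_\beta^{-1/2}+\Lambda_\gamma^{-1/2})\log\tfrac{t\Lambda}{\epsilon}+(\Lambda_\beta^{-1/2}+\Lambda_\gamma^{-1/2})\tfrac{\log(1/\epsilon)\log(t\Lambda/\epsilon)}{\log\log(1/\epsilon)}\right),
\]
and simplification of $\sqrt{\Lambda_\beta\Lambda_\gamma}(\Lambda_\beta^{-1/2}+\Lambda_\gamma^{-1/2})=\sqrt{\Lambda_\gamma}+\sqrt{\Lambda_\beta}$ yields exactly the claimed bound. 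The gate count is the query count times $\mathcal{O}(\log d)$, inherited from both Lem.~\ref{Thm:Ham_Encoding_Uniform_Amplification_State_Overlaps} and Thm.~\ref{Thm:Ham_Sim_Qubitization}, and the ancilla overhead is $\mathcal{O}(1)$.

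The only nontrivial point is the error balance: naively one might worry that the $\tfrac{5}{4}\|\hat{H}\|\epsilon_1$ bound on $\|\hat{H}_{\text{lin}}-\hat{H}\|$ translates under $\epsilon_1\mapsto\epsilon/(t\Lambda)$ into a violation of the precondition $\epsilon_1\in(0,\min\{\Lambda_\beta,\Lambda_\gamma\})$ of Lem.~\ref{Thm:Ham_Encoding_Uniform_Amplification_State_Overlaps}. This is harmless: if $\epsilon/(t\Lambda)$ exceeds that bound then $t\alpha(\sqrt{\Lambda_\beta}+\sqrt{\Lambda_\gamma})$ is already $\mathcal{O}(1)$ and the whole simulation can be done directly with Thm.~\ref{Thm:Ham_Sim_Qubitization}, absorbed into the additive $\log(1/\epsilon)/\log\log(1/\epsilon)$ term. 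Thus the two regimes cover all parameter choices and yield the stated complexity.
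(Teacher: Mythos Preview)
Your proof is correct and follows essentially the same approach as the paper: apply Lem.~\ref{Thm:Ham_Encoding_Uniform_Amplification_State_Overlaps} to obtain the amplified standard-form encoding, then Thm.~\ref{Thm:Ham_Sim_Qubitization} for simulation, multiply the query costs, and balance the two error parameters via $\epsilon_1=\Theta(\epsilon/(t\Lambda))$, $\epsilon_2=\Theta(\epsilon)$. Your additional remark on the precondition $\epsilon_1<\min\{\Lambda_\beta,\Lambda_\gamma\}$ is a nice touch that the paper leaves implicit.
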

\begin{proof}
From Lem.~\ref{Thm:Ham_Encoding_Uniform_Amplification_State_Overlaps}, we may encode $\hat{H}_\text{lin}$ in standard-form with normalization $4\alpha\sqrt{\Lambda_\beta\Lambda_\gamma}$ and error $\|\hat{H}_{\text{lin}}-\hat{H}\| =\mathcal{O}(\|\hat{H}\|\epsilon_0) =\mathcal{O}(\Lambda \epsilon_0)$. This requires $Q_0=\mathcal{O}((\Lambda_\beta^{-1/2}+\Lambda_\gamma^{-1/2})\log{(1/\epsilon_0)})$ queries to $\hat{U}_\text{row},\hat{U}_\text{mix},\hat{U}'_\text{col}$ and their inverses, $\mathcal{O}(Q_0 \log{(d)})$ primitive gates, and $1$ additional ancilla qubit. Using the fact $\|e^{i \hat A}-e^{i \hat B}\|\le \|\hat A-\hat B\|$, the error of $e^{-i\hat{H}_\text{lin}t}$ from ideal time-evolution is $\|e^{-i\hat{H}_\text{lin}t}-e^{-i\hat{H}t}\|\le \|\hat{H}_\text{lin}t-\hat{H}t\|=\mathcal{O}(t\Lambda\epsilon_0)$. 
By combining with Thm.~\ref{Thm:Ham_Sim_Qubitization}, time-evolution by $e^{-i\hat{H}_\text{lin}t}$ can be approximated with error $\epsilon_1$ using $Q_1=\mathcal{O}\big(t\alpha\sqrt{\Lambda_\beta\Lambda_\gamma} + \frac{\log{(1/\epsilon_1)}}{\log\log{(1/\epsilon_1)}}\big)$ queries to controlled-$\hat{U}'_\text{row}\hat{U}_\text{mix}\hat{U}'_\text{col}$ and its inverse, $\mathcal{O}(Q_1\log{(d)})$ additional primitive gates, and $\mathcal{O}(1)$ additional ancilla qubits. 
Thus time-evolution by $e^{-i\hat{H}t}$ can be approximated with error $\epsilon=\mathcal{O}(\epsilon_1+t\Lambda\epsilon_0)$ using $Q=Q_0Q_1$ queries to controlled-$\hat{U}_\text{row}, \hat{U}_\text{mix}, \hat{U}_\text{col}$ and their inverses, and $\mathcal{O}(Q_1\log{(d)}+Q_0Q_1\log{(d)})=\mathcal{O}(Q\log{(d)})$ primitive gates. We can control the error by choosing $\epsilon_1=\mathcal{O}(\epsilon)$ and $\epsilon_0 =\mathcal{O}(\epsilon/(t\Lambda))$. Substituting into $Q$ produces the claimed query complexity.
\end{proof}
In the asymptotic limit of large $t\gg \log{(1/\epsilon)}$, the query complexity may be simplified to $\mathcal{O}\Big(t\alpha(\sqrt{\Lambda_\beta}+\sqrt{\Lambda_\gamma})\log{(\frac{t\Lambda}{\epsilon})}\Big)$ queries.

\subsection{Reduction to Sparse Matrices}
\label{SubSec:Reduction_Sparse_Matrices}
The results of Sec.~\ref{Sec:Ham_Sim_Overlaps}, presented in a general setting, apply to the special case of sparse matrices. The reduction follows by making three additional assumptions. First, assume that the dimension of $\ket{0}_a\in \mathbb{C}^{3n}$ is larger than that of $\ket{u_j}_s\in\mathbb{C}^n$. Second, assume that $\forall j\in[n],\;\ket{u_j}_s,$ is the computational basis $\ket{j}_s$. Third, we assume that there exists oracles in Def.~\ref{Def:Sparse_Oracle} that describe $d$-sparse matrices~\cite{Berry2012}:
With these oracles and an upper bound $\Lambda_{\text{max}}\ge \|\hat{H}\|_{\text{max}}$, it is well-known that $\mathcal{O}(1)$ queries suffice to implement the isometry represented by $\hat{U}_{\text{row}}\ket{0}_a$ and $\hat{U}_{\text{col}}\ket{0}_a$ with output states
\begin{align}
\label{Eq:Sparse_Hmax_states}
\hat{U}_{\text{col}}\ket{0}_a\ket{j}_s &= \ket{\psi_j}_{as} = \frac{1}{\sqrt{d}}\sum_{p\in F_{j}}\ket{j}_s\ket{p}_{a_1}\left(\sqrt{\frac{\hat{H}_{jp}}{\Lambda_{\text{max}}}}\ket{0}_{a_2}+\sqrt{1-\frac{|\hat{H}_{jp}|}{\Lambda_{\text{max}}}}\ket{1}_{a_2}\right),
\\\nonumber
\bra{0}_a\bra{k}_s\hat{U}^\dag_{\text{row}} &= \bra{\chi_k}_{as} = \frac{1}{\sqrt{d}}\sum_{q\in F_{k}}\bra{k}_{s}\bra{q}_{a_1}\left(\sqrt{\frac{\delta_{kq}\hat{H}_{kq}+(1-\delta_{kq})\hat{H}^*_{kq}}{\Lambda_{\text{max}}}}\bra{0}_{a_2}+\sqrt{1-\frac{|\hat{H}_{kq}|}{\Lambda_{\text{max}}}}\bra{2}_{a_2}\right),
\\\nonumber
\langle \chi_j |\hat{U}_{\text{mix}} |\psi_k \rangle&=\frac{\hat{H}_{jk}}{\alpha}=\frac{\hat{H}_{jk}}{d\Lambda_{\text{max}}},
\end{align}
where $\delta_{jk}$ is the Kronecker delta function, and $F_j= \{k: k = f(j,l)\; , l\in[d]\}$ is the set of non-zero column indices in row $j$. Note that our definition of the isometry Eq.~\ref{Eq:Sparse_Hmax_states} is an improvement over~\cite{Berry2012} as it avoids ambiguity in both the principal range of the square-roots when $\hat{H}_{jk}<0$ and a sign problem when $\hat{H}_{jj}<0$. We also choose $\hat{U}_{\text{mix}}$ to swap the registers $s$ and $a_1$. From~\cite{Berry2012}, the gate complexity of $\hat{U}_{\text{col}}$, $\hat{U}_{\text{row}}$, and $\hat{U}_{\text{mix}}$ combined is $\mathcal{O}(\log{(n)}+\text{poly}(m))$, where $m=\mathcal{O}(\log{(t\|\hat{H}\|/\epsilon)})$ is the number of bits of precision of $\hat{H}_{jk}$. The contribution from $\text{poly}(m)=\mathcal{O}(m^{5/2})$ is due to integer arithmetic for for computing square-roots and trigonometric functions. This combined with Thm.~\ref{Thm:Ham_Sim_Qubitization} recovers the previous best result on sparse Hamiltonian simulation using $Q=\mathcal{O}\big(td\Lambda_{\text{max}}+\frac{\log{(1/\epsilon)}}{\log\log{(1/\epsilon)}}\big)$ queries~\cite{Low2016HamSim}, and $\mathcal{O}(Q(\log{(n)}+\text{poly}(m)))$ primitive gates.

To see how Thm.~\ref{Thm:HamSim_Amplified_Overlaps} improves on this, we rewrite Eq.~\ref{Eq:Sparse_Hmax_states} in the format of Eq.~\ref{Eq:State_Overlap_Model_imperfect} by collecting coefficients of the subspace marked by $\ket{0}_{a_2}$.
\begin{align}
\label{Eq:State_Overlap_Model_imperfect_collected}
\ket{\psi_j}_{as}&=\sqrt{\frac{\sigma_j}{d\Lambda_{\text{max}}}}\left(\sum_{p\in F_{j}}\sqrt{\frac{\hat{H}_{jp}}{\sigma_j}}\ket{j}_s\ket{p}_{a_1}\right)\ket{0}_{a_2}
+
\cdots\ket{j}_s\ket{1}_{a_2},
\\ \nonumber
\bra{\chi_k}_{as}&=\sqrt{\frac{\sigma_k}{d\Lambda_{\text{max}}}}\left(\sum_{q\in F_{k}}\sqrt{\frac{\delta_{kq}\hat{H}_{kq}+(1-\delta_{kq})\hat{H}^*_{kq}}{\sigma_k}}\bra{k}_{s}\bra{q}_{a_1}\right)\bra{0}_{a_2}
+
\cdots\bra{k}_s\bra{2}_{a_2}, 
\end{align}
where $\sigma_j=\sum_k|\hat{H}_{jk}|$, and the induced one-norm $\|\hat{H}\|_1=\max_{j}\sigma_j$. Note that $\ket{\bar{\psi}_j}=\sum_{p\in F_{j}}\sqrt{\frac{\hat{H}_{jp}}{\sigma_j}}\ket{j}_s\ket{p}_{a_1}$, and similarly for $\ket{\bar{\chi}_j}$. From this, we obtain our main result on sparse Hamiltonian simulation Thm.~\ref{Cor:Ham_Sim_Sparse_Amplified}.
\begin{proof}[Proof of Thm.~\ref{Cor:Ham_Sim_Sparse_Amplified}]
Comparison of Eq.~\ref{Eq:State_Overlap_Model_imperfect_collected} with Eq.~\ref{Eq:State_Overlap_Model_imperfect} yields $\beta_j =\gamma_j = \frac{\sigma_j}{\|\hat{H}\|_1}$, $\lambda_\beta = \lambda_\gamma =  \frac{\|\hat{H}\|_1}{d\Lambda_{\text{max}}}$. Thus we have the upper bound $\Lambda_\beta = \Lambda_\gamma = \frac{\Lambda_1}{d\Lambda_\text{max}}\ge \lambda_\beta=\lambda_\gamma$. Moreover, from Eq.~\ref{Eq:Sparse_Hmax_states}, the normalization constant $\alpha = d \Lambda_{\text{max}}$. The claimed query complexity is obtained by substitution into Cor.~\ref{Thm:HamSim_Amplified_Overlaps}.
\end{proof}

This result is quite remarkable as it strictly improves upon prior art, modulo logarithmic factors, by exploiting additional structural information. In the asymptotic limit of large $\Lambda_1 t \gg \log{(1/\epsilon)}$, the query complexity may be simplified to $\mathcal{O}\Big(t\sqrt{d\Lambda_{\text{max}}\Lambda_1}\log{(\frac{t\Lambda}{\epsilon})}\Big)$.
Using the inequality $\|\hat{H}\|\le \|\hat{H}\|_1\le d\|\hat{H}\|_{\text{max}}$, the worst-case occurs when these norms are all equal thus  $\Lambda=\Lambda_1=d\Lambda_{\text{max}}$. There, the query complexity of Thm.~\ref{Cor:Ham_Sim_Sparse_Amplified} up to logarithmic factors is $\mathcal{O}(td\Lambda_{\text{max}})$, equal to that of prior art~\cite{Low2016HamSim}. However, the best-case $\|\hat{H}\|_1=\mathcal{O}(\|\hat{H}\|_{\text{max}})$ leads to a quadratic improvement in sparsity with query complexity of $\mathcal{O}(t\sqrt{d}\Lambda_{\text{max}})$, also ignoring logarithmic factors.

Another approach implicit in~\cite{Berry2012} assumes that $\sigma_j$ are provided by the quantum oracle $\hat{O}_{C}\ket{j}_s\ket{z}_c=\ket{j}_s\ket{z\oplus \sigma_j}_c$ when queried the $j\in[n]$ row index. This allows us to exactly compensate for the sinusoidal non-linearity of amplitude amplification by modifying initial state amplitudes by some $j$-dependent multiplicative factor.
Thus $\hat{H}$ may be encoded in standard-form with normalization $\mathcal{O}(\sqrt{d\Lambda_{\text{max}}\Lambda_1})$ exactly without any error, leading to a Hamiltonian simulation algorithm with query complexity $Q=\mathcal{O}\big(t(d\|\hat{H}\|_{\text{max}}\|\hat{H}\|_1)^{1/2}+\frac{\log{(1/\epsilon)}}{\log\log{(1/\epsilon)}}\big)$. While improves on Thm.~\ref{Cor:Ham_Sim_Sparse_Amplified} by logarithmic factors, and matches the complexity Claim.~\ref{Claim:Sparse_Ham_Sim}, $\hat{O}_{C}$ is in general difficult to construct.

\subsection{Lower Bound on Sparse Hamiltonian Simulation}
\label{Sec:Lower_Bound}
In this section, we prove the lower bound Thm.~\ref{Thm:Lower_Bound} on sparse Hamiltonian simulation, given information on the sparsity, max-norm, and induced one-norm. The lower bounds in prior art are obtained by constructing Hamiltonians that compute well-known functions. When applied to our situation, one obtains $\Omega(t\|\hat{H}\|_1)$ queries therough the $\text{PARITY}$ problem~\cite{Berry2015Hamiltonian}, and $\Omega(\sqrt{d})$ queries through $\text{OR}$~\cite{Berry2012}. This 
leads to an additive lower bound $\Omega(t\|\hat{H}\|_1+d)$. Using similar techniques, we obtain a stronger lower bound $\Omega(t(d\|\hat{H}\|_1)^{1/2})$ by creating a Hamiltonian that computes the solution to the composed function $\text{PARITY}\circ \text{OR}$. Specifically, we combine a Hamiltonian that solves $\text{PARITY}$ on $n$ bits with constant error using at least $\Omega(s\|\hat{H}\|_{\text{max}}t)$ queries, where $t=\Theta(\frac{n}{s\|\hat{H}\|_{\text{max}}})$, with a Hamiltonian that solves $\text{OR}$ on $m$ bits exactly, with the promise that at most $1$ bit is non-zero, using at least $\Omega(\sqrt{m})$ queries. Note that in all cases, the query complexity with respect to error is at least an additive term $\Omega(\frac{\log{(1/\epsilon)}}{\log\log{(1/\epsilon)}})$~\cite{Berry2015Hamiltonian}.

The Hamiltonian $\hat{H}_{\text{PARITY}}$ that solves $\text{PARITY}$ on $n$ bits is well-known~\cite{Berry2015Hamiltonian}, and is based on the Hamiltonian $\hat{H}_{\text{spin}}$ for perfect state transfer in spin chains. For completeness, we outline the procedure. Consider a Hamiltonian of $\hat{H}_{\text{spin}}$ dimension $n+1$, with matrix elements in the computational basis $\{\ket{j}_s:j\in[n+1]\}$ defined as
\begin{align}
\bra{j-1}_s\hat{H}_{\text{spin}}\ket{j}_s=\sqrt{j(N-j+1)}/N.
\end{align}
Note that this Hamiltonian has sparsity $1$, max-norm $\Theta(1)$, and $1$-norm $\Theta(1)$. Time evolution by this Hamiltonian $e^{-i\hat{H}_{\text{spin}}n\pi/2}\ket{0}_s=\ket{n}_s$ exactly transfers the state $\ket{0}$ to $\ket{n}$ in time $t=\frac{\pi N}{2}$

One way to speed up these dynamics is to uniformly increase the value of all matrix elements. However, any increase in $\|\hat{H}\|_{\text{max}}$ is trivial as it simply decreases $t$ by a proportionate amount.  Another way is to boost the sparsity of $\hat{H}_{\text{spin}}$ by taking a tensor product with a Hamiltonian $\hat{H}_{\text{complete}}$ of dimension $s$ where all matrix elements are $1$ in the computational basis $\{\ket{j}_c:j\in[s]\}$.
\begin{align}
\bra{i}_c\hat{H}_{\text{complete}}\ket{j}_c=1, \quad \forall i\in[s],\; j\in[s].
\end{align}
One the eigenstates of $\hat{H}_{\text{complete}}$ is the uniform superposition $\ket{u}_c=\frac{1}{\sqrt{s}}\sum_{j\in[s]}\ket{j}_c$ with eigenvalue $\hat{H}_{\text{complete}}\ket{u}_c=s\ket{u}_c$. Thus we define the Hamiltonian
\begin{align}
\hat{H}_{sc}=\hat{H}_{\text{spin}}\otimes \hat{H}_{\text{complete}}.
\end{align}
Note that the $\hat{H}_{sc}$ has sparsity $s$, max-norm $\Theta(1)$, and $1$-norm $\Theta(s)$. One can see that $\hat{H}_{sc}$ perform faster state transfer like $e^{-i\hat{H}_{sc}N\pi/(2s)}\ket{0}_s \ket{u}_c=\ket{n}_s \ket{u}_c$ in time $t=\frac{\pi n}{2 s}$. We find it useful to define the state $\ket{j}_{sc}= \ket{j}_s \ket{u}_c$.

Adding another qubit to this composite Hamiltonian together with some slight modification solves ${\text{PARITY}}$. Given an $n$-bit string $x=x_0x_2...x_{n-1}$, let us consider the Hamiltonian of dimension $2$ that computes the $\text{NOT}$ function on the computational basis $\{\ket{j}_{\text{output}}:j\in[2]\}$,
\begin{align}
\hat{H}_{\text{NOT},j}=\left(
\begin{matrix}
x_j \oplus 1 & x_j \\
x_j & x_j \oplus 1
\end{matrix}
\right).
\end{align}
One can see that $\hat{H}_{\text{NOT},j}\ket{0}_\text{output}=\ket{1}_\text{output}$ and $\hat{H}_{\text{NOT},j}\ket{1}_\text{output}=\ket{0}_\text{output}$, as expected of a $\text{NOT}$ function. In the basis $\ket{j}_{sc}$, we define the Hamiltonian
\begin{align}
\label{Eq:Ham_Parity}
\hat{H}_{\text{PARITY}}=\left(\sum_{j\in[n]}\frac{\sqrt{j(N-j+1)}}{N}\ket{j+1}\bra{j}_{sc}\otimes \hat{H}_{\text{NOT},j}\right) + \text{Hermitian conjugate}.
\end{align}
This Hamiltonian also performs perfect state transfer, but since the path of each transition between the states $\ket{0}_\text{output}$ and $\ket{1}_\text{output}$ are gates by a $\text{NOT}$ function on the bit $x_j$, the output state of time-evolution $e^{-i\hat{H}_{\text{PARITY}}N\pi/(2s)}\ket{0}_s\ket{u}_c\ket{0}_\text{output}=\ket{n}_s\ket{u}_c\ket{\bigoplus_j x_j}_\text{output}$. In the computational basis, $\hat{H}_{\text{PARITY}}$ has sparsity $2s$, max-norm $\Theta(1)$, and $1$-norm $\Theta(s)$. Even though $\hat{H}_{\text{NOT},j}$ has only one non-zero element, the sparsity increases by factor $2$ as we cannot compute beforehand the column index the non-zero. Thus measuring the $\text{output}$ register returns the parity of $x$
\begin{align}
\text{PARITY}(x)=\bigoplus_{j=0}^{n-1} x_j,
\end{align}
after evolving for time $t=\frac{\pi n}{2 s}$. It is well-known that the parity on $n$ bits cannot be computed with less than $\Omega(n)$ quantum queries, thus the query complexity of simulating time-evolution by $\hat{H}_{\text{PARITY}}$ for time $t$ is at least $\Omega(ts)$. As sparsity and $1$-norm exhibit the same scaling and in general $\|\hat{H}\|_{1}\le d\|\hat{H}\|_{\text{max}}$, the more accurate statement here if given information on $\|\hat{H}\|_1$ is the lower bound of $\Omega(t\|\hat{H}\|_1)$ queries. In constrast, the lower bound of~\cite{Berry2015Hamiltonian} quotes $\Omega(\text{sparsity} \times t)$ as they consider the case where one is given information only on the sparsity..

We now present the extension to creating a Hamiltonian that solves $\text{PARITY}\circ \text{OR}$. Notably, this Hamiltonian allows one to vary sparsity and $1$-norm independently.
\begin{proof}[Proof of Thm.~\ref{Thm:Lower_Bound}]
The first step is construct a Hamiltonian that solves the $\text{OR}$ function on $m$ bits $x_{0}x_{1}...x_{m-1}$, promised that at most $1$ bit is non-zero. This Hamiltonian of dimension $2m$, in the computational basis $\{\ket{k}_{\text{output}}\ket{j}_{o}:k\in[2],j\in[m]\}$, is 
\begin{align}
\hat{H}_{\text{OR}}=\left(
\begin{array}{c|c}
\hat{C}_{ 1} & \hat{C}_{0} \\
\hline
\hat{C}^\dag_{0} & \hat{C}_{1}
\end{array}
\right).
\end{align}
Note that our construction is based on a modification of~\cite{Berry2014}, where $\hat{C}_{1}$ there is zero matrix. Here, $\hat{C}_{1}$  mimics the top-left component of $\hat{H}_{\text{NOT}}$ in that is performs a bit-flip on the output register if $\text{OR}(x)=0$, and $\hat{C}_{0}$ mimics the top-right component of $\hat{H}_{\text{NOT}}$ in that it performs a bit-flip on the output register if $\text{OR}(x)=1$. These matrices are defined as follows:
\begin{align}
\hat{C}_{0}=\left(
\begin{array}{cccc}
 x_0 & x_1 & \cdots  &x_{m-1} \\
 x_{m-1} & x_{0} & \cdots & x_{m-2}\\
 x_{m-2} & x_{m-1} & \cdots  & x_{m-3}\\
 \vdots & \vdots & \ddots & \vdots \\
 x_{1} & x_{2} & \cdots & x_{0} 
\end{array}
\right), 
\quad
\hat{C}_{1} = \frac{1}{m}\left(
\begin{array}{cccc}
1 & 1 & \cdots  &1 \\
1 & 1 & \cdots  &1 \\
\vdots & \vdots & \ddots  &\vdots \\
1 & 1 & \cdots  &1 
\end{array}
\right)
-\frac{\hat{C}_{0}+\hat{C}_{0}^\dag}{2}.
\end{align}
Note that the non-Hermitian matrix $\hat{C}_{0}$ has rows formed from cyclic shifts of $x$, whereas $\hat{C}_{1}$ is Hermitian. Let us define the uniform superposition $\ket{u}_o=\frac{1}{\sqrt{m}}\sum_{j\in[m]}\ket{j}_o$. It is easy to verify that if at most one bit in $x$ is non-zero, $\hat{C}_{0}\ket{u}_o=\text{OR}(x)\ket{u}_o$. Similarly, $\hat{C}_{1}\ket{u}_o=(\text{OR}(x)\oplus 1)\ket{u}_o$. Thus  $\hat{H}_{\text{OR}}\ket{j}_\text{output}\ket{u}_o=\ket{j\oplus \text{OR}(x)}_\text{output}\ket{u}_o$. Note that $\hat{H}_{\text{OR}}$ has sparsity $2m$, max-norm $\Theta(1)$, and $1$-norm $\Theta(1)$.

Given an $nm$-bit string $x_{0,0}x_{0,1}...x_{0,m-1}x_{1,0}...x_{n-1,m-1}$, the Hamiltonian $\hat{H}_{\text{PARITY}\circ \text{OR}}$ that computes the $n$-bit $\text{PARITY}$ of a number $n$ of $m$-bit $ \text{OR}$ functions is similar to $\hat{H}_{\text{PARITY}}$ in Eq.~\ref{Eq:Ham_Parity}, except that instead of composing with $\text{NOT}$ Hamiltonians defined by the bit $x_j$ for each $j\in[n]$, we compose with $\text{OR}$ Hamiltonians defined by the bits $x_{j,0}x_{j,1}...x_{j,m-1}$  for each $j\in[n]$. By defining $\hat{H}_{\text{OR},j}$ as the Hamiltonian defined by those bits, 
\begin{align}
\label{Eq:Ham_Parity_OR}
\hat{H}_{\text{PARITY}\circ \text{OR}}=\left(\sum_{j\in[n]}\frac{\sqrt{j(N-j+1)}}{N}\ket{j+1}\bra{j}_{sc}\otimes \hat{H}_{\text{OR},j}\right) + \text{Hermitian conjugate}.
\end{align}
On the input state $\ket{0}_s\ket{u}_c\ket{u}_o\ket{0}_\text{output}$, the output of time-evolution $e^{-i\hat{H}_{\text{PARITY}}N\pi/(2d)}\ket{0}_s\ket{u}_c\ket{u}_o\ket{0}_\text{output}=\ket{n}_s\ket{u}_c\ket{u}_o\ket{\bigoplus_j \text{OR}(x_{j,0}x_{j,1}...x_{j,m-1})}_\text{output}$. Thus measuring the $\text{output}$ register returns the parity of $x$
\begin{align}
\text{PARITY}\circ \text{OR}(x)=\bigoplus^{n-1}_{j=0} \text{OR}(x_{j,0}x_{j,1}...x_{j,m-1}),
\end{align}
after time-evolution by $t=n\pi/(2s)$. Note that $\hat{H}_{\text{PARITY}\circ \text{OR}}$ has sparsity $d=2sm$, max-norm $\Theta(1)$, and $1$-norm $\Theta(s)$. It is well-known that the constant-error quantum query complexity of $\text{PARITY}\circ \text{OR}$~\cite{Reichardt2011reflections} is the product of the query complexity of $\text{PARITY}$ with that of $\text{OR}$. As at least $\Omega(\sqrt{m})$ queries are required to compute the $\text{OR}$ of $m$ bits,  $\text{PARITY}\circ \text{OR}(x)$ requires at least $\Omega(n\sqrt{m})$ queries. Thus any algorithm for simulating time-evolution by $\hat{H}_{\text{PARITY}\circ \text{OR}}$ requires at least $\Omega(n\sqrt{m})=\Omega(t\sqrt{ds})$ queries.
\end{proof}
\section{Universality of the Standard-Form}
\label{Sec:Equivalence_Sim_Mea}
We now establish an equivalence between simulation and measurement that justifies our focus on directly manipulating the standard-form encoding of structured Hamiltonians. This equivalence, proven using Thm.~\ref{Thm:Standard_Form_From_Ham_Sim}, allows us to interconvert quantum circuits that implement time-evolution $e^{-i\hat{H}}$ for $\|\hat{H}\|=\mathcal{O}(1)$ and quantum circuits that implement measurement $\|\hat{H}\|$ with only a query overhead logarithmic overhead in error, and a constant overhead in space. 
An application of this result to Hamiltonian simulation is Cor.~\ref{Cor:HamExponentials} for Hamiltonians that is a sum of Hermitian terms, given access only to their exponentials. 

An intuitive picture of when simulation is possible emerges by interpreting the standard-form matrix encoding Def.~\ref{Def:Standard_Form} as a quantum circuit that implements a measurement. To see this explicitly, consider a Hermitian matrix encoded in standard-form-$(\hat{H},\alpha,\hat{U},d)$. Thus for any arbitrary input state $\ket{\psi}_s\in \mathcal{H}_a$, the standard-form applies
\begin{align}
\hat{U}\ket{G}_a\ket{\psi}_s=\frac{1}{\alpha}\ket{G}_a\hat{H}\ket{\psi}_s+\ket{\Phi}_{as}, \quad |\bra{\Phi}_{as}(\ket{G}_a\otimes\hat{I}_s) |= 0,
\end{align}
Note that in this section, we find it helpful to leave $\ket{G}$ explicit, similar to Sec.~\ref{Sec:Standard-form_QSP}. So upon measurement outcome $\ket{G}$ on the ancilla, which occurs with best-case probability $\max_{\ket{\psi}\in\mathcal{H}_s}|\frac{\hat{H}}{\alpha}\ket{\psi}|^2=(\|\hat{H}\|/\alpha)^2$, the measurement operator $\hat{H}/\alpha$ is implemented on the system. As all measurement outcomes orthogonal to $\ket{G}$ do not concern us, we represent their output with some orthogonal unnormalized quantum state $\ket{\Phi}_{as}$. Combined with the Hamiltonian simulation by qubitization results of Thm.~\ref{Thm:Ham_Sim_Qubitization}, one concludes that whenever one has access to a quantum circuit that implements a generalized measurement with measurement operator $\hat{H}/\alpha$ corresponding to one of the measurement outcomes, time-evolution using $\mathcal{O}\left(t\alpha +\frac{\log{(1/\epsilon)}}{\log\log{(1/\epsilon)}}\right)$ queries is possible.

The converse of approximating measurements given $e^{-i\hat{H}t}$ is a standard application of quantum phase estimation. The proof sketch is (1) assume $t$ is chosen such that $\|\hat{H}t\| \le c \le 1$ for some absolute constant $c$ and define $\hat{H}'=\hat{H}t$. (2) Perform quantum phase estimation using $\mathcal{O}(1/\epsilon)$ queries to controlled $e^{-i\hat{H}t}$ to encode the eigenphases $\lambda$ of its eigenstates $\hat{H}'\ket{\lambda}=\lambda\ket{\lambda}$ to precision $\epsilon$ in binary format $\tilde \lambda$ in an $m$-qubit ancilla register $\mathcal{H}_b$, where $m=\mathcal{O}(\log{(1/\epsilon)})$. (3) Perform a controlled rotation on the single-qubit ancilla $\ket{0}_a$ to reduce the amplitude of $\ket{\lambda}$ by factor $\tilde\lambda$. (4) Uncompute the binary register by running quantum phase estimation in reverse. This implements the sequence. 
\begin{align}
\label{Eq:Standard_form_QPE}
\ket{0}_b\ket{0}_a\ket{\lambda}_s
&\rightarrow \ket{\tilde\lambda}_b\ket{0}_a\ket{\lambda}_s
\rightarrow
\ket{\tilde\lambda}_b\left(\tilde\lambda\ket{0}_a+\sqrt{1-|\tilde\lambda |^2}\ket{1}_a)\right)\ket{\lambda}_s 
\\\nonumber
&\rightarrow \ket{0}_b\left(\tilde\lambda\ket{0}_a+\sqrt{1-|\tilde\lambda |^2}\ket{1}_a)\right)\ket{\lambda}_s.
\end{align}
Thus projecting onto the state $\ket{0}_b\ket{0}_a$ implements the measurement operator $\hat{H}'$ with error $\max_\lambda|\lambda - \tilde\lambda| = \mathcal{O}(\epsilon)$, and best-case success probability $\|\hat{H}'\|$. 

As Eq.~\ref{Eq:Standard_form_QPE} is a standard-form encoding of $\hat{H}/\alpha$ with the signal unitary defined by steps (2-4), this establishes one direction in the equivalence between measurement and simulation up to polynomial error and logarithmic space. Ignoring these factors, our study of Hamiltonian simulation reduces to that of generalized measurements except in one edge case: this equivalence does not hold with respect to $t$ when $e^{-i\hat{H}t}$ can be simulated with $o(t)$ queries. However, this case is less interesting as no-fast-forwarding theorems~\cite{Childs2010Limitation} show that $\Omega(t)$ queries are necessary for Hamiltonians that solve generic problems.

We strengthen this equivalence in the opposite direction Thm.~\ref{Thm:Standard_Form_From_Ham_Sim} for approximating measurement operators $\hat{H}'$ using $\log{(1/\epsilon)}$ queries to $e^{-i\hat{H}'}$ and $\mathcal{O}(1)$ ancilla qubits. The idea is to using quantum signal processing techniques to approximate two operator transformations: $\hat{H}_1=\frac{i}{2}(e^{-i\hat{H}'}-e^{i\hat{H}'})$, $\hat{H}_2 = \sin^{-1}{(\hat{H}_1)}$. Thus $\sin^{-1}\left(\frac{i}{2}(e^{-i\hat{H}'}-e^{i\hat{H}'})\right)=\hat{H}'$. All that remains is finding a degree $n$ polynomial approximation to $\sin^{-1}(x)$ with uniform error $n=\mathcal{O}(\log(1/\epsilon))$. However, this seems impossible -- $\sin^{-1}(x)$ is not analytic at $x=\pm 1$, thus its uniform polynomial approximation has degree $n=\mathcal{O}(\text{poly}(1/\epsilon))$. Fortunately, this can be overcome due to the restricted domain $\|\hat{H}t\| \le c$.

\begin{lemma}[Polynomial approximation to $\sin^{-1}(x)$]
\label{Lem.Polynomial_arcsin}
$\forall\;\epsilon \in (0, \mathcal{O}(1)]$, there exists an odd polynomial $p_{\text{arcsin},n}$ of degree $n=\mathcal{O}(\log{(1/\epsilon)})$ such that 
\begin{align}
\max_{ x \in [-1/2,1/2]} \left|p_{\text{arcsin},n}(x)-\sin^{-1}{(x)}\right|\le \epsilon,\quad \text{and}\quad \max_{ x \in [-1,1]} \left|p_{\text{arcsin},n}(x)\right|\le 1.
\end{align}
\end{lemma}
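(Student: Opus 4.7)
The plan is to combine a truncated Taylor approximation on the restricted domain $|x|\le 1/2$ with a low-degree polynomial cutoff that damps growth on the remainder of $[-1,1]$. Start from the Maclaurin series $\sin^{-1}(x)=\sum_{k=0}^{\infty}a_k x^{2k+1}$ with $a_k=\binom{2k}{k}/(4^k(2k+1))>0$; the Stirling bound $\binom{2k}{k}/4^k\le 1/\sqrt{\pi k}$ for $k\ge 1$ shows that on $|x|\le 1/2$ the tail beyond degree $2N+1$ is $\sum_{k>N}a_k(1/2)^{2k+1}=\mathcal{O}(4^{-N})$. Hence for $N=\mathcal{O}(\log(1/\epsilon))$, the truncated Taylor polynomial $T_N(x)=\sum_{k=0}^{N}a_k x^{2k+1}$ satisfies $\max_{|x|\le 1/2}|T_N(x)-\sin^{-1}(x)|\le \epsilon/2$. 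Because every coefficient is non-negative, $T_N$ is monotone on $[0,1]$ with $T_N(|x|)\le \sin^{-1}(|x|)\le \pi/2$ on $[-1,1]$, which falls short of the required bound of $1$.

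To cure the overshoot, I would multiply $T_N$ by an even polynomial cutoff $r(x^2)$ of degree $M=\mathcal{O}(\log(1/\epsilon))$ designed so that (i) $|r(y)-1|\le \delta$ on $y\in[0,1/4]$, (ii) $|r(y)|\le \delta$ on $y\in[9/16,1]$, and (iii) $|r(y)|\le 1$ throughout $[0,1]$, for a target suppression $\delta=\Theta(\epsilon)$. Such $r$ is a standard rectangular-window approximation, assembled from a low-degree polynomial approximation to the sign function rescaled and shifted to smoothly step from height $1$ down to $0$ across the gap $[1/4,9/16]$; its degree scales as $\mathcal{O}(\log(1/\delta)/\gamma)$ in terms of the transition width $\gamma$, which is a constant here. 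Define $p_{\text{arcsin},n}(x)=T_N(x)\,r(x^2)$, an odd polynomial of degree $n=(2N+1)+2M=\mathcal{O}(\log(1/\epsilon))$.

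Verification proceeds by splitting $[-1,1]$ into three regions. On $|x|\le 1/2$ we have $x^2\le 1/4$, so $|p_{\text{arcsin},n}(x)-\sin^{-1}(x)|\le |T_N(x)|\,|r(x^2)-1|+|T_N(x)-\sin^{-1}(x)|\le (\pi/6)\delta+\epsilon/2\le\epsilon$ after choosing $\delta=3\epsilon/\pi$. On the intermediate band $|x|\in[1/2,3/4]$, the cutoff is a priori unrestricted, but the non-negativity of Taylor coefficients yields $|T_N(x)|\le \sin^{-1}(3/4)<0.85$, and together with $|r(x^2)|\le 1$ this gives $|p_{\text{arcsin},n}(x)|<0.85<1$. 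On $|x|\in[3/4,1]$, $x^2\ge 9/16$, so $|r(x^2)|\le\delta$ and hence $|p_{\text{arcsin},n}(x)|\le (\pi/2)\delta\le 1$ for $\delta<2/\pi$, which the hypothesis $\epsilon=\mathcal{O}(1)$ ensures.

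The principal obstacle is the construction of the cutoff polynomial $r$: a single degree-$\mathcal{O}(\log(1/\delta))$ polynomial that is simultaneously flat near $1$ on one band, small on a separated band, and globally bounded by $1$ on $[0,1]$. This is by now a standard building block in QSP-style constructions (via smoothed approximations to the sign function), but verifying the three bounds at once and matching constants to the Taylor-error budget is where the bookkeeping concentrates. The single non-routine observation that makes the argument work is the monotonicity $T_N(|x|)\le \sin^{-1}(|x|)$ coming from non-negativity of the arcsine Taylor coefficients: it is this fact, not the cutoff, that controls the transition band $|x|\in[1/2,3/4]$, where $r$ itself provides no guarantee on the size of $p_{\text{arcsin},n}$.
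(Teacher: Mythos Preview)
Your argument is correct and takes a genuinely different route from the paper. The paper invokes a general approximation theorem of Saff and Totik for piecewise analytic functions: it defines $f_{\text{arcsin}}$ to equal $\sin^{-1}(x)$ on $[-3/4,3/4]$ and the constant $\pm\sin^{-1}(3/4)$ outside, then appeals to that theorem to obtain a degree-$\mathcal{O}(\log(1/\epsilon))$ polynomial with exponentially small error away from the two kinks, and finally antisymmetrizes. Your construction is more elementary and fully explicit: geometric convergence of the Maclaurin series on $[-1/2,1/2]$ gives the approximation for free, and you control the rest of $[-1,1]$ by multiplying with a polynomial step $r(x^2)$ of the same flavor the paper builds elsewhere (Appendix~\ref{Sec:Polynomials_Amplitude_Multiplication}). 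The crucial observation that makes your approach work---and which you rightly flag---is that the positivity of the arcsine Taylor coefficients gives $|T_N(x)|\le\sin^{-1}(|x|)$ on all of $[-1,1]$, so the transition band $|x|\in[1/2,3/4]$ is already bounded by $\sin^{-1}(3/4)<0.85$ without any help from the cutoff. The payoff of your route is a self-contained argument that avoids citing an external approximation-theory result and dovetails with the erf/sign-function machinery the paper already develops; the paper's route is shorter to state but leans on a black box.
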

\begin{proof}
We restate Thm.~3 of~\cite{Saff1989polynomial} by Saff and Totik: Let $\beta$ be any number satisfying $\beta > 1$ and let $f\in C^k[-1,1]$ be a piecewise analytic function on $m>0$ closed intervals $[-1,1]=\bigcup^{m}_{j=0}[x_j,x_{j+1}]$, $-1=x_0<x_1<\cdots < x_{m-1}<x_m=1$, where the restriction of $f$ to any of the closed intervals $[x_j,x_{j+1}]$ is analytic, and $f$ is not analytic at each point $x_1,\cdots ,x_{m-1}$. Then there exists constants $g,G>0$ that depend only on $f$, and degree $n>0$ polynomials $p_n$ such that for every $x\in[-1,1]$, $|p_n(x)-f(x)|\le \frac{G}{n^{k+1}}e^{-g n d^{\beta}(x)}$, where $d(x)=\min_{0<j<m}|x-x_j|$. Let us now apply this theorem. Define the function 
\begin{align}
f_{\text{arcsin}}(x)=
\begin{cases}
\sin^{-1}(x), & x \in[ -3/4,3/4], \\
\text{sgn}(x)\sin^{-1}(3/4) & \text{otherwise},
\end{cases}
\end{align}
where $\text{sgn}(x)=\pm x$. $f_{\text{arcsin}}(x)$ is continuous but not differentiable at $x=\pm 3/4$. Thus $f\in C^0[-1,1]$, $\max_{x\in[-1/2,1/2]} d(x)\ge 1/4$, and there exist absolute constants $G',g'>0$ and polynomials $p_n$ such that $\max_{x\in[-1/2,1/2]}|p_n(x)-f_{\text{arcsin}}(x)|\le \frac{G'}{n}e^{-g' n / 4^\beta}=\epsilon$. Hence $n = \mathcal{O}(\log{(1/\epsilon)})$. Since $e^{-g' n d^{\beta}(x)} \le 1$ and $|\sin^{-1}(3/4)|< 0.85$, there exists a constant $n_0> 0$ such that for all $n>n_0$, $\max_{x\in[-1,1]}|f_{\text{arcsin}}(x)-p_n(x)|\le 0.15$ thus $|p_n(x)|\le 1$. If $p_n(x)$ is not odd, replace it with its antisymmetric component $p_n\leftarrow \frac{p_n(x)-p_n(-x)}{2}$ which is odd with at worst the same error. Now let $p_{\text{arcsin},n} = p_n$.
\end{proof}

We now apply this polynomial approximation of $\sin^{-1}(x)$ to the proof of Thm.~\ref{Thm:Standard_Form_From_Ham_Sim}.
\begin{proof}[Proof of Thm.~\ref{Thm:Standard_Form_From_Ham_Sim}]
The transformation from time evolution $e^{-i\hat{H}t}$ to measurement $\hat{H}t$ takes three steps.
First, encode the Hermitian operator $\hat{H}_1=\sin{(\hat{H}t)}$ in standard-form. This can be done with one query to the controlled time-evolution operator $\hat{U}_0=\ket{0}\bra{0}\otimes \hat{I} + \ket{1}\bra{1}\otimes e^{-i\hat{H}t}$ and its inverse $\hat{U}_0^\dag$:
\begin{align}
\hat{U}_1&= \hat{U}^\dag_0(\hat{\sigma}_x\otimes \hat{I})\hat{U}_0 = \ket{1}\bra{0}\otimes e^{i\hat{H}t} + \ket{0}\bra{1}\otimes e^{-i\hat{H}t}, \quad \ket{G}=e^{i\hat{\sigma}_x\pi/4}\ket{0},
\\\nonumber
\hat{H}_1&=(\bra{G}\otimes \hat{I})\hat{U}_1(\ket{G}\otimes \hat{I})=\sin{(\hat{H}t)}.
\end{align}
Second, approximate $\hat{H}_2=\sin^{-1}(\hat{H}_1)$ using quantum signal processing. As the polynomial $p_{\text{arcsin},N}(x)$ of Lem.~\ref{Lem.Polynomial_arcsin} satisfies the conditions of Thm.~\ref{Thm:QSP_B}, the operator transformation $\hat{H}_{\text{lin}}t=p_{\text{arcsin},N}[\hat{H}_1]$ can be implemented exactly with $\mathcal{O}(N)$ queries to $\hat{U}_0$. This encodes $\hat{H}_{\text{lin}}t$ in standard-form with normalization $1$. Now choose $t$ such that $\|\hat{H}t\|\le c = 1/2$. Then $\|\sin{(\hat{H}t)}\|\le \|\hat{H}t\| \le 1/2$ as $\sin(x)\le x$. Third, evaluate the approximation error using Lem.~\ref{Lem.Polynomial_arcsin}. $\|\hat{H}_{\text{lin}}t-\hat{H}t\| \le \max_{x\in [-1/2,1/2]}|p_{\text{arcsin},N}(x)-\sin^{-1}(x)| \le \epsilon$, for $N=\mathcal{O}(\log{(1/\epsilon)})$. 
\end{proof}

Incidentally, the equivalance between simulation and measurement also provides a simulation algorithm for Hamiltonians built from a sum of $d$ Hemritian component $\hat{H}=\sum^d_{j=1}\hat{H}_j$, where one only has access to these components through an oracle for their controlled exponentials $e^{-i\hat{H}_j t_j}$, for any $t_j\in\mathbb{R}$. Though results with similar scaling can be obtained through the techniques of compressed fractional queries~\cite{Berry2014}, this approach has two main advantages. First, the queries $\hat{H}_j$ are not restricted to only have eigenvalues $\pm 1$. Second, it is significantly simpler both in concept and in implementation.
\begin{proof}[Proof of Cor.~\ref{Cor:HamExponentials}]
From Thm.~\ref{Thm:Standard_Form_From_Ham_Sim}, $\mathcal{O}(\log(1/\epsilon_1))$ queries to $\hat{U}$ suffice to encode $\hat{H}_{\text{controlled}}=\sum^d_{j=1}\ket{j}\bra{j}_a\otimes \hat{H}'_j=(\bra{G'}_b\otimes\hat{I}_{as})\hat{U}'(\ket{G'}_b\otimes\hat{I}_{as})$ in standard-form with some state $\ket{G'}_b$ and signal oracle $\hat{U}'$, where $\max_{j}\|\hat{H}'_j-\hat{H}_j\|\le \epsilon_1$ and $\hat{H}_{\text{controlled}}$ acts on the system register $s$. Thus $(\bra{G}_a\bra{G'}_b\otimes\hat{I}_s)\hat{U}'(\ket{G}_a\ket{G'}_b\otimes\hat{I}_s)=\hat{H}_{\text{approx}}/\alpha$ encodes $\hat{H}_{\text{approx}}$ in standard-form where $\|\hat{H}_{\text{approx}}-\hat{H}\|=\|\sum^d_{j=1}\alpha_j(\hat{H}'_j-\hat{H}_j)\|\le \sum^d_{j=1}\alpha_j\|\hat{H}'_j-\hat{H}_j\|\le \alpha \epsilon_1$. Using the fact $\|e^{i \hat A}-e^{i \hat B}\|\le \|\hat A-\hat B\|$~\cite{Berry2014}, we have $\|e^{-i\hat{H}'t}-e^{-i\hat{H}t}\|\le t \alpha \epsilon_1$. By applying Thm.~\ref{Thm:Ham_Sim_Qubitization}, $e^{-i\hat{H}_{\text{approx}}t}$ can be approximated with error $\epsilon_2$ using $\mathcal{O}(t \alpha+\frac{\log{(1/\epsilon_2)}}{\log\log{(1/\epsilon_2)}})\mathcal{O}(\log(1/\epsilon_1))$ queries to $\hat{U}$. By the triangle inequality, this approximates $e^{-i\hat{H}t}$ with error $\le t\alpha\epsilon_1+\epsilon_2$. Thus choose $\epsilon_1 = \frac{\epsilon}{2t\alpha}$ and $\epsilon_2=\epsilon/2$.
\end{proof}

\section{Conclusions}
\label{Sec:Amp_concluson}
We have combined ideas from qubitization and quantum signal processing to solve, in a general setting, the uniform spectral amplification problem of implementing a low-distortion expansion of the spectrum of Hamiltonians. One most surprising application of our results is the simulation of sparse Hamiltonians where we obtain an algorithm with linear complexity in $\mathcal{O}(t(d\Lambda_{\text{max}}\Lambda_1)^{1/2})$, excluding logarithmic factors. This is particularly important as the best-case scaling $\mathcal{O}(\sqrt{d})$ is essential to an optimal realization of the fundamental quantum search algorithm. However, this improvement also appears impossible as prior art claims that $\Theta(td\|\hat{H}\|_{\text{max}})$ queries is optimal. Nevertheless, the two are actually consistent. In the situation where information on $\|\hat{H}\|_1$ is unavailable, previous results are recovered as one may simply choose the worst-case $\Lambda_1=d\Lambda_{\text{max}}=d\|\hat{H}\|_{\text{max}}$. This naturally leads to the question of whether further improvement is possible. For instance, if information on $\|\hat{H}\|$ rather than $\|\hat{H}\|_1$ is made available, our lower bound is consistent with the stronger statement of $\Omega(t(d\|\hat{H}\|_{\text{max}}\|\hat{H}\|)^{1/2})$ queries.

More generally, the universality of our results motivates related future directions. Thus far, a large number of common oracles used to describe Hamiltonians to quantum computers map to the standard-form without much difficultly. Rather than focusing on improving Hamiltonian simulation algorithms, perhaps an emphasis on improving the quality of encoding, through a reduced normalization constant, would be more insightful, easier, and also lead to greater generality. Combined with the extremely low overhead of our techniques, algorithms obtained in this manner could be practical on digital quantum computers sooner rather than later.


\section{Acknowledgments}
G.H. Low is funded by the NSF RQCC Project No.1111337 and ARO quantum algorithms project. We thank Aram Harrow and Robin Kothari for suggesting $\text{PARITY}\circ\text{OR}$ as a possible lower bound.

\appendix

\section{Polynomial Approximations to a Truncated Linear Function}
\label{Sec:Polynomials_Amplitude_Multiplication}
The proof of Thm.~\ref{Cor:Operator_Amplification} and Thm.~\ref{Thm:Linear_Amplitude_Amplification} require a polynomial approximation $p_{\text{lin},\Gamma,n}$ to the truncated linear function
\begin{align}
\label{Eq:Linear_target_function_Appendix}
f_{\text{lin},\Gamma}(x)=
\begin{cases}
 \frac{x}{2\Gamma}, & |x| \in [0, \Gamma], \\
\in [-1,1], & |x| \in (\Gamma,1].
\end{cases}
\end{align}
The remainder of this section is dedicated to constructively proving the existence of $p_{\text{lin},\Gamma,n}$ with the following properties:
\begin{theorem}[Polynomial for linear amplitude amplification]
\label{Thm.Polynomial_LAA}
$\forall\; \Gamma \in [0,1/2]$, $\epsilon \in(0, \mathcal{O}(\Gamma)]$, there exists an odd polynomial $p_{\text{lin},\Gamma,n}$ of degree $n=\mathcal{O}(\Gamma^{-1}\log{(1/\epsilon)})$ such that
\begin{align}
\forall\; {x\in[- \Gamma,\Gamma]},\; \left|p_{\text{lin},\Gamma,n}(x)-\frac{x}{2\Gamma}\right|\le \frac{\epsilon|x|}{2\Gamma} \quad\text{and}\quad \max_{x\in [-1,1]}|p_{\text{lin},\Gamma,n}(x)|\le 1.
\end{align}
\end{theorem}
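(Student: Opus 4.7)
The plan is to realize $p_{\text{lin},\Gamma,n}$ as the antiderivative of an even polynomial $\phi$ that approximates the rectangular pulse $\tfrac{1}{2\Gamma}\mathbb{1}_{[-\Gamma,\Gamma]}$. The exact identity $\tfrac{x}{2\Gamma}=\int_0^x \tfrac{1}{2\Gamma}\mathbb{1}_{[-\Gamma,\Gamma]}(t)\,dt$ for $|x|\le\Gamma$ then automatically converts a uniform \emph{relative} error on $\phi-\tfrac{1}{2\Gamma}$ over $[-\Gamma,\Gamma]$ into the required relative error on $p_{\text{lin},\Gamma,n}-x/(2\Gamma)$, neutralizing the apparent trouble that the target has magnitude vanishing linearly at the origin.

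For the pulse I invoke the classical polynomial sign-approximation: for any $\delta\in(0,1)$ and $\epsilon_S\in(0,1)$ there exists an odd polynomial $S_\delta$ of degree $N_0=\mathcal{O}(\delta^{-1}\log(1/\epsilon_S))$ with $|S_\delta(y)|\le 1$ for $y\in[-2,2]$ and $|S_\delta(y)-\operatorname{sgn}(y)|\le\epsilon_S$ for $\delta\le|y|\le 2$, a construction available from~\cite{Low2016methodology} (or from standard Chebyshev/Jackson-kernel arguments). Fix a small constant $\mu$ (say $\mu=1/8$), set $c=1+\mu$, $\delta=\mu\Gamma$, $\epsilon_S=\epsilon/2$, and define
\begin{equation*}
\phi(x)=\frac{S_\delta(x+c\Gamma)-S_\delta(x-c\Gamma)}{4\Gamma},\qquad p_{\text{lin},\Gamma,n}(x)=\int_0^x\phi(t)\,dt.
\end{equation*}
Then $\phi$ is even, $p_{\text{lin},\Gamma,n}$ is odd, and the degree is $n=N_0+1=\mathcal{O}(\Gamma^{-1}\log(1/\epsilon))$ as required.

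The relative-error check on $[-\Gamma,\Gamma]$ is nearly automatic. For $x$ in this interval both $x\pm c\Gamma$ sit at distance at least $\delta=\mu\Gamma$ from the origin and lie in $[-2,2]$, so $S_\delta(x+c\Gamma)=1+O(\epsilon_S)$ and $S_\delta(x-c\Gamma)=-1+O(\epsilon_S)$, giving $\phi(x)=\tfrac{1}{2\Gamma}(1+\psi(x))$ with $|\psi(x)|\le 2\epsilon_S=\epsilon$; integrating from $0$ to $x$ yields $|p_{\text{lin},\Gamma,n}(x)-x/(2\Gamma)|\le \epsilon|x|/(2\Gamma)$.

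The main obstacle is the global bound $|p_{\text{lin},\Gamma,n}|\le 1$ on $[-1,1]$, which by oddness reduces to $\int_0^1|\phi|\le 1$. I split $[0,1]$ into three pieces. On the ``bulk'' $[0,\Gamma]$ the estimate above gives $|\phi|\le (1+2\epsilon_S)/(2\Gamma)$, contributing $\tfrac12(1+2\epsilon_S)$. On the ``transition'' $[\Gamma,(1+2\mu)\Gamma]$ only the trivial bound $|S_\delta|\le 1$ is available, yielding a contribution $\le \mu$. On the ``tail'' $[(1+2\mu)\Gamma,1]$ both $S_\delta$-arguments sit in the positive good region so the two sign approximations nearly cancel, giving $|\phi|=\mathcal{O}(\epsilon_S/\Gamma)$ and a contribution $\mathcal{O}(\epsilon_S/\Gamma)$. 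The total $\tfrac12+\mu+\mathcal{O}(\epsilon_S+\epsilon_S/\Gamma)$ is strictly below $1$ precisely because the hypothesis $\epsilon=\mathcal{O}(\Gamma)$ keeps $\epsilon_S/\Gamma$ bounded by a small absolute constant; this is the single point where that restriction on $\epsilon$ enters. The real tension is between a wide pulse (small $\mu$, needed for a tight relative error near $|x|=\Gamma$) and a narrow tail (small $\epsilon_S/\Gamma$, needed for the uniform bound), and the explicit constants above are chosen to make both conditions hold simultaneously. If a careful accounting leaves $\int_0^1|\phi|=1+\eta$ for a small $\eta>0$, a final harmless rescaling $p_{\text{lin},\Gamma,n}\mapsto p_{\text{lin},\Gamma,n}/(1+\eta)$ restores the uniform bound at the cost of only enlarging $\epsilon_S$ by the factor $1+\eta$, which can be absorbed into the definition of $\epsilon_S$ at the outset.
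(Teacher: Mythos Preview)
Your argument is correct and takes a genuinely different route from the paper's. The paper builds an even polynomial $p_{\text{rect}}$ approximating $\mathrm{rect}(x/2\Gamma)$ (via an explicit Jacobi--Anger $\to$ Gaussian $\to$ erf chain, Lems.~\ref{Lem:Entire_Sgn}--\ref{Lem:Entire_Linear} and the corollaries following Lem.~\ref{Lem:Exponential_error}) and then \emph{multiplies}: $p_{\text{lin},\Gamma,n}(x)=\tfrac{x}{2\Gamma}\,p_{\text{rect},2\Gamma,2\Gamma,n-1}(x)$. You instead build essentially the same bump $\phi\approx\tfrac{1}{2\Gamma}\mathrm{rect}(\cdot/2\Gamma)$ from a black-box sign approximant and then \emph{integrate}: $p_{\text{lin},\Gamma,n}(x)=\int_0^x\phi$. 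Both operations convert a uniform error on the bump into the required $|x|$-proportional error on the linear piece automatically, so the heart of the proof is the same; what differs is how the global bound $|p_{\text{lin},\Gamma,n}|\le 1$ is obtained. The paper needs the pointwise product analysis of Lem.~\ref{Lem:Entire_Linear} (bounding $\tfrac{x}{2\Gamma}\cdot\text{bump}$ region by region), whereas your $L^1$ estimate $\int_0^1|\phi|\le 1$ is more elementary and makes the role of the hypothesis $\epsilon=\mathcal{O}(\Gamma)$ especially transparent: it is exactly what caps the tail contribution $\mathcal{O}(\epsilon_S/\Gamma)$. Conversely, the paper's fully explicit Chebyshev coefficients (inherited from the Jacobi--Anger expansion) are useful if one actually wants to instantiate the polynomial and feed it to Thm.~\ref{Thm:QSP_B} or Thm.~\ref{Thm:Controlled_Generalized_Amplitude_Amplification}, while your construction defers that work to whatever sign-approximation lemma is invoked.
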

As close-to-optimal uniform polynomials approximations may be obtained by the Chebyshev truncation of entire functions, our strategy is to find an entire function $f_{\text{lin},\Gamma,\epsilon}$ that approximates $f_{\text{lin},\Gamma}$ over the domain $x\in[-\Gamma,\Gamma]$ with error $\epsilon$. We construct $f_{\text{lin},\Gamma,\epsilon}(x)$ in three steps. First, approximate the sign function $\text{sgn}(x)$ with an error functions, which is entire. Second, approximate the rectangular function $\text{rect}(x)$ with a sum of two error function $\frac{1}{2}\left(\text{erf}(k (x+\delta))+\text{erf}(k (-x+\delta))\right)$. Third, multiply this by $\frac{x}{2\Gamma}$ to approximate $f_{\text{lin},\Gamma,\epsilon}(x)$ with some error $\epsilon$. The approximation error of this sequence is described by Lems.~\ref{Lem:Entire_Sgn}, \ref{Lem:Entire_Rect}, \ref{Lem:Entire_Linear}:
\begin{lemma}[Entire approximation to the sign function $\text{sgn}(x)$]
\label{Lem:Entire_Sgn}
$\forall\;\kappa > 0, x\in\mathbb{R},\epsilon\in(0,\sqrt{2/e\pi}]$, let $k = \frac{\sqrt{2}}{\kappa}\log^{1/2}{(\frac{2}{\pi\epsilon^2})}$. Then the function $f_{\text{sgn},\kappa,\epsilon}(x)=\text{erf}(kx)$ satisfies
\begin{align}
\begin{aligned}
1&\ge|f_{\text{sgn},\kappa,\epsilon}(x)|, 
\\
\epsilon&\ge\max_{|x|\ge \kappa/2}|f_{\text{sgn},\kappa,\epsilon}(x)-\text{sgn}(x)|,
\end{aligned}
\quad
\begin{aligned}
\text{sgn}(x)=
\begin{cases}
1, &  x > 0, \\ 
-1, & x < 0, \\
1/2, & x = 0.
\end{cases}
\end{aligned}
\end{align}
\end{lemma}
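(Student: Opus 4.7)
The plan is to verify both inequalities directly from the standard definition $\text{erf}(z) = \frac{2}{\sqrt{\pi}} \int_0^z e^{-t^2}\,dt$. The first bound $|f_{\text{sgn},\kappa,\epsilon}(x)| \le 1$ is immediate since $\text{erf}: \mathbb{R} \to (-1,1)$ for real arguments; this is a standard property following from $\frac{2}{\sqrt{\pi}} \int_0^\infty e^{-t^2}\,dt = 1$. So the content is entirely in the second (tail) bound.

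By oddness of both $\text{erf}$ and $\text{sgn}$, it suffices to consider $x \ge \kappa/2$, where we must control $|\text{erf}(kx) - 1| = \text{erfc}(kx)$ with $\text{erfc}(z) = \frac{2}{\sqrt{\pi}} \int_z^\infty e^{-t^2}\,dt$. Since $\text{erfc}$ is monotone decreasing on $[0,\infty)$, the worst case is $x = \kappa/2$, and it suffices to show $\text{erfc}(k\kappa/2) \le \epsilon$. I would invoke the classical Gaussian tail bound $\text{erfc}(z) \le \frac{e^{-z^2}}{z\sqrt{\pi}}$ valid for $z > 0$, which follows by bounding $e^{-t^2} \le \frac{t}{z} e^{-t^2}$ for $t \ge z$ and integrating.

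Now substitute $k = \frac{\sqrt{2}}{\kappa}\log^{1/2}{(\frac{2}{\pi\epsilon^2})}$, so $z_0 := k\kappa/2 = \tfrac{1}{\sqrt{2}}\log^{1/2}{(2/(\pi\epsilon^2))}$ and hence $z_0^2 = \tfrac{1}{2}\log(2/(\pi\epsilon^2))$, giving $e^{-z_0^2} = \epsilon\sqrt{\pi/2}$. Plugging in:
\begin{align*}
\text{erfc}(z_0) \;\le\; \frac{\epsilon\sqrt{\pi/2}}{z_0 \sqrt{\pi}} \;=\; \frac{\epsilon}{\sqrt{2}\, z_0} \;=\; \frac{\epsilon}{\sqrt{\log(2/(\pi\epsilon^2))}}.
\end{align*}
This is $\le \epsilon$ precisely when $\log(2/(\pi\epsilon^2)) \ge 1$, i.e.\ $\epsilon \le \sqrt{2/(e\pi)}$, which is exactly the hypothesis on $\epsilon$.

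The only real obstacle is the technical choice of constants that makes the result clean. The form $k = \tfrac{\sqrt{2}}{\kappa}\log^{1/2}(2/(\pi\epsilon^2))$ and the restriction $\epsilon \le \sqrt{2/e\pi}$ are engineered so that the $\sqrt{\pi}$ in the denominator of the tail bound cancels against the $\sqrt{\pi/2}$ from $e^{-z_0^2}$ and leaves precisely $\epsilon/\sqrt{\log(2/(\pi\epsilon^2))}$, which the hypothesis then reduces to $\epsilon$. Everything else is routine.
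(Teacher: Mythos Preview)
Your proof is correct and follows essentially the same approach as the paper: both use the standard tail bound $\text{erfc}(z)\le \frac{e^{-z^2}}{z\sqrt{\pi}}$ and reduce the claim to checking the worst case $x=\kappa/2$. The only cosmetic difference is that the paper first solves $\frac{2}{k\kappa\sqrt{\pi}}e^{-(k\kappa)^2/4}=\epsilon$ via the Lambert-$W$ function and then bounds $W$ to justify the stated $k$, whereas you substitute $k$ directly and verify, which is arguably cleaner.
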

\begin{proof}
We apply elementary upper bounds on the complementary error function $\text{erfc}(x)=1-\text{erf}(x)=\frac{2}{\sqrt{\pi}}\int_x^\infty e^{-y^2}dy\le \frac{2}{\sqrt{\pi}}\int_x^\infty\frac{y}{x} e^{-y^2}dy=\frac{1}{x\sqrt{\pi}}e^{-x^2}$ for any $x>0$. Thus $\max_{x\ge \kappa/2}|\text{erf}(kx)-1|\le \frac{2}{k\kappa\sqrt{\pi}}e^{-(k\kappa)^2/4}=\epsilon$ and similarly for $x \le -\kappa/2$. This is solved by $k=\frac{1}{\kappa}\sqrt{2W(\frac{2}{\pi\epsilon^2})}$ where $W(x)$ is the Lambert-$W$ function. From the upper bound $\log{x}-\log{\log{x}}\le W(x)\le \log{x}-\frac{1}{2}\log{\log{x}}$ for $x\ge e$~\cite{Hoorfar2008LambertW}, any choice of $k \ge \frac{\sqrt{2}}{\kappa}\log^{1/2}{(\frac{2}{\pi\epsilon^2})}\ge \frac{\sqrt{2}}{\kappa}$ where $\frac{2}{\pi\epsilon^2}\ge e$ ensures that $\text{erf}(k x)$ is close to $\pm 1$ over $x\ge\kappa/2$.
\end{proof}
\begin{lemma}[Entire approximation to the rect function]
\label{Lem:Entire_Rect}
$\forall\;\kappa > 0,\; w>0,\; x\in\mathbb{R},\;\epsilon\in(0,\sqrt{2/e\pi}]$, let $k = \frac{\sqrt{2}}{\kappa}\log^{1/2}{(\frac{2}{\pi\epsilon^2})}, \delta = (w+\kappa)/2$. Then the function
$f_{\text{rect},w,\kappa,\epsilon}(x)=\frac{1}{2}\left(\text{erf}(k (x+\delta))+\text{erf}(k (-x+\delta))\right)$ satisfies
\begin{align}
\begin{aligned}
1 &\ge |f_{\text{rect},w, \kappa,\epsilon}(x)|,
\\
\epsilon &\ge \max_{|x| \in [0,w/2]\cup[w/2+\kappa,\infty]} |f_{\text{rect},w,\kappa,\epsilon}(x)-\text{rect}(x/w)|, 
\end{aligned}
\quad
\text{rect}(x)=
\begin{cases}
1, &  |x| < 1/2, \\ 
0, & |x| > 1/2, \\
1/2, & |x| = 1/2.
\end{cases}
\end{align}
\end{lemma}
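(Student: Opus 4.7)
The plan is to reduce everything to the single-transition bound of Lem.~\ref{Lem:Entire_Sgn} by observing that $f_{\text{rect},w,\kappa,\epsilon}$ is the average of two shifted error-function sign-approximations, one centred at $-\delta$ and one centred at $+\delta$, with the shift $\delta=(w+\kappa)/2$ chosen precisely so that each sign-transition zone (an interval of width $\kappa$ in which Lem.~\ref{Lem:Entire_Sgn} does not certify closeness to $\pm 1$) is pushed outward to lie symmetrically about $\pm w/2$. The two excluded intervals $(w/2,\,w/2+\kappa)$ and $(-w/2-\kappa,\,-w/2)$ in the error statement are exactly these transition zones; outside them, both erf arguments $x+\delta$ and $-x+\delta$ have magnitude at least $\kappa/2$, so Lem.~\ref{Lem:Entire_Sgn} applies directly to each summand.

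The uniform bound $|f_{\text{rect},w,\kappa,\epsilon}(x)|\le 1$ is immediate from the triangle inequality and $|\text{erf}(\cdot)|\le 1$. For the approximation bound, I would split on $|x|$, using the evenness of $f_{\text{rect},w,\kappa,\epsilon}$ (which follows because $x\mapsto -x$ interchanges the two erf terms) to reduce to $x\ge 0$. In the case $0\le x\le w/2$, both $x+\delta$ and $-x+\delta$ lie in $[\kappa/2,\,w+\kappa/2]$, so Lem.~\ref{Lem:Entire_Sgn} gives $|\text{erf}(k(x+\delta))-1|\le\epsilon$ and $|\text{erf}(k(-x+\delta))-1|\le\epsilon$; averaging yields $|f_{\text{rect},w,\kappa,\epsilon}(x)-1|\le\epsilon$, matching $\text{rect}(x/w)=1$. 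In the case $x\ge w/2+\kappa$, arithmetic gives $x+\delta\ge \kappa/2$ while $-x+\delta\le -\kappa/2$, so the two erfs approximate $+1$ and $-1$ respectively, each to within $\epsilon$; averaging yields $|f_{\text{rect},w,\kappa,\epsilon}(x)|\le\epsilon$, matching $\text{rect}(x/w)=0$.

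I do not expect any real obstacle: the delicate choice of $k$ has already been made in Lem.~\ref{Lem:Entire_Sgn}, and the shift $\delta=(w+\kappa)/2$ is engineered precisely so that superposing two sign-transitions of opposite orientation produces a rect whose two plateaux are each approximated to the same accuracy $\epsilon$ that Lem.~\ref{Lem:Entire_Sgn} delivers for a single sign. The only mildly fussy step is checking that the endpoints of the two-sided region $[\kappa/2,\,w+\kappa/2]$ for $\pm x+\delta$ line up cleanly with the boundaries $\pm w/2$ and $\pm(w/2+\kappa)$ of the excluded transition zones, and this is purely arithmetic bookkeeping in the shift parameter.
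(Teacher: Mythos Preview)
Your proposal is correct and follows essentially the same approach as the paper: both rely on the identity $\text{rect}(x/w)=\tfrac{1}{2}\bigl(\text{sgn}(x+w/2)+\text{sgn}(-x+w/2)\bigr)$, shift the sign-centres outward by $\kappa/2$ to $\pm\delta$ so that each transition zone falls in $(w/2,\,w/2+\kappa)$, and then invoke Lem.~\ref{Lem:Entire_Sgn} on each summand. The paper states this in two sentences; your version merely spells out the case analysis and arithmetic that the paper leaves implicit.
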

\begin{proof}
This follows from the definition of the rect function $\text{rect}(x/w)=\frac{1}{2}(\text{sgn}(x+w/2)+\text{sgn}(-x+w/2))$. Thus we choose $\delta = (w+\kappa)/2$ and apply the error estimates of Lem.~\ref{Lem:Entire_Sgn}.
\end{proof}
\begin{lemma}[Entire approximation to the truncated linear function]
\label{Lem:Entire_Linear}
$\forall\;\Gamma > 0,\; x\in\mathbb{R},\;\epsilon\in(0,\sqrt{2/e\pi}]$, the function
$f_{\text{lin},\Gamma,\epsilon}(x)=\frac{x}{2\Gamma}f_{\text{rect},2\Gamma, 2\Gamma,\epsilon}(x)$ satisfies
\begin{align}
|f_{\text{lin},\Gamma,\epsilon}(x)|\le 1,
\quad
\max_{|x| \in [0,\Gamma]} \left|f_{\text{lin},\Gamma,\epsilon}(x)-\frac{x}{2\Gamma}\right|\le \frac{|x|\epsilon}{2\Gamma}.
\\\nonumber
\end{align}
\end{lemma}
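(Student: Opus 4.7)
My plan is to dispatch the two inequalities separately, since they stress different features of $f_{\text{rect},2\Gamma,2\Gamma,\epsilon}$ from Lem.~\ref{Lem:Entire_Rect}: the approximation quality near the origin, and the Gaussian tail decay away from the origin. The localized bound $|f_{\text{lin},\Gamma,\epsilon}(x)-x/(2\Gamma)|\le |x|\epsilon/(2\Gamma)$ on $[-\Gamma,\Gamma]$ should follow directly by factoring
\begin{align*}
f_{\text{lin},\Gamma,\epsilon}(x)-\frac{x}{2\Gamma}=\frac{x}{2\Gamma}\bigl(f_{\text{rect},2\Gamma,2\Gamma,\epsilon}(x)-1\bigr)
\end{align*}
and noting that $[0,\Gamma]\subseteq[0,w/2]$ with $w=2\Gamma$, so $\text{rect}(x/(2\Gamma))=1$ there and Lem.~\ref{Lem:Entire_Rect} supplies $|f_{\text{rect},2\Gamma,2\Gamma,\epsilon}(x)-1|\le\epsilon$.

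For the global bound $|f_{\text{lin},\Gamma,\epsilon}(x)|\le 1$, I would partition $\mathbb{R}$ into three regions. On $|x|\le 2\Gamma$, both factors are bounded by $1$ (the second by Lem.~\ref{Lem:Entire_Rect}), giving the claim immediately. On the transition ring $|x|\in(2\Gamma,3\Gamma]$, I would exploit the explicit representation $f_{\text{rect},2\Gamma,2\Gamma,\epsilon}(x)=\tfrac{1}{2}[\text{erf}(k(x+2\Gamma))+\text{erf}(k(2\Gamma-x))]$: the first summand is at most $1$, while on this ring the second summand has its argument in $[-k\Gamma,0)$ and is therefore non-positive, so $|f_{\text{rect},2\Gamma,2\Gamma,\epsilon}(x)|\le 1/2$, and the product is bounded by $(3/2)(1/2)=3/4$.

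The genuinely technical piece is the tail region $|x|>3\Gamma$, where the linear growth of $x/(2\Gamma)$ could a priori be unbounded. Here I would rewrite $f_{\text{rect},2\Gamma,2\Gamma,\epsilon}(x)=\tfrac{1}{2}[\text{erfc}(k(|x|-2\Gamma))-\text{erfc}(k(|x|+2\Gamma))]$ and apply the sharp estimate $\text{erfc}(y)\le(y\sqrt\pi)^{-1}e^{-y^2}$ used already in the proof of Lem.~\ref{Lem:Entire_Sgn}, yielding
\begin{align*}
|f_{\text{lin},\Gamma,\epsilon}(x)|\le \frac{|x|}{4\Gamma\,k(|x|-2\Gamma)\sqrt{\pi}}\,e^{-k^2(|x|-2\Gamma)^2}.
\end{align*}
Since $k^2(|x|-2\Gamma)^2\ge k^2\Gamma^2=\tfrac{1}{2}\log(2/(\pi\epsilon^2))$, the Gaussian factor contributes at most $\epsilon\sqrt{\pi/2}$ at $|x|=3\Gamma$ and decays strictly thereafter, while $|x|/(|x|-2\Gamma)$ is monotonically decreasing on $(2\Gamma,\infty)$ with maximum value $3$ at the boundary. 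A short monotonicity argument then shows that the right-hand side attains its supremum at $|x|=3\Gamma$ and is bounded there by a small multiple of $\epsilon$, certainly below $1$ for any $\epsilon\in(0,\sqrt{2/e\pi}]$.

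The main obstacle I anticipate is the bookkeeping in this last region: one must simultaneously track the prefactor $1/k$, the ratio $|x|/(|x|-2\Gamma)$, and the Gaussian suppression, and confirm that the bound is uniform in $\epsilon$ over the allowed range. Once the elementary calculus of verifying monotonicity of the tail bound is done, the three regional estimates patch together to give $|f_{\text{lin},\Gamma,\epsilon}(x)|\le 1$ on all of $\mathbb{R}$, completing the lemma.
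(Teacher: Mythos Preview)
Your proposal is correct and follows the same overall scheme as the paper: obtain the multiplicative error bound on $[-\Gamma,\Gamma]$ by factoring out $x/(2\Gamma)$ and invoking Lem.~\ref{Lem:Entire_Rect}, handle $|x|\le 2\Gamma$ by noting both factors are at most $1$ in absolute value, and then control the tail via the erfc inequality $\text{erfc}(y)\le (y\sqrt\pi)^{-1}e^{-y^2}$.

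The one genuine difference is in how the tail $|x|>2\Gamma$ is organized. The paper treats it as a single region: it bounds each of the two erf terms above and below, then asserts (by a numerical extremization over $x$ and $\epsilon'$ that is not carried out in the text) that $f_{\text{lin},\Gamma,\epsilon}(x)\in[-0.0011,0.56]$. You instead split into a transition ring $(2\Gamma,3\Gamma]$---where the sign observation $\text{erf}(k(2\Gamma-x))\le 0$ forces $f_{\text{rect}}\le 1/2$ and hence $|f_{\text{lin}}|\le 3/4$---and a far tail $|x|>3\Gamma$, where the single erfc bound plus monotonicity of $u\mapsto \tfrac{u}{u-2\Gamma}e^{-k^2(u-2\Gamma)^2}$ gives a bound of order $\epsilon$. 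Your version is longer by one region but entirely explicit, whereas the paper's is more compact but leaves the final numerical claim unverified.

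One small point to tighten: on the transition ring you wrote only the upper bound $f_{\text{rect}}\le 1/2$; to conclude $|f_{\text{rect}}|\le 1/2$ you also need $f_{\text{rect}}\ge -1/2$. This is immediate (the first erf summand is nonnegative and the second is at least $-1$), or one can even note $f_{\text{rect}}\ge 0$ there since $x+2\Gamma>|2\Gamma-x|$ and $\text{erf}$ is odd and increasing.
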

\begin{proof}
Consider the domain $|x|\in[0,\Gamma]$. There, Lem.~\ref{Lem:Entire_Rect} gives the approximation error $|f_{\text{rect},2\Gamma, 2\Gamma,\epsilon}(x)-1|\le \epsilon$. Multiplying both sides by $\frac{x}{2\Gamma}$ gives the stated result. Now consider the domain $|x|\in[0,2\Gamma]$. There, $|f_{\text{rect},2\Gamma, 2\Gamma,\epsilon}(x)|\le 1$ and $|\frac{x}{2\Gamma}|\le 1$. Thus the product is bounded by $\pm1$. Now consider the domain $x\ge 2\Gamma$. 
Let us maximize 
$f_{\text{lin},\Gamma,\epsilon}(x)$ over $x,\epsilon$. Define $1/\epsilon'=\sqrt{\log{(\frac{2}{\pi\epsilon^2})}}\ge 1$. Thus
$f_{\text{lin},\Gamma,\epsilon}(x)=\frac{x}{4\Gamma}\left(\text{erf}(\frac{x+2\Gamma}{\sqrt{2}\Gamma\epsilon'})+\text{erf}(\frac{2\Gamma-x}{\sqrt{2}\Gamma\epsilon'})\right)$. We make use of the upper bounds
$\text{erfc}(x)=1-\text{erf}(x)\le\frac{1}{x\sqrt{\pi}}e^{-x^2}$ 
and 
$\text{erfc}(x)\le e^{-x^2}$. 
The first term has the bounds 
$1\ge \text{erf}(\frac{x+2\Gamma}{2\Gamma \epsilon'}) \ge 1-\frac{1}{\frac{x+2\Gamma}{\sqrt{2}\Gamma}\sqrt{\pi}\epsilon'}e^{-(\frac{x+2\Gamma}{\sqrt{2}\Gamma\epsilon'})^2}\ge 1-\frac{1}{\sqrt{8\pi}\epsilon'}e^{-(\frac{x+2\Gamma}{\sqrt{2}\Gamma\epsilon'})^2}$. 
The second term has the bounds 
$-1+e^{-(\frac{2\Gamma-x}{\sqrt{2}\Gamma\epsilon'})^2} \ge\text{erf}(\frac{2\Gamma-x}{\sqrt{2}\Gamma\epsilon'})\ge -1$. 
By adding these together and extremizing the upper and lower bounds separately, 
$f_{\text{lin},\Gamma,\epsilon}(x) \in [-0.0011,0.56]$ independent of $\Gamma$ and for all $\epsilon'\in[0,1]$. These bounds apply to $x\le 2\Gamma$ with a minus sign as $f_{\text{lin},\Gamma,\epsilon}(x)$ is an odd function.
\end{proof}

However, the required polynomial must have a non-uniform error $\left|p_{\text{lin},\Gamma,n}(x)- \frac{x}{2\Gamma}\right|\le \frac{|x|}{2\Gamma}\epsilon$, proportional to $|x|$. Though $f_{\text{lin},\Gamma,\epsilon}$ of Lem.~\ref{Lem:Entire_Linear} has that property, its Chebyshev truncation results in a worst-case uniform error $\epsilon$ for all values of $x$. This is overcome by approximating $p_{\text{lin},\Gamma,n}(x)$ as the product of a Chebyshev truncation of the entire approximation to $\text{rect}(x)$ and with $\frac{x}{2\Gamma}$. We now evaluate the scaling of the degree of the Chebyshev truncation of $f_{\text{lin},\Gamma,\epsilon}$ in Lem.~\ref{Lem:Entire_Rect} with respect to their parameters and the desired approximation error. 

Our starting point is the Jacobi-Anger expansion of the exponential decay function:
\begin{align}
\label{Eq:Jacobi-Anger}
f_{\text{exp},\beta}(x)= e^{-\beta(x+1)}=e^{-\beta}\left(I_0(\beta)+2\sum^\infty_{j=1} I_j(\beta) T_j(-x)\right),
\end{align}
where $I_j(\beta)$ are modified Bessel functions of the first kind. The domain of this function and all the following are assumed to be $x\in[-1,1]$. By truncating this expansion above $j>n$, we obtain a degree $n$ polynomial approximation $p_{\text{exp},\beta,n}(x)$ with truncation error $\epsilon_{\text{exp},\beta,n}$:
\begin{align}
p_{\text{exp},\beta,n}(x)&= e^{-\beta}\left(I_0(\beta)+2\sum^n_{j=1} I_j(\beta) T_j(-x)\right),\\
\label{Eq:error_exp}
\epsilon_{\text{exp},\beta,n} &= \max_{x\in[-1,1]}| p_{\text{exp},\beta,n}-f_{\text{exp},\beta}|
=
2e^{-\beta}\sum^\infty_{j=n+1} |I_j(\beta)|. 
\end{align}
Note that the equality in the rightmost term of Eq.~\ref{Eq:error_exp} arises as all the coefficients $I_j(\beta)\ge0$ when $\beta\ge 0$. Thus $\epsilon_{\text{exp},\beta,n}$ is maximized $|T_j(-x)|$ are all simultaneously maximized, which occurs at $x=-1\Rightarrow T_j(-x)=1$.
By solving $\epsilon_{\text{exp},\beta,n}$, one can in principle obtain the required degree $n$ as a function of $\beta,\epsilon$. 

Error estimates for various degree $n$ polynomial approximations to the exponential decay function can be found in the literature. However these approximations are constructed using other methods. For instance, a Taylor expansion leads to scaling linear in $\beta$, and none explicitly bound the sum $\epsilon_{\text{exp},\beta,n}$. 
Fortunately, one particular error estimate in prior art is good enough and can be shown, with a little work, to implicitly bound $\epsilon_{\text{exp},\beta,n}$. We first sketch the proof of this estimate, then later show how it bounds $\epsilon_{\text{exp},\beta,n}$.
\begin{lemma}[Polynomial approximation to exponential decay $e^{-\beta(x+1)}$ adapted from~\cite{Sachdeva2014Exp}]
\label{Lem:Exponential_error_Sachdeva}
$\forall  \beta>0, \epsilon\in(0,1/2],$ there exists a polynomial $p_n$ of degree $n=\lceil\sqrt{2\lceil\max[\beta e^2,\log{(2/\epsilon)}]\rceil\log{(4/\epsilon)}}\rceil$ such that 
\begin{align}
\max_{x \in [-1,1]}| p_n(x)-e^{-\beta(x-1)}| \le \epsilon.
\end{align}
\end{lemma}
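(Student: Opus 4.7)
The plan is to take $p_n$ to be the degree-$n$ Chebyshev truncation of $f(x) := e^{-\beta(x+1)}$ on $[-1,1]$, and to bound the truncation tail using Bernstein's estimate on an optimally chosen Bernstein ellipse. Writing $f(x) = \sum_{j=0}^\infty c_j T_j(x)$ and $p_n(x) := \sum_{j=0}^n c_j T_j(x)$, the task reduces to bounding $\sum_{j>n}|c_j|$ by $\epsilon$ with $n$ as small as possible.

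First I would apply Bernstein's coefficient bound to the entire function $f$: for every $\rho>1$,
\[
|c_j| \;\le\; 2 M(\rho)\, \rho^{-j}, \qquad M(\rho) := \max_{z \in E_\rho}|f(z)|,
\]
where $E_\rho$ is the image of the circle $|\zeta|=\rho$ under $z = (\zeta + \zeta^{-1})/2$. A direct calculation gives $|f(z)| = e^{-\beta \operatorname{Re}(z+1)}$, and since $\operatorname{Re}(z) \ge -(\rho+\rho^{-1})/2$ on $E_\rho$, we obtain $M(\rho) = \exp\!\bigl(\beta(\rho-1)^2/(2\rho)\bigr)$ via the identity $(\rho+\rho^{-1})/2 - 1 = (\rho-1)^2/(2\rho)$. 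Summing the geometric tail,
\[
\max_{x \in [-1,1]}|p_n(x) - f(x)| \;\le\; \sum_{j>n}|c_j| \;\le\; \frac{2\rho}{\rho-1}\,\exp\!\bigl(\beta(\rho-1)^2/(2\rho)\bigr)\,\rho^{-n}.
\]

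Second, I would optimize over $\rho>1$. Taking logs, the condition that the tail be at most $\epsilon$ reads
\[
n \log\rho \;\ge\; \frac{\beta(\rho-1)^2}{2\rho} + \log\frac{2\rho}{(\rho-1)\epsilon}.
\]
Setting $\rho = 1 + s$, in the large-$\beta$ regime $\beta e^2 \ge \log(2/\epsilon)$ the stationary point of the right-hand side minus the left-hand side in $s$ lies near $s \asymp \sqrt{2\log(1/\epsilon)/\beta}$; plugging this in and using $\log(1+s) \ge s/(1+s/2)$ yields the sufficient condition $n \ge \sqrt{2\beta e^2 \log(4/\epsilon)}$, where the factor $e^2$ and the transition $\log(2/\epsilon)\mapsto\log(4/\epsilon)$ precisely absorb the sub-leading $\log(2\rho/((\rho-1)\epsilon))$ term. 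In the complementary small-$\beta$ regime $\log(2/\epsilon)\ge \beta e^2$, I would instead take $\rho = e$ (so $\log\rho = 1$), reducing the inequality to $n \ge \log(2/\epsilon) + O(1)$, which is implied by $n \ge \sqrt{2\log(2/\epsilon)\log(4/\epsilon)}$, matching the stated bound with $M = \log(2/\epsilon)$ active.

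The main obstacle will be the careful two-case analysis and constant-chasing needed to match the precise ceiling expression $\lceil\sqrt{2\lceil\max[\beta e^2,\log(2/\epsilon)]\rceil \log(4/\epsilon)}\rceil$ rather than only the scaling $O(\sqrt{\max(\beta,\log(1/\epsilon))\log(1/\epsilon)})$. The scaling itself is an immediate consequence of Bernstein's estimate on the optimal ellipse, but the exact form---with its $\sqrt{2}$ constant, its $e^2$ cushion, and its $\log(4/\epsilon)$ versus $\log(2/\epsilon)$---comes out only after substitution of the near-optimal $s = \Theta(1/e)$ and explicit bookkeeping of the logarithmic prefactor in the Bernstein tail. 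An alternative semigroup route via $e^{-y} = (e^{-y/t})^t$ with $t = \lceil\max[\beta e^2,\log(2/\epsilon)]\rceil$ followed by a truncated-Taylor approximation of the inner exponential yields the same ceiling form more directly, at the cost of being less conceptually illuminating.
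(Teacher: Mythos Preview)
Your Bernstein-ellipse argument is correct at the level of scaling and is a legitimate route to $n=\mathcal{O}\bigl(\sqrt{\max(\beta,\log(1/\epsilon))\log(1/\epsilon)}\bigr)$, but it is \emph{not} the paper's proof. The paper reproduces the Sachdeva--Vishnoi argument: it Taylor-expands $e^{-\beta(x+1)}=e^{-\beta}\sum_j (-\beta)^j x^j/j!$, truncates the series at some cutoff $t$, and then replaces each surviving monomial $x^j$ by its degree-$n$ Chebyshev truncation. The key point is that the Chebyshev coefficients of $x^s$ are the symmetric-random-walk probabilities, so by Chernoff the monomial truncation error is at most $2e^{-n^2/(2s)}$. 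The total error then splits as $\epsilon_1\le 2e^{-n^2/(2t)}$ (from truncating the monomials) plus $\epsilon_2\le 2e^{-\beta-t}$ (from the Taylor tail, using $t\ge \beta e^2$ and Stirling), and the choices $t=\lceil\max\{\beta e^2,\log(4/\epsilon)\}\rceil$, $n=\lceil\sqrt{2t\log(4/\epsilon)}\rceil$ fall out immediately from setting each piece to $\epsilon/2$.

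The difference matters for exactly the obstacle you flag: the Sachdeva--Vishnoi decomposition produces the specific ceiling expression with its $\sqrt{2}$, its $e^2$, and its $\log(4/\epsilon)$ \emph{for free}, because those constants are native to the Chernoff and Stirling bounds. Your ellipse route gives the same order but would require delicate optimization in $\rho$ and careful control of the $\log(2\rho/((\rho-1)\epsilon))$ prefactor to land on identical constants; you acknowledge this, and it is not obvious the numbers would match exactly. On the other hand, your approach is more transparent about \emph{why} the bound holds---it is just analyticity on a growing ellipse---and it is the method the paper itself switches to later (Appendix~B) when handling functions for which no clean monomial expansion is available. The ``semigroup route'' you mention in passing is still different from what the paper does: the paper does not write $e^{-y}=(e^{-y/t})^t$; it Taylor-expands once and Chebyshev-truncates the monomials.
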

\begin{proof}
Consider the Chebyshev expansion of the monomial $x^s=2^{1-s}\sum'^s_{j=0, s-j\;\text{even}}\binom{s}{(s-j)/2}T_j(x)=\mathbb{E}[T_{D_s}(x)]$, where $s \le 0$ is an integer and $\sum'_j$ means the $j=0$ term is halved. The representation an an expectation over the random variable $D_s=\sum^s_{j=1}Y_j$ where $Y_j=\pm 1$ with equal probabilities follows from the identity $xT_j(x)=\frac{1}{2}(T_{j-1}(x)+T_{j+1}(x))$. They show that the Chebyshev truncation of the monomial has error
\begin{align}
\label{Eq:Chebyshev_Monomial} 
p_{\text{mon},s,n}(x)&= 2^{1-s}\sideset{}{'}\sum^{\min(s,n)}_{j=0, n-j\;\text{even}}\binom{s}{(s-j)/2}T_j(x),
\\ \nonumber
\epsilon_{\text{mon},s,n}&=\max_{x\in[-1,1]}|p_{\text{mon},s,n}(x)-x^s|
\le
2^{1-s}\sideset{}{'}\sum^{s}_{j=n+1, n-j\;\text{even}}\binom{s}{(s-j)/2}\le 2e^{-n^2/(2s)},
\end{align} 
which follows from the triangle inequality with $|T_j(x)|\le1$ and the Chernoff bound $P(|D_s|\ge n)\le 2 e^{-n^2/(2s)}$. By replacing each monomial up to degree $t$ in the Taylor expansion of 
$e^{-\beta(x-1)}=e^{-\beta}\sum^\infty_{j=0}\frac{(-\beta)^j}{j!}x^j$ 
with $\tilde p_{\text{mon},s,n}$, they obtain the degree $n$ polynomial  
$\tilde p_n(x)=e^{-\beta}\sum^t_{j=0}\frac{(-\beta)^j}{j!}\tilde p_{\text{mon},j,n}(x)$. 
They show the error of this approximation is split into two terms: 
\begin{align}
\label{Eq:Chebyshev_Exponential_Sachdeva}
\epsilon_{\text{sach},\beta,n}&=\max_{x\in[-1,1]}|\tilde p_n(x)-e^{-\beta(x-1)}|\le \epsilon_1 + \epsilon_2,
\\\nonumber
\epsilon_1 & =2e^{-\beta}\sum^t_{j=n+1}\frac{(\beta/2)^j}{j!}|p_{\text{mon},j,n}-x^j| \le 2e^{-n^2/(2t)},
\quad
\epsilon_2 = 2e^{-\beta}\left|\sum^\infty_{j=t+1}\frac{(\beta/2)^j}{j!}x^j\right|\le 2e^{-\beta - t}.
\end{align}
By choosing $n=\lceil \sqrt{2t \log{(4/\epsilon)}} \rceil$ and $t=\lceil\max\{\beta e^2,\log{(4/\epsilon)}\}\rceil$, $\epsilon_1+\epsilon_2 \le \epsilon$.
\end{proof}

We now demonstrate how this upper bounds $\epsilon_{\text{exp},\beta,n}$.
\begin{lemma}[Chebyshev truncation error of exponential decay $e^{-\beta(x+1)}$]
\label{Lem:Exponential_error}
$\forall\;\beta>0, \epsilon\in(0,1/2]$, the choice
$n=\lceil\sqrt{2\lceil\max[\beta e^2,\log{(2/\epsilon)}]\rceil\log{(4/\epsilon)}}\rceil = \mathcal{O}(\sqrt{(\beta+\log{(1/\epsilon)})\log{(1/\epsilon)}})$, guarantees that 
$\epsilon_{\text{exp},\beta,n} \le \epsilon$.
\end{lemma}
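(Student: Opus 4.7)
The plan is to bound the Chebyshev truncation error $\epsilon_{\text{exp},\beta,n}$ directly by reducing it to a purely numerical tail of modified Bessel functions, reindexing that tail into sums of binomial coefficients, and then invoking the same Chernoff-split argument that underlies Lemma~\ref{Lem:Exponential_error_Sachdeva}. First I would exploit positivity: in the Jacobi--Anger expansion the Chebyshev coefficients $2e^{-\beta}I_j(\beta)$ are all non-negative, so $|f_{\text{exp},\beta}(x)-p_{\text{exp},\beta,n}(x)|\le\sum_{j>n}2e^{-\beta}I_j(\beta)|T_j(-x)|$ attains its supremum on $[-1,1]$ at $x=-1$, where every $T_j(1)=1$. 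Hence $\epsilon_{\text{exp},\beta,n}=2e^{-\beta}\sum_{j>n}I_j(\beta)$ and the rest of the argument is a scalar calculation.

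The core computation is to unfold $I_j(\beta)=\sum_{k\ge 0}(\beta/2)^{j+2k}/(k!(j+k)!)$, swap the order of summation via $m=j+2k$, and apply the identity $1/\bigl(((m-j)/2)!\,((m+j)/2)!\bigr)=\binom{m}{(m-j)/2}/m!$ to obtain
\begin{align*}
\sum_{j>n}I_j(\beta)\;=\;\sum_{m>n}\frac{(\beta/2)^m}{m!}\sum_{l=0}^{\lfloor(m-n-1)/2\rfloor}\binom{m}{l}.
\end{align*}
By the symmetry $\binom{m}{l}=\binom{m}{m-l}$ the inner sum equals the upper binomial tail $\sum_{l\ge(m+n+1)/2}\binom{m}{l}$, which the Hoeffding bound for $\mathrm{Bin}(m,1/2)$ controls by $2^m e^{-n^2/(2m)}$. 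Substituting yields the Sachdeva-type estimate $\epsilon_{\text{exp},\beta,n}\le 2e^{-\beta}\sum_{m>n}\frac{\beta^m}{m!}e^{-n^2/(2m)}$, which is now in exactly the form Sachdeva's proof manipulates.

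The final step mirrors Sachdeva's split at $t=\lceil\max[\beta e^2,\log(2/\epsilon)]\rceil$. For $m\le t$ the factor $e^{-n^2/(2m)}\le e^{-n^2/(2t)}$ comes out and the remaining Poisson-normalized series sums to $1$, so this part contributes at most $2e^{-n^2/(2t)}\le\epsilon/2$ by the choice $n\ge\sqrt{2t\log(4/\epsilon)}$. For $m>t$ the condition $t\ge\beta e^2$ forces $e\beta/m\le e^{-1}$, and the Poisson Chernoff bound $P(\mathrm{Poi}(\beta)\ge m)\le(e\beta/m)^m e^{-\beta}\le e^{-m-\beta}$ shows the tail contribution is at most $2e^{-\beta-t}$. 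Combined with $t\ge\log(2/\epsilon)$ the two terms sum to at most $\epsilon$.

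The main obstacle is constant bookkeeping rather than anything conceptual: the naive split leaves the Poisson-tail term sitting at exactly $\epsilon$ instead of $\epsilon/2$, so one must either tighten the Poisson estimate by a Stirling factor $1/(e\sqrt{2\pi(t+1)})$ or absorb the gap using the surviving $e^{-\beta}\le 1$. Once that is handled the argument is a mechanical reuse of Sachdeva's Chernoff-split strategy; the only genuinely new ingredient is the reindexing identity that converts the Bessel tail $\sum_{j>n}I_j(\beta)$ into a double sum of binomial coefficients amenable to Hoeffding's inequality.
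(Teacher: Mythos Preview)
Your argument is correct, and the constant issue you flag is genuinely minor: since the tail sum runs over $m\ge t+1$ rather than $m\ge t$, the Poisson Chernoff bound already delivers $2e^{-\beta-(t+1)}\le(2/e)\cdot\epsilon/2<\epsilon/2$ without any Stirling refinement, so the two pieces add to at most $\epsilon$.

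The paper takes a different, shorter route. Rather than unfolding $I_j(\beta)$ into binomial sums and rerunning the Hoeffding-plus-Poisson-tail estimate, it argues by comparison: both $\epsilon_{\text{exp},\beta,n}$ and Sachdeva's error $\epsilon_{\text{sach},\beta,n}$ equal the sum of the (all positive) Chebyshev coefficients of $e^{-\beta(x+1)}$ that each approximation omits, and Sachdeva's approximation omits a strict superset---it additionally drops the low-degree Chebyshev pieces of the monomials $x^j$ with $j>t$. Hence $\epsilon_{\text{exp},\beta,n}\le\epsilon_{\text{sach},\beta,n}$ and Lemma~\ref{Lem:Exponential_error_Sachdeva} applies as a black box. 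Your reindexing identity is precisely the explicit combinatorics underneath that inclusion: it shows how the Jacobi--Anger coefficients $e^{-\beta}I_j(\beta)$ regroup the Chebyshev-of-monomial coefficients $(\beta/2)^m\binom{m}{l}/m!$. Your version is more self-contained (no intermediate polynomial $\tilde p_n$ needs to be introduced), while the paper's is a one-line reduction once Lemma~\ref{Lem:Exponential_error_Sachdeva} is on the table.
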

\begin{proof}
This result follows essentially from how the truncating the Jacobi-Anger expansion in Eq.~\ref{Eq:Jacobi-Anger} discards fewer coefficients that are all positive than the procedure of Thm.~\ref{Lem:Exponential_error_Sachdeva}. Hence the maximum truncation error occurs at $x=1$ and is monotonically increasing with the number of coefficients omitted in the truncation. Observe that the first inequality in Eq.~\ref{Eq:Chebyshev_Monomial} is actually an equality $\epsilon_{\text{mon},s,n}=
2^{1-s}\sideset{}{'}\sum^{s}_{j=n+1, n-j\;\text{even}}\binom{s}{(s-j)/2}$. This follows from the same logic as Eq.~\ref{Eq:error_exp} -- all coefficients are positive, thus the maximum error occurs at $x=1$, which simultaneously maximizes all $ T_j(x=1)=1$. Similarly, the first inequality in Eq.~\ref{Eq:Chebyshev_Exponential_Sachdeva} is also actually an equality. Let us express the truncation error of $\epsilon_{\text{sach},\beta,n}$ as a Chebyshev expansion in full
\begin{align}
\epsilon_{\text{sach},\beta,n}=&2e^{-\beta}\max_{x\in[-1,1]}\left|\sum^t_{j=n+1}\frac{(\beta/2)^j}{j!}\sideset{}{'}\sum^j_{k=n+1, j-k\;\text{even}}\binom{j}{(j-k)/2}T_k(x)
\right. \\ \nonumber
&+
\left.
\sum^\infty_{j=t+1}\frac{(\beta/2)^j}{j!}\sideset{}{'}\sum^j_{k=0, j-k\;\text{even}}\binom{j}{(j-k)/2}T_k(x)
\right|.
\end{align}
Note that we have used $(-\beta)^jT_k(-x)=\beta^jT_k(x)$ as all pairs $j-k$ are even. Thus $\epsilon_{\text{sach},\beta,n}$ is maximized at $T_k(x=1)=1$ in the sum above. This can be compared with 
\begin{align}
\epsilon_{\text{exp},\beta,n} &= \max_{x\in[-1,1]}\left|2e^{-\beta}\sum^\infty_{j=n+1} I_j(\beta)T_j(x)\right|
=
\epsilon_{\text{sach},\beta,n} - 2e^{-\beta}\sum^\infty_{j=t+1}\frac{(\beta/2)^j}{j!}\sideset{}{'}\sum^n_{k=0, j-k\;\text{even}}\binom{j}{(j-k)/2}
\\\nonumber
& \le \epsilon_{\text{sach},\beta,n}.
\end{align}
More intuitively, both $\epsilon_{\text{exp},\beta,n}$ and $\epsilon_{\text{sach},\beta,n}$ sum over all coefficients $j > n$ in the Chebyshev expansion, but $\epsilon_{\text{sach},\beta,n}$ in addition sums over some positive coefficients corresponding to $j \le n$.
Thus the upper bound of Lem.~\ref{Lem:Exponential_error_Sachdeva} on $\epsilon_{\text{sach},\beta,n}$ applies to $\epsilon_{\text{exp},\beta,n}$. 
\end{proof}

In the following, we will bound all errors of our polynomial approximations in terms $\epsilon_{\text{exp},\beta,n}$, a partial sum over Bessel functions.

\begin{corollary}[Polynomial approximation to the Gaussian function $e^{-(\gamma x)^2}$]
$\forall\gamma \ge 0, \epsilon\in(0,1/2]$ the even polynomial $ p_{\text{gauss},\gamma,n}$ of even degree $n=\mathcal{O}(\sqrt{(\gamma^2 + \log{(1/\epsilon)})\log{(1/\epsilon)}})$ satisfies
\begin{align}
\label{Eq:p_tilde_gauss}
p_{\text{gauss},\gamma,n}(x)&=  p_{\text{exp},\gamma^2/2,n/2}(2x^2-1) = e^{-\gamma^2/2}\left(I_0(\gamma^2/2)+2\sum^{n/2}_{j=1} I_j(\gamma^2/2) (-1)^{j}T_{2j}(x)\right),
\\ \nonumber
\epsilon_{\text{gauss},\gamma,n}&=\max_{x \in [-1,1]}| p_{\text{gauss},\gamma,n}(x)-e^{-(\gamma x)^2}| = \epsilon_{\text{exp},\gamma^2/2,n/2}\le \epsilon.
\end{align}
\end{corollary}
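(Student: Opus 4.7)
The approach is a direct change of variables that reduces the Gaussian approximation problem to the already-established exponential-decay approximation in Lem.~\ref{Lem:Exponential_error}. The key observation is the algebraic identity $e^{-(\gamma x)^2} = e^{-(\gamma^2/2)((2x^2-1)+1)} = f_{\text{exp},\gamma^2/2}(2x^2-1)$. Thus I would simply \emph{define}
\begin{align}
p_{\text{gauss},\gamma,n}(x) \;\doteq\; p_{\text{exp},\gamma^2/2,\,n/2}\bigl(2x^2-1\bigr),
\end{align}
noting that composing a degree $n/2$ polynomial with the quadratic $2x^2-1$ yields a polynomial in $x$ of even degree exactly $n$ (and even parity since it is a polynomial in $x^2$).

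The Chebyshev-basis expression claimed in the corollary follows from a standard identity. Using $T_j(\cos\theta) = \cos(j\theta)$ with the substitution $x = \cos\phi$, one computes $T_j\bigl(-(2x^2-1)\bigr) = T_j(-\cos(2\phi)) = \cos(j\pi - 2j\phi) = (-1)^j T_{2j}(x)$. Substituting this into the definition $p_{\text{exp},\beta,m}(y) = e^{-\beta}\bigl(I_0(\beta)+2\sum_{j=1}^m I_j(\beta) T_j(-y)\bigr)$ at $y = 2x^2-1$ and $\beta = \gamma^2/2$, $m=n/2$ yields the stated Jacobi--Anger--type formula.

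For the error bound, because the map $x \mapsto 2x^2-1$ sends $[-1,1]$ \emph{into} (in fact onto) $[-1,1]$, we have
\begin{align}
\max_{x\in[-1,1]} \bigl|p_{\text{gauss},\gamma,n}(x) - e^{-(\gamma x)^2}\bigr|
\;=\;
\max_{x\in[-1,1]} \bigl|p_{\text{exp},\gamma^2/2,n/2}(2x^2-1) - f_{\text{exp},\gamma^2/2}(2x^2-1)\bigr|
\;\le\; \epsilon_{\text{exp},\gamma^2/2,\,n/2},
\end{align}
where the last inequality is just the definition of the uniform truncation error on $[-1,1]$. Applying Lem.~\ref{Lem:Exponential_error} with $\beta = \gamma^2/2$ then guarantees $\epsilon_{\text{exp},\gamma^2/2,n/2}\le \epsilon$ whenever $n/2 = \mathcal{O}\bigl(\sqrt{(\gamma^2/2 + \log(1/\epsilon))\log(1/\epsilon)}\bigr)$, which is equivalent to the stated degree $n=\mathcal{O}\bigl(\sqrt{(\gamma^2 + \log(1/\epsilon))\log(1/\epsilon)}\bigr)$.

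There is essentially no obstacle here — the whole argument is a reduction to the previous lemma via a quadratic change of variables, with the only mild verification being the Chebyshev identity $T_j(-(2x^2-1)) = (-1)^j T_{2j}(x)$. The fact that this substitution does not inflate the uniform error is what makes the reduction clean: the image of $[-1,1]$ under $x\mapsto 2x^2-1$ lies in $[-1,1]$, so no new points outside the domain where Lem.~\ref{Lem:Exponential_error} applies are introduced.
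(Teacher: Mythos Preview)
Your proof is correct and follows essentially the same route as the paper: the quadratic change of variables $x'=2x^2-1$ reducing to the exponential-decay case, together with the Chebyshev semigroup identity $T_j(-T_2(x))=(-1)^jT_{2j}(x)$ and the observation that $[-1,1]\mapsto[-1,1]$ under this map. The only minor refinement is that since the map is in fact onto $[-1,1]$, your displayed inequality $\le \epsilon_{\text{exp},\gamma^2/2,n/2}$ is actually an equality, matching the statement of the corollary.
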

\begin{proof}
This follows from Eq.~\ref{Eq:Jacobi-Anger} by a simple change of variables.
Let 
$x'=T_2(x)=2x^2-1$, $\gamma^2 = 2\beta$.
Thus 
$e^{-\beta(x'+1)}=e^{-(\gamma x)^2}$. 
As 
$2x^2-1: [-1,1]\Rightarrow [-1,1]$ maps the domain of $e^{-(\gamma x)^2}$ to that of $f_{\text{exp},\beta}(x)$, the definition Eq.~\ref{Eq:p_tilde_gauss} results. Using the Chebyshev semigroup property $T_{j}(\pm T_{2}(x))=(\pm 1)^{j}T_{2j(x)}$, $p_{\text{gauss},k,n}$ is an even polynomial of degree $n$ and its approximation error is obtained by substitution into Eq.~\ref{Eq:error_exp}.
\end{proof}

A polynomial approximation to the error function follows immediately by integrating $p_{\text{gauss},\gamma,n}$.
\begin{corollary}[Polynomial approximation to the error function $\text{erf}(kx)$]
$\forall k > 0, \epsilon\in(0,\mathcal{O}(1)]$ the odd polynomial $p_{\text{erf},k,n}$ of odd degree $n=\mathcal{O}(\sqrt{(k^2 + \log{(1/\epsilon)})\log{(1/\epsilon)}})$ satisfies
\begin{align}
\label{Eq:p_tilde_erf}
p_{\text{erf},k,n}(x)&= \frac{2 k e^{-k^2/2}}{\sqrt{\pi}}\left(I_0(k^2/2)x+\sum^{(n-1)/2}_{j=1} I_j(k^2/2) (-1)^{j}
\left(\frac{T_{2j+1}(x)}{2j+1}-\frac{T_{2j-1}(x)}{2j-1}\right)
\right),
\\ \nonumber
\epsilon_{\text{erf},k,n}&=\max_{x \in [-1,1]}| p_{\text{erf},k,n}(x)-\text{erf}(kx)| \le \frac{4 k}{\sqrt{\pi}n}\epsilon_{\text{gauss},k,n-1}\le\epsilon.
\end{align}
\end{corollary}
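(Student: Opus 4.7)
The strategy is to obtain $p_{\text{erf},k,n}(x)$ as the antiderivative of the Gaussian approximation from the preceding corollary, exploiting the identity
\begin{equation*}
\text{erf}(kx) = \frac{2k}{\sqrt{\pi}}\int_0^x e^{-(ky)^2}\,dy,
\end{equation*}
so that the natural candidate is $p_{\text{erf},k,n}(x) = \frac{2k}{\sqrt{\pi}}\int_0^x p_{\text{gauss},k,n-1}(y)\,dy$. To verify that this yields the closed form of Eq.~\ref{Eq:p_tilde_erf}, I would apply term-by-term the standard Chebyshev antiderivative identity $\int T_m(y)\,dy = \tfrac{1}{2}\bigl(T_{m+1}(y)/(m+1) - T_{m-1}(y)/(m-1)\bigr)$ for $m\geq 2$, together with $\int T_0 = y = T_1$. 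The fact that odd Chebyshev polynomials vanish at the origin ($T_{2j\pm 1}(0) = 0$) kills the constants of integration and produces exactly the stated expression. Parity and degree then follow directly: each $T_{2j}$ integrates to a combination of $T_{2j\pm 1}$ of degree at most $n$, so $p_{\text{erf},k,n}$ is odd and of degree $n$.

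For the uniform error, write
\begin{equation*}
p_{\text{erf},k,n}(x) - \text{erf}(kx) = \frac{2k}{\sqrt{\pi}}\int_0^x \bigl(p_{\text{gauss},k,n-1}(y) - e^{-(ky)^2}\bigr)\,dy,
\end{equation*}
and use that, by the Jacobi--Anger expansion underlying Eq.~\ref{Eq:p_tilde_gauss}, the integrand equals the tail $-2e^{-k^2/2}\sum_{j\geq (n+1)/2} I_j(k^2/2)(-1)^j T_{2j}(y)$. Integrating term-by-term again produces odd Chebyshev polynomials, now weighted by factors $\frac{1}{2j+1}$ and $\frac{1}{2j-1}$. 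Because $j\geq (n+1)/2$ forces $2j-1\geq n$, both of these factors are at most $1/n$, and the triangle inequality with $|T_m(x)|\leq 1$ collapses the sum to a quantity of the form $\frac{C}{n}\cdot e^{-k^2/2}\sum_{j\geq (n+1)/2} I_j(k^2/2)$. Recognising the remaining Bessel-series tail as $\tfrac{1}{2}\epsilon_{\text{gauss},k,n-1}$ via Eq.~\ref{Eq:error_exp} applied at $\beta=k^2/2$ (the argument used for $\epsilon_{\text{exp},\beta,n}$) yields the claimed bound $|p_{\text{erf},k,n}(x)-\text{erf}(kx)|\leq \tfrac{4k}{\sqrt{\pi}n}\epsilon_{\text{gauss},k,n-1}$.

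The degree scaling then follows by demanding $\epsilon_{\text{gauss},k,n-1}$ a constant factor below $\epsilon$ and invoking the Gaussian corollary to get $n-1 = \mathcal{O}(\sqrt{(k^2+\log(1/\epsilon))\log(1/\epsilon)})$; the prefactor $k/n$ only contributes a logarithm absorbed into the $\mathcal{O}$. The genuinely delicate point is the $1/n$ suppression obtained from integration, which is what preserves the $\sqrt{\phantom{x}}$ scaling inherited from the Gaussian: the naive estimate $|p_{\text{erf},k,n}(x)-\text{erf}(kx)|\leq \tfrac{2k}{\sqrt{\pi}}|x|\,\epsilon_{\text{gauss},k,n-1}$ would only give a prefactor $\mathcal{O}(k)$ and force a worse degree. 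Obtaining this $1/n$ cleanly is the only real obstacle; everything else is bookkeeping on Chebyshev antiderivatives together with the Bessel-tail bound already in Lem.~\ref{Lem:Exponential_error}.
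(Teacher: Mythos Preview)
Your proposal is correct and matches the paper's own proof essentially line for line: define $p_{\text{erf},k,n}$ as $\tfrac{2k}{\sqrt{\pi}}\int_0^x p_{\text{gauss},k,n-1}$, integrate the Chebyshev expansion term-by-term via $\int T_{2j}=\tfrac{1}{2}\bigl(T_{2j+1}/(2j+1)-T_{2j-1}/(2j-1)\bigr)$, then bound the tail by pulling out the factor $\tfrac{1}{2j+1}+\tfrac{1}{2j-1}\le \tfrac{2}{n}$ to recover $\tfrac{4k}{\sqrt{\pi}n}\epsilon_{\text{gauss},k,n-1}$, and finally observe $k/n=\mathcal{O}(1)$ so the degree scaling is inherited from the Gaussian corollary. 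Your remark that the $1/n$ suppression is the crucial point (versus the naive $\mathcal{O}(k)$ bound from $|x|\,\epsilon_{\text{gauss}}$) is exactly the content the paper relies on.
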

\begin{proof}
From the definition of the error function $\text{erf}(kx)=\frac{2}{\pi}\int_0^{kx}e^{-x^2}=\frac{2k}{\sqrt{\pi}}\int_0^x e^{-(kx)^2}dx$,
the polynomial $ p_{\text{erf},k,n+1}(x)=k\int_0^x p_{\text{gauss},k,n}(x)dx$ follows directly from integrating Eq.~\ref{Eq:p_tilde_gauss} term-by-term using the identity $\int_0^x T_{j}(x) dx
=\frac{1}{2}\left(\frac{T_{j+1}(x)}{j+1}-\frac{T_{j-1}(x)}{j-1}\right)$.
The error of the remaining terms is bounded though
\begin{align}
 \epsilon_{\text{erf},k,n}
&\le 
\frac{2 k e^{-k^2/2}}{\sqrt{\pi}}\left|\sum^{\infty}_{j=(n+1)/2} I_j(k^2/2) (-1)^{j}
\left(\frac{T_{2j+1}(x)}{2j+1}-\frac{T_{jn-1}(x)}{2j-1}
\right)\right|
\\\nonumber
&\le
\frac{2 k e^{-k^2/2}}{\sqrt{\pi}}\sum^{\infty}_{j=(n+1)/2} |I_j(k^2/2)|
\left(\frac{1}{2j+1}+\frac{1}{2j-1}\right)
\\\nonumber
&\le
\frac{4 k e^{-k^2/2}}{\sqrt{\pi}n}\sum^{\infty}_{j=(n+1)/2} |I_j(k^2/2)|
=\frac{4 k}{\sqrt{\pi}n}\epsilon_{\text{gauss},k,n-1}.
\end{align}
The error of $\epsilon_{\text{erf},k,n}\le \frac{4 k}{\sqrt{\pi}n}\epsilon_{\text{exp},k^2/2,(n-1)/2}$. However, $n=\Omega(k \log^{1/2}{(1/\epsilon)})$. Thus $\frac{k}{n}=\mathcal{O}(\log^{-1/2}{(1/\epsilon)})=\mathcal{O}(1)$ and does not make the scaling any worse.
\end{proof}

A polynomial approximation to the shifted error function follows by a change of variables.
\begin{corollary}[Polynomial approximation to the shifted error function $\text{erf}(k(x-\delta))$]
\label{Cor:Shifted_Sgn}
$\forall k > 0, \delta \in[-1,1], \epsilon\in(0,\mathcal{O}(1)]$ the polynomial $p_{\text{erf},k,\delta,n}(x)= p_{\text{erf},2k,n}((x-\delta)/2)$ of odd degree $n=\mathcal{O}(\sqrt{(k^2 + \log{(1/\epsilon)})\log{(1/\epsilon)}}$ satisfies
\begin{align}
\label{Eq:p_tilde_erf_shifted}
 \epsilon_{\text{erf},k,\delta,n}&=\max_{x \in [-1,1]}|p_{\text{erf},k,\delta,n}(x)-\text{erf}(k(x-\delta))| \le
\epsilon_{\text{erf},2k,n}\le\epsilon.
\end{align}
\end{corollary}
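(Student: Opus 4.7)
The plan is to reduce this statement to the previous corollary on $\text{erf}(kx)$ by means of an affine change of variables that normalizes the shift $\delta$ away, at the cost of at most a factor of $2$ in the effective scale $k$.

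First, I would perform the substitution $y = (x-\delta)/2$. Observe that for $x \in [-1,1]$ and $\delta \in [-1,1]$, one has $x - \delta \in [-2,2]$ and hence $y \in [-1,1]$, which is precisely the domain in which $p_{\text{erf},2k,n}$ is guaranteed to approximate $\text{erf}(2k y)$ uniformly. Moreover the identity $\text{erf}(2k y) = \text{erf}(k(x-\delta))$ rewrites the target function exactly. Defining $p_{\text{erf},k,\delta,n}(x) = p_{\text{erf},2k,n}((x-\delta)/2)$, the right-hand side is a polynomial in $x$ of the same degree $n$ as $p_{\text{erf},2k,n}$, since affine substitution preserves polynomial degree; note that it is generally not odd once $\delta \neq 0$.

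Second, the error bound follows by pulling the substitution through the supremum. The set $\{(x-\delta)/2 : x \in [-1,1]\} = [(-1-\delta)/2,(1-\delta)/2]$ is contained in $[-1,1]$, so
\begin{align*}
\epsilon_{\text{erf},k,\delta,n}
&= \max_{x\in[-1,1]}\bigl|p_{\text{erf},2k,n}((x-\delta)/2)-\text{erf}(2k(x-\delta)/2)\bigr|\\
&\le \max_{y\in[-1,1]}\bigl|p_{\text{erf},2k,n}(y)-\text{erf}(2ky)\bigr|
\;=\;\epsilon_{\text{erf},2k,n}\;\le\;\epsilon.
\end{align*}

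Finally, the degree bound is just the previous corollary applied with $k \mapsto 2k$: this gives $n = \mathcal{O}\bigl(\sqrt{((2k)^2+\log(1/\epsilon))\log(1/\epsilon)}\bigr)$, which is asymptotically the same as $\mathcal{O}\bigl(\sqrt{(k^2+\log(1/\epsilon))\log(1/\epsilon)}\bigr)$ since the constant $4$ is absorbed. There is no real obstacle here; the only point worth tracking is the factor of $2$ in the dilation, which is necessary exactly because the worst-case shift $|\delta|=1$ would otherwise push the argument outside $[-1,1]$.
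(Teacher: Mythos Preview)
Your proposal is correct and matches the paper's own proof, which is a one-line remark that $\text{erf}(k(x-\delta))=\text{erf}\bigl(2k\cdot\tfrac{x-\delta}{2}\bigr)$ together with the observation that the substitution keeps the argument in $[-1,1]$. You have simply spelled out the details the paper leaves implicit, and your side remark that the resulting polynomial is no longer odd (only of odd degree) is accurate and consistent with the statement.
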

\begin{proof}
This follows trivially from  $\text{erf}(k(x-\delta))=\text{erf}(2k\frac{x-\delta}{2})$. Note that we have doubled the degree of our polynomials in order to double the width of the domain, which we exploit to allows translations.
\end{proof}
This polynomial approximation of the shifted error function is the basic ingredient we use to construct more complicated functions $\text{sgn}$ and $\text{rect}$ through Lems.\ref{Lem:Entire_Sgn},\ref{Lem:Entire_Rect}.
\begin{corollary}[Polynomial approximation to the sign function $\text{sgn}(x-\delta)$]
$\forall\;\kappa > 0, \delta \in[-1,1], \epsilon\in(0,\mathcal{O}(1)]$ the polynomial $p_{\text{sgn},\kappa,\delta,n}(x)=p_{\text{erf},k,\delta,n}(x)$ of odd degree $n=\mathcal{O}(\frac{1}{\kappa}\log{(1/\epsilon)})$, where $k = \frac{\sqrt{2}}{\kappa}\log^{1/2}{(\frac{2}{\pi\epsilon_1^2})}$, satisfies
\begin{align}
\label{Eq:p_tilde_sgn_shifted}
 \epsilon_{\text{sgn},\kappa,\delta,n}&=\max_{x \in [-1,\delta-\kappa/2]\cup [\delta+\kappa/2,1]}|p_{\text{erf},k,\delta,n}(x)-\text{sgn}(x-\delta)| \le
 \epsilon_{\text{erf},k,\delta,n}+\epsilon_1 
\le 2\epsilon_{\text{erf},k,\delta,n}\le \epsilon.
\end{align}
\end{corollary}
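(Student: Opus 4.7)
The plan is to prove this by a single triangle inequality against the intermediate smooth function $\text{erf}(k(x-\delta))$, where $k$ is the scale parameter from Lem.~\ref{Lem:Entire_Sgn}. Specifically, for every $x$ in the ``good region'' $[-1,\delta-\kappa/2]\cup[\delta+\kappa/2,1]$,
\begin{align}
\bigl|p_{\text{erf},k,\delta,n}(x)-\text{sgn}(x-\delta)\bigr|
\le \bigl|p_{\text{erf},k,\delta,n}(x)-\text{erf}(k(x-\delta))\bigr|
+ \bigl|\text{erf}(k(x-\delta))-\text{sgn}(x-\delta)\bigr|.
\end{align}
The first term is exactly what Cor.~\ref{Cor:Shifted_Sgn} controls, uniformly for $x\in[-1,1]$, by $\epsilon_{\text{erf},k,\delta,n}$. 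The second term is what Lem.~\ref{Lem:Entire_Sgn} controls after the translation $x\mapsto x-\delta$ (which just relocates the transition region to $|x-\delta|\le\kappa/2$), giving the bound $\epsilon_1$ on the good region.

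The parameter coupling is then the main thing to set up. I would fix $k=\tfrac{\sqrt{2}}{\kappa}\log^{1/2}(\tfrac{2}{\pi\epsilon_1^2})$ as dictated by Lem.~\ref{Lem:Entire_Sgn}, and then pick $\epsilon_1$ so that the smoothing error is already at most the polynomial-truncation error, i.e.\ $\epsilon_1\le \epsilon_{\text{erf},k,\delta,n}$; this immediately yields the chain $\epsilon_{\text{sgn},\kappa,\delta,n}\le \epsilon_{\text{erf},k,\delta,n}+\epsilon_1\le 2\epsilon_{\text{erf},k,\delta,n}$, and demanding $\epsilon_{\text{erf},k,\delta,n}\le \epsilon/2$ finishes the error bound with the claimed constant.

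For the degree, I would simply substitute this $k$ into the estimate of Cor.~\ref{Cor:Shifted_Sgn}, which gives
\begin{align}
n=\mathcal{O}\!\left(\sqrt{(k^2+\log(1/\epsilon))\log(1/\epsilon)}\right)
=\mathcal{O}\!\left(\sqrt{\left(\kappa^{-2}\log(1/\epsilon)+\log(1/\epsilon)\right)\log(1/\epsilon)}\right)
=\mathcal{O}\!\left(\kappa^{-1}\log(1/\epsilon)\right),
\end{align}
where in the last step the $\kappa^{-2}\log(1/\epsilon)$ term inside the square root dominates (the regime of interest being $\kappa\le 1$, otherwise no smoothing is needed). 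Parity and the domain rescaling are inherited from Cor.~\ref{Cor:Shifted_Sgn} for free, since the definition $p_{\text{sgn},\kappa,\delta,n}=p_{\text{erf},k,\delta,n}$ performs no additional manipulation.

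I do not anticipate a genuinely hard step here: the content of the corollary is really just a bookkeeping exercise combining the ``analytic'' Gaussian-based approximation error of Lem.~\ref{Lem:Entire_Sgn} with the ``Chebyshev truncation'' degree bound of Cor.~\ref{Cor:Shifted_Sgn}. The only mild subtlety is being careful that the restriction of the sup-norm to the good region on the $\text{sgn}$ side coexists with a sup-norm over all of $[-1,1]$ on the polynomial-truncation side; the triangle inequality above handles this cleanly because $\epsilon_{\text{erf},k,\delta,n}$ is itself a bound on $[-1,1]$, so restricting to a subset only makes it tighter.
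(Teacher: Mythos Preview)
Your proposal is correct and follows essentially the same approach as the paper: the paper's proof (just two sentences) invokes Lem.~\ref{Lem:Entire_Sgn} for the value of $k$ and then sets $\epsilon_1=\epsilon_{\text{erf},k,\delta,n}$, noting this is an implicit equation that doubles the error---exactly the triangle-inequality split and parameter coupling you spell out. Your added detail on the degree computation (substituting $k^2=\Theta(\kappa^{-2}\log(1/\epsilon))$ into Cor.~\ref{Cor:Shifted_Sgn}) is the natural unpacking the paper leaves implicit.
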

\begin{proof}
The equation for $k$ comes from Lem.~\ref{Lem:Entire_Sgn}. We then choose $\epsilon_1=\epsilon_{\text{erf},k,\delta,n}$ which defines an implicit equation for $\epsilon_1$ and doubles the error.
\end{proof}
\begin{corollary}[Polynomial approximation to the rectangular function $\text{rect}(x/w)$]
$\forall\;\kappa \in (0,2], w \in [0,2-\kappa], \epsilon\in(0,\mathcal{O}(1)]$, the even polynomial $p_{\text{rect},w,\kappa,n}(x)=\frac{1}{2}\left( p_{\text{sgn},\kappa,(w+\kappa)/2,n+1}(x)+ p_{\text{sgn},\kappa,(w+\kappa)/2,n+1}(-x)\right)$ of even degree $n\mathcal{O}(\frac{1}{\kappa}\log{(1/\epsilon)})$ satisfies
\begin{align}
\label{Eq:p_tilde_rect}
\epsilon_{\text{rect},w,\kappa,n}&
=\max_{|x| \in [0,w/2]\cup[w/2+\kappa,1]} |p_{\text{rect},w,\kappa,n}(x)-\text{rect}(x/w)| \le \epsilon_{\text{sgn},\kappa,\delta,n}\le \epsilon.
\end{align}
\end{corollary}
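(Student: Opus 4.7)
The plan is to realize $p_{\text{rect},w,\kappa,n}$ as the symmetrization (about the origin) of the shifted-sign polynomial from the previous corollary, exploiting the identity $\text{rect}(x/w)=\tfrac12\bigl(\text{sgn}(\delta+x)+\text{sgn}(\delta-x)\bigr)$ valid on the evaluation domain $|x|\in[0,w/2]\cup[w/2+\kappa,1]$ for any $\delta\in(w/2,w/2+\kappa)$. First I would verify this identity directly: for $|x|\le w/2$ one has $\delta\pm x>0$, so both sign terms equal $+1$ and the average is $1$; for $|x|\ge w/2+\kappa$ exactly one of $\delta\pm x$ is negative and the other positive, so the sum vanishes. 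The distinguished choice $\delta=(w+\kappa)/2$ places the two transition windows symmetrically inside the excluded gap $(w/2,w/2+\kappa)$, guaranteeing $|\delta\pm x|\ge\kappa/2$ on the safe domain---precisely the region where the sign-polynomial approximation error of the preceding corollary is controlled.

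Next, I would re-express the defining combination so that $p_{\text{sgn},\kappa,\delta,n+1}(x)$ and $p_{\text{sgn},\kappa,\delta,n+1}(-x)$ correspond to the two sign terms above (up to the shift-sign convention used in $p_{\text{sgn}}$, which is simply absorbed by taking $\delta\mapsto-\delta$ or using $q(-x)$ in place of $q(x)$). The polynomial $p_{\text{rect},w,\kappa,n}(x)=\tfrac12\bigl(p_{\text{sgn},\kappa,\delta,n+1}(x)+p_{\text{sgn},\kappa,\delta,n+1}(-x)\bigr)$ is even in $x$ by construction; its degree equals $n$, since the underlying sign polynomial has odd degree $n+1$ and the leading odd-power monomial is annihilated by symmetrization. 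The scaling $n=\mathcal{O}(\kappa^{-1}\log(1/\epsilon))$ is then inherited directly from the sign-polynomial corollary.

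Finally, the error bound falls out of the triangle inequality: for $x$ in the safe domain,
\[
\bigl|p_{\text{rect},w,\kappa,n}(x)-\text{rect}(x/w)\bigr|\le\tfrac12\bigl|p_{\text{sgn},\kappa,\delta,n+1}(x)-\text{sgn}(\delta+x)\bigr|+\tfrac12\bigl|p_{\text{sgn},\kappa,\delta,n+1}(-x)-\text{sgn}(\delta-x)\bigr|\le\epsilon_{\text{sgn},\kappa,\delta,n+1}\le\epsilon,
\]
since both sign-function arguments lie in the valid region of the sign approximation and averaging a pair of worst-case errors keeps the combined error at the same level rather than doubling it. The main bookkeeping obstacle---such as it is---is the geometric verification that the regions of validity of the two half-applications intersect precisely in $|x|\in[0,w/2]\cup[w/2+\kappa,1]$; this amounts to observing that $\delta-w/2=\kappa/2$ and $(w/2+\kappa)-\delta=\kappa/2$, so neither sign argument ever falls inside its own $\kappa$-wide transition window on the safe domain.
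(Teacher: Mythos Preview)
Your proposal is correct and follows essentially the same route as the paper, which disposes of the corollary in a single sentence by invoking Lem.~\ref{Lem:Entire_Rect} (the identity $\text{rect}(x/w)=\tfrac12(\text{sgn}(x+\delta)+\text{sgn}(-x+\delta))$ with $\delta=(w+\kappa)/2$). You have simply unpacked the steps the paper leaves implicit: the domain verification that $|\delta\pm x|\ge\kappa/2$ on the safe region, the parity/degree bookkeeping, and the triangle-inequality error bound.
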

\begin{proof}
This follows from the construction of a rectangular function with two sign functions in Lem.~\ref{Lem:Entire_Rect}.
\end{proof}
\begin{corollary}[Polynomial approximation to the truncated linear function $f_{\text{lin},\Gamma}(x)$]
\label{Lem:Polynomial_Truncated_Linear}
$\forall\;\Gamma \in ( 0,1/2],\epsilon\in(0,\mathcal{O}(\Gamma)]$, the odd polynomial $p_{\text{lin},\Gamma,n}(x)=\frac{x}{2\Gamma}p_{\text{rect},2\Gamma,2\Gamma,n-1}(x)$ of odd $n=\mathcal{O}(\frac{1}{\Gamma}\log{(1/\epsilon)})$ satisfies
\begin{align}
\label{Eq:p_tilde_trunc_linear}
\epsilon_{\text{lin},\Gamma,n}&
=\max_{|x| \in [0,\Gamma]} \frac{2\Gamma}{|x|}\left|p_{\text{lin},\Gamma,n}(x)-\frac{x}{2\Gamma}\right| \le \epsilon_{\text{rect},2\Gamma,2\Gamma,n-1}\le \epsilon.
\end{align}
\end{corollary}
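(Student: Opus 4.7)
The plan is to reduce this claim to the rectangular-function approximation established in the preceding corollary. The key algebraic observation is
\[
p_{\text{lin},\Gamma,n}(x) - \frac{x}{2\Gamma} \;=\; \frac{x}{2\Gamma}\bigl(p_{\text{rect},2\Gamma,2\Gamma,n-1}(x) - 1\bigr),
\]
so multiplying both sides by $2\Gamma/|x|$ cancels the problematic $1/|x|$ factor and yields
\[
\frac{2\Gamma}{|x|}\Bigl|p_{\text{lin},\Gamma,n}(x)-\tfrac{x}{2\Gamma}\Bigr| \;=\; \bigl|p_{\text{rect},2\Gamma,2\Gamma,n-1}(x) - 1\bigr|.
\]
The factor $|x|/(2\Gamma)$ that the multiplication by $x/(2\Gamma)$ introduces in front of $p_{\text{rect}}-1$ is exactly what a plain uniform approximation of a clipped linear function could never absorb near the origin; this factorization is precisely the reason one defines $p_{\text{lin},\Gamma,n}$ as a product with $x/(2\Gamma)$ rather than by approximating $f_{\text{lin},\Gamma}$ directly.

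Next, I would invoke the rect corollary with $w=\kappa=2\Gamma$. The hypothesis $\Gamma\in(0,1/2]$ ensures $\kappa\in(0,2]$ and $w\le 2-\kappa$, so the parameter constraints of that corollary are met. On the domain $|x|\in[0,\Gamma]=[0,w/2]$, which is exactly one of the two subintervals on which the rect corollary provides a uniform error bound, we have $\mathrm{rect}(x/(2\Gamma))=1$ and hence
\[
\bigl|p_{\text{rect},2\Gamma,2\Gamma,n-1}(x)-1\bigr|\;\le\;\epsilon_{\text{rect},2\Gamma,2\Gamma,n-1}.
\]
The same corollary then provides $\epsilon_{\text{rect},2\Gamma,2\Gamma,n-1}\le\epsilon$ at an even degree $n-1=\mathcal{O}\!\left(\frac{1}{2\Gamma}\log(1/\epsilon)\right)$, which yields both the error bound $\epsilon_{\text{lin},\Gamma,n}\le\epsilon_{\text{rect},2\Gamma,2\Gamma,n-1}\le\epsilon$ and the degree estimate $n=\mathcal{O}\!\left(\frac{1}{\Gamma}\log(1/\epsilon)\right)$ claimed in the statement.

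A parity check closes the argument: $p_{\text{rect},2\Gamma,2\Gamma,n-1}$ is an even polynomial of even degree $n-1$, so multiplication by the odd monomial $x/(2\Gamma)$ gives an odd polynomial of odd degree $n$, as required. The only real technicality is the endpoint $|x|=\Gamma$, where the paper's rectangular function is defined to equal $1/2$ rather than $1$, whereas the linear-approximation argument needs $p_{\text{rect}}\approx 1$ there. This is not a genuine obstacle: $p_{\text{rect}}$ is continuous, and the inequality $|p_{\text{rect}}(x)-1|\le\epsilon_{\text{rect}}$ holds on the open interval $|x|\in[0,\Gamma)$, so it extends to the closed interval by continuity (equivalently, one may note that the underlying entire approximant $f_{\text{rect},2\Gamma,2\Gamma,\epsilon}(\pm\Gamma)$ tends to $1$ by construction as a smoothed sum of two error functions). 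With this minor detail dispatched, the corollary follows immediately from a single application of the rect bound.
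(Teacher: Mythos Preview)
Your proof is correct and follows essentially the same route as the paper: factor out $x/(2\Gamma)$, reduce to the rect corollary on $|x|\in[0,w/2]$, and read off the degree and parity. The only addition in the paper's proof is a remark explaining the hypothesis $\epsilon\le\mathcal{O}(\Gamma)$: on the outer region $|x|\in[3\Gamma,1]$ the factor $x/(2\Gamma)$ can magnify the rect error by up to $1/(2\Gamma)$, so to keep $|p_{\text{lin},\Gamma,n}|\le 1$ on all of $[-1,1]$ (needed downstream for quantum signal processing) one must take $\epsilon_{\text{rect}}\le 2\Gamma$---this is not required for the inequality actually stated in the corollary, but it is why that hypothesis is present.
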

\begin{proof}
This follows from multiplying a rectangular function with a linear function in Lem.~\ref{Lem:Entire_Linear}. One subtlety arises here: The error of $p_{\text{lin},\Gamma,n}$ is bounded by $\epsilon_{\text{rect},2\Gamma,2\Gamma,n-1}$ in the domain $|x|\in[3\Gamma,1]$. Thus  multiplying by $\frac{x}{2\Gamma}$ increases this error to at most $\frac{\epsilon_{\text{rect},2\Gamma,2\Gamma,n-1}}{2\Gamma}$. However, the quantum signal processing conditions in Thm.~\ref{Lem:AchievableD} require all polynomials to be bounded by $1$. This implicitly constrains us to choose $n$ such that $\epsilon_{\text{rect},2\Gamma,2\Gamma,n-1}\le 2 \Gamma$ is also satisfied.
\end{proof}

In all the above cases, the entire functions that are being approximated are bounded by $1$. When the approximation error is $\epsilon$, the resulting polynomial is then bounded by $1+\epsilon$. In such an event, we simply rescale these polynomials by a factor $\frac{1}{1+\epsilon}$. At worst, this only doubles the error of the approximation. We also emphasize that our proposed sequence of polynomial transformations serve primarily to prove their asymptotic scaling. In practice, close-to-optimal constant factors in the degree of these polynomials can be obtained by a direct Chebyshev truncation of the entire functions.

\section{Polynomials for Low-Energy Uniform Spectral Amplification}
\label{Sec:Polynomials_Low_energy}
The proof of Thm.~\ref{Thm:Ham_Encoding_Uniform_Amplification} requires a polynomial approximation $p_{\text{gap},\Delta,n}(x)$ Lem.~\ref{Lem.Polynomial_gapped_linear} to the truncated linear function
\begin{align}
\label{Eq:Linear_target_function}
f_{\text{gap},\Delta}(x)=
\begin{cases}
\frac{x+1-\Delta}{\Delta}, & x \in [-1, -1+\Delta], \\
\in[-1,1], & \text{otherwise}.
\end{cases}
\end{align}
Our strategy is to construct an entire function $f_{\text{gap},\Delta,\epsilon}$ that approximates $f_{\text{gap},\Delta}$ with error $\epsilon$ over the domain of interest. Entire functions are desirable as they are analytic on the entire complex plane. This implies that truncating their expansion $f_{\text{gap},\Delta,\epsilon}(x)=\sum_{j=0}^\infty a_j T_j(x)$ in the Chebyshev basis produces polynomials with a uniform approximation error that scales almost optimally with the degree $n$~\cite{Trefethen2013approximation}. We build $f_{\text{gap},\Delta,\epsilon}$ by using the entire approximation to the sign function $\text{sgn}(x)$ in Lem.~\ref{Lem:Entire_Sgn} of Appendix~\ref{Sec:Polynomials_Amplitude_Multiplication} and some intermediate results on the error function $\text{erf}(x)=\frac{2}{\sqrt{\pi}}\int_0^x e^{-y^2}dy$.
\begin{lemma}[Entire approximation to the gapped linear function $f_{\text{gap},\Delta}(x)$]
\label{Lem.Entire_gapped_lin}
$\forall\;\Delta\in[0,1/2], x\in[-1,\infty], \epsilon\in(0,\sqrt{\frac{1}{2e\pi}}]$. Then the function $f_{\text{gap},\Delta,\epsilon}(x)$ satisfies 
\begin{align}
f_{\text{gap},\Delta,\epsilon}(x)&=\frac{x+1-\Delta}{\Delta}\frac{1-f_{\text{sgn},\Delta,2\epsilon}(x+1-3\Delta/2)}{2},
\\\nonumber
\epsilon &\ge \max_{ x \in [-1,-1+\Delta]} \left|f_{\text{gap},\Delta,\epsilon}(x)-\frac{x+1-\Delta}{\Delta}\right|,
\\\nonumber
0&\le \max_{ x \in [-1+\Delta,\infty]}f_{\text{gap},\Delta,\epsilon}(x)\le 1,
\\\nonumber
\epsilon/10 &\ge \max_{ x \in [1-\Delta,1]} |f_{\text{gap},\Delta,\epsilon}(x)|.
\end{align}
\end{lemma}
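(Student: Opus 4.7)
The plan is to exploit the product structure $f_{\text{gap},\Delta,\epsilon}(x) = L(x)\cdot S(x)$, where $L(x) = (x+1-\Delta)/\Delta$ is a linear ramp vanishing at $x = -1+\Delta$ and reaching unity at $x = -1+2\Delta$, and $S(x) = \frac{1}{2}\text{erfc}(k(x+1-3\Delta/2))$ is a smoothed step of transition width $\Delta$ centered at $x = -1+3\Delta/2$. Here $k = \frac{\sqrt{2}}{\Delta}\log^{1/2}(\frac{1}{2\pi\epsilon^2})$ is inherited from Lem.~\ref{Lem:Entire_Sgn} used with error parameter $2\epsilon$, and the asymmetric offset by $3\Delta/2$ (rather than $\Delta$) is chosen precisely so that $S \approx 1$ on all of $[-1, -1+\Delta]$ where we want the linear approximation, while decaying Gaussian-fast beyond $x \gtrsim -1+2\Delta$ so as to tame the unbounded growth of $L$.

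For the first bound, I will restrict to $x \in [-1, -1+\Delta]$ and use that the shifted argument $y := x+1-3\Delta/2$ lies in $[-3\Delta/2, -\Delta/2]$, so $|y| \ge \Delta/2$. Lem.~\ref{Lem:Entire_Sgn} with $\kappa = \Delta$ then delivers $|f_{\text{sgn},\Delta,2\epsilon}(y) + 1| \le 2\epsilon$, hence $|S(x) - 1| \le \epsilon$. Since $|L(x)| \le 1$ on this window, the product error $|L(x)(S(x)-1)|$ is bounded by $\epsilon$ as required.

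For the second bound, non-negativity on $[-1+\Delta, \infty]$ is immediate from $L, S \ge 0$. To bound $f_{\text{gap},\Delta,\epsilon}$ above by $1$, I will split the half-line into $[-1+\Delta, -1+2\Delta]$, where both factors lie in $[0,1]$ so the product does too, and $[-1+2\Delta, \infty]$, where $y \ge \Delta/2$ so the erfc tail bound $\text{erfc}(ky) \le \frac{1}{ky\sqrt{\pi}}e^{-(ky)^2}$ applies. On the latter interval the product is controlled by $\frac{L(x)}{2ky\sqrt{\pi}}e^{-(ky)^2}$; since $L(x)/(ky) \le 2/(k\Delta)$ and $k\Delta \ge \sqrt{2}$ under the hypothesis $\epsilon \le \sqrt{1/(2e\pi)}$, the prefactor is uniformly small and the Gaussian factor is at most $1$, so the supremum is comfortably below $1$. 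A brief stationary-point check on $L\cdot S$ confirms the global bound.

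The third bound is where the main technical effort concentrates, since on $[1-\Delta, 1]$ the factor $L$ can be as large as $2/\Delta$ and must nonetheless be suppressed to $\epsilon/10$. The crucial observation is that the shift by $3\Delta/2$ places the interval $[1-\Delta, 1]$ deep into the decay tail of $S$: the shifted argument satisfies $y \ge 2 - 5\Delta/2 \ge 3/4$ for $\Delta \le 1/2$, so $(ky)^2 \ge \frac{9}{8\Delta^2}\log(\frac{1}{2\pi\epsilon^2}) \ge \frac{9}{2}\log(\frac{1}{2\pi\epsilon^2})$. The erfc tail bound then yields a Gaussian factor of order $\epsilon^{9}$, which easily absorbs both the $2/\Delta$ prefactor and the $1/10$ safety margin uniformly in $\Delta \in (0, 1/2]$. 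The main obstacle is keeping the constants clean; if the derived bound falls marginally short for borderline $\Delta$ or $\epsilon$ within the admissible range, one tightens by invoking the sharper erfc bound $\text{erfc}(z) \le e^{-z^2}$ directly, or by slightly increasing the shift from $3\Delta/2$ to, say, $7\Delta/4$ at no cost to the first two bounds.
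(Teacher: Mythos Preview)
Your approach is essentially the paper's: write $f_{\text{gap},\Delta,\epsilon}=L\cdot S$ with $L(x)=(x+1-\Delta)/\Delta$ and $S(x)=\tfrac12\operatorname{erfc}(k(x+1-3\Delta/2))$, then split into regions and invoke Lem.~\ref{Lem:Entire_Sgn}. Your first bound is exactly the paper's argument. For the second bound you split at $-1+2\Delta$ whereas the paper splits at $-1+3\Delta/2$ and uses the cruder $\operatorname{erfc}(z)\le e^{-z^2}$ followed by a stationary-point computation; both work, and your route via $L/(ky)\le 2/(k\Delta)\le\sqrt{2}$ is a bit cleaner.

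There is a genuine gap in your third bound. You correctly compute $(ky)^2\ge \tfrac{9}{8\Delta^2}\log\!\bigl(\tfrac{1}{2\pi\epsilon^2}\bigr)$, but then you discard the $\Delta$-dependence by passing to the weaker $(ky)^2\ge\tfrac{9}{2}\log\!\bigl(\tfrac{1}{2\pi\epsilon^2}\bigr)$ and claim the resulting $\mathcal{O}(\epsilon^9)$ Gaussian factor ``easily absorbs the $2/\Delta$ prefactor uniformly in $\Delta\in(0,1/2]$.'' It does not: $\epsilon^9/\Delta$ is unbounded as $\Delta\to 0$. The repair is immediate using what you already wrote: retain the $\Delta$-dependent exponent, so that
\[
|f_{\text{gap},\Delta,\epsilon}(x)|\le \frac{1}{\Delta}\,\exp\!\Bigl(-\tfrac{9}{8\Delta^2}\log\tfrac{1}{2\pi\epsilon^2}\Bigr),
\]
and observe this is monotone increasing in $\Delta$ on $(0,1/2]$ (the derivative of its logarithm is $-1/\Delta + \tfrac{9}{4\Delta^3}\log\tfrac{1}{2\pi\epsilon^2}>0$), so the worst case is $\Delta=1/2$, where the bound becomes $2(2\pi\epsilon^2)^{9/2}\le \epsilon/10$ for $\epsilon\le\sqrt{1/(2e\pi)}$. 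This is precisely the paper's computation. Alternatively, keep the $1/(ky\sqrt{\pi})$ factor from the sharper erfc bound you used in the second part: since $1/k=\Delta/\sqrt{2\log(1/(2\pi\epsilon^2))}$, this factor supplies the $\Delta$ that cancels the $2/\Delta$ from $L$, and the remaining uniform $\epsilon^9$ then suffices. Finally, your suggested fallback of increasing the shift to $7\Delta/4$ is not available here, as the shift $3\Delta/2$ is part of the definition of $f_{\text{gap},\Delta,\epsilon}$ in the statement.
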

\begin{proof}
Let us derive bounds on the following regions:\\
$x\in[-1,-1+\Delta]$: From Lem.~\ref{Lem:Entire_Sgn}, $|\frac{1-f_{\text{sgn},\Delta,2\epsilon}(x+1-3\Delta/2))}{2}-1|\le \epsilon$ approximates the function $1$ with error $\epsilon$. By multiplying both sides with  $\frac{x+1-\Delta}{\Delta}$, $|f_{\text{gap},\Delta,\epsilon}(x)-\frac{x+1-\Delta}{\Delta}|\le \frac{x+1-\Delta}{\Delta}\epsilon \le \epsilon$.
\\
$x\in[-1+\Delta,-1+3\Delta/2]$: From Lem.~\ref{Lem:Entire_Sgn} $|\frac{1-f_{\text{sgn},\Delta,2\epsilon}}{2}|\in [0,1/2]$. In this region, $\frac{x+1-\Delta}{\Delta}\in[0,1/2]$. Thus by multiplying, $f_{\text{gap},\Delta,\epsilon}(x)\in[0,1/2]$.
\\
$x\in [-1+3\Delta/2,1-\Delta]$: From the upper bound $\text{erfc}(x)\le e^{-x^2}$, $f_{\text{gap},\Delta,\epsilon}(x)\le \frac{x+1-\Delta}{2\Delta}e^{-k^2(x+1-3\Delta/2)^2}$, where $k=\frac{\sqrt{2}}{\Delta}\log^{1/2}{(\frac{1}{2\pi\epsilon^2})}$. The worst case occurs when $k$ is smallest hence $\epsilon=\sqrt{\frac{1}{2e\pi}}$ is largest. Thus the upper bound is maximized with value $\frac{1+\sqrt{5}}{4}e^{(\sqrt{5}-3)/4}\le 0.7$ at $x=-1+\frac{1}{4}(5+\sqrt{5})\Delta < -1+2\Delta$.
\\
$x\in [1-\Delta,\infty)$: The upper bound obtained for $x\in[-1+3\Delta/2,1-\Delta]$ still applies here and is monotonically decreasing with $x$. Thus it is maximized when $\Delta=1/2$ is largest and at $x=1-\Delta$. With this upper bound, $f_{\text{gap},1/2,\epsilon}(1/2)\le 2e^{-9k^2/16}< 32\sqrt{\frac{2\pi^9}{\epsilon^9}}< \frac{ \epsilon}{10}$ by substituting $k$ and then using the fact $\epsilon \le \sqrt{\frac{1}{2e\pi}}$.
\\
$x\in [-1+\Delta,\infty]$: $\frac{x+1-\Delta}{\Delta}$ and $\frac{1-f_{\text{sgn},\Delta,2\epsilon}(x+1-3\Delta/2)}{2}$ are both positive, thus $f_{\text{gap},\Delta,\epsilon}(x)$ is positive.
\end{proof}
We now construct a degree $n$ polynomial approximation to $f_{\text{gap},\Delta}(x)$.
\begin{lemma}[Polynomial approximation to the gapped linear function $f_{\text{gap},\Delta}(x)$]
\label{Lem.Polynomial_gapped_linear}
$\forall\;\epsilon \le \mathcal{O}(1)$, there exists an odd polynomial $p_{\text{gap},\Delta,n}$ of degree $n=\mathcal{O}(\Delta^{-1/2}\log^{3/2}{(1/(\Delta\epsilon))})$ such that 
\begin{align}
\max_{ x \in [-1,-1+\Delta]} \left|p_{\text{gap},\Delta,n}(x)-\frac{x+1-\Delta}{\Delta}\right|\le \epsilon
\quad \text{and}\quad 
\max_{ x \in [-1,1]} \left|p_{\text{gap},\Delta,n}(x)\right|\le 1.
\end{align}
\end{lemma}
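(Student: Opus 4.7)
The construction proceeds in two stages: first antisymmetrize the entire approximation of Lemma~B.1 to obtain an odd entire function; then truncate its Chebyshev expansion. For the first stage, set
\[
\tilde f(x) \;:=\; f_{\text{gap},\Delta,\epsilon}(x) - f_{\text{gap},\Delta,\epsilon}(-x),
\]
which is entire and odd. By the three bounds of Lemma~B.1, on $[-1,-1+\Delta]$ we have $|f_{\text{gap},\Delta,\epsilon}(x)-(x+1-\Delta)/\Delta|\le \epsilon$ while $|f_{\text{gap},\Delta,\epsilon}(-x)|\le \epsilon/10$ (since $-x\in[1-\Delta,1]$), so $\tilde f$ matches the target linear function to within $\mathcal{O}(\epsilon)$; on the middle region $[-1+\Delta,1-\Delta]$ both $f_{\text{gap},\Delta,\epsilon}(\pm x)\in[0,1]$, so $|\tilde f|\le 1$; and the bound on $[1-\Delta,1]$ follows by oddness. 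Let $p_{\text{gap},\Delta,n}$ be the degree-$n$ Chebyshev truncation of $\tilde f$; since $\tilde f$ is odd, only odd-indexed $T_j$ appear, so $p_{\text{gap},\Delta,n}$ is automatically an odd polynomial.

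The truncation error is controlled by the standard complex-analytic bound $\max_{[-1,1]}|p_{\text{gap},\Delta,n}-\tilde f|\le \tfrac{2M(\rho)}{\rho-1}\rho^{-n}$, where $M(\rho)=\max_{z\in E_\rho}|\tilde f(z)|$ over the Bernstein ellipse $E_\rho$. The source of the $\Delta^{-1/2}$ scaling is geometric: $E_{1+\delta}$ extends distance only $\Theta(\delta^2)$ from $[-1,1]$ near the endpoints $x=\pm 1$, but $\Theta(\delta)$ in the interior. The sharp transition regions of $\tilde f$ (coming from the error functions inside $f_{\text{gap},\Delta,\epsilon}$) live in windows of width $\Theta(\Delta)$ around $x=\pm 1$, so choosing $\delta=\Theta(\sqrt{\Delta})$ keeps the ellipse inside the region where $\tilde f$ is well-controlled. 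This is the quantitative form of the endpoint advantage from Markov's inequality noted at the end of Sec.~\ref{Sec:Uniform_Hamiltonian_Amplification}.

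The main technical obstacle is the estimate of $M(\rho)$. The dominant factor inside $\tilde f$ is $\mathrm{erf}(k(z+1-3\Delta/2))$ with $k=\Theta(\Delta^{-1}\log^{1/2}(1/\epsilon))$. Parameterizing $z\in E_{1+\delta}$ near $z=-1$ via $\theta=\pi-\phi$ gives the expansion $z+1 = \phi^2/2 - \delta^2/2 + i\phi\delta + \mathcal{O}(\phi^4+\delta^4)$, so the $\mathrm{erf}$ argument has imaginary part $k\phi\delta$ and real part $k(\phi^2/2 - \delta^2/2 - 3\Delta/2)$. Applying the elementary bound $|\mathrm{erf}(u+iv)|\le 1 + \tfrac{2|v|}{\sqrt{\pi}}\,e^{v^2-u^2}$ along the ellipse, together with the linear prefactor contributing a factor $\mathcal{O}(1/\Delta)$, yields $M(1+\delta)=\mathcal{O}\!\left(\Delta^{-1}\,\mathrm{poly}(1/\epsilon)\right)$ when $\delta=\Theta(\sqrt{\Delta})$. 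Plugging into $\tfrac{M(\rho)}{\rho-1}\rho^{-n}\le\epsilon$ with $\log\rho=\Theta(\sqrt{\Delta})$ and optimizing over $\delta\le \sqrt{\Delta}$ gives $n=\mathcal{O}(\Delta^{-1/2}\log^{3/2}(1/(\Delta\epsilon)))$, as claimed. Finally, the bound $|p_{\text{gap},\Delta,n}|\le 1$ on $[-1,1]$ is recovered by rescaling by $1/(1+\mathcal{O}(\epsilon))$ if necessary, which inflates the approximation error by only a constant factor and can be absorbed into a redefinition of $\epsilon$.
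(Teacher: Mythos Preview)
Your plan is correct and follows the same route as the paper: start from the entire approximation of Lemma~\ref{Lem.Entire_gapped_lin}, control its Chebyshev truncation via the Bernstein-ellipse bound, antisymmetrize to obtain an odd polynomial, and rescale to enforce the sup bound; the only differences are cosmetic (you antisymmetrize before truncating rather than after, and you take $\rho-1=\Theta(\sqrt{\Delta})$ whereas the paper takes $\log\rho=\Theta(\sqrt{\Delta/\log(1/\epsilon)})$). One remark worth recording: with your larger radius the maximizing $\theta$ on $E_\rho$ still satisfies $\cos\theta\approx -(1-3\Delta/2)$, so your local expansion near $z=-1$ does capture the global maximum, and carrying it through gives $M=\mathrm{poly}(\Delta^{-1},\epsilon^{-1})$ and hence the slightly sharper degree $n=\mathcal{O}(\Delta^{-1/2}\log(1/(\Delta\epsilon)))$---the extra $\log^{1/2}$ factor in the stated bound arises only from the paper's more conservative choice of $\rho$, so your ``optimizing over $\delta\le\sqrt{\Delta}$'' step is not actually needed.
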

\begin{proof}
Let us expand $f_{\text{gap},\Delta,\epsilon_1}(x)=\sum_{j=0}a_j T_j(x)$ in the Chebyshev basis. Then the truncation error of $p_{n}(x)=\sum^n_{j=0}a_j T_j(x)$ has a well-known upper bound from Thm.~8.2 of~\cite{Trefethen2013approximation}:
\begin{align}
\label{Eq:Polynomial_Gapped_linear_error}
\max_{x\in[-1,1]}|p_{n}(x)-f_{\text{gap},\Delta,\epsilon_1}(x)|\le \epsilon_2 = \frac{2M \rho^{-n}}{\rho-1}, \quad M = \max_{z\in E_\rho}|f_{\text{gap},\Delta,\epsilon_1}(z)|,
\end{align}
for any elliptical radius $\rho>1$, where $E_\rho=\{z:z=\frac{1}{2}(e^{i\theta}+\rho^{-1} e^{-i\theta}), \theta\in[0,2\pi)\}$ is the Bernstein ellipse. We will need an upper bound on $|\text{erf}(r e^{i\phi})|$ for $r\ge 0, \phi\in[0,2\pi)$:
\begin{align}
\label{Eq:Erfc_UpperBound}
|\text{erf}(r e^{i\phi})| 
&= \frac{2}{\sqrt{\pi}}\left|\int_0^r e^{- r^2 e^{2i\phi}} dr\right| 
 = \frac{2}{\sqrt{\pi}}\left|\int_0^r e^{- r^2 \cos{(2\phi)}}e^{- i r^2 \sin{(2\phi)}} dr\right| \\ \nonumber
& \le \frac{2}{\sqrt{\pi}}\left|\int_0^r e^{-r^2 \cos{(2\phi)}} dr\right| 
= \frac{2r}{\sqrt{\pi}}\max\{1,e^{-r^2 \cos{(2\phi)}}\} = \frac{2r}{\sqrt{\pi}}\max\{1,e^{\text{Re}(-(re^{i\phi})^2)}\}.
\end{align}
We also need the upper bounds $|z|^2=\frac{1}{4}\left(\rho^{2}+\rho^2{2}+2\cos{(2\theta)}\right)\le \rho^2$. 
Let $k=\frac{\sqrt{2}}{\Delta}\log^{1/2}{(\frac{1}{2\pi\epsilon_1^2})}$, $|k(z+1-3\Delta/2)|\le k(|z|+1+3\Delta/2)\le k(\rho+1+3\Delta/2)$. Then
\begin{align}
M &= \max_{z\in E_\rho}\left|\frac{z+1-\Delta}{\Delta}\frac{1-\text{erf}{(k(z+1-3\Delta/2)}}{2}\right| 
\le 
\max_{z\in E_\rho}\frac{|z|+1+\Delta}{2\Delta}\left(1+|\text{erf}{(k(z+1-3\Delta/2)}|\right)
\\\nonumber
&\le \mathcal{O}(\text{poly}(\rho,\Delta^{-1}))\max_{z\in E_\rho}\left(1+\frac{2|k(z+1-3\Delta/2)|}{\sqrt{\pi}}(1+e^{\text{Re}(-(k(z+1-3\Delta/2)^2)})\right)
\\\nonumber
&\le \mathcal{O}(\text{poly}(\rho,\Delta^{-1}))\max_{z\in E_\rho}e^{\text{Re}(-(k(z+1-3\Delta/2)^2)}.
\end{align}
By taking derivatives with respect to $\theta$, the maximum value of the exponent $\alpha=\max_{\theta\in[0,2\pi)}\text{Re}(-(k(z+1-3\Delta/2)^2) = \frac{k^2(\rho^2-1)(2-(2-3\Delta)^2\rho^2+2\rho^4)}{8\rho^2(1+\rho^4)}$. Let us choose $\rho = e^{a}$, where $a=\mathcal{O}(1/\sqrt{k^2\Delta})$. Then $\alpha = \mathcal{O}(1)$. Substituting the value of $k$, we have  $a=\mathcal{O}(\sqrt{\Delta/\log{(1/\epsilon_1)}})$, and $M = \mathcal{O}\left(\text{poly}(\Delta^{-1})\right)$. Thus from Eq.~\ref{Eq:Polynomial_Gapped_linear_error},
\begin{align}
\epsilon_2=\mathcal{O}\left(\text{poly}(\Delta^{-1})e^{-n \sqrt{\Delta/\log{(1/\epsilon_1)}}}\right) 
\Rightarrow 
n = 
\mathcal{O}\left(\Delta^{-1/2}\log^{3/2}{\left(\frac{1}{\max\{\Delta\epsilon_1,\epsilon_2\}}\right)}\right),
\end{align}
where the last equation applies $\log{(\text{poly}(\Delta^{-1})/\epsilon)}=\mathcal{O}(\log(\frac{1}{\Delta\epsilon}))$. Thus the total approximation error is $\max_{x\in[-1-1+\Delta]}|p_{n}(x)-\frac{x+1-\Delta}{\Delta}|\le \epsilon_1+\epsilon_2$. Let $p_{\text{gap,sym},\Delta,n}(x)=\frac{1}{2}(p_{n}(x)-p_{n}(-x))$ be the odd component of $p_{n}(x)$. Using the bounds of Lem.~\ref{Lem.Entire_gapped_lin}, this increases the error in $x\in[-1,-1+\Delta]$ to at most $\frac{11}{10}(\epsilon_1+\epsilon_2)$. By subtracting these bounds, we also have $\max_{x\in[-1,1]}|p_{\text{gap,sym},\Delta,n}(x)|\le 1+\frac{11}{10}(\epsilon_1+\epsilon_2)$. Thus we rescale this to obtain $p_{\text{gap},\Delta,n}(x)=\frac{p_{\text{gap,sym},\Delta,n}(x)}{1+\frac{11}{10}(\epsilon_1+\epsilon_2)}$. Using $\max_{x\in[0,\infty]}|\frac{1}{1+x}-1|\le x$, This increases the error by at most a constant factor $\max_{x\in[-1-1+\Delta]}|p_{\text{gap},\Delta,n}(x)-\frac{x+1-\Delta}{\Delta}|=\mathcal{O}(\epsilon_1+\epsilon_2)$, so choose $\epsilon_1=\epsilon_2=\mathcal{O}(\epsilon)$.
\end{proof}

\bibliography{AmpHamiltonianSimulation}

\begin{thebibliography}{10}

\bibitem{Harrow2009}
A.~W. Harrow, A.~Hassidim, and S.~Lloyd, ``Quantum algorithm for linear systems
  of equations,'' {\em Phys. Rev. Lett.}, vol.~103, p.~150502, Oct 2009.

\bibitem{Lloyd1996universal}
S.~Lloyd, ``Universal quantum simulators,'' {\em Science}, vol.~273, p.~1073,
  Aug 23 1996.

\bibitem{Aharonov2003Adiabatic}
D.~Aharonov and A.~Ta-Shma, ``Adiabatic quantum state generation and
  statistical zero knowledge,'' in {\em Proceedings of the 35th Annual ACM
  Symposium on Theory of Computing}, STOC '03, (New York, NY, USA), pp.~20--29,
  ACM, 2003.

\bibitem{Childs2012}
A.~M. Childs and N.~Wiebe, ``Hamiltonian simulation using linear combinations
  of unitary operations,'' {\em Quantum Info. Comput.}, vol.~12, pp.~901--924,
  Nov. 2012.

\bibitem{Berry2015Truncated}
D.~W. Berry, A.~M. Childs, R.~Cleve, R.~Kothari, and R.~D. Somma, ``Simulating
  hamiltonian dynamics with a truncated taylor series,'' {\em Phys. Rev.
  Lett.}, vol.~114, p.~090502, Mar 2015.

\bibitem{Novo2016improved}
L.~Novo and D.~W. Berry, ``Improved hamiltonian simulation via a truncated
  taylor series and corrections,'' {\em arXiv preprint arXiv:1611.10033}, 2016.

\bibitem{LloydMohseniRebentrost2014}
S.~Lloyd, M.~Mohseni, and P.~Rebentrost, ``Quantum principal component
  analysis,'' {\em Nat. Phys.}, vol.~10, pp.~631--633, Sept. 2014.

\bibitem{Kimmel2017}
S.~Kimmel, C.~Y.-Y. Lin, G.~H. Low, M.~Ozols, and T.~J. Yoder, ``Hamiltonian
  simulation with optimal sample complexity,'' {\em Npj Quantum Inf.}, vol.~3,
  no.~1, p.~13, 2017.

\bibitem{Low2016HamSim}
G.~H. Low and I.~L. Chuang, ``Optimal hamiltonian simulation by quantum signal
  processing,'' {\em Phys. Rev. Lett.}, vol.~118, p.~010501, Jan 2017.

\bibitem{Childs2010}
A.~M. Childs, ``On the relationship between continuous- and discrete-time
  quantum walk,'' {\em Commun. Math. Phys.}, vol.~294, no.~2, pp.~581--603,
  2010.

\bibitem{Berry2014}
D.~W. Berry, A.~M. Childs, R.~Cleve, R.~Kothari, and R.~D. Somma, ``Exponential
  improvement in precision for simulating sparse hamiltonians,'' in {\em
  Proceedings of the 46th Annual ACM Symposium on Theory of Computing}, STOC
  '14, (New York, NY, USA), pp.~283--292, ACM, 2014.

\bibitem{Berry2015Hamiltonian}
D.~W. Berry, A.~M. Childs, and R.~Kothari, ``Hamiltonian simulation with nearly
  optimal dependence on all parameters,'' in {\em Proceedings of the 56th
  Annual IEEE Symposium on Foundations of Computer Science (FOCS),},
  pp.~792--809, Oct 2015.

\bibitem{Low2016methodology}
G.~H. Low, T.~J. Yoder, and I.~L. Chuang, ``Methodology of resonant equiangular
  composite quantum gates,'' {\em Phys. Rev. X}, vol.~6, p.~041067, Dec 2016.

\bibitem{Berry2012}
D.~W. Berry and A.~M. Childs, ``Black-box hamiltonian simulation and unitary
  implementation,'' {\em Quantum Info. Comput.}, vol.~12, pp.~29--62, Jan.
  2012.

\bibitem{Childs2010Limitation}
A.~M. Childs and R.~Kothari, ``Limitations on the simulation of non-sparse
  hamiltonians,'' {\em Quantum Info. Comput.}, vol.~10, pp.~669--684, July
  2010.

\bibitem{Kothari2014efficient}
R.~Kothari, {\em Efficient algorithms in quantum query complexity}.
\newblock PhD thesis, 2014.

\bibitem{Low2016hamiltonian}
G.~H. Low and I.~L. Chuang, ``Hamiltonian simulation by qubitization,'' {\em
  arXiv preprint arXiv:1610.06546}, 2016.

\bibitem{Nielsen2004}
M.~A. Nielsen and I.~L. Chuang, {\em {Quantum Computation and Quantum
  Information}}.
\newblock Cambridge University Press, 1~ed., Jan. 2004.

\bibitem{Somma2013SpectralGap}
R.~D. Somma and S.~Boixo, ``Spectral gap amplification,'' {\em SIAM J.
  Comput.}, vol.~42, no.~2, pp.~593--610, 2013.

\bibitem{Paetznick2014}
A.~Paetznick and K.~M. Svore, ``Repeat-until-success: Non-deterministic
  decomposition of single-qubit unitaries,'' {\em Quantum Info. Comput.},
  vol.~14, pp.~1277--1301, Nov. 2014.

\bibitem{Yoder2014}
T.~J. Yoder, G.~H. Low, and I.~L. Chuang, ``Fixed-point quantum search with an
  optimal number of queries,'' {\em Phys. Rev. Lett.}, vol.~113, p.~210501, Nov
  2014.

\bibitem{He2017decompositions}
Y.~He, M.-X. Luo, E.~Zhang, H.-K. Wang, and X.-F. Wang, ``Decompositions of
  n-qubit toffoli gates with linear circuit complexity,'' {\em Int. J. Theor.
  Phys.}, vol.~56, no.~7, pp.~2350--2361, 2017.

\bibitem{Long1999PhaseMatching}
G.~L. Long, Y.~S. Li, W.~L. Zhang, and L.~Niu, ``Phase matching in quantum
  searching,'' {\em Phys. Lett.}, vol.~262, no.~1, pp.~27 -- 34, 1999.

\bibitem{Meinardus1967}
G.~Meinardus, {\em Approximation of functions: Theory and numerical methods},
  vol.~13.
\newblock Springer, Berlin, 1967.

\bibitem{Gharibian2015quantum}
S.~Gharibian, Y.~Huang, Z.~Landau, S.~W. Shin, {\em et~al.}, ``Quantum
  {H}amiltonian complexity,'' {\em Found. Trends Theo. Comp. Sci.}, vol.~10,
  no.~3, pp.~159--282, 2015.

\bibitem{Reichardt2011reflections}
B.~W. Reichardt, ``Reflections for quantum query algorithms,'' in {\em
  Proceedings of the 28th annual ACM-SIAM symposium on Discrete Algorithms},
  pp.~560--569, Society for Industrial and Applied Mathematics, 2011.

\bibitem{Saff1989polynomial}
E.~Saff and V.~Totik, ``Polynomial approximation of piecewise analytic
  functions,'' {\em J. Lond. Math. Soc.}, vol.~2, no.~3, pp.~487--498, 1989.

\bibitem{Hoorfar2008LambertW}
A.~Hoorfar and M.~Hassani, ``Inequalities on the lambert w function and
  hyperpower function,'' {\em J. Inequal. Pure and Appl. Math}, vol.~9, no.~2,
  pp.~5--9, 2008.

\bibitem{Sachdeva2014Exp}
S.~Sachdeva and N.~K. Vishnoi, ``Faster algorithms via approximation theory,''
  {\em Found. Trends Theo. Comp. Sci.}, vol.~9, no.~2, pp.~125--210, 2014.

\bibitem{Trefethen2013approximation}
L.~N. Trefethen, {\em Approximation theory and approximation practice}.
\newblock Siam, Philadelphia, 2013.

\end{thebibliography}

\end{document}